\newtheorem{theorem}{Theorem}[section]
\newtheorem{proposition}[theorem]{Proposition}
\newtheorem{corollary}[theorem]{Corollary}
\newtheorem{lemma}[theorem]{Lemma}
\theoremstyle{definition}
\newtheorem{claim}[theorem]{Claim}
\newtheorem{question}[theorem]{Question}
\theoremstyle{remark}
\newtheorem{remark}[theorem]{Remark}
\newtheorem{example}[theorem]{Example}
\newcommand*\sg[1]{\{ #1 \}}
\newcommand*\cro[1]{\left[ #1\right]}
\newcommand{\Triangle}{\scaleobj{0.8}{\triangle}}
 \newcommand{\PT}{\mathrm{PT}}
 \newcommand{\PPu}{\mathrm{PP}_1}
 \newcommand{\PPd}{\mathrm{PP}_2}
 \newcommand{\INC}{\mathrm{INC}}
 \newcommand{\INCz}{\mathrm{INC}^0}
 \newcommand{\INCp}{\mathrm{INC}^+}
 \newcommand{\TPC}{\mathrm{TPC}}
 \newcommand{\TPCz}{\mathrm{TPC}^0}
 \newcommand{\TPCu}{\mathrm{TPC}^1}
 \newcommand{\TPCd}{\mathrm{TPC}^2}
 \newcommand{\TPCp}{\mathrm{TPC}^+}
 \newcommand{\TC}{\mathrm{TC}}
 \newcommand{\PC}{\mathrm{PC}}
 \newcommand{\PCz}{\mathrm{PC}^0}
 \newcommand{\PCu}{\mathrm{PC}^1}
 \newcommand{\PCd}{\mathrm{PC}^2}
 \newcommand{\PCp}{\mathrm{PC}^+}
 \newcommand{\TODO}{ \textbf{\textcolor{red}{TODO}} }
 \newcommand{\Ri}{($R_i$) }
 \newcommand{\Rip}{($R_{i+1}$) }
 \newcommand{\Si}{($S_i$) }
 \newcommand{\Ti}{($T_i$) }
 \newcommand{\Tip}{($T_{i+1}$) }
 \newcommand{\Bi}{\ensuremath{\widetilde{B}}_i\xspace}
 \newcommand{\Bip}{\ensuremath{\widetilde{B}}_{i+1}\xspace}
 \newcommand{\Biu}{\ensuremath{\widetilde{B}}_{i-1}\xspace}
\newcommand{\vz}{\ensuremath{\widetilde{v}}_0\xspace}
 \newcommand{\mr}{\mathrm}
\newcommand{\ta}{\ensuremath{\widetilde{a}}\xspace}
\newcommand{\tb}{\ensuremath{\widetilde{b}}\xspace}
\newcommand{\tc}{\ensuremath{\widetilde{c}}\xspace}
\newcommand{\td}{\ensuremath{\widetilde{d}}\xspace}
\newcommand{\te}{\ensuremath{\widetilde{e}}\xspace}
\newcommand{\tp}{\ensuremath{\widetilde{p}}\xspace}
\newcommand{\tr}{\ensuremath{\widetilde{r}}\xspace}
\newcommand{\ttt}{\ensuremath{\widetilde{t}}\xspace}
\newcommand{\tq}{\ensuremath{\widetilde{q}}\xspace}
\newcommand{\tu}{\ensuremath{\widetilde{u}}\xspace}
\newcommand{\tup}{\ensuremath{\widetilde{u}}'\xspace}
\newcommand{\tv}{\ensuremath{\widetilde{v}}\xspace}
\newcommand{\tw}{\ensuremath{\widetilde{w}}\xspace}
\newcommand{\tx}{\ensuremath{\widetilde{x}}\xspace}
\newcommand{\ty}{\ensuremath{\widetilde{y}}\xspace}
\newcommand{\tz}{\ensuremath{\widetilde{z}}\xspace}
\newcommand{\ts}{\ensuremath{\widetilde{s}}\xspace}
\newcommand{\tB}{\ensuremath{\widetilde{B}}\xspace}
\newcommand{\tS}{\ensuremath{\widetilde{S}}\xspace}
\newcommand{\tG}{\ensuremath{\widetilde{G}}\xspace}
\newcommand{\tX}{\ensuremath{\widetilde{X}}\xspace}
\newcommand{\tpi}{\ensuremath{\widetilde{\pi}}\xspace}
\newcommand{\du}[3]{\ensuremath{{#1} \bowtie_{#3} {#2}}\xspace}
\DeclareMathOperator{\cSt}{cSt}
\DeclareMathOperator{\conv}{conv}
\DeclareMathOperator{\diam}{diam}
\begin{document}

\sloppy

\centerline{\bf\Large Graphs with convex balls}

\bigskip
\centerline{\sc \large Jérémie Chalopin$^{1}$, Victor Chepoi$^{1}$, and Ugo Giocanti$^{2}$}

\medskip
\centerline{$^{1}$LIS, Aix-Marseille Universit\'e, CNRS, and Universit\'e de Toulon}
\centerline{Facult\'e des Sciences de Luminy, F-13288 Marseille Cedex 9, France}
\centerline{ {\sf \{jeremie.chalopin,victor.chepoi\}@lis-lab.fr} }

\medskip
\centerline{$^{2}$G-SCOP, Univ. Grenoble Alpes, CNRS}
\centerline{Grenoble INP, 38000 Grenoble, France,}
\centerline{ {\sf ugo.giocanti@ens-lyon.fr} }

\bigskip \noindent
{\footnotesize {\bf Abstract}
In this paper, we investigate the graphs in which all balls are convex
and the groups acting on them geometrically (which we call CB-graphs and CB-groups).
These graphs have been introduced and
characterized by Soltan and Chepoi (1983) and Farber and Jamison (1987).
CB-graphs and CB-groups generalize systolic (alias bridged) and weakly systolic graphs and groups,
which play an important role in geometric group theory. We present metric and  local-to-global characterizations of CB-graphs. Namely, we characterize CB-graphs $G$ as graphs whose triangle-pentagonal complexes
$X_{\Triangle, \pentagon}(G)$ are simply connected and balls of radius at most $3$ are convex.  Similarly to systolic and weakly systolic graphs, we prove
a dismantlability result for CB-graphs $G$: we show that their squares $G^2$ are dismantlable. This implies that the Rips complexes of CB-graphs
are contractible. Finally, we adapt and extend the approach of  Januszkiewicz and Swiatkowski (2006) for systolic groups and of Chalopin et al. (2020)
for Helly groups, to show that the CB-groups are biautomatic.
}

\tableofcontents
\setcounter{tocdepth}{2}

\section{Introduction}
In this paper, we investigate the graphs in which all balls are convex
and the groups acting on them geometrically. The (local and global)
convexity of balls is one of the fundamental features of geodesic
metric spaces, which are (locally or globally) nonpositively curved
\cite{BrHa}. The graphs with convex balls have been introduced and
characterized in \cite{FaJa,SoCh} as graphs without embedded isometric
cycles of lengths different from 3 and 5 and in which for all pairs of
vertices $u$ and $v$, all neighbors of $u$ lying on a shortest
$(u,v)$-path form a clique.  One of their important subclass is the
class of bridged graphs: these are the graphs without embedded
isometric cycles of length greater than 3 and they are exactly the
graphs in which the balls around convex sets are convex
\cite{FaJa,SoCh}.

CAT(0) (alias nonpositively curved geodesic metric spaces), introduced by Gromov in his seminal paper \cite{Gr},
and groups acting on them are fundamental objects of study in metric geometry and geometric group theory.
Graphs with strong metric properties often arise as 1-skeletons of CAT(0) cell complexes. Gromov \cite{Gr} gave a nice
combinatorial characterization of CAT(0) cube complexes as simply connected cube complexes in which the links of vertices
are simplicial flag complexes. Based on this
result, \cite{Ch_CAT, Roller1998} established a bijection between the 1-skeletons of CAT(0) cube
complexes and the median graphs, well-known in metric graph theory \cite{BaCh_survey}. A similar characterization of CAT(0) simplicial
complexes with regular Euclidean simplices as cells seems impossible. Nevertheless, Chepoi \cite{Ch_CAT} characterized the simplicial complexes having bridged graphs as 1-skeletons as the simply connected simplicial complexes
in which the links of vertices are flag complexes without embedded 4- and 5-cycles.  Januszkiewicz and Swiatkowski \cite{JS}, Haglund \cite{Haglund}, and Wise \cite{Wise}
rediscovered this class of simplicial complexes, called them {\it systolic complexes}, and used them in the context of geometric group theory. The convexity of balls around convex sets, which characterizes the 1-skeletons of
systolic complexes, is a fundamental geometric property of CAT(0) spaces.
Systolic complexes turned out to be good combinatorial analogs of CAT(0) metric spaces; cf. \cite{Elsner2009-flats,Elsner2009-isometries,Haglund,JS,JanuszkiewiczSwiatkowski2007,Pr3,Wise}.  One of the main results
of \cite{JS} is that systolic groups (i.e., groups acting geometrically on systolic complexes) are biautomatic. For other results about systolic groups, see the papers \cite{HuaOsa1,Osajda2007,OsaPrzy2009,OsajdaPrytula,Pr2}. From the results of \cite{Ch_CAT,JS} it follows that systolic complexes are contractible. Metrically systolic complexes, which are metric analogs of systolic complexes, have been introduced and investigated in \cite{HuOs_ms}. Bridged graphs have also been further investigated in several graph-theoretical  papers; cf.
\cite{AnFa, BaCh95, Ch_bridged,Po_bridged1,Po_bridged2} and the survey \cite{BaCh_survey}.

Weakly bridged/systolic graphs and complexes were introduced by Osajda \cite{Osajda} and have been thoroughly investigated in \cite{ChepoiOsajda}. The initial motivation of \cite{Osajda} was to exhibit a class of  simplicial complexes with some kind of
simplicial nonpositive curvature that will include the systolic complexes  and some other classes of complexes appearing in geometric group theory. Examples
of weakly systolic groups (i.e., groups acting geometrically on weakly systolic complexes) which are not systolic were presented in  \cite{Osajda}. The results of \cite{ChepoiOsajda}
show that weakly systolic complexes behave much like systolic complexes: their 1-skeletons are dismantlable and thus contractible, they satisfy the fixed simplex property, and can be characterized in the local-to-global way. It was shown in \cite{ChepoiOsajda} that the 1-skeletons of weakly bridged graphs are exactly the weakly modular graphs with convex balls. Weakly modular graphs are defined by two metric conditions, the triangle and quadrangle conditions \cite{BaCh95,Ch_metric} (see also Subsection \ref{ss:t-p&inc} below) and contain as subclasses most of the classes of graphs investigated in metric graph theory \cite{BaCh_survey}. For example, bridged graphs are exactly the weakly modular graphs without induced $C_4$ and $C_5$ \cite{Ch_metric}. For a general theory of weakly modular graphs and their cell complexes, see the recent paper \cite{CCHO}.

Except the papers \cite{FaJa} and \cite{SoCh}, there are no other papers investigating graphs with convex balls in full generality (notice however that convexity of balls
occurs in the characterization of 1-hyperbolic graphs \cite{BaCh_hyp}). We believe that this is due to the technical difficulties arising from the presence in such graphs of pentagons that are not paved by triangles (such pentagons do not exist in  bridged and weakly bridged graphs). For example, all $C_4$-free graphs of diameter 2 are graphs with convex balls. Hoffman and Singleton \cite{HoSi} proved that regular $C_3$-free such graphs exist only for degrees $2,3,7$, and possibly for degree $57$. The graph of degree 3 is the \emph{Petersen graph} and the graph of degree 7 is nowadays called the \emph{Hoffman-Singleton graph} and has 50 vertices and 175 edges.  Notice also that any $C_4$-free graph may occur as an induced subgraph of a graph with convex balls (and diameter 2). This shows that at small scale the graphs with convex balls can be quite arbitrary.

In this paper, we present a systematic structural study of graphs with convex balls  (which we abbreviately call \emph{CB-graphs}) and of groups acting on them geometrically
(which we call \emph{CB-groups}). Roughly speaking, we show  that at large scale the CB-graphs behave like
weakly modular and bridged graphs. Similarly to weakly modular graphs, we characterize the CB-graphs via two metric conditions, the \emph{Triangle-Pentagon} and
the \emph{Interval Neighborhood} conditions $\TPC$ and $\INC$. In analogy with weakly modular graphs, which were characterized as graphs with strongly equilateral metric triangles \cite{Ch_metric}, we show
that in CB-graphs all metric triangles are either strongly equilateral or pentagons. We use this result to extend the approach of \cite{BaCh95} for weakly modular graphs
and prove a local-to-global Helly-type theorem for convex sets in CB-graphs. Namely, we show that the Helly number of such a graph $G$ equals the size of a largest Helly independent set of diameter 1 or 2 of $G$.
These metric properties and characterizations of CB-graphs are used to provide geometric and
local-to-global (topological) properties and characterizations. Namely, we characterize CB-graphs $G$ as graphs whose triangle-pentagonal complexes
$X_{\Triangle, \pentagon}(G)$ are simply connected and balls of radius at most $3$ are convex. Consequently, we obtain the following discrete Cartan-Hadamard result:
if small balls (i.e. balls of radius 2 and 3) of a graph $G$ are convex, then the 1-skeleton of the universal cover of $X_{\Triangle, \pentagon}(G)$ is a
CB-graph.  Similarly to bridged and weakly bridged graphs, we prove
a dismantlability result for CB-graphs $G$: we show that their squares $G^2$ are dismantlable. This implies that the Rips complexes of CB-graphs
are contractible. Finally, we adapt and extend the approach of  \cite{JS} for systolic groups and of \cite{ChChGeHiOs} for Helly groups, to show that the CB-groups
are biautomatic. Such a result was not yet known for weakly systolic groups. Summarizing, here is the list of
the main results of the paper and the sections where they are proved:
\begin{itemize}
\item $G$ is a CB-graph iff $G$ satisfies the conditions $\TPC$ and $\INC$ (Section \ref{sec: INCTPC}).
\item $G$ is a CB-graph  iff its triangle-pentagon complex $X_{\Triangle, \pentagon}(G)$ is simply connected and balls of radius at most $3$ in $G$ are convex (Section~\ref{sec: loc2glob}).
\item if $G$ is a CB-graph, then $G^2$ is dismantlable. Consequently,
  all Rips complexes $X_k(G)$, $k\ge 2$ are contractible and when $G$
  is finite, any automorphism of $G$ stabilizes a convex set of
  diameter 2 (Section~\ref{sec: dism}).
\item CB-groups are biautomatic (Section~\ref{sec: NCP}).
\item if $G$ is a CB-graph, then all metric triangles are strongly equilateral or pentagons (Section~\ref{sec: metriangle}).
\item if $G$ is a CB-graph and $h(G)$ is its Helly number, then $h(G)=h_2(G)$, where $h_2(G)$ is the size of a largest $h$-independent set of diameter $\le 2$ (Section~\ref{sec: helly}).
\end{itemize}
Additionally, we show that CB-graphs satisfy the falsification by fellow traveler property (FFTP). Namely, we show that for any
non-geodesic $(u,v)$-path $\gamma$ of a CB-graph $G$ there exists a shorter $(u,v)$-path $\eta$, such that $\gamma$ and $\eta$ 2-fellow travel. FFTP  was initially introduced for Cayley graphs of
groups by Neumann and Shapiro \cite{NeSh}. In \cite{NeSh} and  \cite{Elder02} it was shown that the groups satisfying FFTP have many strong properties, in particular, they are almost convex in
the sense of Cannon \cite{Cannon}.  While convexity of balls obviously implies almost convexity, we do not know if any CB-group admits a set of generators with respect to which the Cayley graph
of the group is almost convex or satisfies FFTP. Finally note that CB-graphs and CB-groups are subclasses of shortcut graphs and shortcut groups, introduced and investigated by
Hoda \cite{Ho_shortcut}. Shortcut graphs are graphs  for which there is an upper bound on the lengths of isometrically  embedded  cycles. Shortcut groups are groups acting geometrically
on shortcut graphs.

 \section{Preliminaries}
\subsection{Graphs}
Recall that a graph $G=(V,E)$ consists of a set of vertices $V$ and a
set of edges $E \subseteq \binom{V}{2}$. In this article, all graphs
are simple, undirected and connected, but are not necessarily finite
or locally-finite.  Nevertheless, in all results we will consider only
\emph{graphs in which all cliques are finite}.  In some results, we
will additionally assume that the graphs have uniformly bounded
degrees.  We write $u \sim v$ if the vertices $u$ and $v$ are adjacent in $G$
and $u\nsim v$ if $u$ and $v$ are not adjacent.  For every $n \geq 0$,
$C_n$ denotes the cycle on $n$ vertices. We call an induced $C_3$ a
\emph{triangle}, an induced $C_4$ a \emph{square} and an induced $C_5$
a \emph{pentagon}. A \emph{wheel} $W_k$ is the graph obtained by
connecting a single vertex -- the \emph{central vertex} $c$ -- to all
vertices of a $k$--cycle $(x_1,x_2, \ldots, x_k)$.  For a set
$A\subseteq V$, we denote by $G[A]$ the subgraph of $G$ induced by
$A$. Given a graph $H$, the graph $G$ is called \emph{$H$-free} if $G$
does not contain an induced subgraph isomorphic to $H$.

We suppose that $G$ is endowed with the standard \emph{graph-distance}  $d=d_G$ so that for every $u, v \in V$, $d(u,v)$ is the length of a shortest path between $u$ and $v$ in $G$.
The \emph{ distance} from a vertex $u$ to a set $C\subseteq V$ is $d(u,C):=\min \{ d(u,x): u\in C\}$.
For any two vertices $u,v\in V$, the \emph{interval} between $u$ and $v$ is defined by $I(u,v):=\{x \in V:  d(u,v) = d(u,x) + d(x,v)\}$.

For every integer $r\geq 0$ and every vertex $u$ of $G=(V,E)$,
$B_r(u,G) := \{v \in V: d(u, v) \leq r\}$ denotes the \emph{ball} of
radius $r$ centered at $u$, and $S_r(u,G) := \{v \in V: d(u, v) = r\}$
denotes the \emph{sphere} of radius $r$ centered at $u$. When there is
no ambiguity about the graph $G$, we denote the ball $B_r(u,G)$ by
$B_r(u)$ and the sphere $S_r(u,G)$ by $S_r(u)$.  We say that a subset
$C \subseteq V$ is at \emph{uniform distance} from a vertex $u\in V$
if for every $x \in C$, we have $d(u,x) = d(u,C)$. More generally, two
sets of vertices $C,C'\subseteq V$ are at \emph{uniform distance} $k$
(notation \du{C}{C'}{k}) if for every $(x,y)\in C\times C'$ we have
$d(x,y)=k$.  The \emph{ball of radius $r$ around a set $C$} is the set
$B_r(C)=\{ v\in V: d(v,C)\le r\}$. The \emph{diameter} of a set
$C\subset V$ is $\diam(C)=\sup\{ d(u,v): u,v\in C\}$. Recall also that
the \emph{$k$-power} of a graph $G$ is the graph $G^k$ having the same
set of vertices as $G$ and two vertices $u,v$ are adjacent in $G^k$ if
and only if $d_G(u,v) \le k$. Three vertices $u,v,w$ of a graph $G$
form a \emph{metric triangle} $uvw$ if $I(u,v)\cap I(u,w)=\sg{u},$
$I(u,v)\cap I(v,w)=\sg{v}$, and $I(u,w)\cap I(v,w)=\sg{w}$. A metric
triangle $uvw$ is \emph{equilateral} when $d(u,v)=d(v,w)=d(w,u)$.

Let $H$ be a subgraph of $G$. We say that $H$ is an \emph{isometric subgraph} of $G$ if $d_H(u,v)=d_G(u,v)$ for any two vertices $u,v$ of $H$, i.e., if any two vertices $u, v$  of $H$ can be connected by a shortest path of $G$ totally included in $H$. A cycle $C$ is an \emph{isometric cycle} if $C$ is an isometric subgraph of $G$. A subset of vertices $C \subseteq V$ (or the subgraph $G[C]$  induced by $C$) is \emph{convex} in $G$ if for every two vertices $u, v \in C$, we have $I(u,v) \subseteq C$.
Since the intersection of convex sets is convex, for any set of vertices $S$ of $G$, there exists the smallest convex set $\conv(S)$ containing $S$, called the \emph{convex hull} of $S$. A weaker condition is $k$-convexity:  for an integer $k\geq 2$, we say that $C \subseteq V$ is  \emph{$k$-convex}  in $G$ if for every two vertices $u, v \in C$ such that $d(u,v) \leq k$, we have $I(u, v) \subseteq C$. We will call a graph $G$ a \emph{graph with convex balls} (or a \emph{convex balls graph}, abbreviately, a \emph{CB-graph}) if all balls $B_r(v)$ of $G$ are convex for every $v\in V$ and $r\geq 1$. Similarly, we will call graphs whose all balls are $k$-convex for some fixed $k\geq0$ \emph{graphs with $k$-convex balls}. A graph is called \emph{bridged} (or \emph{systolic})
if all balls $B_r(C)$ around convex sets $C$ are convex. Finally, we will say that the convexity of a graph $G$ \emph{preserves the diameters of sets}
if for any set $S\subset V$, $\diam(\conv(S))=\diam(S)$. Clearly, bridged graphs have convex balls. It is also easy to show that if $G$ is a graph with convex balls, then the convexity of $G$ preserves the diameters of sets \cite{SoCh}.

\subsection{Cell complexes and group actions}

All complexes considered in this paper are CW complexes. Following
\cite{Hat}*{Chapter 0}, we call them \emph{cell complexes} or just
\emph{complexes}. If all cells are simplices and the nonempty
intersections of two cells is their common face, then $X$ is called a
{\em simplicial complex}. For a cell complex $X$, by $X^{(k)}$ we
denote its \emph{$k$--skeleton}.  All cell complexes considered in
this paper will have graphs (that is, one-dimensional simplicial
complexes) as their $1$--skeleta. Therefore, we use the notation
$G(X):=X^{(1)}$. The \emph{star} of a vertex $v$ ($0$--dimensional
cell) in a complex $X$, denoted $\mr{St}(v,X)$, is the set of all
cells containing $v$. Note that the star of a vertex is not
necessarily a cell complex.  The \emph{closed star}
$\mr{cSt}(v,X)$ of $v$ in $X$ is the smallest  subcomplex of
$X$ containing $\mr{St}(v,X)$. It consists of all cells $\sigma'$
such that there exists $\sigma \in \mr{St}(v,X)$ with
$\sigma' \subseteq \sigma$.

An {\em abstract simplicial complex} $X$ on a set $V$
is a set of nonempty subsets of $V$ such that each member of $X$, called a {\em simplex},
is a finite set, and any nonempty subset of a simplex is also a simplex.
A simplicial complex $X$ naturally gives rise
to an abstract simplicial complex $X'$ on the set of vertices  of $X$
by: $U \in X'$ if and only if there is a simplex in $X$ having $U$ as its vertices.
Let $X$ be an abstract simplicial complex with ground set $V$. The \emph{barycentric subdivision} $\beta(X)$ of $X$
is the abstract simplicial complex on ground set $2^V \setminus \sg{\varnothing}$, whose vertices are the simplices
of $X$, and whose simplices are chains $\sigma_1 \subseteq \ldots \subseteq \sigma_k$ of simplices of $X$.

The {\it clique complex} of a graph $G$ is the abstract simplicial complex $X(G)$ having the cliques (i.e.,
complete subgraphs) of $G$ as simplices. A simplicial complex $X$ is a {\it flag complex} if $X$ is the clique
complex of its $1$--skeleton. The {\it $k$-Rips complex} $X_k(G)$ of a graph $G$ has the subsets $\sigma$
of vertices of $G$ such that $d_G(u,v)\le k$ for all $u,v\in \sigma$, as simplices.
Obviously, the $k$-Rips complex of $G$ is exactly the clique complex $X(G^k)$ of $G^k$. We also
define the \emph{triangle-pentagon complex} $X_{\Triangle, \pentagon}(G)$
of a graph $G$  as a two-dimensional cell complex with $1$--skeleton $G$, and such that the two-cells
are (solid) triangles and pentagons whose boundaries are identified by isomorphisms
with (graph) triangles and pentagons  in $G$.

As morphisms between cell complexes we consider all \emph{cellular
  maps}, i.e., maps sending (linearly) cells to cells. An
\emph{isomorphism} is a bijective cellular map being a linear
isomorphism (isometry) on each cell.  A cell complex $X$ is called
{\it simply connected} if it is connected and if every continuous
mapping of the 1-dimensional sphere $S^1$ into $X$ can be extended to
a continuous mapping of the disk $D^2$ with boundary $S^1$ into $X$.
Note that $X$ is connected iff $G(X)=X^{(1)}$ is connected, and $X$ is
simply connected iff $X^{(2)}$ is simply connected. The definition of
simple connectivity of cell complexes can be equivalently reformulated
in the following more combinatorial way. Let $X$ be a cell complex and
$C$ be a cycle in the $1$--skeleton of $X$.  Then a cell complex $D$
is called a {\it singular disk diagram} (or Van Kampen diagram) for
$C$ if the $1$--skeleton of $D$ is a plane graph whose inner faces are
exactly the $2$--cells of $D$ and there exists a cellular map
$\varphi:D\rightarrow X$ such that $\varphi|_{\partial D}=C$, where $\partial D$ denotes the boundary of the external face of $D$ (for more
details see \cite{LySch}*{Chapter V}).  According to Van Kampen's
lemma (\cite{LySch}, pp.\ 150--151), a cell complex $X$ is simply
connected if and only for each cycle $C$ of $X^{(1)}$ one can
construct a singular disk diagram $D=D(C)$. All 2-faces of $D$ will be
isomorphic to 2-faces of $X$ (for example, if $X$ is a
triangle-pentagonal complex, then all 2-faces of $D(C)$ will be
triangles or pentagons). Equivalently, $\partial D$ and its pre-image
$C$ in $X$ are null-homotopic, i.e., they can be transformed to a
single point by a sequence of elementary homotopies, i.e., there
exists a sequence $C_0 = C, C_1, \ldots, C_{k-1}, C_k$ of circuits
such that $C_k$ is a single point and the difference between two
consecutive circuits $C_{i-1}$ and $C_i$ is a cell of dimension $0$,
$1$, or $2$. By a \emph{circuit} of $X$, we mean a sequence
$(u_0,u_1,\ldots,u_k)$ of vertices (i.e., 0-cells) of $X$ such that
$u_i= u_{i+1}$ or $u_i \sim u_{i+1}$ for any $0 \leq i \leq k$ (where
$u_{k+1}=u_0$).
A \emph{covering (map)} of a cell complex $X$ is a cellular surjection $p\colon
\widetilde{X} \to X$ such that $p|_{\mbox{St}(\tv,\widetilde{X})}\colon \mbox{St}(\tv,\widetilde{X})\to \mbox{St}(p(\tv),X)$ is
an isomorphism for every vertex $\tv$ in $\tX$; compare \cite{Hat}*{Section 1.3}.
The space $\widetilde{X}$ is then called a \emph{covering space}.
A \emph{universal cover} of $X$ is a simply connected covering space $\widetilde{X}$. It
is unique up to isomorphism. In particular, if $X$ is simply connected, then
its universal cover is $X$ itself.

A group $\Gamma$ \emph{acts by automorphisms} on a cell complex $X$ if there is a
homomorphism $\Gamma\to \mbox{Aut}(X)$ called an \emph{action of $\Gamma$}. The action is
\emph{geometric} (or \emph{$\Gamma$ acts geometrically}) if it is proper (i.e., cells
stabilizers are finite) and cocompact (i.e., the quotient $X/\Gamma$ is compact). In the current paper we usually consider geometric
actions on graphs, viewed as one-dimensional complexes. Namely, we say that a group $\Gamma$ acts on a graph $G$ when it acts on $G$ if we consider it as a $1$-dimensional simplicial complex. Observe that if $\Gamma$ acts on $G$, then it induces a group action of $\Gamma$ on its clique complex $X(G)$, and thus we also get an induced action of $\Gamma$ on $\beta(X(G))$. Note that this also induces an action of $\Gamma$ on $G^k$ for any $k \geq 1$, as well as an action of $\Gamma$ on the triangle-pentagon $X_{\Triangle, \pentagon}(G)$. We call a group $\Gamma$ a \emph{$CB$-group} if $\Gamma$ acts geometrically on a CB-graph and a \emph{weakly systolic group} (respectively, \emph{systolic group}) if $\Gamma$ acts geometrically on a weakly systolic graph (respectively, on a systolic graph).

\subsection{The characterization of graphs with convex balls from \cite{SoCh} and \cite{FaJa}}
We recall the known structural characterization of CB-graphs:  

\begin{theorem}[\cite{SoCh,FaJa}]\label{thm: well-bridged}
 For a graph $G=(V,E)$ the following conditions are equivalent:
 \begin{itemize}
 \item[(i)] $G$ has convex balls;
 \item[(ii)] $G$ has $3$-convex balls;
 \item[(iii)]  any isometric cycle of  $G$ has length 3 or 5 and for any two vertices $u,v$ of $G$ the neighbors of $u$ in $I(u,v)$ form a clique. \end{itemize}
\end{theorem}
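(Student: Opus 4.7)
The plan is to establish the chain $(i) \Rightarrow (ii) \Rightarrow (iii) \Rightarrow (i)$. The first implication is immediate, since convexity of $B_r(v)$ is strictly stronger than $3$-convexity.

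For $(ii) \Rightarrow (iii)$, I would derive the two parts of (iii) separately from $3$-convexity of balls. The clique condition already follows from $2$-convexity: if $x, y \in N(u) \cap I(u, v)$ were non-adjacent, then $d(x,y) = 2$ with $u \in I(x,y)$, while $x, y \in B_{d(u,v)-1}(v)$ and $u \notin B_{d(u,v)-1}(v)$, contradicting $2$-convexity of that ball. For the forbidden isometric cycles, I treat the two parities separately. In an isometric cycle $C_{2m} = x_0 x_1 \cdots x_{2m-1}$ with $m \geq 2$, the pair $x_{m-1}, x_{m+1}$ lies in $B_{m-1}(x_0)$ at graph distance $2$, while $x_m \in I(x_{m-1}, x_{m+1})$ satisfies $d(x_0, x_m) = m$, contradicting $2$-convexity. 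In an isometric cycle $C_{2m+1} = x_0 x_1 \cdots x_{2m}$ with $m \geq 3$, the near-antipodal pair $x_{m-1}, x_{m+2}$ lies in $B_{m-1}(x_0)$ at graph distance $3$, while $x_m, x_{m+1} \in I(x_{m-1}, x_{m+2})$ both have distance $m$ from $x_0$, contradicting $3$-convexity.

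For the main implication $(iii) \Rightarrow (i)$, I would argue by induction on $r$ that $B_r(v)$ is convex for every vertex $v$. The base case $r = 1$ is easy: if $z \in I(x, y) \setminus B_1(v)$ for $x, y \in N(v)$ with $d(x, y) = 2$, then $v x z y$ is an induced isometric $4$-cycle, forbidden by (iii). For the inductive step, assume every ball of radius $< r$ is convex, and suppose for contradiction that $B_r(v)$ is not. Pick a triple $(u_1, u_2, w)$ with $u_1, u_2 \in B_r(v)$ and $w \in I(u_1, u_2) \setminus B_r(v)$ that minimizes $d(u_1, u_2) = k$ and, among such, minimizes $d(v, w)$; by following a shortest $(u_1, u_2)$-path through $w$ and taking the first vertex outside $B_r(v)$, one sees that $d(v, w) = r + 1$. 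If $k = 2$, then $d(v, u_i) = r$ forces $u_1, u_2 \in N(w) \cap I(w, v)$, and the clique condition at $w$ yields $u_1 \sim u_2$, contradicting $k = 2$. If $k \geq 3$, any neighbor $x$ of $u_1$ lying on a shortest $(u_1, w)$-path satisfies $x \in I(u_1, u_2)$, $w \in I(x, u_2)$ and $d(x, u_2) = k - 1 < k$, so by minimality of $k$ we must have $x \notin B_r(v)$, yielding $d(v, x) = r + 1$ and $d(v, u_1) = r$; symmetrically one produces $y \in N(u_2)$ with $d(v, y) = r + 1$ and $d(v, u_2) = r$.

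To close the $k \geq 3$ case, I would combine shortest paths from $u_1$ and $u_2$ down to $v$ with the segments $u_1, x, \ldots, w, \ldots, y, u_2$ of the shortest $(u_1, u_2)$-path through $w$ to form a closed walk; the inductive convexity of $B_{r-1}(v)$ together with the clique condition (applied at vertices where the distance to $v$ is locally extremal along the path) should force a subcycle of this walk to be isometric of length at least $6$, contradicting condition (iii). The main obstacle is precisely this last step: the distance profile $j \mapsto d(v, z_j)$ along a shortest $(u_1, u_2)$-path $z_0 = u_1, z_1, \ldots, z_k = u_2$ may exhibit plateaux of values $r + 1$ compatible with pentagonal structure (such plateaux being allowed since $C_5$ is not forbidden), so a careful case analysis combining the clique condition at peak vertices with the inductive convexity of balls of radius $r - 1$ is required to extract a genuine forbidden isometric cycle of length $\geq 6$.
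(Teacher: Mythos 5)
Your implications (i)$\Rightarrow$(ii) and (ii)$\Rightarrow$(iii) are correct and complete: the clique condition follows from $2$-convexity exactly as you say, and the two parity cases for isometric cycles (the pair $x_{m-1},x_{m+1}$ at distance $2$ in $C_{2m}$, the pair $x_{m-1},x_{m+2}$ at distance $3$ in $C_{2m+1}$ with $m\ge 3$) correctly rule out all isometric cycles of length other than $3$ and $5$ using $2$- and $3$-convexity respectively. Note that the paper itself does not prove this theorem: it is quoted from \cite{SoCh,FaJa}, and even the ``alternative proof'' of (i)$\Leftrightarrow$(ii) mentioned after Theorem~\ref{thm: TPC} still routes through the implication (iii)$\Rightarrow$(i) of the cited result. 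So there is no in-paper argument to compare your (iii)$\Rightarrow$(i) against; it must stand on its own.

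And there it has a genuine gap, which you yourself flag. The set-up of the minimal counterexample is fine: the reduction to $d(v,w)=r+1$, the disposal of $k=2$ via the clique condition at $w$, and the observation that minimality of $k$ forces every internal vertex of the $(u_1,u_2)$-geodesic adjacent to $u_1$ or $u_2$ out of $B_r(v)$ are all correct. But the entire content of the implication is the final step — extracting an isometric cycle of length $\ge 6$ (or a violation of the clique condition) from the closed walk — and this is precisely what is not done. The difficulty is not cosmetic: the distance profile $j\mapsto d(v,z_j)$ along the geodesic can sit at $r+1$ on long plateaux, and middle vertices need not even satisfy $d(v,z_j)=r+1$ a priori; moreover the graph obtained by gluing a triangle to a pentagon shows that the clique condition must enter the argument in an essential, non-local way (that graph has only isometric $3$- and $5$-cycles yet fails convexity of balls). ``Should force a subcycle to be isometric'' is exactly the assertion that needs proof, and a walk built from geodesics has no reason in general to contain an isometric subcycle of prescribed length without a careful descent argument. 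As written, the proposal establishes the two easy implications and a correct framework for the third, but not the theorem.
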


Although CB-graphs have no isometric
cycles of length other than $3$ and $5$, the converse is not always true. For example, the graph obtained by gluing a triangle and a pentagon
along an edge has a ball which is not convex. Hence, the condition (iii)  of Theorem \ref{thm: well-bridged} cannot be relaxed.
 \section{A metric characterization}
\label{sec: INCTPC}
In this section, we provide a new metric characterization of graphs with convex balls. This characterization will be very useful in subsequent sections.  We also characterize graphs with 2-convex balls.

\subsection{Triangle-Pentagon and Interval Neighborhood Conditions} \label{ss:t-p&inc}
Consider the following metric conditions on a graph $G$ (see Figure \ref{fig:TCQC}, upper row and the first graph of Figure \ref{fig:TPC}):
\begin{itemize}
 \item \emph{Triangle Condition} ($\TC$): for any three vertices $v,x,y$ such that $d(v,x)=d(v,y)=k$ and $x\sim y$, there exists a vertex  $z\in B_{k-1}(v)$ such that $xzy$ is a triangle of $G$;
 \item\emph{Quadrangle Condition} ($\mathrm{QC}$): for any four vertices $v,x,y,u$ such that $d(v,x)=d(v,y)=k=d(v,u)-1$ and $u\sim x,y$, $x\nsim y$, there exists
 a vertex $z\in B_{k-1}(v)$ such that $xzyu$ is a square of $G$;
 \item \emph{Pentagon Condition} ($\PC$)=($\PCz$): for any three vertices $v,x,y$ such that $d(v,x)=d(v,y)=k$ and $x\sim y$, there exist vertices $z,w,w'$ such that $z\in B_{k-2}(v)$ and  $xwzw'y$ is a pentagon of $G$.
\end{itemize}
A graph $G$ is called \emph{weakly modular} if it satisfies the
triangle and the quadrangle conditions \cite{BaCh95, Ch_metric}. A
graph $G$ is \emph{weakly systolic} (or \emph{weakly bridged}) if $G$
is weakly modular and does not contain induced 4-cycles. Equivalently,
weakly systolic graphs are the weakly modular graphs in which all
balls are convex~\cite{ChepoiOsajda}. The systolic (or bridged) graphs
are precisely the weakly modular graphs that do not contain induced
4-cycles and 5-cycles~\cite{Ch_metric}.  Equivalently, $G$ is
systolic if all isometric cycles of $G$ have length 3; this was the
initial definition of bridged/systolic graphs from \cite{FaJa,SoCh}.

\tikzexternaldisable
  \begin{figure}[h]\label{fig:TQPconditions}
    \centering
    \begin{tikzpicture}[scale=1]

    \begin{scope}[xshift = 3cm, yshift= 6cm]
     \tikzstyle{every node}=[draw, circle,fill=black,minimum size=4pt,
                            inner sep=0pt]
    \node (lab) at (0, -1)[draw=white, fill=white] {$\TC(v, xy)$};
    \node (v) at (0,0) [label=below: $v$] {};
    \node (x) at (-1,4) [label=above: $x$] {};
    \node (y) at (1,4) [label=above: $y$] {};
    \node (z) at (0,3) [red, label=right: $z$] {};

    \draw[very thick] (x)--(y);
    \draw[very thick, red] (x)-- (z)-- (y);
    \draw[very thick, red, dashed] (v) -- (z) node [draw=white, right, midway, fill, opacity =0, text opacity=1]{$k-1$};
    \draw[very thick, dashed] (v) -- (x) node [draw=white, left, midway, fill, opacity =0, text opacity=1]{$k$};
    \draw[very thick, dashed] (v) -- (y) node [draw=white, right, midway, fill, opacity =0, text opacity=1]{$k$};

    \end{scope}

    \begin{scope}[xshift = 7cm, yshift= 6cm]
     \tikzstyle{every node}=[draw, circle,fill=black,minimum size=4pt,
                            inner sep=0pt]

    \node (v) at (0,0) [label=below: $v$] {};
    \node (u) at (0,4) [label=above: $u$] {};
    \node (w) at (-1,3) [label=left: $x$] {};
    \node (w') at (1,3) [label=right: $y$] {};
    \node (z) at (0,2) [red, label=right: $z$] {};
    \node (lab) at (0, -1)[draw=white, fill=white, opacity =0, text opacity=1] {$\mathrm{QC}(u,v)$};

    \draw[very thick] (u)--(w);
    \draw[very thick] (u)--(w');
    \draw[very thick, red] (w') -- (z) -- (w);
    \draw[very thick, dashed, red] (v) -- (z) node [draw=white, right, midway, fill, opacity =0, text opacity=1]{$k-1$};
    \draw[very thick, dashed] (v) -- (w) node [draw=white, left, midway, fill, opacity =0, text opacity=1]{$k$};
    \draw[very thick, dashed] (v) -- (w') node [draw=white, right, midway, fill, opacity =0, text opacity=1]{$k$};

    \end{scope}

        \begin{scope}
     \tikzstyle{every node}=[draw, circle,fill=black,minimum size=4pt,
                            inner sep=0pt]

    \node (v) at (0,0) [label=below: $v$] {};
    \node (u) at (0,4) [label=above: $u$] {};
    \node (w) at (-1,3) [label=left: $w$] {};
    \node (w') at (1,3) [label=right: $w'$] {};
    \node (lab) at (0, -1)[draw=white, fill=white, opacity =0, text opacity=1] {$\INCz(u,v)$};

    \draw[very thick] (u)--(w);
    \draw[very thick] (u)--(w');
    \draw[very thick, red] (w)-- (w');
    \draw[very thick, dashed] (v) -- (w) node [draw=white, left, midway, fill = white, opacity =0, text opacity=1]{$k$};
    \draw[very thick, dashed] (v) -- (w') node [draw=white, right, midway, fill=white, opacity =0, text opacity=1]{$k$};
    \end{scope}

    \begin{scope}[xshift=5cm]
     \tikzstyle{every node}=[draw, circle,fill=black,minimum size=4pt,
                            inner sep=0pt]

    \node (v) at (0,0) [label=below: $v$] {};
    \node (u) at (0,4) [label=above: $u$] {};
    \node (w) at (-1,3) [label=left: $w$] {};
    \node (w') at (1,3) [label=right: $w'$] {};
    \node (z) at (0,2) [red, label=right: $z$] {};
    \node (lab) at (0, -1)[draw=white, fill=white, opacity =0, text opacity=1] {$\INC(u,v)$};

    \draw[very thick] (u)--(w);
    \draw[very thick] (u)--(w');
    \draw[very thick, red] (w)-- (w') -- (z) -- (w);
    \draw[very thick, dashed, red] (v) -- (z) node [draw=white, right, midway, fill, opacity =0, text opacity=1]{$k-1$};
    \draw[very thick, dashed] (v) -- (w) node [draw=white, left, midway, fill, opacity =0, text opacity=1]{$k$};
    \draw[very thick, dashed] (v) -- (w') node [draw=white, right, midway, fill, opacity =0, text opacity=1]{$k$};

    \end{scope}

    \begin{scope}[xshift=10cm]
     \tikzstyle{every node}=[draw, circle,fill=black,minimum size=4pt,
                            inner sep=0pt]

    \node (v) at (0,0) [label=below: $v$] {};
    \node (u) at (0,4) [label=above: $u$] {};
    \node (w) at (-1,3) [] {};
    \node (w') at (1,3) [] {};
    \node (z) at (0,2) [red, label=right: $z$] {};
    \node (w'') at (0.33,3) [] {};
    \node (w''') at (-0.33,3) [] {};

    \node (lab) at (0, -1)[draw=white, fill=white, opacity =0, text opacity=1] {$\INCp(u,v)$};

    \draw[very thick] (u)--(w);
    \draw[very thick] (u)--(w');
    \draw[very thick] (u)--(w'');
    \draw[very thick] (u)--(w''');

    \draw[very thick, red] (z)--(w'');
    \draw[very thick, red] (z)--(w''');

    \draw[very thick, red] (w) -- (w''') -- (w'') -- (w') -- (z) -- (w);
    \draw[very thick, red] (w) to[bend right] (w'');
    \draw[very thick, red] (w''') to[bend right] (w');
    \draw[very thick, red] (w) to[bend right] (w');

    \draw[very thick, dashed] (v) -- (w'') node [draw=white, right, midway, fill, opacity =0, text opacity=1]{$k$};
     \draw[very thick, dashed] (v) -- (w''') node [draw=white, left, midway, fill, opacity =0, text opacity=1]{$k$};
    \draw[very thick, dashed] (v) -- (w) node [draw=white, left, midway, fill, opacity =0, text opacity=1]{$k$};
    \draw[very thick, dashed] (v) -- (w') node [draw=white, right, midway, fill, opacity =0, text opacity=1]{$k$};

    \draw[very thick, dashed, red] (v) -- (z) node [draw=white, right, midway, fill, opacity =0, text opacity=1]{$k-1$};

    \end{scope}

    \end{tikzpicture}
    \caption{Illustration of the Triangle and  Quadrangle conditions and the variants of the Interval Neighborhood Condition. The red vertices and edges are implied by these conditions.}  \label{fig:TCQC}
\end{figure}

We define the \emph{Interval Neighborhood Condition} $\INCz$ and its stronger versions (see the lower row of Figure \ref{fig:TCQC}):
\begin{itemize}
 \item $\big(\INCz\big)$: for any two distinct vertices $u,v\in V$, the neighbors of $u$ in $I(u,v)$ form a clique.
 \item $\big(\INC\big)$: $G$ satisfies $\INCz$ and for any
two distinct vertices $u,v$ and any $w,w'\in S_1(u) \cap I(u,v)$  there exists a vertex  $z\in I(w,v)\cap I(w',v)$ such that $z\sim w,w'$.
\item $\big(\INCp\big)$: $G$ satisfies $\INCz$ and for any
two vertices $u,v$ with $d(u,v)=k\ge 2$ there exists a vertex $z\in I(u,v)$ at distance $k-2$ from $v$ and adjacent to all vertices of
the clique $S_1(u) \cap I(u,v)$.
\end{itemize}
Note that  $\mathrm{INC}^0$ occurs in the characterization (iii) of CB-graphs in Theorem \ref{thm: well-bridged} and that, by an easy observation,  $\INCz$ is equivalent to $2$-convexity of balls of $G$.
Note also  that $u$ and $v$ do not play the same role in the above definitions. To make clear how we use these conditions, we will write that $G$ satisfies $\INCz(u,v)$ (respectively $\INC(u,v)$ or $\INCp(u,v)$)
when the condition $\INCz$ (respectively $\INC$ or $\INCp$) is satisfied for the pair of vertices $u$ and $v$. We will use the notation $\INCz(v)$ (respectively $\INC(v)$ or $\INCp(v)$) when $\INC(u,v)$ (respectively $\INC(u,v)$ or $\INCp(u,v)$) is satisfied for every $u\in V\setminus \sg{v}$. Finally for every $k\geq 1$, we will use the notation $\INC_{\leq k}$ for the ``local version of $\INC$'', whenever the condition $\INC(u,v)$ holds for every pair of vertices $u,v$ such that $d(u,v)\leq k$.

We will also consider the following stronger versions of $\PCz$ (see Figure \ref{fig:TPC}):
\begin{itemize}
\item $\left(\PCu\right)$: for any three vertices $v,x,y$ such that $d(v,x)=d(v,y)=k\geq 2$ and $x\sim y$ and for any neighbor $w\in I(x,v)$ of $x$ there exists a neighbor $w'\in I(y,v)$ of $y$ and $z \in B_{k-2}(v)$ such that $xwzw'y$ is a pentagon of $G$.
\item $\left(\PCd\right)$: $\left(\PCu\right)$ holds and moreover for
  any $v,x,y$ such that $d(v,x)=d(v,y)=k\geq 2$ and $x\sim y$, we
  have: $B_2(x)\cap B_{k-2}(v)= B_2(y)\cap B_{k-2}(v)$.
\item $\left(\PCp\right)$: for any three vertices $v,x,y$ such that $d(v,x)=d(v,y)=k\geq 2$ and $x\sim y$, there exist a neighbor $w'\in I(y,v)$ of $y$ and $z \in B_{k-2}(v)$ such that for all neighbors $w\in I(x,v)$ of $x$, $xwzw'y$ is a pentagon of $G$.
\end{itemize}

\tikzexternaldisable
  \begin{figure}[h]
    \centering
    \begin{tikzpicture}[scale=1]

    \begin{scope}[]
     \tikzstyle{every node}=[draw, circle,fill=black,minimum size=4pt,
                            inner sep=0pt]
    \node (lab) at (0, -1)[draw=white, fill=white] {$\PCz(v, xy)$};
    \node (v) at (0,0) [label=below: $v$] {};
    \node (x) at (-1,4) [label=above: $x$] {};
    \node (y) at (1,4) [label=above: $y$] {};
    \node (w) at (-1,3) [red, label=left: $w$] {};
    \node (w') at (1,3) [red, label=right: $w'$] {};
    \node (z) at (0,2) [red, label=right: $z$] {};

    \draw[very thick] (x)--(y);
    \draw[very thick, red] (x)-- (w) -- (z)-- (w') -- (y);
    \draw[very thick, red, dashed] (v) -- (z) node [draw=white, right, midway, fill, opacity =0, text opacity=1]{$k-2$};
    \draw[very thick, red, dashed] (v) -- (w) node [draw=white, left, midway, fill, opacity =0, text opacity=1]{$k-1$};
    \draw[very thick, red, dashed] (v) -- (w') node [draw=white, right, midway, fill, opacity =0, text opacity=1]{$k-1$};
    \end{scope}

    \begin{scope}[xshift=4cm]
     \tikzstyle{every node}=[draw, circle,fill=black,minimum size=4pt,
                            inner sep=0pt]
    \node (lab) at (0, -1)[draw=white, fill=white] {$\PCu(v, xy)$};
    \node (v) at (0,0) [label=below: $v$] {};
    \node (x) at (-1,4) [label=above: $x$] {};
    \node (y) at (1,4) [label=above: $y$] {};
    \node (w) at (-1,3) [label=left: $w$] {};
    \node (w') at (1,3) [red, label=right: $w'$] {};
    \node (z) at (0,2) [red, label=right: $z$] {};

    \draw[very thick] (w) -- (x)--(y);
    \draw[very thick, red] (w) -- (z)-- (w') -- (y);
    \draw[very thick, red, dashed] (v) -- (z) node [draw=white, right, midway, fill, opacity =0, text opacity=1]{$k-2$};
    \draw[very thick, dashed] (v) -- (w) node [draw=white, left, midway, fill, opacity =0, text opacity=1]{$k-1$};
    \draw[very thick, red, dashed] (v) -- (w') node [draw=white, right, midway, fill, opacity =0, text opacity=1]{$k-1$};
    \end{scope}

    \begin{scope}[xshift=8cm]
     \tikzstyle{every node}=[draw, circle,fill=black,minimum size=4pt,
                            inner sep=0pt]
    \node (lab) at (0, -1)[draw=white, fill=white] {$\PCd(v, xy)$};
    \node (v) at (0,0) [label=below: $v$] {};
    \node (x) at (-1,4) [label=above: $x$] {};
    \node (y) at (1,4) [label=above: $y$] {};
    \node (w) at (-1,3) [label=left: $w$] {};
    \node (w') at (1,3) [red, label=right: $w'$] {};
    \node (z) at (0,2) [label=right: $z$] {};

    \draw[very thick] (z) -- (w) -- (x)-- (y);
    \draw[very thick, red] (z)-- (w') -- (y);
    \draw[very thick, red, dashed] (v) -- (z) node [draw=white, right, midway, fill, opacity =0, text opacity=1]{$k-2$};
    \draw[very thick, dashed] (v) -- (w) node [draw=white, left, midway, fill, opacity =0, text opacity=1]{$k-1$};
    \draw[very thick, red, dashed] (v) -- (w') node [draw=white, right, midway, fill, opacity =0, text opacity=1]{$k-1$};
    \end{scope}

    \begin{scope}[xshift=12cm]
     \tikzstyle{every node}=[draw, circle,fill=black,minimum size=4pt,
                            inner sep=0pt]
    \node (lab) at (0, -1)[draw=white, fill=white] {$\PCp(v, xy)$};
    \node (v) at (0,0) [label=below: $v$] {};
    \node (x) at (-1,4) [label=above: $x$] {};
    \node (y) at (1,4) [label=above: $y$] {};
    \node (w) at (-1,3) [] {};
    \node (w'') at (-1.5,3) [] {};
    \node (w''') at (-0.5,3) [] {};
    \node (w') at (1,3) [red, label=right: $w'$] {};
    \node (z) at (0,2) [red, label=right: $z$] {};

    \draw[very thick] (z) -- (w) -- (x)-- (y);
    \draw[very thick] (z) -- (w'') -- (x)-- (w''') -- (z);
    \draw[very thick, red] (z)-- (w') -- (y);
    \draw[very thick, red, dashed] (v) -- (z) node [draw=white, right, midway, fill, opacity =0, text opacity=1]{$k-2$};
    \draw[very thick, dashed] (v) -- (w) node [draw=white, left, midway, fill, opacity =0, text opacity=1]{$k-1$};
    \draw[very thick, dashed] (v) -- (w'') node [draw=white, left, midway, fill, opacity =0, text opacity=1]{};
    \draw[very thick, dashed] (v) -- (w''') node [draw=white, left, midway, fill, opacity =0, text opacity=1]{};
    \draw[very thick, red, dashed] (v) -- (w') node [draw=white, right, midway, fill, opacity =0, text opacity=1]{$k-1$};
    \end{scope}

    \end{tikzpicture}
    \caption{Illustration of the four variants of the Pentagon Condition. The red vertices and edges are implied by these conditions.}  \label{fig:TPC}
\end{figure}

Recall that a graph $G$ is called \emph{weakly modular} \cite{BaCh95, Ch_metric} if it satisfies $\TC$ and $\mathrm{QC}$. For any $i\in \sg{0,1,2,+}$, it will be useful to write $\PC^i(v,xy)$ if the condition $\PC^i$ is satisfied for the vertex $v$ and edge $xy$.
We say that a graph $G$ satisfies the \emph{Triangle-Pentagon Condition}  $\TPC^i$ if for any three vertices $v,x,y$ such that $d(v,x)=d(v,y)=k$ and $x\sim y$ either $\TC(v,xy)$ or $\PC^i(v,xy)$ holds. Thus we will write $\TPC^i(v,xy)$ when either $\TC(v,xy)$ or $\PC^i(v,xy)$ holds. We will use the notation $\TPC^i(v)$ (respectively $\TC(v)$, $\PC^i(v)$) whenever $\TPC^i(v,xy)$ (respectively $\TC(v,xy)$, $\PC^i(v,xy)$) holds for $v$ and any edge $xy$ at uniform distance from $v$.
Finally for every $k\geq 1$, we will use $\TPC^i_{\leq k}$ to denote the ``local version of $\TPC^i$, i.e. whenever $\TPC^i(v,xy)$ holds for every $v,x,y$ such that the edge $xy$ is at uniform distance at most $k$ from $v$.

\subsection{Graphs with $2$-convex balls}
We start with a simple characterization of graphs with $2$-convex balls.

\begin{theorem}\label{thm: INC}
For a graph $G=(V,E)$, the following conditions are equivalent:
\begin{itemize}
  \item[(i)] $G$ has $2$-convex balls;
  \item[(ii)]  $G$ satisfies $\INC$;
  \item[(iii)] $G$ satisfies $\INCp$.
\end{itemize}
\end{theorem}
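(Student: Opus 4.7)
The plan is to close the cycle $(i) \Leftrightarrow \INCz \Rightarrow \INCp \Rightarrow \INC \Rightarrow \INCz$. The equivalence $(i) \Leftrightarrow \INCz$ is the observation recalled before the statement: a pair $u, v \in B_r(x)$ at distance $2$ with $y \in I(u,v) \setminus B_r(x)$ rearranges by distance bookkeeping into a pair $u, v \in S_1(y) \cap I(y,x)$ with $u \nsim v$ that refutes $\INCz$, and the reverse substitutes the witness. The implications $(iii) \Rightarrow (ii)$ and $(ii) \Rightarrow (i)$ follow directly from the definitions: the single vertex $z$ witnessing $\INCp(u,v)$ lies at distance $k-2$ from $v$ and is adjacent to every member of $K_u := S_1(u) \cap I(u,v)$, so for any $w, w' \in K_u$ it sits in $I(w,v) \cap I(w',v)$ and is adjacent to both; and $\INC$ contains $\INCz$ as its first clause.

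The substantive content is $\INCz \Rightarrow \INCp$, which I would establish by induction on $k := d(u,v)$. The base case $k = 2$ is handled by $z := v$, adjacent to all of $K_u \subseteq S_1(v)$. For the inductive step $k \geq 3$, I would argue by an extremal principle inside the candidate set $M_u := \bigcup_{w \in K_u} L_w$, where $L_w := S_1(w) \cap I(w,v) \subseteq S_{k-2}(v) \cap I(u,v)$ is the non-empty clique supplied by $\INCz(w,v)$. Pick $z_0 \in M_u$ maximizing $|W_0|$ where $W_0 := \{w \in K_u : z_0 \sim w\}$; note $W_0 \neq \emptyset$ since $z_0$ witnesses membership in some $L_{w_0}$. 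The goal is to show $W_0 = K_u$, so that $z_0$ serves as the required $\INCp$-witness.

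Suppose for contradiction that $W_0 \subsetneq K_u$, and choose $w' \in K_u \setminus W_0$. Since $K_u$ is a clique by $\INCz(u,v)$, one has $w' \sim w$ for every $w \in W_0$, and $d(z_0, w') \leq 2$; the assumption $z_0 \nsim w'$ forces $d(z_0, w') = 2$. Applying $\INCz$ to the pair $(w', z_0)$ makes $S_1(w') \cap I(w', z_0)$ a clique, and this clique contains each $w \in W_0$. If one can additionally show that each $y \in L_{w'}$ is adjacent to $z_0$, then $y \in S_1(w') \cap I(w', z_0)$, the clique condition forces $y \sim w$ for every $w \in W_0$, so $y$ is adjacent to $W_0 \cup \{w'\}$; this contradicts the maximality of $z_0$.

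The missing ingredient is therefore the statement that every $y \in L_{w'}$ is adjacent to every $z \in M_u$, that is, $M_u$ is a clique in $G$. For $k = 3$ this is immediate, as $M_u \subseteq S_1(v) \cap I(v,u)$ which is a clique by $\INCz(v,u)$; for $k \geq 4$ the set $M_u$ sits only inside the sphere $S_{k-2}(v)$ and no single $\INCz$ application suffices. The main obstacle of the proof is thus to strengthen the inductive hypothesis to include ``$M_u$ is a clique'' and propagate both clauses simultaneously, establishing the clique property by applying $\INCz$ to pairs of the form $(w', z)$ with $w' \in K_u$ and $z \in M_u$ and leveraging the strengthened induction hypothesis for pairs at distance $k-1$. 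Once this clique property is in place, the extremal argument above closes the inductive step cleanly.
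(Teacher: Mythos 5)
Your reductions $(iii)\Rightarrow(ii)\Rightarrow(i)$ and the identification of $(i)$ with $\INCz$ are correct and agree with the paper, and your extremal upgrade of a partial witness $z_0$ to a full $\INCp$-witness is essentially the paper's proof of $(ii)\Rightarrow(iii)$. But the step you yourself flag as missing is not a loose end: it is the entire content of the hard implication. What your argument needs at the crucial moment --- that for $w'\in K_u\setminus W_0$ some $y\in L_{w'}$ is adjacent to $z_0$ --- is precisely the second clause of $\INC$, namely that two neighbors $w,w'$ of $u$ in $I(u,v)$ always admit a common neighbor in $I(w,v)\cap I(w',v)$. The paper devotes a separate induction on $d(u,v)$ to this (its proof of $(i)\Rightarrow(ii)$): after using the induction hypothesis to reduce to the case $I(v,w)\cap I(v,w')=\{v\}$, it picks neighbors $a\in I(v,w)$, $b\in I(v,w')$ of $v$, derives $a\sim b$ from the $2$-convexity of $B_k(u)$, and then reaches a contradiction via the $2$-convexity of $B_{k-1}(a)$. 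Nothing in your proposal performs this or an equivalent argument: the only instances of $\INCz$ you invoke --- at $(u,v)$, at $(w,v)$, and at $(w',z_0)$ --- produce cliques inside $S_1(u)$, $S_1(w)$, and $S_1(w')$ respectively, and none of them can force an adjacency between a vertex of $L_{w'}$ and $z_0$.

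The patch you sketch, strengthening the inductive hypothesis to ``$M_u$ is a clique,'' is not carried out and is not obviously propagable. Two vertices $y_1\in L_{w_1}$ and $y_2\in L_{w_2}$ with $w_1\ne w_2$ in $K_u$ are ``siblings'' reachable from each other only through $u$ (both at distance $2$ from it) or through $v$ (both at distance $k-2$); the induction hypothesis for pairs at distance $k-1$, such as $(w_1,v)$, only controls neighborhoods on the $v$-side of $w_1$, and applying $\INCz$ to pairs $(w',z)$ again only yields cliques inside $S_1(w')$, to which $y_1,y_2$ need not belong. Note moreover that, granting ``$M_u$ is a clique,'' your extremal argument already delivers $\INCp$ and hence the common-neighbor clause, so this auxiliary statement cannot be easier to establish than the clause you are trying to bypass. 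To complete the proof you should either prove the common-neighbor clause by the paper's induction (using $2$-convexity of balls centered at $v$, at $u$, and at an intermediate neighbor $a$ of $v$) and then run your extremal argument, or supply a full proof that the strengthened hypothesis propagates; as written, the core of the theorem remains unproved.
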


\begin{proof}
  By definition, a graph satisfying $\INC$ has $2$-convex
  balls, thus (ii)$\Rightarrow$(i). Suppose now that $G$ has $2$-convex balls and we will show that
  $G$ satisfies $\INC$.   Pick any vertices
  $u,v$ at distance $k+1\ge 2$ and two neighbors $w,w'$ of $u$ in the
  interval $I(u,v)$.  We proceed by induction on
  $k=d(v,u)-1\geq1$. The $2$-convexity of $B_k(v)$ implies
  that $w\sim w'$. It remains to show that $w$ and $w'$ have a common
  neighbor $z$ at distance $k-1$ from $v$.  If $k=1$, we can set
  $z:=v$. If $k\geq 2$, we may assume thanks to the induction
  hypothesis that $I(v,w)\cap I(v,w')=\sg{v}$. Let
  $a\in I(v,w), b\in I(v,w')$ such that $a,b\sim v$. The $2$-convexity
  of the ball $B_k(u)$ implies that $a\sim b$. Let $c$ be a neighbor
  of $w'$ in $I(b,w')$. Then $d(a,w)=k-1$ and $d(a,c)\leq k-1$.  Since
  $I(v,w)\cap I(v,w')=\sg{v}$, we also have $d(a,w')=k$. By
  $2$-convexity of the ball $B_{k-1}(a)$, we conclude that $c\sim w$,
  which contradicts the assumption that $I(v,w)\cap
  I(v,w')=\sg{v}$. This establishes that (i)$\Rightarrow$(ii).

  Clearly, any graph satisfying $\INCp$ satisfies $\INC$, thus (iii)$\Rightarrow$(ii).
  Therefore, it suffices to show that (ii)$\Rightarrow$(iii).
Let $G$ be a graph that satisfies  $\INC$. To show that $G$ satisfies $\INCp$, we use a
  maximality argument: let $u,v\in V$ be two distinct vertices at distance $k+1:=d(u,v)\geq 2$ and assume by way of contradiction that there is no
  vertex at distance $k-1= d(v,u)-2$ from $v$ which is adjacent to all
  vertices of the clique $C:=S_1(u) \cap I(u,v)$.
  Choose $z\in I(v,u)$ at distance $k-1$ from $v$ which is adjacent to a
  maximum number of vertices of $C$. Denote by $C'$ the set of
  neighbors of $z$ in $C$.  By $\INC(u,v)$, $C'$ contains at least
  two vertices, unless $|C|\leq 1$, in which case we are trivially done. On the other hand, there is a vertex
  $x \in C\setminus C'$. Pick any $y \in C'$. By $\INC(u,v)$,
  there exists a vertex $z'\sim x,y$ at distance $k-1$ from
  $v$. Applying $\INCz(y,v)$, we conclude that $z\sim z'$.
  We assert that $z'$ is adjacent to all vertices of $C'$. Indeed, if
  $y'\in C'$, then $zy'xz'$ is a $4$-cycle, which cannot be
  induced. Since $z\nsim x$, necessarily $z'\sim y'$. Consequently,
  $z'$ is adjacent to $x$ and all vertices of $C'$, contrary to the
  maximality choice of $z$. Thus we proved that $G$ satisfies $\INCp$, establishing
  (ii)$\Rightarrow$(iii).  \end{proof}

\begin{remark} A graph with 2-convex balls has no isometric cycles of even length $2k$:  if $G$ contains such a cycle $C$, then for any vertex $v$ of $C$ the
ball $B_{k-1}(v)$ is not 2-convex.  Indeed, if $u$ is the vertex of $C$ antipodal to $v$ and $x,y$ are its neighbors in $C$, then $x,y\in B_{k-1}(v)$ and $u\notin B_{k-1}(v)$.
The converse however is not true: let $G$ be obtained by gluing two 5-cycles along an edge $e$. Then  the unique isometric cycles of $G$ are the two 5-cycles.
Also $G$ has diameter 4 and the unique diametral pair is the pair $u,v$, where $u$ and $v$ are the vertices of the two 5-cycles opposite to $e$. Then the ball $B_3(v)$  is not 2-convex because
it contains the neighbors of $u$ but not the vertex $u$.
\end{remark}

\subsection{Graphs with convex balls}

Now, we show that in full analogy with weakly modular graphs, the conditions $\INC$ and $\TPCz$ as well as
their stronger versions characterize the CB-graphs.
\begin{theorem}
\label{thm: TPC}
For a graph $G$, the following conditions are equivalent:
 \begin{itemize}
  \item[(i)] $G$ has convex balls;
  \item[(ii)] $G$ satisfies $\INC$ and $\TPCz$;
  \item[(iii)] $G$ satisfies $\INC$ and $\TPCu$;
  \item[(iv)] $G$ satisfies $\INC$ and $\TPCd$.
\item[(v)] $G$ satisfies $\INCp$ and $\TPCp$.
  \end{itemize}
\end{theorem}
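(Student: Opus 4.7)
The plan is to take (i) as the pivot: prove (i) $\Rightarrow$ (v), giving the strongest set of properties, then prove (ii) $\Rightarrow$ (i), the main direction. The equivalence of (ii), (iii), (iv), (v) then follows because each clearly implies the common weakest form (ii) (the weakenings $\PCp \Rightarrow \PCz, \PCu$ are immediate, while $\PCp \Rightarrow \PCd$ uses the set equality induced by the canonical $z$ of $\PCp$ together with $\INC$), and (i) delivers the strongest form (v).

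For (i) $\Rightarrow$ (v): convex balls are 2-convex, so Theorem~\ref{thm: INC} gives $\INCp$. For $\TPCp$, given $v, x, y$ with $d(v,x) = d(v,y) = k$ and $x \sim y$, choose shortest paths $P_x$ from $x$ to $v$ and $P_y$ from $y$ to $v$ that share the longest possible terminal subpath; the resulting cycle through $xy$ is isometric by maximality of the shared segment, so by Theorem~\ref{thm: well-bridged} has length $3$ (yielding $\TC$) or $5$ (yielding a pentagon $xwzw'y$ with $d(v,z) = k-2$). In the pentagon case, apply $\INCp$ at $(x,v)$ to obtain a canonical vertex $\tilde z$ at distance $k-2$ from $v$ adjacent to the full clique $S_1(x) \cap I(x,v)$, and identify a neighbor $w' \in S_1(y) \cap I(y,v)$ adjacent to $\tilde z$ (using the pentagon and the symmetric $\INCp$ at $(y,v)$); this gives the witness required by $\PCp$.

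The main direction (ii) $\Rightarrow$ (i) is obtained by verifying condition (iii) of Theorem~\ref{thm: well-bridged}. Since the clique condition $\INCz$ is already part of $\INC$, it remains to exclude isometric cycles of length $n \notin \sg{3,5}$. Let $C = v_0 v_1 \cdots v_{n-1} v_0$ be isometric. If $n = 2k$ is even, then $v_1$ and $v_{n-1}$ both lie in $N(v_0) \cap I(v_0, v_k)$, so $\INCz$ forces $v_1 \sim v_{n-1}$, contradicting $C$ being induced for $n \geq 4$. If $n = 2k+1 \geq 7$ is odd, apply $\TPCz$ to $v_0, v_k, v_{k+1}$. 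In the $\TC$ case, the common neighbor $z$ at $v_0$-distance $k-1$ is forced by $\INCz$ at $v_k$ and $v_{k+1}$ to be adjacent to both $v_{k-1}$ and $v_{k+2}$ (or equal to one of them, forcing a chord of $C$), giving $d_G(v_{k-1}, v_{k+2}) \leq 2 < 3$, contradicting isometry. In the $\PCz$ case, the pentagon $v_k w z w' v_{k+1}$ with $d(v_0, z) = k-2$ satisfies $d(v_0, w) = d(v_0, w') = k-1$, so $\INCz$ places $w$ in the clique spanned by $\{v_{k-1}\}$ and $w'$ in that spanned by $\{v_{k+2}\}$; a case analysis on whether $w, w'$ coincide with $v_{k-1}, v_{k+2}$ or merely lie adjacent to them yields in each case either a direct $C$-shortcut or a forbidden chord.

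The main obstacle is the $\PCz$ sub-case in which both $w$ and $w'$ lie off $C$: here the immediate path $v_{k-1} w z w' v_{k+2}$ has length $4$ rather than $\leq 2$, so no shortcut is directly visible. To finish, one chases further adjacencies by applying $\INCz$ at $v_0$ with the intervals $I(v_0, v_{k-1})$ and $I(v_0, v_{k+2})$ (which contain both the $C$-neighbors of $v_0$ and, after verification, the pentagon vertex $z$), and reapplies $\TPCz$ to well-chosen edges of $C$ (for instance to the triple $v_0, v_k, v_{k-1}$ pushed toward $v_0$, or to triples dual to the first application), carefully tracking when a pair of $C$-vertices appears with $d_G$-distance strictly less than $d_C$. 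Organising this iteration systematically, and ensuring that it terminates before exhausting $C$, is the technical heart of the argument.
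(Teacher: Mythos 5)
There are two genuine gaps, one in each of the directions you rely on.

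First, and most importantly, your proof of (ii)$\Rightarrow$(i) --- the heart of the theorem --- is not actually carried out. In the odd case $n=2k+1\ge 7$ with $\PCz(v_0,v_kv_{k+1})$ applying and both pentagon vertices $w,w'$ off the cycle, you yourself identify this as ``the technical heart of the argument'' and then only describe a vague iteration (``reapply $\TPCz$ to well-chosen edges \dots ensure it terminates'') without specifying the edges, the invariant, or why the process terminates with a contradiction. This is precisely the case that requires the \emph{second} half of $\INC$ (the existence of a common neighbor $z\in I(w,v)\cap I(w',v)$ for two neighbors $w,w'$ of $u$ in $I(u,v)$), not just the clique condition $\INCz$; your sketch never invokes it. The paper's argument here is finite and non-iterative: after reducing to the situation where no vertex of $I(x,v)\cap I(y,v)$ at distance $2$ from $x$ and $y$ is adjacent to $w$ or $w'$, it takes common neighbors $t$ of $x,z$ and $t'$ of $y,z$, applies the full $\INC$ to $w,t\in S_1(x)\cap I(x,v)$ to produce $q$ at distance $k-2$ from $v$ adjacent to $w$ and $t$, and then shows $d(q,y)=2$, contradicting the reduction. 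Without some such concrete argument your proof of the main implication is incomplete.

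Second, in (i)$\Rightarrow$(v) you assert that two shortest $(x,v)$- and $(y,v)$-paths chosen to share the longest possible terminal subpath, together with the edge $xy$, form an \emph{isometric} cycle. Maximality of the shared segment does not imply isometry: the cycle can admit a chord or shortcut between interior vertices of the two branches without allowing the terminal segments to be extended, so you cannot invoke Theorem~\ref{thm: well-bridged}(iii) to force length $3$ or $5$. (The paper avoids this entirely: it proves (i)$\Rightarrow$(iii) by induction on $k$ using convexity of $B_{k-1}(v)$, $B_{k-1}(w')$ and $B_{k-1}(a)$ directly, and only afterwards upgrades (iii) to (iv) and (v).) Even granting the pentagon, your passage to $\PCp$ has a further hole: the canonical $\tilde z$ produced by $\INCp(x,v)$ need not coincide with the pentagon's $z$, and you need to show $d(\tilde z,y)=2$ before you can find the single $w'$ serving all $w\in S_1(x)\cap I(x,v)$; this transfer is exactly the content of $\PCd$ ($B_2(x)\cap B_{k-2}(v)=B_2(y)\cap B_{k-2}(v)$), which the paper establishes as a separate step (iii)$\Rightarrow$(iv) before deriving (v). Relatedly, your closing of the cycle of implications needs $\TPCd\Rightarrow\TPCz$ for (iv)$\Rightarrow$(ii), which is not purely formal (one must construct the pentagon from the set equality using $\INC$); as written you only address the reverse weakening $\PCp\Rightarrow\PCd$.
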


\begin{proof}
We first prove the implication $(i) \Rightarrow (iii)$. Let $G$ be a CB-graph. By Theorem \ref{thm: INC}, $G$ satisfies $\INC$. We show that it also satisfies $\TPCu$: let $v \in V$ be a vertex of $G$ and $x, y \in S_k(v)$ such that $x \sim y$. We may assume that $k\geq 2$. Suppose  that $\TC(v,xy)$ does not hold, i.e.,  there is no $z \in B_{k-1}(v)$ such that $z \sim x, y$. We assert that in this case $\PCu(v,xy)$ holds. Let $w$ be any neighbor of $x$ in $I(x,v)$. If $k=2$, then we can set $z:=v$ and take as $w'$ any common neighbor of $y$ and $v$.
Let now $k \geq 3$ and proceed by induction on $k$. Let $a$ be a neighbor of $v$ in $I(w,v)$. If $a\in I(v,y)$, then $d(a,x)=d(a,y)=k-1$ and we can apply the induction hypothesis to the vertex $a$ and the edge $xy$. Therefore, $d(a,y)=k$. Let $w'$ be a neighbor of $y$ in $I(y,v)$. Since $\TC(v,xy)$ does not hold, $w\ne w'$. Since $w,w'\in B_{k-1}(v)$ and $x,y\notin B_{k-1}(v)$, by convexity of  $B_{k-1}(v)$, $(w,x,y,w')$ cannot be a shortest path. Thus $d(w,w')\le 2$. If $w \sim w'$, then we obtain a 4-cycle $xww'y$, which cannot be induced. Hence, either $w \sim y$ or $w'\sim y$, contrary to the assumption that $\TC(v,xy)$ does not hold. Therefore, $d(w,w')=2$. Now, let $c$ be a neighbor of $a$ in $I(a,w)$  (if $k=3$, then $c=w$). Then $d(w',c)\le d(w',w)+d(w,c)=2+k-3=k-1$. Since $d(v,w')=k-1$ and $a\in I(c,v)$, the convexity of the ball $B_{k-1}(w')$ implies that $d(w',a)\leq k-1$. Since $w'\sim y$ and $w'\nsim x$, we have $y\in I(w',x)$.
Since $d(a,y)=k$ and $x,w'\in B_{k-1}(a)$ we obtain a contradiction with the convexity of the ball $B_{k-1}(a)$.  This establishes $\TPCu$.

The implication $(iii)\Rightarrow (ii)$ is trivial. Now we prove the
implication $(ii) \Rightarrow (i)$. Let $G$ be a graph satisfying
$\INC$ and $\TPCz$. By Theorem \ref{thm: well-bridged} $(iii)$, it
suffices to show that $G$ has no isometric cycles of length different
from $3$ and $5$ (as $\INCz$ follows immediately from $\INC$). Let
$C$ be an isometric cycle of $G$ of length $n$. By $\INCz$, $n$ must be
odd, say $n=2k+1\geq 7$ and let
$C=(v_0,v_1,\ldots,v_k,v_{k+1},\ldots,v_{2k})$.  Pick $v=v_0$ of $C$
and let $x=v_k, y=v_{k+1}$, i.e., $xy$ is the (unique) edge of $C$
opposite to $v$. Then $d(v,x)=d(v,y)=k$.  Let $w=v_{k-1}$ and
$w'=v_{k+2}$ be the second neighbors of $x$ and $y$ (respectively) in
$C$. Since $C$ is an isometric cycle, $d(v,w)=d(v,w')=k-1$ and
$d(w,w')=3$.  By $\TPCz(v,xy)$, there
exists a vertex $z\in I(x,v)\cap I(y,v)$ which has distance 1 or 2 to
both vertices $x$ and $y$. If $z\sim x,y$, then $w,z\in I(x,v)$ and
$w',z\in I(y,v)$, thus by $\INCz$, either $z$ is adjacent to both
$w,w'$ or $z$ coincides with one of the vertices $w,w'$ and is
adjacent to the second one. In both cases, we conclude that
$d(w,w')\le 2$, which is impossible. Thus $d(z,x)=d(z,y)=2$ and
$d(z,v)=k-2$ and we can suppose that there is no vertex in
$I(x,v)\cap I(y,v)$ adjacent to $x$ and $y$. 

First suppose that $z$ is adjacent to one of the vertices $w,w'$, say $z\sim w$. Since $d(w,w')=3$, $z\nsim w'$. Let $t'$ be a common neighbor of $z$ and $y$.
Let $p=v_{k-2}$. Since $z,p$ belong to $I(w,v)$ and are both adjacent to $w$, by $\INCz$, the vertices $z$ and $p$ are adjacent or coincide. If $z=p$,
then $d(p,y)=2$, which is impossible because $y=v_{k+1}$ and $p=v_{k-2}$ and whence $d(p,y)=3$. Hence $z\sim p$.  Since $x,t'\in I(y,p)$ and are both adjacent to $y$,
by $\INCz$, $x\sim t'$. Since $d(t',v)=k-1$, $t'$ belongs to $I(x,v)\cap I(y,v)$ and is adjacent to $x$ and $y$, contrary to our assumption.

Thus, we can suppose that any vertex of $I(x,v)\cap I(y,v)$ at  distance 2 from $x$ and $y$ is not adjacent to any of the vertices $w$ and $w'$. Let $t\ne  w$
be a common neighbor of $x$ and $z$ and $t'\ne w'$ be a common neighbor of $y$ and $z$. By our assumption, $t\nsim y$ and $t'\nsim x$. Since $w,t\in I(x,v)$ and $x\sim w,t$, by $\INC$, $w\sim t$ and there exists a common neighbor $q$ of $w$ and $t$ at distance $k-2$ from $v$. Since $q,z \in I(t,v)$, by $\INCz$, $q \sim z$. Then $d(q,x)=2$ because $w\sim x,q$ and $d(q,t')=2$ because $z\sim q,t'$. Therefore, if $d(q,y)=3$, then $x,t'\in I(y,q)$ and $x,t'\sim y$, thus  by $\INCz$ the vertices $x$ and $t'$ must be adjacent.
Since we proved that $x\nsim t'$, we conclude that $d(y,q)=2$. Since $d(q,v)=k-2$, $q$ is a vertex of $I(x,v)\cap I(y,v)$ at distance 2 from $x$ and $y$ and adjacent to $w$, contrary to our assumption
that such vertex does not exists. This concludes the proof of the implication $(ii) \Rightarrow (i)$ and proves the equivalence $(i)\Leftrightarrow (ii)\Leftrightarrow (iii)$.

The implication $(iv)\Rightarrow (iii)$ is immediate. We now prove $(iii)\Rightarrow (iv)$. We only have to show that if $v,x,y \in V$ are such that $d(v,x)=d(v,y)=k\geq 2$, $x\sim y$ and $\TC(v,xy)$ does not hold, then $\PCd(v,xy)$ holds. We let $z\in B_2(x)\cap B_{k-2}(v)$ and $w\sim x,z$. We will show that $z \in B_2(y)\cap B_{k-2}(v)$. As we assumed $(iii)$, $\PCu(v,xy)$ holds, so there exist $z', w'\in V$ such that $xwz'w'y$ form a pentagon and $z' \in B_{k-2}(v)$. If $z=z'$, then we are immediately done. Otherwise by $\INCz(w,v)$, $z'\sim z$. In particular we have $x,w'\in B_2(z)$. As $xwz'w'y$ is a pentagon, $x\nsim w'$. Thus by $\INCz(z)$ we must have $d(z,y)\leq 2$, which is the desired result. We proved that $B_2(x)\cap B_{k-2}(v)\subseteq B_2(y)\cap B_{k-2}(v)$, and by symmetry we are done.

Since the implication $(v)\Rightarrow (iii)$ is trivial, it remains to
show that $(iv) \Rightarrow (v)$. By Theorem~\ref{thm: INC}, $G$
satisfies $\INCp$.  Let $v \in V$ be a vertex of $G$ and
$x, y \in S_k(v)$ such that $x \sim y$. Suppose that $\TC(v,xy)$ does
not hold.  By $\INCp$, there exists a vertex $z$ at distance $k-2$
from $v$ adjacent to all vertices in $S_1(x)\cap I(x,v)$. By $(iv)$,
$d(z,y) = 2$ and thus there exists $w' \sim y,z$ such that $xwzw'y$ is
a pentagon for every $w \in S_1(x)\cap I(x,v)$.
This establishes $\TPCp$ and concludes the
proof of the theorem.
\end{proof}

\begin{remark}
 Observe that to prove the implication $(i)\Rightarrow (iii)$, we only needed to use that $G$ has $3$-convex balls. Thus we also get an alternative proof of the equivalence between items $(i)$ and $(ii)$ of Theorem \ref{thm: well-bridged}.
\end{remark}

\begin{remark}\label{rem: PC+_tight}
The condition $\TPCu$ is tight, in the sense that we cannot require $z \in B_{k-2}(v)$ to be adjacent to arbitrary $w\in I(x,v), w\sim x$ and $w'\in I(y,v), w'\sim y$. The example drawn in Figure \ref{fig: ctreex} illustrates this. To see that the described graph has convex balls, note that the only pairs of vertices at distance $>2$  are $(v,x)$ and $(v,y)$. From this remark one can easily deduce that balls of radius at least $2$ are convex. On the other hand, balls of radius $1$ are also convex because $G$ does not contain induced $C_4$.

  \tikzexternaldisable
  \begin{figure}[h]
    \centering
    \begin{tikzpicture}[scale=1]
    \tikzstyle{every node}=[draw,circle,fill=black,minimum size=4pt,
                            inner sep=0pt]

    \node (v) at (-1,0) [label=left: $x$] {};
    \node (w) at (1,0) [label=right: $y$] {};
    \node (11) at (-2,1) [label=left: $w$] {};
    \node (12) at (-1,1) {};
    \node (13) at (0,1) {};
    \node (14) at (1,1) {};
    \node (15) at (2,1) [label=right: $w'$] {};
    \node (21) at (-1,2) {};
    \node (22) at (1,2) {};
    \node (u) at (0,3) [label=left: $v$] {};
    \draw[thick] (v) -- (11) -- (12) -- (13) -- (11) -- (21) -- (22) -- (12) -- (21) -- (13) -- (14) -- (21) -- (u) -- (22) -- (14) -- (w) -- (15) -- (22) -- (13);
    \draw[thick] (w) -- (v) -- (12);
    \draw[thick] (14) -- (15);
    \draw[thick] (11) to[bend right] (13);
    \draw[thick] (13) to[bend right] (15);
    \end{tikzpicture}
    \caption{A graph with convex balls and $k:=d(v,x)=d(v,y)=3$. $\TC(v,xy)$ does not apply and for vertices $w\in I(x,v), w\sim x$ and $w'\in I(y,v), w'\sim y$ there is no $z \in B_{k-2}(v)$ adjacent to $w$ and $w'$.}  \label{fig: ctreex}
\end{figure}
\end{remark}

\subsection{Pairs of intersecting pentagons}
As observed earlier,  two pentagons glued along one common edge do not define a CB-graph. In particular, such pairs of pentagons cannot occur as isometric subgraphs
of CB-graphs. In this subsection, we metrically describe the subgraphs of CB-graphs induced by pairs of pentagons intersecting in at least two vertices.  If $\pi=abcde$ is a pentagon of a graph $G$, then we call a vertex $x$ a \emph{universal vertex of $\pi$} if $x$ is adjacent to all the  vertices of $\pi$.  

We denote by $\PT$ the graph obtained by gluing a pentagon and a
triangle along a common edge (see Figure~\ref{fig: PTPP},
left). Similarly we denote by $\PPu$ the graph obtained by gluing two
pentagons together along a common edge (see Figure~\ref{fig: PTPP},
center) and by $\PPd$ the graph obtained by gluing two pentagons
together along a common path of length 2 (see Figure~\ref{fig: PTPP},
right).

\tikzexternaldisable
  \begin{figure}[h]
    \centering
    \begin{tikzpicture}[scale=1]

    \begin{scope}[xshift = 0cm]
     \tikzstyle{every node}=[draw, circle,fill=black,minimum size=4pt,
                            inner sep=0pt]
    \node (lab) at (0, -1.5)[draw=white, fill=white] {$\PT$};
    \node (1) at (0:1) {};
    \node (2) at (60:1) {};
    \node (3) at (120:1) {};
    \node (4) at (180:1) {};
    \node (5) at (240:1) {};
    \node (6) at (300:1) {};

    \draw[very thick] (1)--(2)--(3)--(4) --(5) --(6)--(1);
    \draw[very thick] (3)--(5);
    \end{scope}

    \begin{scope}[xshift = 6cm]
     \tikzstyle{every node}=[draw, circle,fill=black,minimum size=4pt,
                            inner sep=0pt]
    \node (lab) at (-0.5, -1.5)[draw=white, fill=white] {$\PPu$};
    \node (a) at (-1.75,1) {};
    \node (b) at (-2.5,0) {};
    \node (c) at (-1.75,-1) {};
    \node (d) at (-0.5,-0.6) {};
    \node (e) at (-0.5,0.6) {};
    \node (f) at (0.75,1) {};
    \node (g) at (1.5,0) {};
    \node (h) at (0.75,-1) {};

    \draw[very thick] (a)--(b)--(c)--(d)--(e)--(a);
    \draw[very thick] (e)--(f)--(g)--(h)--(d);
    \end{scope}

        \begin{scope}[xshift = 11cm]
     \tikzstyle{every node}=[draw, circle,fill=black,minimum size=4pt,
                            inner sep=0pt]
    \node (lab) at (0, -1.5)[draw=white, fill=white] {$\PPd$};
    \node (1) at (30:1) {};
    \node (2) at (90:1) {};
    \node (3) at (150:1) {};
    \node (4) at (210:1) {};
    \node (5) at (270:1) {};
    \node (6) at (330:1) {};
    \node (7) at (0:0) {};

    \draw[very thick] (1)--(2)--(3)--(4) --(5) --(6)--(1);
    \draw[very thick] (2)--(7)--(5);

    \end{scope}

    \end{tikzpicture}
    \caption{The graphs $\PT$, $\PPu$ and $\PPd$.}  \label{fig: PTPP}
\end{figure}

\begin{lemma}
\label{lem: deuxpent} Let $G$ be a graph in which $\TPCz_{\leq3}$, $\INC_{\leq 3}$ hold and such that every induced subgraph isomorphic to $\PPu$ has diameter at most $3$.
Let $\pi_1$ and $\pi_2$ be two pentagons of $G$ intersecting in at least
two vertices. Then either $\diam(\pi_1\cup \pi_2)=2$ or one of the pentagons $\pi_1$ or $\pi_2$ has a universal vertex. \end{lemma}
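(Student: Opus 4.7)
The plan is a case analysis on $s := |\pi_1 \cap \pi_2| \in \{2,3,4,5\}$, working from large $s$ (easy) down to $s=2$ with a shared edge (hard). For $s=5$ the pentagons coincide and have diameter $2$. For $s=4$, any four vertices of a pentagon form a path of length $3$, so both pentagons decompose as this shared path plus a single vertex joined to its two endpoints, and a direct check gives $\diam(\pi_1\cup\pi_2)=2$. For $s=3$, the three shared vertices induce either a path of length $2$ (the $\PPd$-type configuration) or an edge plus an isolated vertex; the latter is immediate since the isolated shared vertex has four distinct neighbors across $\pi_1\cup\pi_2$. For the former, writing $\pi_1=(x,y,z,a,b)$ and $\pi_2=(x,y,z,c,d)$ with shared path $x-y-z$, the only pairs not trivially at distance $\le 2$ are $(a,d)$ and $(b,c)$; if $d(a,d)=3$ then $b$ and $z$ both lie in $S_1(a)\cap I(a,d)$, and $\INCz(a,d)$ forces $b\sim z$, contradicting $\pi_1$ being induced. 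The pair $(b,c)$ is symmetric.

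For $s=2$ with a non-adjacent shared pair, write $\pi_1=(p,A,q,B,C)$ and $\pi_2=(p,A',q,B',C')$; the only potentially distant cross-pairs are $(B,C')$ and $(B',C)$. If $d(B,C')=3$, then $p$ and $B'$ are both neighbors of $C'$ in $I(C',B)$ (using $d(p,B)=2$ and $d(B',B)\le 2$ via $q$), so $\INCz(C',B)$ forces $p\sim B'$, contradicting $\pi_2$ being induced. The symmetric argument bounds $(B',C)$.

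The shared-edge case is the main difficulty. Write $\pi_1=(x,y,a,b,c)$ and $\pi_2=(x,y,d,e,f)$ with shared edge $xy$ and opposite vertices $b\in\pi_1$, $e\in\pi_2$. I apply $\TPCz_{\le 3}(b,xy)$: if $\TC(b,xy)$ holds, yielding $z\sim b,x,y$, then $\INCz(x,b)$ applied to the neighbors $c,z$ of $x$ in $I(x,b)$ forces $z\sim c$, and symmetrically $\INCz(y,b)$ forces $z\sim a$, so $z$ is a universal vertex of $\pi_1$; if $\TC(b,xy)$ fails, $\PCz(b,xy)$ is satisfied only trivially by $\pi_1$ itself. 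The same reasoning with $\TPCz_{\le 3}(e,xy)$ handles $\pi_2$. If either $\TC$ holds, the lemma is proved.

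Otherwise no universal vertex exists and I expect $\pi_1\cup\pi_2$ to induce a $\PPu$, so the $\PPu$-diameter hypothesis gives $d(b,e)\le 3$. For each cross-pair $(u,v)\in\{a,b,c\}\times\{d,e,f\}$, the assumption $d(u,v)=3$ should be excluded by $\INCz$: one identifies two pentagon-neighbors of $u$ (or $v$) that must both lie in the corresponding interval and be forced adjacent by $\INCz$, contradicting $\pi_1$ or $\pi_2$ being induced. This final step is the main obstacle, since the $\PPu$-diameter hypothesis directly bounds only $d(b,e)$; propagating the bound to the remaining cross-pairs, while simultaneously ruling out ``extra'' edges that would push $\pi_1\cup\pi_2$ outside the induced $\PPu$ setting, requires careful bookkeeping of the various $\INCz$ applications and their interactions.
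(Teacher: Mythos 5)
Your handling of the cases $|\pi_1\cap\pi_2|\ge 3$ and of two pentagons meeting in a non-adjacent pair is correct and matches the paper's (brief) treatment of those cases; the observation that $\TC(b,xy)$ from an apex to the shared edge, combined with $\INCz$, immediately produces a universal vertex is also sound and reappears inside several of the paper's subcases. The problem is that your argument stops exactly where the lemma becomes hard. When neither $\TC(b,xy)$ nor $\TC(e,xy)$ holds (equivalently, when neither pentagon has a universal vertex, since a universal vertex of $\pi_1$ is precisely a common neighbour of $x,y,b$), you are left having to prove that $\diam(\pi_1\cup\pi_2)=2$, and for this you offer only the hope that each distance-$3$ cross-pair ``should be excluded by $\INCz$.'' That is not a proof, and the strategy as described cannot succeed: $\PCz(b,xy)$ at distance $2$ is, as you note, vacuous, so applying $\TPC$ only at the shared edge extracts no further information, and $\INCz$ alone merely forces chords in configurations that need not arise. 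Note also that the $\PPu$-diameter hypothesis only gives $\diam(\pi_1\cup\pi_2)\le 3$, and the diameter-$3$ outcome genuinely occurs (the paper exhibits a CB-graph consisting of two pentagons glued along an edge with diameter $3$ in which only one pentagon has a universal vertex), so the remaining case is not degenerate.

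The paper closes this gap by a long case analysis on which of $d(b,f),d(b,h),d(g,a),d(g,c)$ equal $3$ (in its labelling $\pi_1=abcde$, $\pi_2=defgh$ with shared edge $de$ and apexes $b,g$), and the decisive moves are applications of $\TPCu$ and $\TPCp$ from each apex to the \emph{far} edge of the other pentagon — e.g.\ $\TPC(b,fg)$ with $d(b,f)=d(b,g)=3$ — together with $2$-convexity of radius-$2$ balls and, in the worst subcase, the construction of four auxiliary vertices $s,t,u,v$ whose forced adjacencies create forbidden induced $4$-cycles. This is precisely why the hypotheses demand $\TPCz_{\le 3}$ rather than $\TPCz_{\le 2}$: the conditions must be invoked on edges at distance $3$ from a vertex, which your proposal never does. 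As written, your proof establishes the lemma only in the cases where one of the two $\TC$ conditions at the shared edge fires, and the remaining case — the actual content of the lemma — is missing.
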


\begin{proof}
Observe that if we look back at the proof of Theorem \ref{thm: TPC}, we can assume that the variants of $\TPCz$ also apply when the distances involved are at most $3$. We also can use the $2$-convexity of balls of radius at most $2$.

If $\pi_1$ and $\pi_2$ intersect in two non-adjacent vertices, then using the $2$-convexity of balls of radius 2, one can conclude that $d(x,y)\le 2$ for any vertex $x\in \pi_1\setminus \pi_2$ and $y\in \pi_2\setminus \pi_1$.

Now, suppose that the pentagons $\pi_1$ and $\pi_2$ intersect in
exactly two adjacent vertices, say $\pi_1=abcde$ and $\pi_2=defgh$.

\begin{claim}\label{claim-PPinduced}
  If $\pi_1\cup \pi_2$ is not induced, then
  $\diam(\pi_1\cup \pi_2)=2$.
\end{claim}

\begin{proof}
  Observe that to establish this claim, it is sufficient to show that
  $d(a,g), d(a,h), d(b,f), d(b,g),$ $ d(b,h), d(c,f), d(c,g) \leq 2$.
  Suppose first that $a \sim f$. Then since $d(a,g) \leq 2 = d(a,d)$,
  by the 2-convexity of $B_2(a)$, we have $d(a,h) \leq 2$. By
  symmetry, we also have $d(b,f), d(c,f) \leq 2$. Since
  $d(c,f), d(c,h) \leq 2$, by the 2-convexity of $B_2(c)$, we have
  $d(c,g) \leq 2$ and similarly, we have $d(b,h) \leq 2$. Since
  $d(b,f), d(b,h) \leq 2$, by the 2-convexity of $B_2(b)$, we have
  $d(b,g) \leq 2$. Consequently, $\diam(\pi_1\cup \pi_2)=2$ if
  $a \sim f$, or for similar reasons, if $c \sim h$. If $a \sim g$,
  then $agfe$ is a 4-cycle that cannot be induced. Since $e\nsim g$,
  necessarily $a \sim f$ and thus $\diam(\pi_1\cup \pi_2)=2$ by the
  previous case. Similarly, if $c \sim g$, $b \sim f$, or $b \sim h$,
  we get $\diam(\pi_1\cup \pi_2)=2$. If $a \sim h$, then $aedh$ is a
  4-cycle that cannot be induced and thus either $a\sim d$ or
  $e \sim h$, which is impossible. For the same reasons, $c \nsim
  f$. Finally, suppose that $b \sim g$. Then
  $d(a,g), d(b,f), d(b,h),d(c,g) \leq 2$. By the 2-convexity of
  $B_2(a)$, we get $d(a,h) \leq 2$ and similarly, we have
  $d(c,f) \leq 2$, establishing that $\diam(\pi_1\cup \pi_2)=2$. This
  ends the proof of the claim. 
\end{proof}

If $\diam(\pi_1\cup \pi_2)=2$, then we are done. By
Claim~\ref{claim-PPinduced}, we can assume that $\pi_1 \cup \pi_2$ is
induced and isomorphic to $\PPu$. Consequently,
$\diam(\pi_1\cup \pi_2)=3$. If $d(b,g)=2$, by convexity of $B_2(b)$
and $B_2(g)$ we will conclude that
$d(b,f)=d(b,h)=d(g,a)=d(g,c)=2$. Then, by $2$-convexity of $B_2(a)$
and $B_2(c)$, we conclude that $d(c,f) = d(a,h) = 2$, whence
$\pi_1\cup \pi_2$ has diameter 2. Thus, $d(b,g)=3$. The $2$-convexity
of $B_2(b)$ implies that $d(b,f)=3$ or $d(b,h)=3$. Analogously, the
$2$-convexity of $B_2(g)$ implies that $d(a,g)=3$ or
$d(c,g)=3$. 

 First suppose that $d(b,f)=d(b,h)=3$. This implies that $d(g,a)=d(g,c)=3$. Indeed, suppose by way if contradiction that $d(g,c)=2$ and $d(g,a)=3$. Then, $e,g\in B_2(c)$ and $f\in I(e,g)$. By $2$-convexity of $B_2(c)$ we conclude that $d(c,f)\le 2$. But this is impossible because $a,c\in B_2(f)$ and $b\in I(a,c)\setminus B_2(f)$, contrary to the $2$-convexity of $B_2(f)$. Consequently, $d(g,a)=d(g,c)=3$. We apply $\TPCu(b,fg)$. If $\TC$ applies, then there exists $z\sim f,g$ such that $d(b,z)=2$. By $\INCz(b)$ this means that $z \sim e$. Hence we get $d(z,d)\leq 2$ and $d(b,z)=2$, so by $2$-convexity of $B_2(z)$ we must have $d(c,z)\leq 2$. Thus by $\INCz(c)$ we get $z\sim h$. We get the last adjacency $z\sim d$ because the $4$-cycle $zedh$ cannot be induced, showing that $z$ is universal for pentagon $\pi_2=defgh$. If $\PCu$ applies for neighbor $e$ of $f$, then there exists $z \sim b, e$ with $d(z,g)=2$. Then the $4$-cycle $abze$ cannot be induced and we get $z\sim a$. As $d(z,g)=2$ and $d(z,d)\leq 2$, $2$-convexity of $B_2(z)$ implies that $d(z,h)=2$. Thus by $\INCz(h)$ we must have $z \sim c$. As the $4$-cycle $zbcd$ cannot be induced we get $z\sim d$, so $z$ is universal for pentagon $\pi_1=abcde$.

 Therefore, we can suppose without loss of generality that $d(b,f)=3$ and $d(b,h)=2$. By analogy, we also conclude that one of the distances $d(g,a),d(g,c)$ is 3 and another one is 2. First, let $d(g,a)=3$ and $d(g,c)=2$  and apply $\TPCu(b,fg)$. If $\TC$ applies, then there exists $z\sim f,g$ such that $d(b,z)=2$. By $\INCz(b)$ this means that $z \sim e$ and $z\sim h$. As the $4$-cycle $ezhd$ cannot be induced, we conclude that $z\sim h$, showing that $z$ is universal for $\pi_2$.  If $\PCu$ applies with respect to the neighbor $e$ of $f$, then there exists $z \sim b, e$ with $d(z,g)=2$. Again, when considering the $4$-cycle $aezb$ we get $z\sim a$. Then $\INCz(g)$ implies that $z \sim c$. Since the $4$-cycle $zcde$ cannot be induced, so $z\sim d$, whence $z$ is universal for $\pi_1$.

 By previous cases and symmetric arguments, it remains to deal with the case where $d(b,f)=d(b,g)=d(c,g)=3$ and $d(b,h)=d(a,g)=2$.
 First note that $d(c,f)=3$, otherwise $d(c,f)=2$ and the $2$-convexity of $B_2(c)$ would imply that $d(c,g)=2$, contradicting our hypothesis.
 Hence we may apply $\TPCp(b,fg)$ and $\TPCp(c,fg)$. Assume first that $\TC(b,fg)$ applies, i.e. that there exists some $z\sim f,g$ with $d(b,z)=2$. Then because $d(b,h)=d(b,e)=2$, $\INCz(b)$ gives us $z\sim h$ and $z\sim e$, and the last adjacency $z\sim  d$ immediately follows when considering the $4$-cycle $ezhd$, hence $z$ is a universal vertex for $defgh$. Observe that we can make a similar reasonning when $\TC(c,fg)$ applies. By symmetry we are also done when $\TC(g,bc)$ or $\TC(f,bc)$ apply. Hence we may assume now that $\PCp$ and $\PCu$ respectively hold when we apply $\TPCp(b,fg), \TPCp(c,fg),\TPCu(g,bc)$ and $\TPCu(f,bc)$.
 The application of $\PCp(b,fg)$ gives us a vertex $s\sim b$ such that $d(s,f)= d(s,g)=2$ and $s$ is adjacent to every vertex of the clique $B_1(g)\cap B_2(b)$. Applying $\PCp(c,fg)$, we similarly find a vertex $t\sim c$ such that $d(t,f)=d(t,g)=2$ and $t$ is adjacent to every vertex of $B_1(f)\cap B_2(c)$. Now the applications of $\PCu(g,bc)$ and $\PCu(f,bc)$ to $b \sim c$ with respect to $t$ and $s$ give us vertices $u\sim g,t$ and $v\sim f,s$ such that $d(u,b)=d(u,c)=d(v,b)=d(v,c)=2$. Thanks to one of the previous cases (when $\TC$ applies somewhere), we may assume that $s\neq t$ and $u\neq v$. By definition of $s$, we must have $s\sim u$. By definition of $t$ we get $t\sim v$.
 Hence $svtu$ form a $4$-cycle, which implies that $s\sim t$ or $u\sim v$. Observe that these two adjacencies create the $4$-cycles $bstc$ or $uvfg$, which ultimately imply that $s\sim c$ or $t\sim b$ or $v\sim g$ or $u\sim f$. Any of these four adjacencies correspond to one of the applications of $\TC$ that we already covered, concluding the proof; see Figure \ref{fig: lastcase} for an illustration of the last case.
\end{proof}

  \tikzexternaldisable
  \begin{figure}[h]
    \centering
    \begin{tikzpicture}[scale=1]
    \tikzstyle{every node}=[draw,circle,fill=black,minimum size=4pt,
                            inner sep=0pt]

    \node (a) at (-2,2) [label=above: $a$] {};
    \node (b) at (-3,1) [label=left: $b$] {};
    \node (c) at (-2,0) [label=below: $c$] {};
    \node (d) at (-0.5,0.25) [label=below: $d$] {};
    \node (e) at (-0.5,1.75) [label=above: $e$] {};
    \node (f) at (1,2) [label=above: $f$] {};
    \node (g) at (2,1) [label=right: $g$] {};
    \node (h) at (1,0) [label=below: $h$] {};
    \node (s) at (-1.5,1.25) [label=above: $s$] {};
    \node (t) at (-1.5,0.75) [label=below: $t$] {};
    \node (u) at (0.5,0.75) [label=below: $u$] {};
    \node (v) at (0.5,1.25) [label=above: $v$] {};
    \draw (e) -- (a) -- (b) -- (c) -- (d) -- (e) -- (f) -- (g) -- (h) -- (d);
    \draw (b) -- (s) -- (v) -- (t) -- (c);
    \draw (f) -- (v);
    \draw (s) -- (u) -- (g);
    \draw (t) -- (u);
    \draw (s) -- (t)[dashed];
    \draw (u) -- (v)[dashed];
    \draw (b) -- (t)[dashed];
    \draw (c) -- (s)[dashed];
    \draw (v) -- (g)[dashed];
    \draw (u) -- (f)[dashed];
    \end{tikzpicture}
    \caption{The case $d(b,f)=d(g,c)=3$ and $d(b,h)=d(g,a)=2$ of the proof of Lemma \ref{lem: deuxpent}.
    The dashed edges occur at the end of the proof (not necessarily all of them).}  \label{fig: lastcase}
\end{figure}

\begin{example} \label{CB-pentagons-not-in-W5}
Lemma \ref{lem: deuxpent} is tight in the following sense: there exist CB-graphs with two pentagons glued together along one common edge having diameter $3$ and
such that only one of the two pentagons has a universal vertex; see Figure \ref{fig: diameter3notwm} for an illustration of such a graph. The pentagon $deabc$
is not included in any 5-wheel. 

  \tikzexternaldisable
  \begin{figure}[h]
    \centering
    \begin{tikzpicture}[scale=0.75]
    \tikzstyle{every node}=[draw,circle,fill=black,minimum size=4pt,
                            inner sep=0pt]

    \node (a) at (0,0) [label=below: $a$] {};
    \node (b) at (3,0) [label=below: $b$] {};
    \node (c) at (3,2) [label=right: $c$] {};
    \node (d) at (3,4) [label=right: $d$] {};
    \node (e) at (0,2) [label=left: $e$] {};
    \node (f) at (0,4) [label=left: $f$] {};
    \node (g) at (0,6) [label=above: $g$] {};
    \node (h) at (3,6) [label=above: $h$] {};
    \node (s) at (1.5,4) {};
    \node (t) at (1.5,2) {};
    \draw[very thick, magenta] (e) -- (a) -- (b) -- (c) -- (d) -- (e) -- (f) -- (g) -- (h) -- (d);
    \draw[thick] (g)--(s)--(f);
    \draw[thick] (e)--(s)--(d);
    \draw[thick] (h)--(s)--(t);
    \draw[thick] (s)--(c)--(t)--(f);
    \draw[thick] (t)--(b);
    \end{tikzpicture}
    \caption{A CB-graph of diameter $3$ consisting of two pentagons (in magenta) and of two additional vertices. Only the upper pentagon has a universal vertex.}  \label{fig: diameter3notwm}
\end{figure}
\end{example}

\begin{example}\label{CB-pentagons-infinite}
The first graph from  Figure \ref{fig: infinitegraph} shows that a property similar to Lemma \ref{lem: deuxpent} does not hold if we glue two pentagons to two opposite edges of a $C_4$ with a diagonal: it has diameter $4$, it has convex balls,  and none of its pentagons admit a universal vertex. This graph is also interesting because we can extend this construction to find an infinite $2$-connected CB-graph with an infinite number of pentagons and such that no pentagon has a universal vertex. Clearly, all such CB-graphs are not weakly systolic. The second graph from Figure \ref{fig: infinitegraph} shows how to get such a CB-graph. It it obtained by merging two copies of the first graph by identifying the vertices of one common pentagon from each copy. One can extend this construction by gluing an infinite number of pentagons, in a path-like way, where each connection between three successive pentagons is described by the second graph of Figure \ref{fig: infinitegraph}.
\end{example}

  \tikzexternaldisable
  \begin{figure}[h]
    \centering
    \begin{tikzpicture}[scale=1]
    \tikzstyle{every node}=[draw,circle,fill=black,minimum size=4pt,
                            inner sep=0pt]
    \begin{scope}[yshift=0cm]
    \node (a) at (-2,2) {};
    \node (b) at (-3,1) {};
    \node (c) at (-2,0) {};
    \node (d1) at (-0.5,0.25) {};
    \node (d2) at (0.5,0.25) {};
    \node (e1) at (-0.5,1.75) {};
    \node (e2) at (0.5,1.75) {};
    \node (f) at (2,2) {};
    \node (g) at (3,1) {};
    \node (h) at (2,0) {};
    \node (s) at (-1.5,1) {};
    \node (v) at (1.5,1) {};
    \draw (e1) -- (a) -- (b) -- (c) -- (d1) -- (e1) -- (e2) -- (d2) --(d1);
    \draw (e1) -- (d2) -- (h) -- (g) -- (f) -- (e2);
    \draw (b) -- (s) -- (a) -- (e1) -- (s) -- (d2) -- (e1) -- (v) -- (d2) -- (h) -- (v) -- (g);
    \end{scope}

    \begin{scope}[xshift=-1.5cm, yshift=-5cm]
    \node[draw,circle,fill=green,minimum size=4pt, inner sep=0pt] (a) at (-2,2) {};
    \node[draw,circle,fill=green,minimum size=4pt, inner sep=0pt] (b) at (-3,1) {};
    \node[draw,circle,fill=green,minimum size=4pt, inner sep=0pt] (c) at (-2,0) {};
    \node[draw,circle,fill=green,minimum size=4pt, inner sep=0pt] (d1) at (-0.5,0.25) {};
    \node[draw,circle,fill=magenta,minimum size=4pt, inner sep=0pt] (d2) at (0.5,0.25) {};
    \node[draw,circle,fill=green,minimum size=4pt, inner sep=0pt] (e1) at (-0.5,1.75) {};
    \node[draw,circle,fill=magenta,minimum size=4pt, inner sep=0pt] (e2) at (0.5,1.75) {};
    \node[draw,circle,fill=magenta,minimum size=4pt, inner sep=0pt] (f1) at (2,2) {};
    \node[draw,circle,fill=blue,minimum size=4pt, inner sep=0pt] (f2) at (2.5,2.5) {};
    \node[draw,circle,fill=magenta,minimum size=4pt, inner sep=0pt] (g1) at (3,1) {};
    \node[draw,circle,fill=blue,minimum size=4pt, inner sep=0pt] (g2) at (3.5,1.5) {};
    \node[draw,circle,fill=blue,minimum size=4pt, inner sep=0pt] (i) at (5,2.25) {};
    \node[draw,circle,fill=blue,minimum size=4pt, inner sep=0pt] (j) at (4.75,3.75) {};
    \node[draw,circle,fill=blue,minimum size=4pt, inner sep=0pt] (k) at (3.25,3.75) {};
    \node[draw,circle,fill=magenta,minimum size=4pt, inner sep=0pt] (h) at (2,0) {};
    \node[draw,circle,fill=green,minimum size=4pt, inner sep=0pt] (s) at (-1.5,1) {};
    \node[draw,circle,fill=magenta,minimum size=4pt, inner sep=0pt] (v) at (1.5,1) {};
    \node[draw,circle,fill=blue,minimum size=4pt, inner sep=0pt] (w) at (3.85,2.75) {};
    \draw (e1) -- (a) -- (b) -- (c) -- (d1) -- (e1) -- (e2) -- (d2) --(d1);
    \draw (e1) -- (d2) -- (h) -- (g1) -- (f1) -- (e2);
    \draw (b) -- (s) -- (a) -- (e1) -- (s) -- (d2) -- (e1) -- (v) -- (d2) -- (h) -- (v) -- (g1);
    \draw (f1)--(f2)--(k)--(j)--(i)--(g2)--(g1)--(f2)--(g2);
    \draw (g1)--(w)--(f2)--(k)--(w)--(j);
    \draw (v) -- (f2);
    \end{scope}

    \end{tikzpicture}
    \caption{The above graph shows an elementary gluing of two pentagons with the help of one layer of triangles. The beneath graph shows how to merge two copies of the first one by identifying $6$ common vertices, that are represented in magenta in the figure.}\label{fig: infinitegraph}
\end{figure}

\subsection{Triangle-free CB-graphs}
As mentioned in the introduction, every graph of diameter $2$ and girth $5$ is a CB-graph. In particular, the Petersen graph and the Hoffman-Singleton graph are CB-graphs. It is also easy to construct CB-graphs of arbitrary diameter and of girth $5$.  Namely, if $G_1$ and $G_2$ are two CB-graphs, then the \emph{wedge graph} $G_1\bigvee G_2$
obtained by identifying a single vertex of $G_1$ with a single vertex of $G_2$ also has convex balls. However, the graphs constructed in this way are not $2$-connected. The next result shows that this is the unique way
of producing CB-graphs of girth $5$ and arbitrary diameter.

\begin{proposition}
 \label{thm: Moore}
 The finite $2$-connected triangle-free CB-graphs are exactly the Moore graphs of diameter $2$. Consequently, every block in a finite triangle-free CB-graph is either a pentagon, or the Petersen graph,
 or the Hoffman-Singleton graph, or (if it exists) a Moore graph of diameter $2$ and degree $57$.
\end{proposition}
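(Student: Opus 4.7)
Plan: Let $G$ be a finite $2$-connected triangle-free CB-graph. The strategy is to show $G$ is geodetic of girth exactly $5$ and of diameter $2$, and then to identify it with a Moore graph via the Hoffman-Singleton theorem.

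Since $G$ is a CB-graph, it satisfies $\INCz$ by Theorem~\ref{thm: well-bridged}; combined with triangle-freeness this forces $S_1(u)\cap I(u,v)$ to contain at most one vertex for every pair of distinct vertices $u,v$, because two adjacent vertices in this clique would, together with $u$, form a triangle. Hence the first step of every shortest $u$--$v$-path is unique, and by induction on $d(u,v)$, the graph $G$ is geodetic. The same $\INCz$-argument rules out any induced $C_4 = abcd$ (the vertices $b$ and $d$ would be two non-adjacent neighbors of $a$ lying in $I(a,c)$), and since any $4$-cycle in a triangle-free graph is automatically induced, $G$ has no $C_4$ either. Thus $G$ has girth at least $5$, and because the shortest cycle of $G$ is always isometric while Theorem~\ref{thm: well-bridged} permits only isometric cycles of length $3$ or $5$, the girth of $G$ equals exactly $5$.

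To show $\diam(G)=2$, suppose towards a contradiction that $D=\diam(G)\geq 3$, and pick $u,v$ with $d(u,v)=D$ together with the unique shortest path $u=u_0,u_1,\ldots,u_D=v$. Every neighbor $v'$ of $v$ distinct from $u_{D-1}$ must satisfy $d(u,v')=D$: distance $D+1$ is ruled out by the diameter, and distance $D-1$ would place $v'$ as a second neighbor of $v$ in $I(u,v)$, contradicting the geodeticity observation above. Applying $\PCz$ to the edge $vv'$ at uniform distance $D$ from $u$ produces a pentagon $v\,u_{D-1}\,u_{D-2}\,x'\,v'$: the vertex $z\in B_{D-2}(u)$ provided by $\PCz$ is forced to be $u_{D-2}$ by geodeticity, and $x'$ must be the unique $S_{D-1}(u)$-neighbor of $v'$ and adjacent to $u_{D-2}$. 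The assignment $v'\mapsto x'$ is injective (two distinct $v'_1,v'_2$ mapping to the same $x'$ would give two common neighbors $v$ and $x'$ of $v'_1,v'_2$, impossible in a geodetic graph), so its image together with $u_{D-1}$ exhibits $\deg(v)$ distinct neighbors of $u_{D-2}$ in $S_{D-1}(u)$. Adding the neighbor $u_{D-3}$ of $u_{D-2}$ on the path (using $D\geq 3$) gives $\deg(u_{D-2})\geq \deg(v)+1$. Iterating this strict inequality (replacing the diametric endpoint by the higher-degree vertex produced, and choosing at each step a maximum-degree vertex among the current diametric endpoints) yields an impossible strictly increasing sequence of degrees in a finite graph; equivalently, one may appeal to the classical classification of $2$-connected finite geodetic graphs of girth $\geq 5$ as Moore graphs of diameter $2$. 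Thus $\diam(G)=2$, and combining with the girth-$5$ conclusion, $G$ is a Moore graph of diameter $2$, so by the Hoffman-Singleton theorem its regular degree lies in $\{2,3,7,57\}$, yielding respectively $C_5$, the Petersen graph, the Hoffman-Singleton graph, or (conjecturally) the $(57,5)$-Moore graph.

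For the block consequence, any block $B$ of a finite triangle-free CB-graph $G$ is itself triangle-free and $2$-connected, and distances in $B$ coincide with those in $G$ since any shortest $u$--$v$-path with $u,v\in V(B)$ stays inside $B$ (leaving $B$ would force a repeated passage through the unique cut vertex separating $B$ from the outside). Therefore the balls of $B$ equal the intersections of balls of $G$ with $V(B)$, hence are convex, so $B$ is a CB-graph. The first part of the proposition then identifies each non-trivial block as a pentagon, the Petersen graph, the Hoffman-Singleton graph, or the conjectural $(57,5)$-Moore graph. The main obstacle is making the degree-increase iteration fully rigorous in the diameter-$2$ step, the careful point being that at each iteration one must guarantee that the resulting higher-degree vertex is again a diametric endpoint; this is most cleanly circumvented by invoking the geodetic-graph classification.
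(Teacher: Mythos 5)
Your reduction steps are fine and essentially parallel the paper's: $\INCz$ plus triangle-freeness gives geodeticity and excludes $C_4$, the isometric-cycle condition gives girth $5$, blocks are isometric (indeed convex) so they inherit the CB property, and once diameter $2$ is known one lands on the Moore graphs (the paper routes this last step through Stemple's classification of diameter-$2$ geodetic graphs rather than directly through girth $5$ and Hoffman--Singleton, but that difference is cosmetic). Your first degree-counting step is also correct and nicely executed: for a diametral pair $(u,v)$ with $D\ge 3$, each of the $\deg(v)-1$ neighbors $v'\ne u_{D-1}$ of $v$ stays in $S_D(u)$, the pentagon condition (forced, since $\TC$ would create a triangle) pins down $w=u_{D-1}$, $z=u_{D-2}$ by geodeticity, and the injective map $v'\mapsto x'$ together with $u_{D-1}$ and $u_{D-3}$ yields $\deg(u_{D-2})\ge\deg(v)+1$.

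The gap is exactly where you flag it, and it is not repairable by the two escape routes you offer. The iteration does not close: $u_{D-2}$ need not be a diametral endpoint, and even choosing $v$ of maximum degree among diametral endpoints only tells you that $u_{D-2}$ has eccentricity $<D$, which is no contradiction; your inequality requires that \emph{every} neighbor of $v$ lie in $B_D(u)$, which you get only because $(u,v)$ is diametral. The fallback, "the classical classification of $2$-connected finite geodetic graphs of girth $\ge 5$ as Moore graphs of diameter $2$," is not an existing theorem: the classical classifications concern geodetic graphs of diameter $2$ (Stemple) and strongly geodetic graphs (which is a strictly stronger hypothesis than geodetic of girth $5$ when the diameter exceeds $2$), and ruling out triangle-free geodetic blocks of larger diameter in general is not something you can cite. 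Indeed, the diameter-$2$ claim is the entire content of the proposition, and the paper's proof of it uses CB-specific structure well beyond geodeticity and girth $5$: Lemma \ref{lem: deuxpent} (two pentagons sharing an edge in a triangle-free CB-graph span a set of diameter $2$), the consequence that no two consecutive vertices of a geodesic from the basepoint are incident to horizontal BFS-edges, and a traversal of a cycle through $u_{k-1}$ and $u_{k-3}$ avoiding $u_{k-2}$ supplied by $2$-connectivity, showing every vertex of that path has $u_{k-2}$ as parent or grandparent --- a contradiction at $u_{k-3}$. Some argument of this kind, exploiting the pentagon conditions rather than only unique geodesics, is needed to finish your proof.
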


\begin{proof}
  A graph $G$ is \emph{geodetic} if there exists a unique shortest
  path between any pair of vertices of $G$. Observe that by INC, a
  triangle-free CB-graph is necessarily geodetic.  To prove
  Proposition \ref{thm: Moore} we use the following result of Stemple
  \cite{Stemple} about geodetic graphs of diameter $2$. We use its
  formulation from \cite{Blokhuis1988}.

  \begin{theorem}[\cite{Stemple}]
    \label{thm: Stemple}
    Let $G$ be a diameter $2$ geodetic graph. Then $G$ is of one of
    the following three types:
    \begin{enumerate}[(i)]
    \item $G$ contains a vertex adjacent to all other vertices.
    \item $G$ is regular.
    \item exactly two different degrees occur in $G$, say $a<b$, and
      if $B$ denotes the set of vertices of degree $b$, then it
      contains a clique of size $b-a+2>2$.
    \end{enumerate}
  \end{theorem}

  It is easy to observe that if $G$ satisfies item $(i)$, then it is
  either a star, in which case it is not $2$-connected, or it has a
  triangle. Moreover if item $(iii)$ is satisfied, then $G$ clearly
  has a triangle. Consequently, if $G$ is a 2-connected triangle-free
  CB-graph of diameter 2, then $G$ is regular and by the
  Hoffman-Singleton theorem~\cite{HoSi}, $G$ is either a pentagon, or
  the Petersen graph, or the Hoffman-Singleton graph, or (if it
  exists) a Moore graph of diameter $2$ and degree $57$.  Hence to
  prove Proposition \ref{thm: Moore}, it remains to show the following
  claim:

\begin{claim}\label{diameter2}
Any finite 2-connected  triangle-free  CB-graph $G$ has diameter at most $2$.
\end{claim}

\begin{proof} Suppose by way of contradiction that $\diam(G)\ge 3$ and let $v_0$ be a vertex of eccentricity at least 3, i.e., a vertex for which there exists a vertex $u$ with $d(v_0,u)\ge 3$.
Consider a BFS-ordering of the vertices of $G$ starting from $v_0$ (see Section \ref{sec: dism} for definitions). For a vertex $x$, we denote by $f(x)$ its parent in this BFS-order.
We call an edge $uv$ of $G$ \emph{horizontal} if $d(v_0,u)=d(v_0,v)$ and \emph{vertical} if $d(v_0,u)\ne d(v_0,v)$. We call a path $P$ of $G$ \emph{horizontal} (respectively, \emph{vertical})
if  all its edges are horizontal (respectively, vertical).
Since $G$ is a triangle-free CB-graph,  $G$ satisfies the following properties:
\begin{itemize}
\item[(1)] If $d(v_0,v)=i$ for some $v\in V$, then $v$ has a unique neighbor $f(v)$ at distance $i-1$ to $v_0$.
\item[(2)] If $\pi_1$ and $\pi_2$ are two pentagons of $G$ sharing an edge, then $\diam(\pi_1\cup \pi_2)=2$.
\item[(3)] If $Q=(x_1,x_2,\ldots,x_k)$ is a horizontal path, then $f^2(x_1)=f^2(x_2)=\ldots=f^2(x_k)$.
\item[(4)] If $P=(v_0=u_1,u_2,\ldots,u_k=u)$ is a shortest $(v_0,u)$-path, then all edges of $P$ are vertical and no  consecutive vertices $u_i,u_{i+1}$  of  $P$ are incident to horizontal edges of $G$.
\end{itemize}
Indeed, (1) is an immediate consequence of the geodecity of $G$, (2) follows from Lemma \ref{lem: deuxpent}, and (3) is obtained by applying ($\PC$) to the edges of the path $P$ and by property (1). The first part of (4) is trivial. If $u_iv_i$ and $u_{i+1}v_{i+1}$ are horizontal edges of $G$, then applying ($\PC$) to these edges, we conclude that the vertices $u_{i-2},u_{i-1},u_i,u_{i+1}$ are included in two pentagons sharing the edge $u_{i-1}u_i$. By (2) we deduce that $d(u_{i-2},u_{i+1})=2$, a contradiction.

Let $u$ be a furthest from $v_0$ vertex of $G$ and let $P=(v_0=u_0,u_1,\ldots,u_k=u)$ be the unique shortest $(v_0,u)$-path of $G$. Since $v_0$ has eccentricity $\ge 3$, $k\ge 3$. By (4), all edges of $P$ are vertical. Since $G$ is 2-connected,
$u$ is adjacent to a vertex $v\ne u_{k-1}$. From the maximality choice of $u$, necessarily $uv$ is a horizontal edge. By (3), we conclude that $f^2(v)=u_{k-2}$ and there exists a pentagon $uvv'u_{k-2}u_{k-1}$. Since $G$ is 2-connected, the vertices $u_{k-1}$ and $u_{k-3}$ are connected in $G$ by a path $Q=(u_{k-1}=x_1,x_2,\ldots,x_{m-1},x_m=u_{k-3})$ not passing via $u_{k-2}$. Since $u_kv$ is a horizontal edge, by (4) $u_{k-1}x_2$ must be vertical. Since $x_2\ne u_{k-2}$, $d(v_0,x_2)=d(v_0,u)$. From the choice of $u$, $d(v_0,x_3)\le d(v_0,u)$. Since $x_3\ne u_{k-1}$, we conclude that $x_2x_3$ is a horizontal edge. Let $Q'=(x_2,x_3,\ldots,x_i)$ be a maximal by inclusion subpath of $Q$ in which all edges are horizontal. By (3), for each vertex $x_i$ of $Q'$, we have $f^2(x_i) = f^2(x_2) = u_{k-2}$ and by ($\PC$) each edge $x_{j-1}x_j$ of $Q'$ is included in a pentagon $x_{j-1}y_{j-1}u_{k-2}y_jx_j$ (where $y_2=u_{k-1}$). Consider the next edge $x_ix_{i+1}$ of $Q$. From the maximality choice of $u$ and $Q'$, we conclude that $x_ix_{i+1}$ is vertical with $d(v_0,x_{i+1})<d(v_0,x_i)$. Hence $x_{i+1}=y_i$ and $u_{k-2}$ is the parent of $x_{i+1}$. Since $(v_0=u_1,u_2,\ldots,u_{k-2},y_i,x_i)$ is a shortest $(v_0,x_i)$-path and $x_i$ is incident to the horizontal edge $x_ix_{i-1}$ and $x_{i+2}\ne u_{k-2}$, by (4) we conclude that $x_{i+1}x_{i+2}$ is a vertical edge with $d(v_0,x_{i+2})=d(v_0,x_i)$. Again, from the maximality choice of $u$ we deduce that the edge $x_{i+2}x_{i+3}$ is horizontal. By considering
the maximal subpath $Q''=(x_{i+2},x_{i+3},\ldots,x_j)$ of $Q$ consisting of horizontal edges, by (3) we conclude that all vertices of $Q''$ have $u_{k-2}$ as their grandparent and that $x_{j+1}$ has $u_{k-2}$ as its parent. Continuing in this way, we obtain  that all vertices of the path $Q$ have $u_{k-2}$ either as their grandparent or as their parent. But this is impossible since $Q$ is a $(u_{k-1},u_{k-3})$-path  and $u_{k-3}$ is the parent of $u_{k-2}$. This contradiction establishes that the
eccentricity of any vertex of $G$ is 2 and thus that $\diam(G)=2$.
\end{proof}
This concludes the proof of Proposition \ref{thm: Moore}.
\end{proof}
 \section{A local-to-global characterization}
\label{sec: loc2glob}

The goal of this section is to prove the following topological characterization of CB-graphs:
\begin{theorem}
\label{thm: triangle-pentagon}
 For a graph $G$ the following conditions  are equivalent:
 \begin{enumerate}[(i)]
  \item $G$ is a graph with convex balls;
\item the triangle-pentagon complex $X_{\Triangle, \pentagon}(G)$ is simply connected and every ball of radius at most $3$ is convex.
 \end{enumerate}
 Furthermore, if $G$ is a graph in which all balls of radius at most $3$ are convex, then the 1-skeleton $\widetilde{G}$ of the universal cover of
 $X_{\Triangle, \pentagon}(G)$ is a graph with convex balls.
\end{theorem}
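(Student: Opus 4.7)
The plan is to reduce the three equivalences to the ``Furthermore'' clause, which contains the main content. Once the ``Furthermore'' statement is established, the implication (iii)$\Rightarrow$(i) follows by applying it to a $G$ whose triangle-pentagon complex is already simply connected, in which case $\widetilde G = G$ is itself a CB-graph. The easy directions (i)$\Rightarrow$(ii) and (i)$\Rightarrow$(iii) follow from Theorem~\ref{thm: TPC} (giving global $\INC$ and $\TPCz$), Lemma~\ref{lem: deuxpent} (giving the $\PPu$-diameter bound, since a CB-graph satisfies its hypotheses), and a standard filling argument for simple connectedness: any cycle in $G$ is subdivided by geodesic chords into isometric cycles, which by Theorem~\ref{thm: well-bridged}(iii) have length $3$ or $5$ and hence are $2$-cells of $X_{\Triangle, \pentagon}(G)$. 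For the equivalence (ii)$\Leftrightarrow$(iii), I would inspect the proofs of Theorems~\ref{thm: INC} and~\ref{thm: TPC} to confirm, as the remark after the latter already notes, that they only use $3$-convexity of balls, yielding the local equivalence $\INC_{\leq 3} \wedge \TPCz_{\leq 3}\Leftrightarrow$ convexity of all balls of radius at most $3$.

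Now I turn to the ``Furthermore'' statement. Let $p \colon \widetilde X \to X_{\Triangle, \pentagon}(G)$ be the universal cover and $\widetilde G := \widetilde X^{(1)}$. The first key step is a \emph{local isometry lemma}: for every $\tv \in \widetilde V$, the covering $p$ restricts to an isomorphism $B_3^{\widetilde G}(\tv) \to B_3^G(p(\tv))$ of induced subgraphs. To prove this, I would consider the full subcomplex $Y \subseteq X_{\Triangle, \pentagon}(G)$ spanned by $B_3(p(\tv))$. Convexity of $B_3(p(\tv))$ makes $Y$ isometrically embedded in $G$, and any cycle in $Y$ can be subdivided by geodesic chords (which stay in $Y$ by convexity) into isometric cycles; by the local version of the implication (ii)$\Rightarrow$(i) of Theorem~\ref{thm: TPC} (which uses only $3$-convex balls), these isometric cycles have length $3$ or $5$ and are realised as $2$-cells. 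Hence $Y$ is simply connected, it lifts homeomorphically to a subcomplex $\widetilde Y \subseteq \widetilde X$ containing $\tv$, and a check using the no-distance-increase property of coverings identifies $\widetilde Y^{(1)}$ with $B_3^{\widetilde G}(\tv)$.

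From the local isometry lemma, $\widetilde G$ inherits the properties $\INC_{\leq 3}$, $\TPCz_{\leq 3}$, and the $\PPu$-diameter bound from $G$. By Theorem~\ref{thm: TPC}, to conclude that $\widetilde G$ is a CB-graph it suffices to upgrade these to the global conditions $\INC$ and $\TPCz$. The plan is to argue by contradiction: take a counterexample $(\tu, \tv)$ to $\INC(\tu,\tv)$ or $\TPCz(\tu,\tv)$ with $d(\tu, \tv) = k$ minimal. The local conditions force $k \geq 4$. Using simple connectedness of $\widetilde X$, any cycle through $\tu, \tv$ and the obstructing neighbors bounds a disk $\widetilde D$ of triangles and pentagons; a minimal such disk (fewest $2$-cells) can be analysed along its boundary using $\TPCz_{\leq 3}$, $\INC_{\leq 3}$ and Lemma~\ref{lem: deuxpent}, producing a strictly smaller counterexample and contradicting the choice of $k$.

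The main obstacle I expect is the disk-filling step in the last paragraph. Compared to the systolic case, where every $2$-cell is a triangle and standard curvature-type arguments close off quickly, here the presence of pentagonal $2$-cells makes the boundary combinatorics substantially more intricate: two pentagons can meet along an edge or along a path of length two, and controlling these configurations is exactly where Lemma~\ref{lem: deuxpent} (and thus the $\PPu$-diameter hypothesis) is used. The careful case analysis needed — whether a boundary $2$-cell of $\widetilde D$ is a triangle or a pentagon, and whether the relevant instance of $\TPCz$ takes the form $\TC$ or $\PCz$ — is the most delicate part of the argument.
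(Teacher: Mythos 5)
Your architecture is genuinely different from the paper's: you take the universal cover as abstractly given and try to verify the hypotheses of Theorem~\ref{thm: TPC} on its $1$-skeleton, whereas the paper \emph{constructs} the cover sphere by sphere and carries the global conditions $\INC(\tv_0)$ and $\TPC(\tv_0)$ as inductive invariants (properties ($Q_i$)) of the construction. Unfortunately two of your steps are genuine gaps rather than routine verifications. The first is the local isometry lemma. Your proof that the subcomplex $Y$ spanned by $B_3(v)$ is simply connected reduces, via chord-splitting, to showing that every isometric cycle of $G$ contained in $B_3(v)$ has length $3$ or $5$, and you invoke ``the local version of (ii)$\Rightarrow$(i) of Theorem~\ref{thm: TPC}''. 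But that argument, run on an isometric $(2k+1)$-cycle, uses $\TPC$ at distance $k$ and convexity of balls of radius $k-1$ and $k-2$; a cycle inside $B_3(v)$ can have diameter up to $6$, hence length up to $13$, so you would need these conditions at radius up to $5$, which neither (ii) nor (iii) supplies. Concretely, convexity of balls of radius at most $3$ kills isometric cycles of lengths $4$ and $6$ through $12$ by direct interval arguments, but it does \emph{not} kill an isometric $13$-cycle (every ball of radius at most $3$ in $C_{13}$ is convex), and nothing in your argument rules out such a cycle sitting inside some $B_3(v)$. This is not a removable technicality: the paper never proves a radius-$3$ isomorphism statement for the covering map at all — it only obtains the radius-$1$ statements ($R_i$), ($T_i$) together with the pentagon-lifting property ($S_i$) and the distance-$2$/$3$ control of Lemma~\ref{lemma: virtC4} and Claim~\ref{claim:virtC4}, and extracting even those consumes the whole inductive machinery. (The same radius issue undermines your claimed direct equivalence $\INC_{\leq 3}\wedge\TPCz_{\leq 3}\Leftrightarrow$ convexity of balls of radius at most $3$; the paper avoids needing (ii)$\Rightarrow$(iii) by closing the cycle (i)$\Rightarrow$(iii)$\Rightarrow$(ii)$\Rightarrow$(i).)

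The second and larger gap is the upgrade from the local conditions on $\widetilde{G}$ to the global ones, which is where essentially all of the content of the theorem lives and which your proposal leaves as a plan (``a minimal such disk can be analysed along its boundary\dots producing a strictly smaller counterexample''). In the systolic setting such minimal-disk arguments work because every $2$-cell is a triangle and combinatorial Gauss--Bonnet applies with $6$-largeness of links; here you would need a substitute curvature or reduction scheme that handles interior and boundary pentagons, pentagon--pentagon adjacencies along edges and along paths of length two, and the interaction with $\TC$ versus $\PCz$ at each boundary vertex. You correctly identify this as the main obstacle but give no mechanism, and the paper's authors sidestep it entirely: instead of analysing diagrams, they define the next sphere $\widetilde{S}_{i+1}$ as equivalence classes of pairs $(\tw,z)$, prove $\INC(\tv_0)$ directly from the construction (Lemma~\ref{lem: INCcongr}), and obtain $\TPC(\tv_0)$ because the two edge-creation rules literally instantiate the triangle and pentagon conditions. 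As it stands, your proposal establishes the easy implications and correctly reduces the theorem to the ``Furthermore'' clause, but the proof of that clause is not there.
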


The implication (i)$\Rightarrow$(ii) is proved in the  next subsection. The proof of the implication (ii)$\Rightarrow$(i) is much harder and is based on the
proof of the second assertion of the theorem. We adapt the proof of similar local-to-global results from the papers  \cite{Bresaretal2013,ChalopiChepoiOsajda2015,CCHO,Osajda}.
The main difference with those proofs is the technical difficulty in dealing with pentagons in the inductive construction of the universal
cover of $X_{\Triangle, \pentagon}(G)$. The proof is given in the subsequent four subsections.

\subsection{Simple connectivity of the triangle-pentagon complex}The following lemma establishes the implication (i)$\Rightarrow$(ii) of Theorem \ref{thm: triangle-pentagon}:
\begin{lemma}\label{lem: triangle-pentagon-simply-connected}
 Let $G$ be a graph such that $\INC(v_0)$ and $\TPC(v_0)$ hold for some vertex $v_0$. Then $X_{\Triangle, \pentagon}(G)$ is simply connected.
\end{lemma}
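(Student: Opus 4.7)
The plan is to show that every combinatorial closed walk $\gamma = (u_0, u_1, \ldots, u_n = u_0)$ in $G$ based at $v_0$ is null-homotopic in $X_{\Triangle,\pentagon}(G)$; this suffices because $G$ is connected and simple connectivity of a $2$-complex is determined by its $2$-skeleton. I would induct on the lexicographic complexity $(r(\gamma), m(\gamma))$, where
\[
r(\gamma) := \max_i d(v_0, u_i), \qquad m(\gamma) := |\{i : d(v_0, u_i) = r(\gamma)\}|.
\]
The base case $r(\gamma) = 0$ is trivial (the walk is constant at $v_0$), and backtracks may be removed for free at any step.

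For the inductive step, set $r := r(\gamma) \geq 1$ and pick a maximal run $u_j, u_{j+1}, \ldots, u_k$ of consecutive vertices of $\gamma$ lying on the sphere $S_r(v_0)$; the neighbours $u_{j-1}$ and $u_{k+1}$ then both lie on $S_{r-1}(v_0)$. The cases split by the length of the run. If $j = k$ (a single peak), then $u_{j-1}, u_{j+1}$ both belong to $S_1(u_j)\cap I(u_j,v_0)$, so $\INCz(u_j, v_0)$ (contained in $\INC(v_0)$) yields either $u_{j-1} = u_{j+1}$ (a backtrack, immediately removable) or $u_{j-1}\sim u_{j+1}$; in the latter case the triangle $u_{j-1}u_ju_{j+1}$ is a $2$-cell of $X_{\Triangle,\pentagon}(G)$ across which one homotopes $u_{j-1}u_ju_{j+1}$ to the edge $u_{j-1}u_{j+1}$.

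If $j < k$, the edge $u_ju_{j+1}$ has both endpoints at distance $r$ from $v_0$, so $\TPC(v_0, u_ju_{j+1})$ applies. In the $\TC$ subcase one obtains $z \in B_{r-1}(v_0)$ with $z \sim u_j, u_{j+1}$, and $\INCz(u_j, v_0)$ forces $u_{j-1} = z$ or $u_{j-1} \sim z$: either way one homotopes the segment $u_{j-1}u_ju_{j+1}$ through one or two triangles to $u_{j-1}u_{j+1}$ or $u_{j-1}zu_{j+1}$. In the $\PC$ subcase one obtains an induced pentagon $u_jwzw'u_{j+1}$ with $w, w' \in S_{r-1}(v_0)$ and $z \in B_{r-2}(v_0)$; again $\INCz(u_j, v_0)$ forces $u_{j-1} = w$ or $u_{j-1} \sim w$, and one homotopes $u_{j-1}u_ju_{j+1}$ through the pentagon (together with the auxiliary triangle $u_{j-1}u_jw$ when $u_{j-1} \neq w$) to a walk whose new interior vertices all lie in $B_{r-1}(v_0)$. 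In every subcase the peak $u_j$ disappears while every newly introduced vertex lies in $B_{r-1}(v_0)$, so $(r, m)$ strictly decreases lexicographically, closing the induction.

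The main technical hurdle is to verify that each replacement moves through genuine $2$-cells of $X_{\Triangle,\pentagon}(G)$, i.e.\ through an actual triangle or an \emph{induced} pentagon. Inducedness of the pentagon is built into the statement of $\PC$; the auxiliary triangle $u_{j-1}u_jw$ arises from $\INCz$ and is automatically a triangle since any three pairwise adjacent vertices span a $K_3$. Careful bookkeeping of the collapse cases ($u_{j-1}$ coinciding with $z$ or with $w$, or the walk backtracking) is needed to certify that the complexity strictly drops in every branch.
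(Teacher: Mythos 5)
Your proof is correct and follows essentially the same strategy as the paper's: homotope the loop downward by applying $\INC(v_0)$ at peaks and $\TPC(v_0)$ along horizontal edges at maximal height, pushing across the resulting triangles and pentagons. The only real difference is the termination bookkeeping: the paper merely inserts the new vertices and shows the exponential weight $\delta(C)=\sum_i 5^{d(v_0,u_i)+d(v_0,u_{i+1})}$ strictly decreases, whereas you actively delete the peak $u_j$ (at the cost of one extra application of $\INCz$ to join $u_{j-1}$ to $z$ or $w$) so that your lexicographic count of maximal-height vertices drops --- both work.
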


\begin{proof}
  We show that any circuit $C=(u_0, \ldots, u_k)$ of $G$ is homotopic
  in $X_{\Triangle, \pentagon}(G)$ to the constant circuit $(v_0)$.
  In what follows, the operations over the indices of a cycle of
  length $k+1$ will be done modulo $k+1$. Let
  $\delta(C):=\sum_{i=0}^k 5^{d(v_0,u_i)+d(v_0, u_{i+1})} \geq
  k+1$. Then $\delta(C)=1$ if and only if $C$ is the constant circuit
  $(v_0)$ of length $1$. If $\delta(C)>1$, we assert that $C$ is
  homotopic in $X_{\Triangle, \pentagon}(G)$ to a circuit $C'$ with
  $\delta(C')<\delta(C)$. Observe that if $k=0$ and $\delta(C) > 1$,
  then $C=(u_0)$ is homotopic to $C'=(u'_0)$ with
  $u'_0 \in I(u_0,v_0) \cap N(u_0)$ and $\delta(C') =
  \delta(C)-1$. Observe also that if $k = 1$, then $C=(u_0,u_1)$ is
  homotopic to $C'=(u_0)$ and $\delta(C') < \delta(C)$.  Assume now
  that $k \geq 2$ and suppose without loss of generality that $u_1$
  maximizes the distance $d(v_0,u_i)$ among all $u_i\in C$.
  
  First we note that we may assume that for every $i$, we have
  $u_i\ne u_{i+1}$ and $u_{i-1}\ne u_{i+1}$, otherwise we are
  trivially done.  Let $r:=d(v_0, u_1)$.  First suppose that both
  $u_{0}$ and $u_2$ are at distance $r-1$ from $v_0$. By $\INC(v_0)$,
  we get $u_{0}\sim u_2$, so $C$ is homotopic in
  $X_{\Triangle, \pentagon}(G)$ to the circuit
  $C'=(u_0,u_2, u_3, \ldots, u_k)$. Moreover,
  $\delta(C')=\delta(C)-2 \cdot5^{2r-1} + 5^{2(r-1)} <\delta(C),$ and
  we are done.

 Otherwise, from the maximality choice of $u_1$, we can suppose
 without loss of generality that $d(v_0, u_2)=r$. Then
 $\TPC(v_0, u_1u_2)$ holds.  If $\mathrm{TC}(v_0, u_1u_2)$ applies,
 then there exists a vertex $z$ such that $z\sim u_1, u_2$ and
 $d(v_0, z)=r-1$. Then $C$ is homotopic in $X_{\Triangle, \pentagon}$
 to the circuit $C':=(u_0,u_1, z, u_2, u_3, \ldots, u_k)$ and we have
 $\delta(C')=\delta(C) - 5^{2r} + 2\cdot 5^{2r-1} < \delta(C).$
 Otherwise, $\mathrm{PC}(v_0, u_1u_2)$ applies and there exist
 $w, w', z$ such that $d(v_0, w)=d(v_0, w')= r-1$, $d(v_0, z)=r-2$ and
 $u_1wzw'u_2$ is a pentagon of $G$. Then $C$ is homotopic in
 $X_{\Triangle, \pentagon}(G)$ to the circuit
 $C':=(u_0,u_1, w, z, w', u_2, u_3, \ldots, u_k)$ and we have
 $\delta(C')= \delta(C) - 5^{2r} + 2\cdot5^{2r-1} + 2 \cdot 5^{2r-3}
 \leq \delta(C) - 5^{2r} + 4 \cdot 5^{2r-1} < \delta(C).$ In both
 cases we have the inequality $\delta(C')<\delta(C)$, so we are done.
\end{proof}

\subsection{The inductive construction of the universal cover} The proof of the implication (ii)$\Rightarrow$(i) of Theorem \ref{thm: triangle-pentagon} is technically involved and first we outline it. Notice, that this implication follows from the second assertion of the theorem. To establish it, let $G$ be a graph in which all balls of radius at most 3 are convex and set $X:= X_{\Triangle, \pentagon}(G)$.  We construct the universal cover $\tX$ of $X$ inductively, as the union $\bigcup_{i\geq 0}\tB_i$, where each $\tB_i$ is a ball of radius $i$ centered at an arbitrary (but fixed) basepoint $\tv_0$. We define the graph $\tG_i$ having $\tB_i$ as the vertex set and we let $\tX_i:= X_{\Triangle, \pentagon}(\tG_i)$ be the triangle-pentagon complex $\tG_i$. Our construction will ensure that the properties $\INC(\tv_0)$ and $\TPC(\tv_0)$ are satisfied for every $i$, i.e. when the distances involved are at most $i$. We will also construct inductively in a coherent way the maps $f_i : \tX_i\to X$ such that $f := \bigcup_{i\geq 0}f_i : \tX \to X$ is well defined and is a covering map. As the universal cover of $X$ is unique up to (cell-preserving) isomorphisms, this implies that the 1-skeleton $\tG$ of $\tX$ satisfies $\INC(\tv_0)$ and $\TPC(\tv_0)$ for any choice of the basepoint $\tv_0$. By Theorem  \ref{thm: TPC}, $\tG$ is a graph with convex balls, and by Lemma \ref{lem: triangle-pentagon-simply-connected} the complex $\widetilde X = X_{\Triangle, \pentagon}(\tG)$ is simply connected.
If the input complex $X$ is simply connected, then we must have $X=\tX$ and thus $G=\tG$, showing that $G$ is a CB-graph.

For any vertex $\tv$ of $\tG_i$, we denote by $v$ the image $f_i(\tv)$ of $\tv$ in $G$.
Let $v_0$ be an arbitrary but fixed vertex of $G$ and set $\tB_0:= \sg{\tv_0}$  and define $f_0(\tv_0):= v_0$.
We let $\tS_{i+1}:= \tB_{i+1}\setminus \tB_i$ for every $i \geq 0$.
We prove the following properties by induction:
\begin{itemize}
 \item[($P_i$)] $B_j(\tv_0,\tG_i)= \tB_j$ for every $0\leq j \leq i$.
 \item[($Q_i$)] $\tG_i$ satisfies $\INC(\tv_0)$ and $\TPC(\vz)$.
 \item[($R_i$)] for any $\tu \in \tB_{i-1}$, $f_i$ defines an isomorphism between the subgraphs $\tG_i[B_1(\tu,\tG_i)]$ and $G[B_1(f_i(\tu),G)]$.\item[($S_i$)] for any five vertices $\tu, \tx, \tw, \ty, \tv \in \tB_i$ such that $\tu \tx \tw \ty \tv$ is a path in $\tG_i$, if $uxwyv$ is a 5-cycle in $G$, then $\tu \sim \tv$ in $\tG_i$.
 \item[($T_i$)] for any $\tu \in \tS_i$, $f_i$ defines an isomorphism between the subgraphs $\tG_i\left[B_1(\tu,\tG_i) \right]$ and $G\left[f_i(B_1(\tu,\tG_i)) \right]$.
\end{itemize}

Assume that everything was constructed and proved for every $j=0, \ldots, i$.
We define the set of all the vertices ``which can be seen from $\tB_i$ but which were not reached yet'' by setting
\[Z :=\left\{(\tw, z): \tw \in \tS_i \text{ and } z\in B_1(w,\tG_i)\setminus f_i(B_1(\tw, \tG_i))\right\}.\]
On $Z$ we define the binary relation $\equiv $ by setting $(\tw, z)\equiv (\tw', z')$ if and only if $z= z'$ and either
$\tw=\tw'$ or $\tw \sim \tw'$. We will later show that $\equiv$ is an equivalence relation.
First we establish two useful lemmas.

\begin{lemma}\label{lem-voisinZ}
  For any $(\tw,z) \in Z$ and any $\tx \in \Bi$ such that $\tx \sim \tw$ and $f_i(\tx) = x \sim z$, $\tx \in \tS_i$ and $(\tx,z) \in Z$.
\end{lemma}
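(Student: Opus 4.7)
The plan is to prove the two assertions in turn, both by contradiction, each time deriving a membership $z\in f_i(B_1(\tw,\Gi))$ that contradicts $(\tw,z)\in Z$. Before starting, I would record the elementary configuration in $G$. Since $(\tw,z)\in Z$ while $w=f_i(\tw)\in f_i(B_1(\tw,\Gi))$, we must have $z\neq w$, and together with $z\in B_1(w)$ this forces $z\sim w$ in $G$. Combined with the hypothesis $x\sim z$ and the adjacency $x\sim w$ (the latter obtained by applying \Ti to $\tw\in\tS_i$ to see that $f_i$ is injective on $B_1(\tw)$, so sends the edge $\tx\tw$ to an honest edge), the vertices $w,x,z$ induce a triangle in $G$.

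To establish $\tx\in\tS_i$, I would suppose for contradiction that $\tx\in\tB_{i-1}$. Then \Ri applied to $\tx$ yields an isomorphism $f_i\colon\Gi[B_1(\tx)]\to G[B_1(x)]$. Since $z\in B_1(x)\setminus\{x\}$, there is a unique $\tz\in B_1(\tx)\setminus\{\tx\}$ with $f_i(\tz)=z$. As $\tw$ also lies in $B_1(\tx)$ and $w\sim z$ in $G$ with $w\neq z$, the isomorphism gives $\tw\sim\tz$ in $\Gi$, so that $\tz\in B_1(\tw,\Gi)$ and $z\in f_i(B_1(\tw,\Gi))$, a contradiction. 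Since $\tx\in\Bi$ and $\tx\sim\tw\in\tS_i$, this leaves $\tx\in\tS_i$.

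With $\tx\in\tS_i$ in hand, to conclude $(\tx,z)\in Z$ only $z\notin f_i(B_1(\tx,\Gi))$ remains. Suppose for contradiction that some $\tz\in\Gi$ satisfies $\tz\sim\tx$ and $f_i(\tz)=z$. Applying \Ti to $\tx\in\tS_i$ gives an isomorphism $f_i\colon\Gi[B_1(\tx)]\to G[f_i(B_1(\tx))]$. Exactly as above, $\tw,\tz\in B_1(\tx)$ have images $w,z$ that are adjacent and distinct in $G$, so $\tw\sim\tz$ in $\Gi$, and again $z\in f_i(B_1(\tw,\Gi))$ yields the contradiction. I do not anticipate any genuine obstacle: the whole argument is a direct exploitation of the local isomorphism properties \Ri and \Ti built into the inductive construction, the only mild subtlety being that one must pick between \Ri and \Ti depending on whether $\tx$ lies in $\tB_{i-1}$ or in $\tS_i$, which is exactly why the two conclusions of the lemma must be proved in this order.
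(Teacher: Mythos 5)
Your proof is correct and follows essentially the same route as the paper's: in both cases ($\tx\in\tB_{i-1}$ via \Ri, or $\tx\in\tS_i$ with $(\tx,z)\notin Z$ via the definition of $Z$) one produces $\tz\sim\tx$ with $f_i(\tz)=z$ and then uses the local isomorphism at $\tx$ to transfer the edge $wz$ to $\tw\sim\tz$, contradicting $(\tw,z)\in Z$. The paper merely phrases the two contradictions in one breath rather than sequentially.
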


\begin{proof}
  If $\tx \in \tB_{i-1}$, then by \Ri, there exists $\tz \sim \tx$
  such that $f_i(\tz) = z$. Similarly, by the definition of $Z$, if
  $\tx \in \tB_i$ and $(\tx,z) \notin Z$, then there exists
  $\tz \sim \tx$ such that $f_i(\tz) = z$. In both cases, by \Ri or
  \Ti, we have $\tw \sim \tz$ and thus $(\tw,z) \notin Z$, a
  contradiction.
\end{proof}

The following lemma implies that $\INC(\vz)$ inductively holds:

\begin{lemma}
\label{lem: INCcongr}
 For any $(\tw, z), (\tw', z') \in Z$ such that $z=z'$ and $\tw \sim \tw'$, there exists a vertex $\tu \in \tB_{i-1}$ such that $\tu \sim \tw, \tw'$. Consequently, $(\tw, z)\equiv (\tw', z')$ if and only if $z= z'$ and $\tw = \tw'$ or $\tw \sim \tw'$ and there exists some $\tu \in \tB_{i-1}$ such that $\tu \sim \tw, \tw'$.
\end{lemma}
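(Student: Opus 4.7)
The plan is to apply $\TPC(\vz, \tw\tw')$ in $\tG_i$, which holds by the inductive hypothesis $(Q_i)$ since $\tw, \tw'$ are at uniform distance $i$ from $\vz$. If $\TC$ applies, its witness already lies in $\tB_{i-1}$ and gives the desired $\tu$ immediately. Otherwise $\PC$ provides an induced pentagon $\tw\tp\tr\tq\tw'$ in $\tG_i$ with $\tp, \tq \in \tS_{i-1}$ and $\tr \in \tS_{i-2}$, and the plan is to show that this configuration, together with the assumption that no such $\tu \in \tB_{i-1}$ exists, is incompatible with $(\tw,z), (\tw',z) \in Z$.

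My first step is to transfer non-adjacencies from $\tG_i$ to $G$ via the local isomorphisms in $(R_i)$ applied at $\tp, \tr, \tq$, yielding $p \not\sim q$, $w \not\sim r$, and $w' \not\sim r$ in $G$; and to prove $z \not\sim p, q, r$. Indeed, $z \sim p$ would produce (by $(R_i)$ at $\tp$) a pre-image $\tz$ of $z$ adjacent to $\tp$, and the isomorphism on $B_1(\tp)$ (which contains both $\tw$ and $\tz$) would force $\tz \sim \tw$ from $w \sim z$, contradicting $(\tw,z) \in Z$; symmetrically $z \not\sim q$; and $z \sim r$ combined with $z \not\sim p$ would make $wzrp$ an induced $C_4$ with $d_G(w,r)=2$, so $\INCz(w,r)$ (from $\INC_{\leq 3}$) would force $p \sim z$, another contradiction. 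Hence $d_G(r,z) \in \{2,3\}$. If $d_G(r,z)=3$, then $d_G(p,z)=d_G(q,z)=2$ put $p, q \in S_1(r)\cap I(r,z)$, and $\INCz(r,z)$ forces $p \sim q$, contradicting what was just established.

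For $d_G(r,z)=2$, I would pick a common neighbor $y$ of $r$ and $z$; the non-adjacencies above force $y\notin\{w,w',p,q\}$. If $y\sim p$ (resp.\ $y\sim q$), then $\INCz(z,p)$ (resp.\ $\INCz(z,q)$) applied to $y,w\in S_1(z)\cap I(z,p)$ forces $y\sim w$ (resp.\ $y\sim w'$). The subcases $y\sim w$ and $y\sim w'$ are handled by lifting: via $(T_i)$ at $\tw$ one obtains $\tilde y\sim\tw$ (with Lemma \ref{lem-voisinZ} ensuring this pre-image attaches correctly), and the iso at $\tilde y$ then lifts $z$ to a pre-image $\tz\sim\tilde y$ with $\tz\sim\tw$ (by the $B_1(\tilde y)$-isomorphism and $z\sim w$), contradicting $(\tw,z)\in Z$. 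If $y$ is adjacent to none of $w, w', p, q$, then $\pi_2 := wpryz$ is an induced pentagon sharing the edge $wp$ with $\pi_1 := wprqw'$, so $\pi_1\cup\pi_2 \cong \PPu$ in $G$, and Lemma \ref{lem: deuxpent} applies: a universal vertex of $\pi_1$ lifts via $(R_i)$ at $\tr$ to a vertex of $\tB_{i-1}$ adjacent to both $\tw$ and $\tw'$, directly producing $\tu$; a universal vertex of $\pi_2$ lifts to a pre-image of $z$ adjacent to $\tw$ (contradiction); and if $\diam(\pi_1\cup\pi_2)=2$, then $d_G(w,y)=2$ admits a common neighbor $a$ with $a, z \in S_1(w)\cap I(w,y)$, so $\INCz(w,y)$ forces $a\sim z$, and lifting $a$ via $(T_i)$ at $\tw$ and then $z$ via the iso at $\tilde a$ again produces a pre-image of $z$ adjacent to $\tw$.

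The main obstacle is the $d_G(r,z)=2$ analysis, which requires orchestrating the local isomorphisms $(R_i)$ and $(T_i)$, the condition $\INC_{\leq 3}$, and Lemma \ref{lem: deuxpent} (which crucially depends on the $\PPu$-diameter hypothesis) to transfer adjacencies between $G$ and $\tG_i$ and to ensure the correct pre-images are selected at each lifting step. The ``Consequently'' equivalence is then immediate: the $\tu\in\tB_{i-1}\subseteq\tB_i$ just produced gives the forward direction, while the converse is direct from the definition of $\equiv$ on $Z$.
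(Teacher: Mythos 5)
Your opening matches the paper's proof exactly: apply $\TPC(\vz,\tw\tw')$ via ($Q_i$), dispose of the $\TC$ case, extract the pentagon $\tw\tp\tr\tq\tw'$, push it down to an induced pentagon of $G$, rule out $z\sim p,q,r$ (Lemma \ref{lem-voisinZ} gives $z\nsim p,q$ directly), and reduce to $d_G(r,z)=2$. Your $\INCz$-based derivations of $d(r,z)=2$ and of ``$y\sim p\Rightarrow y\sim w$'' are fine. The problems are all in the endgame, and two of them are genuine gaps. First, you repeatedly claim to obtain a pre-image of a neighbor of $w$ (the vertex $y$, and later $a$) adjacent to $\tw$ ``via ($T_i$) at $\tw$''. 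This is circular: ($T_i$) is only an isomorphism onto $f_i(B_1(\tw))$, and whether a given neighbor of $w$ lies in that image is exactly what the set $Z$ records. The lift must be built from below: $(R_i)$ at $\tr$ produces $\ty\sim\tr$ with $\ty\in\tB_{i-1}$, and the isomorphisms at $\tr$ and then at $\tp$ propagate $\ty\sim\tp$ and $\ty\sim\tw$ (this works in your subcases because there $y\sim p$); only then does $(R_i)$ at $\ty$ lift $z$. As written, the mechanism is invalid, even though the configuration you reach is correct.

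Second, the final subcase ($y$ adjacent to none of $w,w',p,q$) does not go through. The two pentagons $\pi_1=wprqw'$ and $\pi_2=wpryz$ share the three vertices $w,p,r$ (a path of length $2$), not just the edge $wp$, so $\pi_1\cup\pi_2$ is not a $\PPu$ (and, because $z\sim w'$, not an induced $\PPd$ either); Lemma \ref{lem: deuxpent} still applies since the intersection contains the non-adjacent pair $w,r$, but its conclusion in that regime is $\diam(\pi_1\cup\pi_2)=2$, and your handling of that branch is broken: $d(w,y)=2$ is already witnessed by $z$ itself, there is no reason a second common neighbor $a$ exists, and lifting $a$ ``via ($T_i$) at $\tw$'' fails for the reason above. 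The tool that actually closes this subcase is the inductive invariant ($S_i$), which your proposal never invokes: lift $y$ to $\ty\sim\tr$ and $z$ to $\tz\sim\ty$ by $(R_i)$, observe that $\tz\ty\tr\tp\tw$ is a path of $\tG_i$ whose image $zyrpw$ is a pentagon of $G$, and conclude $\tz\sim\tw$ by ($S_i$), contradicting $(\tw,z)\in Z$. This is precisely how the paper finishes (it picks $t\sim z,r$ and splits on $t\sim p$ versus $t\nsim p,w$, using the local isomorphisms in the first case and ($S_i$) in the second); without ($S_i$) I do not see how to conclude.
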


\begin{proof}
Assume that there is no such vertex $\tu$. Then by ($Q_i$) we can apply $\TPC(\tv_0, \tw\tw')$. Since  the triangle condition does not applies,  there exist vertices $\tu, \tu' \in \tB_{i-1}, \ty \in \tB_{i-2}$ such that $\tw \tu \ty \tu' \tw'$ is a pentagon in $\tG_i$. By ($R_i$) and ($T_i$) this means that $wuyu'w'$ is a pentagon in $G$. As $z\sim w, w'$, this  implies that $z\notin \sg{w, w', u, u', y}$. By the convexity of $B_2(z,G)$, we conclude that $1\le d(z,y)\leq 2$.

By Lemma~\ref{lem-voisinZ}, $z \nsim u, u'$.
Thus $d(z,y)=2$ and there exists $t \sim z, y$ different from $w,w',u,u',y$. By \Ri there exists $\ttt \in \tB_{i-1}$ such that $f_i(\ttt)=t$ and $\ttt \sim \ty$. By \Ri again there exists $\tz \in \Bi$ such that $f_i(\tz)=z$ and $\tz \sim \ttt$. We assert that $\tz \sim \tw$. Indeed,  either $t \sim u$ and by the convexity of $B_1(t,G)$, this is equivalent to $t\sim w$ and we can apply \Ri, otherwise, $t\nsim u, w$ and $ztyuw$ is a pentagon of $G$, so we can apply \Si. In both cases, $\tz \sim \tw$ contradicts the fact that $(\tw, z)\in Z$ and we are done.
\end{proof}

\begin{lemma}
\label{lem: equivcongr}
The relation $\equiv$ is an equivalence relation on $Z$.
\end{lemma}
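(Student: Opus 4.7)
My plan is to prove the nontrivial part, transitivity, since reflexivity and symmetry are immediate from the definition. Suppose $(\tw_1, z) \equiv (\tw_2, z) \equiv (\tw_3, z)$. After disposing of the trivial cases where two of the $\tw_j$'s coincide or where $\tw_1 = \tw_3$, I may assume that $\tw_1, \tw_2, \tw_3$ are pairwise distinct vertices of $\tS_i$ forming a path $\tw_1 \sim \tw_2 \sim \tw_3$ in $\tG_i$, and the goal reduces to proving $\tw_1 \sim \tw_3$. I would invoke Lemma~\ref{lem: INCcongr} on each of the two equivalences to produce vertices $\tu_1, \tu_2 \in \tB_{i-1}$ with $\tu_1 \sim \tw_1, \tw_2$ and $\tu_2 \sim \tw_2, \tw_3$. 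By $(P_i)$, both $\tu_1, \tu_2$ lie in $S_1(\tw_2) \cap I(\tw_2, \vz)$, and the condition $\INCz(\tw_2, \vz)$ from $(Q_i)$ forces $\tu_1 = \tu_2$ or $\tu_1 \sim \tu_2$.

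In the first case, write $\tu := \tu_1 = \tu_2$. Then $\tw_1, \tw_3 \in B_1(\tu)$, and the local isomorphism $(R_i)$ at $\tu$ reduces the claim to proving $w_1 \sim w_3$ or $w_1 = w_3$ in $G$. If neither held, convexity of $B_1(u)$ applied to $z \in I(w_1, w_3)$ would place $z$ in $B_1(u)$. The alternative $z = u$ directly contradicts $(\tw_1, z) \in Z$, since then $z \in f_i(B_1(\tw_1, \tG_i))$ via $\tu$; the alternative $z \sim u$ contradicts Lemma~\ref{lem-voisinZ} applied with $\tx := \tu$, as $\tu \in \tB_{i-1}$ cannot belong to $\tS_i$. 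Hence $w_1 \sim w_3$ or $w_1 = w_3$ in $G$, and $(R_i)$ lifts this back to the desired conclusion in $\tG_i$.

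In the second case $\tu_1 \sim \tu_2$ with $\tu_1 \ne \tu_2$, I first reduce further: if $\tw_1 \sim \tu_2$ (or symmetrically $\tw_3 \sim \tu_1$) holds in $\tG_i$, setting $\tu := \tu_2$ (respectively $\tu := \tu_1$) sends us back to the first case. Hence I may assume $\tw_1 \nsim \tu_2$ and $\tw_3 \nsim \tu_1$. Transferring these non-adjacencies through $(R_i)$ together with Lemma~\ref{lem-voisinZ} gives that $u_1, u_2, w_1, w_2, w_3, z$ are pairwise distinct in $G$ with $z \nsim u_1, u_2$. It then suffices to prove $w_1 \sim w_3$ in $G$: the isomorphism $(T_i)$ applied to $B_1(\tw_2)$ lifts such an adjacency to $\tw_1 \sim \tw_3$ in $\tG_i$. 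Assuming for contradiction that $w_1 \nsim w_3$, the five vertices $w_1, u_1, u_2, w_3, z$ form an induced pentagon of $G$ admitting $w_2$ as a universal vertex; I would then split on $d_G(v_0, z) \in \{i-1, i, i+1\}$. The value $i+1$ places $w_1, w_2, w_3$ in the clique $S_1(z) \cap I(z, v_0)$ guaranteed by $\INCz$ at $z$, yielding $w_1 \sim w_3$ directly. The value $i-1$ forces $\{z, u_1\} \subseteq S_1(w_1) \cap I(w_1, v_0)$ and hence $z \sim u_1$ by $\INCz$ at $w_1$, already ruled out. The remaining value $i$ is handled by applying $\TPC(v_0, zw_1)$: its triangle branch combined with $\INCz(w_1, v_0)$ and $(R_i)$ produces a lift of $z$ inside $B_1(\tw_1)$ contradicting $(\tw_1, z) \in Z$, while its pentagon branch is excluded using convexity of $B_2(z)$ and the universal role of $w_2$.

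The main obstacle is precisely this final subcase ($\tu_1 \ne \tu_2$ with $d_G(v_0, z) = i$): the local isomorphisms $(R_i)$ and $(T_i)$ do not suffice on their own, and the argument requires a delicate combination of $\TPC$, $\INC$, and the convexity of small balls in $G$ to exclude the would-be induced pentagon passing through $z$.
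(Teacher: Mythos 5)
Your setup---reducing to transitivity, producing $\tu_1,\tu_2\in\tB_{i-1}$ via Lemma~\ref{lem: INCcongr}, splitting on whether $\tu_1=\tu_2$, and settling the case $\tu_1=\tu_2$ by convexity of $B_1(u)$ together with Lemma~\ref{lem-voisinZ}---matches the paper's argument. The gap is in the final subcase ($\tu_1\neq\tu_2$): there you argue inside $G$ with respect to $v_0$, invoking $\INCz(z,v_0)$, $\INCz(w_1,v_0)$ and $\TPC(v_0,zw_1)$ at distance roughly $i$, after splitting on $d_G(v_0,z)\in\{i-1,i,i+1\}$. Neither ingredient is available. The standing hypothesis on $G$ is only that balls of radius at most $3$ are convex, which yields $\INC_{\leq 3}$ and $\TPC_{\leq 3}$; the global conditions $\INC(v_0)$ and $\TPC(v_0)$ in $G$ are precisely what the covering construction is meant to establish for $\tG$, not something one may assume for $G$ (if $G$ already satisfied them, $X_{\Triangle, \pentagon}(G)$ would be simply connected and there would be nothing to build). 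Moreover $f_i$ is only $1$-Lipschitz, so $d_G(v_0,w_1)$ can be strictly smaller than $i=d_{\tG_i}(\vz,\tw_1)$, and the trichotomy for $d_G(v_0,z)$ is unfounded. Note also that the configuration you reach---an induced pentagon $w_1u_1u_2w_3z$ with universal vertex $w_2$---is simply a $5$-wheel, a perfectly legal subgraph of a CB-graph, so no contradiction can be extracted from it alone.

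The paper avoids all of this by using the full strength of $\INC(\vz)$ in $\tG_i$ (property ($Q_i$)): applied to the two neighbors $\tu_1,\tu_2$ of $\tw_2$ in $I(\tw_2,\vz)$, it yields not only $\tu_1\sim\tu_2$ but also a common neighbor $\ts\in\tB_{i-2}$ of $\tu_1$ and $\tu_2$. The rest of the argument is then purely local in $G$ around $s=f_i(\ts)$: convexity of $B_2(s)$ forces $d(s,z)=2$ (the case $d(s,z)=1$ being excluded by convexity of $B_1(s)$), a common neighbor $t$ of $s$ and $z$ is lifted to $\tG_i$ by ($R_i$), and ($R_i$) or ($S_i$) then produces a preimage $\tz\sim\tw_1$ of $z$, contradicting $(\tw_1,z)\in Z$. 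To complete your proof you need this extra vertex $\ts$ (or some substitute in $\tB_{i-2}$) so that every appeal to a metric condition takes place either in $\tG_i$, where ($Q_i$) holds, or inside a ball of radius at most $3$ of $G$.
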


\begin{proof}
  Reflexivity and symmetry of $\equiv$ immediately follow from its
  definition. Hence, we only have to show transitivity.  Let
  $(\tw, z), (\tw', z'), (\tw'', z'') \in Z$ such that
  $(\tw, z)\equiv (\tw', z')$ and $(\tw', z') \equiv (\tw'', z'')$.
  Then $z=z'=z''$. We assume that $\tw, \tw', \tw''$ are pairwise
  distinct, otherwise we are done. Hence $\tw' \sim \tw, \tw''$.  We
  also assume that $\tw \nsim \tw''$, otherwise we are done. By \Ti
  applied to $\tw'$, this implies that $w \nsim w''$. By Lemma
  \ref{lem: INCcongr} there exist $\tu, \tu' \in \tB_{i-1}$ such that
  $\tu\sim \tw, \tw'$ and $\tu'\sim \tw', \tw''$.  If $\tu=\tu'$, then
  by the convexity of $B_1(u,G)$ we have $u \sim z$, which is
  impossible by Lemma~\ref{lem-voisinZ}. Hence $\tu\neq \tu'$ and
  $z \nsim u,u'$. By ($Q_i$) we can use $\INC(\vz)$ to get
  $\tu\sim\tu'$ and a vertex $\ts \in \tB_{i-2}$ such that
  $\ts \sim \tu, \tu'$. From \Ri and the definition of $Z$ we conclude
  that the vertices $z,w,w',w'',u,u'$ are pairwise distinct. Moreover,
  $s\notin \sg{u, u', w, w', w'', z}$ by \Ri, so by the convexity of
  $B_2(s,G)$, we get $d(s,z)\leq 2$. If $d(s,z)=1$, then by the
  convexity of $B_1(s,G)$ and because $s\nsim w$, we have $u \sim z$,
  which gives a contradiction.  Hence $d(s,z)=2$ and let $t\sim
  z,s$. Then $t\neq u$ and by \Ri there exists $\ttt \in \tB_{i-1}$
  and $\tz \in \Bi$ such that $f_i(\tz)=z$ and such that
  $\ttt \sim \tz, \ts$. Depending whether $t \sim u, w$ or not, by \Ri
  and \Si we get $\tz \sim \tw$, which contradicts that
  $(\tw, z)\in Z$.
\end{proof}

Let $\tS_{i+1}$ denote the set of equivalence classes of $\equiv$, i.e., $\tS_{i+1}=Z/_{\equiv}$. For a pair $(\tw,z)\in Z$,
we denote by $[\tw,z]$ the equivalence class of $\equiv$ containing  $(\tw,z).$
We let $\tB_{i+1}:= \Bi \sqcup \tS_{i+1}$ and define the edges of $\tG_{i+1}$ in the following way: (a) the adjacencies between the
vertices of $\Bi$ do not change, (b) we let $\tw \sim \cro{\tw, z}$, and (c) if $\ta = \cro{\tw, z}, \tb = \cro{\tw', z'}\in \tS_{i+1}$,
then we let $\ta \sim \tb$ in $\tG_{i+1}$ if $z\sim z'$ in $G$ and one of the following two conditions is satisfied:
\begin{itemize}
\item[(1)] there exists  $\tw\in \Bi$ such that $\ta = \cro{\tw, z}$ and $\tb = \cro{\tw, z'}$,
\item[(2)] there exist $\tw, \tw' \in \Bi$ and $\ty \in \tB_{i-1}$ such that $\ta = \cro{\tw, z}$, $\tb = \cro{\tw', z'}$ and  $\ty \sim \tw, \tw'$.
\end{itemize}

From \Ri and the fact that balls of $G$ of radius $1$ are convex, we can equivalently require in (2) that $awyw'b$ is a pentagon of $G$, otherwise (1) holds.

Finally, we define $f_{i+1}$ on $\tB_{i+1}$ by setting $f_{i+1}(\tu):=f_{i}(\tu)$ when $\tu \in \Bi$ and $f_{i+1}(\cro{\tw, z}):=z$ for any $\cro{\tw, z} \in \tS_{i+1}$.

We now prove that if an edge $\ta\tb$ between vertices of $\tS_{i+1}$
is defined only by condition (2) above, then $\PCd(\tv_0,\ta\tb)$ holds.

\begin{lemma}\label{lem-ind-PCdist2}
  Consider an edge $\ta\tb$ of $\tG_{i+1}$ with
  $\ta,\tb \in \tS_{i+1}$. If there is no $\tw \in \tB_i$ such that
  $\ta = [\tw,a]$ and $\tb = [\tw,b]$, then for any
  $\ttt \in \tB_{i-1}$ such that $d(\ta,\ttt) = 2$, we have
  $d(\tb,\ttt) = 2$.
\end{lemma}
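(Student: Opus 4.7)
The plan is to establish $d_{\tG_{i+1}}(\tb,\ttt)\le 2$ by producing a common neighbor of $\tb$ and $\ttt$ in $\tG_{i+1}$. The strategy is to start from a common neighbor of $\ta$ and $\ttt$, transfer to the $G$-picture via the covering, use convexity of small balls in $G$ to locate a common neighbor of $t := f_i(\ttt)$ and $b := f_{i+1}(\tb)$ in $G$, and finally lift back through $(R_i)$. Because condition (1) fails for the edge $\ta\tb$, I may choose representatives $\ta = [\tw, a]$, $\tb = [\tw', b]$ with $\tw \ne \tw'$ in $\tS_i$, some $\ty \in \tB_{i-1}$ adjacent to both $\tw$ and $\tw'$ in $\tG_i$, and $awyw'b$ an induced pentagon in $G$ (using the remark after the edge definition, $(R_i)$, and convexity of balls of radius $1$).

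Let $\tc \in \tS_i$ be a common neighbor of $\ta$ and $\ttt$ in $\tG_{i+1}$. By Lemma~\ref{lem: INCcongr} we have $(\tc, a) \equiv (\tw, a)$, so either $\tc = \tw$ or $\tc \sim \tw$ with a common neighbor $\tu \in \tB_{i-1}$. I focus on the main case $\tc = \tw$; the other case reduces to it by transferring the argument with $\tc$ in place of $\tw$, using $\INC(\vz)$ applied to $\tc$ and the bridge $\tu$. Assuming $\tw \sim \ttt$, the vertices $\ty$ and $\ttt$ are both neighbors of $\tw$ in $I(\tw, \vz)\cap \tB_{i-1}$, so $\INC(\vz)$ in $\tG_i$ gives either $\ty = \ttt$ (in which case $\tw'$ is the sought common neighbor of $\tb$ and $\ttt$) or $\ty \sim \ttt$; in the latter, I may additionally assume $\tw' \nsim \ttt$, otherwise $\tw'$ again does the job.

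In this remaining subcase, $(R_i)$ at $\ttt$ yields $t \sim w, y$ in $G$. The induced pentagon $awyw'b$ gives $a \in B_2(t)$ (via $w$), $w' \in B_2(t)$ (via $y$), $d_G(a, w') = 2$, and $b \in I(a, w')$ (since $a \sim b \sim w'$). Convexity of $B_2(t)$ in $G$ then forces $b \in B_2(t)$, producing $c' \in G$ with $c' \sim t, b$. Applying $(R_i)$ once more at $\ttt$, $c'$ lifts to a unique $\tc' \in B_1(\ttt) \subset \tB_i$ with $f_i(\tc') = c'$.

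The main obstacle is to verify $\tc' \sim \tb$ in $\tG_{i+1}$; this automatically forces $\tc' \in \tS_i$, since an edge to $\tb \in \tS_{i+1}$ precludes $\tc' \in \tB_{i-1}$. I would analyze by cases. If $c' = w'$, then $\tc' \ne \tw'$ (as $\tw' \nsim \ttt$) and the pentagon structure together with $\ty$ witnesses $(\tc', b) \equiv (\tw', b)$ via Lemma~\ref{lem: INCcongr}, yielding $[\tc', b] = \tb$ and hence $\tc' \sim \tb$ by the basic $\tS_i$-to-$\tS_{i+1}$ adjacency. If $c' \ne w'$, then the new vertex $c'$ together with $t, y, w', b$ and the pentagon in $G$ realize condition (2) of the edge definition for $\tc'\tb$ in $\tG_{i+1}$, with $\ty$ as the common $\tB_{i-1}$-neighbor of $\tc'$ and $\tw'$. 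The degenerate possibilities ($c' \in \{a, w, y\}$) are ruled out via the induced pentagon in $G$ and $(R_i)$ (e.g.\ $c' = a$ would give $a \in f_i(B_1(\tw))$, contradicting $(\tw, a) \in Z$). The technical heart lies in this case analysis and the careful verification, using $(R_i), (T_i), (S_i)$, that the required edge conditions of $\tG_{i+1}$ hold after lifting.
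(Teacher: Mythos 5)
Your overall strategy coincides with the paper's for the main case: pass to $G$, use the pentagon $awyw'b$ together with the convexity of $B_2(t)$ to produce a common neighbor $c'$ of $t$ and $b$, and lift it back via \Ri. But the step you defer as ``the technical heart'' is exactly where the proof is nontrivial, and your sketch of it does not work. For $\tc'\in\tS_i$ and $\tb=[\tw',b]\in\tS_{i+1}$, the adjacency $\tc'\sim\tb$ in $\tG_{i+1}$ is governed by rule (b) of the edge definition, not by condition (2) (which only concerns edges between two vertices of $\tS_{i+1}$): you need $(\tc',b)\in Z$ and $[\tc',b]=\tb$, i.e.\ $\tc'\sim\tw'$, hence $c'\sim w'$ in $G$. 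The relevant case split is therefore on whether $c'\sim w'$ (equivalently, since $G$ has no induced $C_4$, whether $c'\sim y$), and the subcase $c'\nsim w',y$ is not handled by your argument at all --- there is no adjacency to exhibit there. In the paper this subcase is shown to be \emph{impossible}, and that requires machinery you never set up: the vertex $\ts\in\tB_{i-2}$ with $\ts\sim\ttt,\ty$ supplied by $\INC(\vz)$, the convexity of $B_2(s)$ to produce $r\sim b,s$, a second lift, and an application of \Si to contradict $(\tw',b)\in Z$. Also missing is the verification that $(\tc',b)\in Z$ in the good subcase (the paper rules out the existence of a preimage of $b$ adjacent to $\tc'$ by another application of \Si). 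Finally, your subcase $c'=w'$ is vacuous ($c'\sim t$ while $t\nsim w'$ by \Ri), and as written it is incoherent, since $f_i(\tc')=f_i(\tw')$ together with $\tc'\sim\tw'$ would contradict \Ri or \Ti.

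A secondary gap: the reduction of the case where the common neighbor $\tc$ of $\ta$ and $\ttt$ differs from $\tw$ is only asserted. The paper's Case~2 is a genuine two-step bootstrap: it first applies Case~1 to a common neighbor $\tr\in\tB_{i-1}$ of $\tw$ and $\ts$ to produce a fresh triple $(\tr,\ts,\tx)$ playing the roles of $(\ty,\tw,\tw')$, and only then applies Case~1 to $\ttt$. This should be made explicit rather than described as a transfer.
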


\begin{proof}
  By the definition of the edges of $\tG_{i+1}$ and since $\ta,\tb$
  have no common neighbor in $\tB_i$, there exist $\tw, \tw' \in \Bi$
  and $\ty \in \tB_{i-1}$ such that $\ta = \cro{\tw, a}$,
  $\tb = \cro{\tw', b}$ and $\ty \sim \tw, \tw'$. We distinguish two
  cases, depending whether $\ttt \sim \tw$ or not.

  \medskip\noindent
  {\bf Case 1.} $\ttt$ is a neighbor of $\tw$.

\begin{proof}  If $\ttt$ is a neighbor of $\tw'$, we are done. Thus, we can assume
  that $\ttt \nsim \tw'$ and $\ttt \neq \ty$. By $\INC(\tv_0)$, we
  have $\ttt \sim \ty$ and there exists $\ts \in \tB_{i-2}$ such that
  $\ts \sim \ttt,\ty$. In $G$, $abw'yw$ is a pentagon and since
  $t \sim w,y$, we have $t \notin \sg{a,b,w',y,w}$. By \Ri, we have
  $t \nsim w'$.  By Lemma~\ref{lem-voisinZ}, we have $t \nsim a$. If
  $t \sim b$, then $wabt$ is an induced 4-cycle in $G$, which is
  impossible. Since $B_2(t,G)$ is convex, we have $d(t,b) = 2$. Let
  $x \sim b,t$ and note that $x \notin \sg{a,b,w',y,w,t}$. By \Ri,
  there exists $\tx \in \tB_i$ such that $\tx \sim \ttt$ and
  $f_i(\tx) = x$. We claim that $\tx \in \tS_i$. Indeed, if
  $\tx \notin \tS_i$, there exists $\tb' \in \tB_i$ such that
  $f_i(\tb') = b$ and $\tb' \sim \tx$. By \Si applied to the vertices
  $\tb',\tx,\ttt,\ty,\tw'$, we have $\tb'\sim \tw'$, contradicting the
  fact that $(\tw',b) \in Z$. Since $G$ does not contain induced
  4-cycles, $x \sim w'$ if and only if $x \sim y$.

  If $x \sim w',y$, by \Ri, $\tx \sim \ty,\tw'$ and by
  Lemma~\ref{lem-voisinZ}, we have $(\tx,b) \in Z$. By the definition of $\equiv$, we have
  $\tb = [\tw',b] = [\tx,b]$. Consequently, in this case, we have
  $d(\tb,\ttt) = 2$. Let now $x \nsim w',y$. Let $s = f_i(\ts)$ and observe that
  $s \notin \sg{a,b,x}$ since $s \sim y$. Moreover, by \Ri, we have
  $s \notin \sg{t,w,w',y}$ and $s \nsim w,x,w'$. Consequently,
  $d(s,x) = d(s,w') = 2$ and by the convexity of $B_2(s,G)$, we get
  $d(s,b) \leq 2$. If $s \sim b$, then $syw'b$ is a forbidden
  induced $4$-cycle, thus $d(s,b) = 2$. Let
  $r \sim b,s$. By \Ri, there exists $\tr \in \tB_{i-1}$ and
  $\tb' \in \tB_i$ such that $f_i(\tr) = r$, $f_i(\tb') = b$, and
  $\tr \sim \ts,\tb'$. By \Si applied to the vertices
  $\tb',\tr,\ts,\ty,\tw'$, we have $\tb'\sim \tw'$, contradicting the
  fact that $(\tw',b) \in Z$.
\end{proof}

  \medskip\noindent
  {\bf Case 2.} $\ttt$ is not a neighbor of $\tw$.

  \begin{proof} Since $d(\ttt,\ta) = 2$, there exists
    $\ts \sim \ta,\ttt$. Since $\ttt \in \tB_{i-1}$ and
    $\ta \in \tS_{i+1}$, we have $\ts \in \tS_i$ and
    $[\ts,a] = \ta = [\tw,a]$. By the definition of $\equiv$, this
    implies that $\tw \sim \ts$ and by Lemma~\ref{lem: INCcongr},
    there exists $\tr \in \tB_{i-1}$ such that $\tr \sim \tw,\ts$. By
    applying Case~1 to $\tr$, we have that $d(\tr,\tb) = 2$ and thus,
    there exists $\tx \sim \tr,\tb$. Consequently, we can replace
    $\ty$, $\tw$ and $\tw'$ by $\tr$, $\ts$ and $\tx$ and we can apply
    Case~1 to $\ttt$.
  \end{proof}

Cases 1 and 2 establish the lemma.
\end{proof}

\subsection{Properties ($P_{i+1}$), ($Q_{i+1}$), ($R_{i+1}$), and ($T_{i+1}$)}

We start with the proof of ($P_{i+1}$) and ($Q_{i+1}$):

\begin{lemma}
\label{lem: Qi}
$\tG_{i+1}$ satisfies ($P_{i+1}$).
\end{lemma}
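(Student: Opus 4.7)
The plan is to combine the induction hypothesis $(P_i)$ with two straightforward observations about $\tG_{i+1}$: every vertex of $\tS_{i+1}$ lies at distance exactly $i+1$ from $\vz$, and distances between vertices of $\Bi$ are preserved when passing from $\tG_i$ to $\tG_{i+1}$.

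For the first observation, I would pick any $\cro{\tw,z}\in\tS_{i+1}$. By construction (b), $\tw \sim \cro{\tw,z}$ in $\tG_{i+1}$, and since $(\tw,z)\in Z$ we have $\tw \in \tS_i$; the hypothesis $(P_i)$ then yields $d_{\tG_i}(\vz,\tw)=i$, hence $d_{\tG_{i+1}}(\vz,\cro{\tw,z})\le i+1$. Conversely, the edge description of $\tG_{i+1}$ shows that the only neighbors of $\cro{\tw,z}$ in $\Bi$ are the vertices $\tw'$ with $(\tw',z)\equiv(\tw,z)$, and each such $\tw'$ lies in $\tS_i$ by the very definition of $Z$. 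Any path from $\vz$ to $\cro{\tw,z}$ must therefore enter $\Bi$ through a vertex of $\tS_i$, which gives $d_{\tG_{i+1}}(\vz,\cro{\tw,z})\ge i+1$.

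For the second observation, I would note that for $\tu\in\Bi$, any $(\vz,\tu)$-path in $\tG_{i+1}$ visiting $\tS_{i+1}$ must contain a subpath of the form $\tw \to \cro{\tw,z} \to \tw'$ with $\tw,\tw'\in\tS_i$, so its length is at least $i+2$, which is strictly larger than $d_{\tG_i}(\vz,\tu)\le i$. Consequently the $\tG_{i+1}$-distance from $\vz$ to $\tu$ coincides with its $\tG_i$-distance. Together with $(P_i)$, these two observations give $B_j(\vz)=\tB_j$ for every $0\le j\le i+1$, establishing $(P_{i+1})$.

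The only point requiring explicit verification is that each equivalence class $\cro{\tw,z}$ sits entirely inside $\tS_i$, so in particular contains no vertex of $\tB_{i-1}$; this is immediate from the definition of $Z$ together with the edge rules (a)--(c), and no deeper property from the previously established lemmas is needed. Overall the proof is a short bookkeeping argument rather than a genuinely difficult step.
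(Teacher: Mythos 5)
Your argument is correct and is essentially the paper's proof: the paper likewise observes that each vertex of $\tS_{i+1}$ is adjacent to some vertex of $\tB_i$ and only to vertices of $\tS_i$, hence lies at distance $i+1$ from $\vz$, leaving the preservation of distances inside $\tB_i$ implicit where you spell it out. (Your phrase ``subpath of the form $\tw\to[\tw,z]\to\tw'$'' is slightly too strong when a path traverses several consecutive vertices of $\tS_{i+1}$, but the length bound $\ge i+2$ you need still holds, so this is a harmless imprecision.)
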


\begin{proof}
  We show that for every $0\le j\le i+1$ we have
  $\tB_j= B_j(\vz,\tG_{i+1})$. This is true when $j \leq i$ by
  induction hypothesis.  By construction, every vertex in $\tS_{i+1}$
  is adjacent to at least one vertex of $\Bi$ and is only adjacent to
  vertices of $\tS_i$, hence it is at distance $i+1$ from $\vz$ and we
  are done.
\end{proof}

\begin{lemma}
 $\tG_{i+1}$ satisfies ($Q_{i+1}$).
\end{lemma}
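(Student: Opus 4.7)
My plan is to verify separately that $\tG_{i+1}$ satisfies $\INC(\tv_0)$ and $\TPC(\tv_0)$. For any vertex $\tu \in \Bi$, the interval $I(\tu, \tv_0)$ computed in $\tG_{i+1}$ coincides with the one computed in $\tG_i$: by $(P_{i+1})$ distances to $\tv_0$ are unchanged, and any neighbor of $\tu$ strictly closer to $\tv_0$ lies in $\Bi$. Hence both $\INC(\tu, \tv_0)$ and $\TPC(\tv_0, \tx\ty)$ for edges $\tx\ty$ in $\Bi$ follow from $(Q_i)$, and the witness vertices (either the triangle apex for $\TC$ or the full path for $\PC$) remain in $\tG_{i+1}$. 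Thus only two genuinely new cases remain: $\INC(\tu, \tv_0)$ for $\tu \in \tS_{i+1}$, and $\TPC(\tv_0, \tx\ty)$ for edges $\tx \sim \ty$ with $\tx, \ty \in \tS_{i+1}$.

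For $\INC$ at $\tu = \cro{\tx, u} \in \tS_{i+1}$, the key observation is that the construction of edges incident to $\tS_{i+1}$ already encodes the condition. By the definition of the edges between $\tS_i$ and $\tS_{i+1}$, a vertex $\tw \in \tS_i$ is a neighbor of $\tu$ if and only if $(\tw, u) \in Z$ and $\cro{\tw, u} = \tu$; in particular, all neighbors of $\tu$ in $\tS_i$ represent the same equivalence class of $\equiv$. For any two distinct such neighbors $\tw \neq \tw'$, the relation $(\tw, u) \equiv (\tw', u)$ combined with Lemma~\ref{lem: INCcongr} yields both $\tw \sim \tw'$ in $\tG_i$ and the required common neighbor in $\Biu$.

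For $\TPC(\tv_0, \tx\ty)$ with $\tx, \ty \in \tS_{i+1}$, I would do a case analysis on the rule creating the edge $\tx\ty$. If rule~(1) applies, there is some $\tw \in \tS_i$ with $\tx = \cro{\tw, x}$ and $\ty = \cro{\tw, y}$, which gives a common neighbor $\tw$ at distance $i$ from $\tv_0$, so $\TC(\tv_0, \tx\ty)$ holds. If only rule~(2) applies, there exist $\tw, \tw' \in \tS_i$ and $\tz \in \Biu$ with $\tz \sim \tw, \tw'$, $\tx = \cro{\tw, x}$, $\ty = \cro{\tw', y}$, and (by the remark after rule~(2)) $xwzw'y$ is a pentagon of $G$. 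I would then verify that $\tx\tw\tz\tw'\ty$ is an induced pentagon of $\tG_{i+1}$ by checking the five non-adjacencies: $\tw \nsim \tw'$ follows from $(R_i)$ applied at $\tz$ together with $w \nsim w'$ in $G$; $\tw' \nsim \tx$ because the only possible edge from $\tw' \in \tS_i$ to $\tx \in \tS_{i+1}$ would require $\tx = \cro{\tw', x}$, but $w' \nsim x$ forces $(\tw', x) \notin Z$; symmetrically $\tw \nsim \ty$; and $\tz \nsim \tx, \ty$ for distance reasons since $d(\tv_0, \tz) \leq i-1$ while $d(\tv_0, \tx) = d(\tv_0, \ty) = i+1$. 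This gives $\PC(\tv_0, \tx\ty)$ with witness $\tz \in B_{i-1}(\tv_0)$.

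The main obstacle is essentially bookkeeping: most of the geometric content of $(Q_{i+1})$ was already packed into the definition of $\equiv$ (through Lemma~\ref{lem: INCcongr}) and into the two edge rules for $\tS_{i+1}$, which were tailored to correspond exactly to the $\TC$- and $\PC$-alternatives of $\TPC$. The only delicate point is verifying that rule~(2) always lifts to a genuinely \emph{induced} pentagon in $\tG_{i+1}$, and this is handled by a careful reading of the edge rules together with $(R_i)$ and the distance stratification given by $(P_{i+1})$.
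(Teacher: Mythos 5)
Your proof is correct and follows essentially the same route as the paper's: $\INC(\tv_0)$ at new vertices comes from the definition of $\equiv$ together with Lemma~\ref{lem: INCcongr}, and $\TPC(\tv_0)$ on new horizontal edges comes from observing that rule (1) supplies the $\TC$-witness while rule (2) supplies the $\PC$-pentagon. The paper's own proof is just a terser version of this; your extra verification that the rule-(2) five-cycle lifts to an induced pentagon of $\tG_{i+1}$ is a detail the paper leaves implicit, and your checks of it are sound.
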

\begin{proof}
$\INC(\tv_0)$ is obtained by induction hypothesis and by Lemma \ref{lem: INCcongr}.
Now we prove that $\TPC(\tv_0)$ holds. Let $\tw \sim \tw' \in \tB_{i+1}$, both at distance $k$ from $\vz$. If $k\leq i$, then we conclude by induction hypothesis.
Otherwise, note that the adjacency $\tw \sim \tw'$ either comes from (1) or (2), which correspond to the triangle or the pentagon condition, respectively.
\end{proof}

The following two lemmas are trivial.

\begin{lemma}
\label{lem: l313}
If $\ta \sim \tb$ in $\tG_{i+1}$, then $a\sim b$ in $G$.
\end{lemma}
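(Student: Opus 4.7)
The plan is to proceed by a straightforward case analysis on whether each of $\ta, \tb$ lies in $\tB_i$ or in the newly added sphere $\tS_{i+1}$, since the edge set of $\tG_{i+1}$ was defined explicitly in three clauses (a), (b), (c).

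First I would dispose of the case where both $\ta, \tb \in \tB_i$. By clause (a), the adjacency $\ta \sim \tb$ already holds in $\tG_i$, and $f_{i+1}$ agrees with $f_i$ on $\tB_i$. Then at least one of the endpoints, say $\tb$, belongs to $\tB_{i-1}$ or to $\tS_i$; applying property (\Ri) or (\Ti) to $\tb$ yields that $f_i$ restricts to an isomorphism between $\tG_i[B_1(\tb)]$ and its image in $G$, so $a = f_i(\ta) \sim f_i(\tb) = b$.

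Next I would handle the mixed case $\ta \in \tB_i$, $\tb = [\tw,z] \in \tS_{i+1}$. By clause (b) and the definition of $\equiv$, the adjacency $\ta \sim \tb$ in $\tG_{i+1}$ means precisely that $(\ta, z)$ lies in the equivalence class $[\tw,z]$, so in particular $(\ta,z) \in Z$. By the very definition of $Z$, this forces $z \in B_1(a)$; moreover $a \neq z$ since $a \in f_i(B_1(\ta,\tG_i))$ while $z \notin f_i(B_1(\ta, \tG_i))$. Hence $a \sim z = f_{i+1}(\tb) = b$ in $G$.

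Finally, if both $\ta = [\tw,z]$ and $\tb = [\tw',z']$ belong to $\tS_{i+1}$, then clause (c) explicitly requires $z \sim z'$ in $G$ as a precondition for the adjacency $\ta \sim \tb$. Since $f_{i+1}(\ta) = z$ and $f_{i+1}(\tb) = z'$, this gives $a \sim b$ directly. This exhausts all cases, which is why the authors remark the statement is trivial; there is no real obstacle, only bookkeeping over the three clauses defining the edges of $\tG_{i+1}$.
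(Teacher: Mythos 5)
Your proof is correct, and it is exactly the straightforward case analysis over the three clauses (a), (b), (c) that the authors had in mind when they labeled this lemma trivial and omitted its proof. Each of your three cases unpacks the definitions of $Z$, $\equiv$, and the edge set of $\tG_{i+1}$ correctly, and the appeals to $(R_i)$/$(T_i)$ in the first case and to the definition of $Z$ in the second are the right moves.
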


\begin{lemma}
\label{lem: l314}
If $a \sim b$ in $G$ and $\ta \in \Bi$, then there exists $\tb \in \tB_{i+1}$ such that $\ta \sim \tb$ and $f_{i+1}(\tb)=b$.
\end{lemma}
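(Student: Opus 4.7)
The plan is a straightforward case analysis on which shell of $\tB_i$ contains $\ta$, using the three data in the construction: the inductive isomorphism property \Ri for interior vertices, the definition of the set $Z$ for boundary vertices whose needed neighbor has not yet been ``realized,'' and the adjacency rule (b) of the construction.

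First I would split according to whether $\ta \in \tB_{i-1}$ or $\ta \in \tS_i$. In the former case I invoke \Ri applied to $\ta$: since $f_i$ restricts to an isomorphism between $\tG_i[B_1(\ta)]$ and $G[B_1(a)]$ and since $b \in B_1(a)$, there is a (unique) vertex $\tb \in \tB_i \subseteq \tB_{i+1}$ with $\ta \sim \tb$ and $f_{i+1}(\tb) = f_i(\tb) = b$.

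Next, if $\ta \in \tS_i$, I would look at the set $f_i(B_1(\ta, \tG_i))$. If $b$ already lies in this set, then there is some $\tb \in \tB_i$ with $\ta \sim \tb$ and $f_i(\tb) = b$, so again $\tb \in \tB_{i+1}$ does the job. Otherwise, by the very definition of $Z$, the pair $(\ta, b)$ belongs to $Z$, so its $\equiv$-class $[\ta, b]$ is a vertex of $\tS_{i+1} = Z/{\equiv}$. By the construction rule (b) of the edges of $\tG_{i+1}$ we have $\ta \sim [\ta,b]$, and by the definition of $f_{i+1}$ on $\tS_{i+1}$, $f_{i+1}([\ta,b]) = b$. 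Setting $\tb := [\ta,b]$ completes this case.

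There is essentially no obstacle here: the statement is built directly into the inductive construction of $\tB_{i+1}$, which was designed precisely so that every missing edge $ab$ of $G$ incident to a vertex already realized in $\tB_i$ either is already realized (handled by \Ri, \Ti, or membership in $f_i(B_1(\ta,\tG_i))$), or is added via the equivalence class $[\ta,b] \in \tS_{i+1}$. The only mild point to keep clean is noting that the two subcases in $\ta \in \tS_i$ are genuinely exhaustive and that in the subcase $b \in f_i(B_1(\ta,\tG_i))$ any choice of preimage $\tb$ works — we do not need to distinguish a canonical one.
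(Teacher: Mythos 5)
Your proof is correct and is exactly the direct verification the paper has in mind: the authors state this lemma without proof (they declare it trivial), and your three-way split --- \Ri{} for $\ta\in\tB_{i-1}$, an already-realized neighbor for $\ta\in\tS_i$ with $b\in f_i(B_1(\ta,\tG_i))$, and otherwise $(\ta,b)\in Z$ giving $\tb=[\ta,b]\in\tS_{i+1}$ via rule (b) --- is precisely how the construction was designed to make the statement immediate.
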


Now we show that $f_{i+1}$ is ``locally injective''.
\begin{lemma}
\label{lem: injloc}
Let $\ta,\tb,\tc \in \tB_{i+1}$, such that  $\tb \neq \tc$ and $\ta \sim \tb, \tc$, then $b\neq c$.
\end{lemma}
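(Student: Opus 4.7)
The plan is a systematic case analysis on the layer of $\ta$ (whether it lies in $\tB_{i-1}$, $\tS_i$, or $\tS_{i+1}$) and symmetrically on the layers of $\tb$ and $\tc$, combined with a sub-analysis of which construction rule produces each edge $\ta\tb$ and $\ta\tc$. Throughout, I assume for contradiction that $b=c$ and aim to derive either $\tb=\tc$ or some incompatibility.

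First I would settle the case $\ta \in \tB_i$. If $\ta \in \tB_{i-1}$, then both neighbors $\tb, \tc$ must lie in $\tB_i$ (by construction, vertices of $\tS_{i+1}$ attach only to $\tS_i$), and property \Ri gives the desired local injectivity of $f_i$ at $\ta$. If $\ta \in \tS_i$ with $\tb, \tc \in \tB_i$, property \Ti applies. If $\ta \in \tS_i$ with one (or both) of $\tb, \tc$ in $\tS_{i+1}$, I use that any neighbor of $\ta$ in $\tS_{i+1}$ must, by rule~(b), have the form $[\ta, z]$ with $(\ta, z) \in Z$, so $z \notin f_i(B_1(\ta, \tG_i))$; this separates the images of such neighbors from those of neighbors in $\tB_i$, and two distinct neighbors $[\ta, b]$ and $[\ta, c]$ in $\tS_{i+1}$ cannot satisfy $b=c$ without collapsing into a single equivalence class.

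The main case is $\ta \in \tS_{i+1}$, so $\ta = [\tw, a]$ and all neighbors of $\ta$ lie in $\tS_i \cup \tS_{i+1}$. If $\tb, \tc \in \tS_i$, both are representatives of the class $\ta$, so $(\tb, a) \equiv (\tc, a)$; since $\tb \neq \tc$, this forces $\tb \sim \tc$ in $\tG_i$, and Lemma~\ref{lem: l313} yields $b \sim c$, contradicting $b=c$. In the sub-case $\tb \in \tS_i$ and $\tc \in \tS_{i+1}$, I would unpack the construction rule producing $\ta \sim \tc$: if it arises by rule~(1) via a common representative $\tw_1$, necessarily in $\ta$'s class (so $\tw_1 = \tb$ or $\tw_1 \sim \tb$), then $b = f_i(\tb) \in f_i(B_1(\tw_1, \tG_i))$, contradicting $(\tw_1, c) \in Z$ once $c=b$; if only rule~(2) applies, the induced pentagon $a\,w_1\,y\,w_1'\,c$ in $G$ furnishes the non-adjacency $c \not\sim w_1$, which is incompatible with $b=c$ together with $b \sim w_1$ (obtained from Lemma~\ref{lem: l313} applied to $\tw_1 \sim \tb$ or $\tw_1 = \tb$).

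The last and most delicate sub-case is $\tb, \tc \in \tS_{i+1}$, written $\tb = [\tw_b, b]$, $\tc = [\tw_c, c]$, where each of the adjacencies $\ta\tb, \ta\tc$ comes from rule~(1) or rule~(2). When rule~(1) yields both adjacencies with representatives $\tw_b, \tw_c$ in $\ta$'s class, Lemma~\ref{lem: INCcongr} (via Lemma~\ref{lem: equivcongr}) shows that $(\tw_b, b) \equiv (\tw_c, b)$ whenever $b=c$, so $\tb = \tc$. When at least one edge arises only from rule~(2), the associated induced pentagon in $G$ yields a crucial non-adjacency (such as $c \not\sim w_c$), which combined with the data from the other rule contradicts $b=c$. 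I expect the main obstacle to be the situation where both edges $\ta\tb$ and $\ta\tc$ arise from rule~(2) with non-equivalent representatives of $\ta$'s class: here one must simultaneously analyze the two pentagons in $G$ sharing the edge $a\,b = a\,c$, and use the convexity of small balls in $G$ together with Lemmas~\ref{lem: INCcongr} and \ref{lem-ind-PCdist2} to either reduce to a previously handled configuration or expose an induced subgraph of $G$ forbidden by $\INC$ or the absence of induced $4$-cycles.
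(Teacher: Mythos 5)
Your case decomposition (by the layers of $\ta,\tb,\tc$ and by which of the rules (1)/(2) defines each edge) is the same as the paper's, and the cases you actually work out --- $\ta\in\tB_i$; $\tb,\tc$ both representatives of $\ta$'s class; both edges given by rule (1) --- are correct and agree with the paper's argument. The gap is in the sub-cases involving a rule-(2)-only edge, which you leave as an ``expected obstacle'' and for which the one concrete mechanism you do propose does not close. In the mixed case ($\ta\tb$ by rule (1) via some $\tu\sim\ta,\tb$, and $\ta\tc$ by rule (2) only, via a pentagon $awyw'c$ of $G$), the non-adjacency $c\nsim w$ you extract lives at the representative $\tw$ of $\ta$ supplied by the rule-(2) witness, whereas the adjacency $b\sim u$ lives at the representative $\tu$ supplied by rule (1); these need not coincide (one only knows $\tu=\tw$ or $\tu\sim\tw$), and when $\tu\sim\tw$ the configuration $a\sim u,w,c$, $u\sim w$, $c\sim u$, $c\nsim w$ is perfectly consistent, so no contradiction with $b=c$ follows from ``combining the two rules'' as you describe.

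The missing idea is how Lemma \ref{lem-ind-PCdist2} is actually deployed: for a rule-(2)-only edge $\ta\tc$ it guarantees that \emph{every} $\ty\in\tB_{i-1}$ at distance $2$ from $\ta$ is also at distance $2$ from $\tc$, so the rule-(2) pentagon can be re-routed through a $\ty$ of your choosing. In the mixed case one takes $\ty$ to be a neighbor in $\tB_{i-1}$ of the rule-(1) witness $\tu$, producing a pentagon $auyu'c$ whose non-edge $cu$ contradicts $c=b\sim u$. In the double-rule-(2) case one takes a common $\ty$ for both edges, obtaining $\tw\sim\ty,\tb$ and $\tw'\sim\ty,\tc$; then $bwyw'$ is a $4$-cycle of $G$ which cannot be induced, Lemma \ref{lem-voisinZ} excludes the diagonal $by$, so $w\sim w'$, hence $\tw\sim\tw'$ by \Ri and $\tb=[\tw,b]=[\tw',b]=[\tw',c]=\tc$. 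Without this synchronization of the witnesses at a common vertex of $\tB_{i-1}$, the local data extracted from the two edges cannot be compared, and the hard sub-cases of your plan remain open.
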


\begin{proof}
First,  if $\tb \sim \tc$, then $b \sim c$ by Lemma \ref{lem: l313} and we are done.
Hence, let $\tb \nsim \tc$. Moreover, if $\ta, \tb, \tc \in \Bi$, then we are done by \Ri. From the definition of
$\tS_{i+1}$ we are also done when $\ta \in \Bi$ and $\tb, \tc \in \tS_{i+1}$. Suppose by way of contradiction  that $b=c$.
If $\ta, \tb \in \Bi$ and $\tc \in \tS_{i+1}$, then we
have $\tc = \cro{\ta, b}$, and we get a contraction with the fact that $(\ta, b)\in Z$. Hence, further we
may assume that $\ta \in \tS_{i+1}$. If $\tb, \tc \in \Bi$, then by definition of $\tS_{i+1}$ we have $[\tb,a] = \ta = [\tc,a]$, and by definition of $\equiv$, we have $\tb \sim \tc$, a contradiction. Now,  let $\tb \in \tS_{i+1}$ and $\tc \in \Bi$.
Then $\ta = \cro{\tc, a}$.
If condition (1) applies to the edge $\ta\tb$, then there exists $\tw \in \Bi$
such that $\tw \sim \tb, \ta$. Necessarily, $\tw \neq \tc$ because
$\tb \nsim \tc$. By $\INC(\tv_0)$, we get $\tw \sim \tc$.
Replacing $\ta$ by $\tw$, we are in the previous case where  $\tw \sim \tb, \tc$,
which was shown to be impossible. Therefore assume that there is no such $\tw\in \Bi$ and thus only condition (2) applies. Consider a neighbor $\tw \in \tB_{i-1}$ of $\tc$ and observe that by Lemma~\ref{lem-ind-PCdist2}, there exists $\tu \sim \tb,\tw$. By \Ri, we get that $\tu \sim \tc$, and replacing $\ta$ by $\tu$
we are again in a previous case where $\tu \sim \tb, \tc$.

Finally, let $\ta, \tb, \tc \in \tS_{i+1}$. First assume that
condition (1) applies to both $\ta\tb$ and $\ta\tc$. Then there exist
$\tu, \tu'\in \Bi$ such that $\tu \sim \ta, \tb$ and
$\tu' \sim \ta, \tc$. If $\tu = \tu'$, replacing $\ta$ by $\tu$, we
are in a previous case where $\tu \sim \tb,\tc$. So, assume that
$\tu \neq \tu'$. This implies that $\tu \sim \tu'$ since
$\ta = [\tu,a] = [\tu',a]$. By definition of $\equiv$, we thus have
$\tb = [\tu, b] = [\tu', b] = \tc$, a contradiction.

Assume now that condition (1) applies to $\ta\tb$, while $\ta\tc$ is
defined only by condition (2). Then, there exists $\tu \in \Bi$ such
that $\tu\sim \ta,\tb$. Pick any neighbor $\ty$ of $\tu$ in
$\tB_{i-1}$. By Lemma~\ref{lem-ind-PCdist2}, there exists
$\tu' \in \tB_i$ such that $\tu'\sim \ty,\tc$. Since condition (1)
does not apply to the edge $\ta\tc$, $acu'yu$ should be a pentagon of
$G$, contradicting the fact that $c=b \sim u$.

Consequently, we may assume that both $\ta\tb$ and $\ta\tc$ are
defined only by condition (2). Pick any vertex $\ty \in \tB_{i-1}$ such
that $d(\ty,\ta) = 2$. By Lemma~\ref{lem-ind-PCdist2}, there exists
$\tw \sim \ty,\tb$ and $\tw' \sim \ty,\tc$. Consequently, $bwyw'$ is a
4-cycle of $G$ that cannot be induced. By Lemma~\ref{lem-voisinZ},
$b \nsim y$ and thus $w\sim w'$. By \Ri, we have $\tw \sim \tw'$ and
consequently, $\tb = [\tw,b] = [\tw',b] = [\tw',c] = \tc$, a contradiction.
\end{proof}

Now we show that $f_{i+1}$ is ``locally an isomorphism''.
\begin{lemma}
\label{lem: isoloc}
 Let $\ta, \tb, \tc \in \tB_{i+1}$ be such that $\ta \sim \tb, \tc$. Then $\tb \sim \tc$ if and only if $b \sim c$.
\end{lemma}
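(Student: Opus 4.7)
For the forward direction, $\tb \sim \tc$ implies $b \sim c$ by Lemma~\ref{lem: l313}. For the converse, the plan is to assume $b \sim c$ in $G$ and establish $\tb \sim \tc$ in $\tG_{i+1}$ by a case analysis on whether each of $\ta, \tb, \tc$ belongs to $\tB_i$ or to $\tS_{i+1}$. The proof will rely on the inductive hypotheses $(R_i), (T_i), (Q_i)$ (closed-star isomorphisms with $G$ and the metric conditions $\INC, \TPC$ at level $i$), together with the edge-creation rules (a), (b), (c) from the construction of $\tG_{i+1}$ and the structural lemmas already established (Lemmas~\ref{lem: l313} through \ref{lem-ind-PCdist2}).

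Consider first the case $\ta \in \tB_i$. If both $\tb, \tc \in \tB_i$, applying $(R_i)$ or $(T_i)$ at $\ta$ gives an isomorphism between $\tG_i[B_1(\ta)]$ and $G[f_i(B_1(\ta))]$ that transports $b \sim c$ back to $\tb \sim \tc$. If exactly one, say $\tc$, lies in $\tS_{i+1}$, then $\ta \in \tS_i$ and $\tc = [\ta, c]$; I would check that $(\tb, c) \in Z$ (otherwise any witness in $f_i(B_1(\tb))$ mapping to $c$ would, via the closed-star isomorphism at $\ta$, be a neighbor of $\ta$, contradicting $(\ta, c) \in Z$) and that $(\tb, c) \equiv (\ta, c)$ via $\tb \sim \ta$, so $\tc = [\tb, c]$ and rule~(b) yields the edge. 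If both $\tb, \tc \in \tS_{i+1}$, then $\tb = [\ta, b]$, $\tc = [\ta, c]$, and rule~(c1) with common witness $\ta$ produces the edge.

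Consider next the case $\ta \in \tS_{i+1}$; distance reasons confine $\tb, \tc$ to $\tS_i \cup \tS_{i+1}$. If both lie in $\tS_i$, the identity $\ta = [\tb, a] = [\tc, a]$ forces $\tb = \tc$ or $\tb \sim \tc$, and the former would give $b = c$, contradicting $b \sim c$. In the mixed configuration $\tb \in \tS_i$, $\tc \in \tS_{i+1}$, I will split on the rule producing the edge $\ta \tc$: rule~(c1) reduces cleanly to the previous paragraph via $\tc = [\tb, c]$; rule~(c2) yields a pentagon $a w_a y w_c c$ in $G$ together with $\ty \in \tB_{i-1}$ adjacent to $\tw_a$ and $\tw_c$, and the identity $\ta = [\tw_a, a] = [\tb, a]$ forces $\tw_a = \tb$ or $\tw_a \sim \tb$. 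The first option introduces a chord $bc$ in the pentagon, contradicting its being induced; the second must be resolved by invoking Lemma~\ref{lem-ind-PCdist2}, the closed-star isomorphisms $(R_i)$, $(T_i)$, and the convexity of small balls in $G$ to locate a common neighbor in $\tB_{i-1}$ collapsing the $\equiv$-classes, so that $\tc = [\tb, c]$ and rule~(b) applies. The sub-case with both $\tb, \tc \in \tS_{i+1}$ is handled by applying the same dichotomy to both edges $\ta \tb$ and $\ta \tc$ simultaneously.

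The main obstacle will be the configurations where one or both of the edges incident to $\ta$ arise only from the pentagon rule~(c2), because then the newly assumed adjacency $b \sim c$ must be reconciled with the induced 5-cycle supporting the edge. The cleanest resolution uses the chord argument whenever possible and otherwise combines Lemma~\ref{lem-ind-PCdist2} with the closed-star isomorphisms provided by $(R_i)$ and $(T_i)$ to exhibit a common neighbor in $\tB_{i-1}$ witnessing the equivalence $(\tb, c) \equiv (\tw_c, c)$, ultimately reducing the configuration to an application of rule~(b) or rule~(c1).
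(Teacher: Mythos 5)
Your overall skeleton (forward direction via Lemma~\ref{lem: l313}; converse by cases on which of $\ta,\tb,\tc$ lie in $\tB_i$ versus $\tS_{i+1}$, using the edge rules and the relation $\equiv$) matches the paper, and the cases with $\ta\in\tB_i$, as well as the case $\ta\in\tS_{i+1}$ with $\tb,\tc\in\tB_i$, are handled correctly. The genuine gap is in the mixed configuration $\ta,\tc\in\tS_{i+1}$, $\tb\in\tB_i$, when the edge $\ta\tc$ is given only by the pentagon rule (2). There you propose to ``locate a common neighbor in $\tB_{i-1}$ collapsing the $\equiv$-classes'' using Lemma~\ref{lem-ind-PCdist2}, the closed-star isomorphisms and convexity of small balls. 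Since the only edges from $\tb\in\tB_i$ into $\tS_{i+1}$ are of type (b), concluding $\tb\sim\tc$ means showing $\tc=[\tb,c]$, i.e.\ $(\tb,c)\equiv(\tw_c,c)$, which by the definition of $\equiv$ requires the adjacency $\tb\sim\tw_c$, hence $b\sim w_c$ in $G$. But $b\sim w_c$ is not forced by the local geometry: with $aw_ayw_cc$ an induced pentagon and $b\sim a,c,w_a$, the $5$-cycle $bw_ayw_cc$ may itself be an induced pentagon of $G$, and no chord or convexity argument excludes this. So the tools you list cannot close this sub-case.

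What is missing is Lemma~\ref{lem: injloc} (local injectivity of $f_{i+1}$), which the paper uses to dispose of the entire mixed case in a few lines without ever inspecting how the edge $\ta\tc$ was created: one first shows $(\tb,c)\in Z$ (otherwise a preexisting preimage $\tc'\in\tB_i$ of $c$ adjacent to $\tb$ would, via Lemma~\ref{lem-voisinZ}, produce two distinct neighbors of $\ta$ with the same image), and then $[\tb,c]$ and $\tc$ are both neighbors of $\ta=[\tb,a]$ mapping to $c$, so Lemma~\ref{lem: injloc} forces $[\tb,c]=\tc$ and rule (b) gives $\tb\sim\tc$. The adjacency $\tb\sim\tw_c$ then comes out as a consequence of injectivity rather than something to verify directly. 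You should restructure the mixed case around Lemma~\ref{lem: injloc}; a similar remark applies to your final sub-case with $\ta,\tb,\tc\in\tS_{i+1}$, where the paper first reduces to the situation in which $\tb$ and $\tc$ have no common neighbor in $\tB_i$ before running the dichotomy on the two edge rules (the sub-case where both edges come from rule (1) then ends in a contradiction rather than in a direct construction of the edge).
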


\begin{proof}
  If $\tb \sim \tc$, then $b\sim c$ by Lemma \ref{lem:
    l313}. Conversely, suppose that $b\sim c$ in $G$. Suppose first
  that $\ta \in \tB_i$.  If $\tb, \tc \in \tB_i$, then we get the
  result by \Ri or by \Ti. If $\tb \in \tB_i$ and $\tc \in \tS_{i+1}$,
  then by Lemma~\ref{lem-voisinZ}, $(\tb,c) \in Z$ and
  $\tb \sim [\tb,c] = [\ta,c] = \tc$. If $\tb,\tc \in \tS_{i+1}$,
  $\tb \sim \tc$ by the definition of  $E(\tG_{i+1})$.

  Suppose now that $\ta \in \tS_{i+1}$. If $\tb, \tc \in \tB_i$, then
  $\ta = [\tb,a] = [\tc,a]$ and the result follows from the definition
  of $\equiv$.  Assume that $\tb \in \tS_{i+1}$ and $\tc \in
  \tB_i$. If $(\tc,b) \in Z$, then $\ta = [\tc,a] \sim [\tc,b]$ and,
  by Lemma~\ref{lem: injloc}, we have $[\tc,b] = \tb$, implying that
  $\tb \sim \tc$.  If $(\tc,b) \notin Z$, then there exists
  $\tb' \in \tB_i$ such that $\tb' \sim \tc$ and $f_i(\tb') = b$.  By
  Lemma~\ref{lem-voisinZ}, $\tb' \in \tS_i$, $(\tb',a) \in Z$, and
  $[\tb',a] = [\tc,a] = \ta$. Consequently, $\ta \sim \tb, \tb'$,
  contradicting Lemma~\ref{lem: injloc}.

  Now, let $\ta, \tb, \tc \in \tS_{i+1}$. Observe that if $\tb, \tc$
  have a common neighbor $\tu \in \tB_i$, we are done by the previous
  cases. Assume that such a common neighbor does not exist.
First suppose that both $\ta\tb$ and $\ta\tc$ are defined by
  condition (1). Thus there exists $\tu, \tu' \in \tB_i$ such that
  $\tu \sim \ta,\tb$ and $\tu' \sim \ta,\tc$. Then,
  $\ta = [\tu,a] = [\tu',a]$ and $\tu \sim \tu'$. Since
  $\tu \nsim \tc$ and $\tu' \nsim \tb$, by the previous cases, we have
  $u \nsim c$ and $u'\nsim b$, and thus $bcu'u$ is an induced 4-cycle
  of $G$, a contradiction.
Assume now that condition (1) applies to $\ta\tb$, while $\ta\tc$ is
  defined only by condition (2). Then, there exists $\tu \in \Bi$ such
  that $\tu\sim \ta,\tb$. Pick any neighbor $\ty$ of $\tu$ in
  $\tB_{i-1}$. By Lemma~\ref{lem-ind-PCdist2}, there exists
  $\tu' \in \tB_i$ such that $\tu'\sim \ty,\tc$. But then,
  $\tb \sim \tc$ by condition (2).
Finally, assume that both $\ta\tb$ and $\ta\tc$ are defined only by
  condition (2). Pick any vertex $\ty \in \tB_{i-1}$ such that
  $d(\ty,\ta) = 2$. By Lemma~\ref{lem-ind-PCdist2}, there exists
  $\tw \sim \ty,\tb$ and $\tw' \sim \ty,\tc$. Again, $\tb \sim \tc$ by
  condition (2).
\end{proof}

Now we have everything at hand to prove \Rip and \Tip.

\begin{lemma}
 \label{lem: RipTip}
 The properties \Rip and \Tip hold.
\end{lemma}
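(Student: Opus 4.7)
The plan is to observe that this lemma is essentially a packaging of the preceding three lemmas (Lemmas~\ref{lem: l313}, \ref{lem: l314}, \ref{lem: injloc}, and \ref{lem: isoloc}) together with the construction of $\tG_{i+1}$, so no new geometric argument is needed here.

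First I would handle \Rip. Fix $\tu \in \tB_i$ and write $u = f_{i+1}(\tu)$. I would verify separately the four properties required for $f_{i+1}$ to restrict to an isomorphism $\tG_{i+1}[B_1(\tu)] \to G[B_1(u)]$. That $f_{i+1}$ maps $B_1(\tu)$ into $B_1(u)$ is immediate from Lemma~\ref{lem: l313}. Injectivity on vertices of $B_1(\tu)$ follows from Lemma~\ref{lem: injloc} applied with $\ta = \tu$ (noting also that $f_{i+1}$ is locally injective at $\tu$ since otherwise $\tu$ would be a neighbor of itself, which the construction forbids). For surjectivity onto $B_1(u)$, given $b \in B_1(u)$ one invokes Lemma~\ref{lem: l314} to produce a preimage $\tb \in \tB_{i+1}$ with $\tb \sim \tu$ (or uses $\tb = \tu$ if $b=u$). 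Finally, edge preservation and reflection for any pair $\tb, \tc \in B_1(\tu)$ is exactly Lemma~\ref{lem: isoloc} applied with $\ta = \tu$.

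Next I would treat \Tip, which is strictly easier. Fix $\tu \in \tS_{i+1}$. Surjectivity of $f_{i+1}$ from $B_1(\tu)$ onto $f_{i+1}(B_1(\tu))$ is tautological; injectivity is again Lemma~\ref{lem: injloc} with $\ta = \tu$; and edge preservation plus reflection is again Lemma~\ref{lem: isoloc} with $\ta = \tu$. Note the statement only asks for an isomorphism onto the image $f_{i+1}(B_1(\tu))$ and not onto $B_1(u)$, which is consistent with the fact that at this inductive stage a neighbor of $u$ in $G$ need not yet have a preimage adjacent to $\tu$ (such missing neighbors are what triggers the construction of the next sphere $\tS_{i+2}$).

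There is no genuine obstacle in this lemma: all the difficulty was concentrated upstream, namely in Lemma~\ref{lem: injloc} and Lemma~\ref{lem: isoloc}, where one had to carefully analyse edges between vertices of $\tS_{i+1}$ defined via conditions (1) and (2) and use Lemma~\ref{lem-ind-PCdist2}. Once those two local statements are available, the present lemma is a one-line verification for each of the four isomorphism axioms, for each of the two sub-cases $\tu \in \tB_i$ and $\tu \in \tS_{i+1}$.
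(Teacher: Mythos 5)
Your proposal is correct and follows essentially the same route as the paper: the authors likewise derive \Tip directly from Lemmas~\ref{lem: injloc} and~\ref{lem: isoloc}, and obtain \Rip by adding the surjectivity supplied by Lemma~\ref{lem: l314}. Your write-up is just a slightly more itemized version of the same one-line verification.
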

\begin{proof}
  Let $\tw\in \Bip$. By Lemmas~\ref{lem: injloc} and~\ref{lem:
    isoloc}, $f_{i+1}$ induces an isomorphism between $G[B_1(\tw,\tG_{i+1})]$ and
  $G[f_{i+1}(B_1(\tw,\tG_{i+1}))]$ and thus \Tip holds. Moreover, by
  Lemma~\ref{lem: l314}, for any $\tw \in \tB_i$, $f_{i+1}$ induces a
  surjection between $B_1(\tw,\tG_i)$ and $B_1(f_{i+1}(\tw),G)$ and consequently,
  an isomorphism between $\tG_{i+1}[B_1(\tw,\tG_{i+1})]$ and $G[B_1(f_{i+1}(\tw),G)]$.  Thus
  \Rip holds.
\end{proof}

From  properties \Rip and \Tip, we immediately obtain the following lemma:

\begin{lemma}
 \label{lem: morphism}
 For any triangle $\tu\tv\tw$ of $\tG_{i+1}$, its image $uvw$ under $f_{i+1}$ is a triangle of $G$.
 Similarly, the image of every pentagon $\tu \tx \tw \ty \tv$ of $\tG_{i+1}$ is a pentagon of $G$.
\end{lemma}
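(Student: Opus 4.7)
The lemma claims that $f_{i+1}$ sends triangles of $\tG_{i+1}$ to triangles of $G$ and pentagons to pentagons. My plan is to derive both statements directly from \Rip and \Tip, which together say that for every $\tu \in \tB_{i+1}$, the restriction of $f_{i+1}$ to $B_1(\tu)$ is a graph isomorphism between $\tG_{i+1}[B_1(\tu)]$ and $G[f_{i+1}(B_1(\tu))]$. In particular, this restriction is injective on vertices and preserves both adjacencies and non-adjacencies.

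For a triangle $\tu\tv\tw$ of $\tG_{i+1}$, apply the local isomorphism at $\tu$: the vertices $\tv, \tw \in B_1(\tu)$ are distinct and adjacent, hence $v, w$ are distinct and $v \sim w$ in $G$; together with $u \sim v$ and $u \sim w$ (Lemma~\ref{lem: l313}) and the fact that $u$ differs from $v$ and $w$ because it is adjacent to them, this shows $uvw$ is a triangle of $G$.

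For a pentagon $\tu\tx\tw\ty\tv$ of $\tG_{i+1}$, the five cyclic edges $\tu\tx, \tx\tw, \tw\ty, \ty\tv, \tv\tu$ yield the corresponding adjacencies in $G$ by Lemma~\ref{lem: l313}. It remains to verify that the five images $u,x,w,y,v$ are pairwise distinct and that the five non-cyclic pairs are not edges in $G$. The plan is to process each non-cyclic pair by applying the local isomorphism at the vertex common to its two endpoints in the pentagon, as follows:
\begin{itemize}
\item pair $(u,w)$: both $\tu, \tw$ lie in $B_1(\tx)$ and $\tu \nsim \tw$, so the isomorphism at $\tx$ gives $u \neq w$ and $u \nsim w$;
\item pair $(x,y)$: both lie in $B_1(\tw)$, and apply the isomorphism at $\tw$;
\item pair $(w,v)$: both lie in $B_1(\ty)$, and apply the isomorphism at $\ty$;
\item pair $(y,u)$: both lie in $B_1(\tv)$, and apply the isomorphism at $\tv$;
\item pair $(v,x)$: both lie in $B_1(\tu)$, and apply the isomorphism at $\tu$.
\end{itemize}
Each of these applications yields simultaneously that the two images are distinct and non-adjacent. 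Combined with the distinctness of adjacent pairs (which holds automatically since an edge connects two different vertices), this proves that $uxwyv$ is a pentagon of $G$.

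There is no real obstacle here: once \Rip and \Tip are established (as done in Lemma~\ref{lem: RipTip}), the lemma is essentially a bookkeeping argument verifying all $\binom{5}{2} = 10$ pairs of vertices in a pentagon by invoking the local isomorphism at an appropriately chosen vertex.
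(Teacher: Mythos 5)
Your proof is correct and follows exactly the route the paper intends: the paper states Lemma~\ref{lem: morphism} as an immediate consequence of \Rip and \Tip, and your argument simply spells out the bookkeeping (applying the local isomorphism at a common neighbor of each non-adjacent pair) that the paper leaves implicit.
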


\begin{lemma}\label{lemma: virtC4} Let $\ta\tb\tc\td$ be a path of $\tG_{i+1}$. Then the following holds:
\begin{itemize}
\item if $a=d$ in $G$, then $\ta = \td$;
\item if $a\sim d$ in $G$, then $\ta \sim \td$.
\end{itemize}
\end{lemma}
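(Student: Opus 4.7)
The plan is to deduce both assertions from the ``local isomorphism'' Lemmas~\ref{lem: injloc} and~\ref{lem: isoloc}, together with the observation that the input graph $G$ contains no induced $4$-cycle. The latter is a direct consequence of $\INCz$, which in turn holds by Theorem~\ref{thm: INC} since balls of radius at most $3$ (in particular of radius $2$) are convex by hypothesis; indeed, an induced $C_4$ with diagonal pair $a,c$ would place $b,d$ as two non-adjacent common neighbors of $a$ and $c$ in $I(a,c)\cap S_1(a)$, contradicting $\INCz$.

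For the first assertion, I would assume $a=d$ and use Lemma~\ref{lem: l313} applied to the edge $\tc\td$ to get $c\sim d=a$ in $G$, and in particular $a\sim c$. Then Lemma~\ref{lem: isoloc} applied at $\tb$ with its two neighbors $\ta,\tc$ upgrades this to $\ta\sim\tc$ in $\tG_{i+1}$. Now $\tc$ has two neighbors $\ta,\td$ with $f_{i+1}(\ta)=a=d=f_{i+1}(\td)$, and Lemma~\ref{lem: injloc} applied at $\tc$ forces $\ta=\td$, as required.

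For the second assertion, assume $a\sim d$. I would first dispose of the degenerate cases in which $a,b,c,d$ are not all distinct: if $a=c$, then Lemma~\ref{lem: injloc} at $\tb$ (with neighbors $\ta,\tc$) yields $\ta=\tc$, hence $\ta=\tc\sim\td$; if $b=d$, then Lemma~\ref{lem: injloc} at $\tc$ (with neighbors $\tb,\td$) yields $\tb=\td$, hence $\ta\sim\tb=\td$. Otherwise $a,b,c,d$ are pairwise distinct and $abcd$ is a genuine $4$-cycle of $G$. By the absence of induced $C_4$ in $G$, at least one of the chords $ac$ or $bd$ is an edge. If $a\sim c$, then Lemma~\ref{lem: isoloc} at $\tb$ gives $\ta\sim\tc$, and a second application at $\tc$ (using $a\sim d$) gives $\ta\sim\td$. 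The case $b\sim d$ is symmetric: Lemma~\ref{lem: isoloc} at $\tc$ yields $\tb\sim\td$, and then at $\tb$ yields $\ta\sim\td$.

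The only delicate point in the argument is handling the degeneracies in which the image path in $G$ collapses, but this is precisely what Lemma~\ref{lem: injloc} is designed for, so I do not anticipate any further obstacle; the whole lemma should be no longer than a handful of lines.
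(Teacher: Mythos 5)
Your proof is correct and follows essentially the same route as the paper: apply the local isomorphism/injectivity at $\tb$ and then at $\tc$ (the paper phrases this via ($R_{i+1}$)/($T_{i+1}$), you via Lemmas~\ref{lem: injloc} and~\ref{lem: isoloc}, which is the same content), using that $G$ has no induced $4$-cycle to produce the chord $a\sim c$ or $b\sim d$. Your extra degenerate cases $a=c$ and $b=d$ are in fact vacuous by Lemma~\ref{lem: injloc} applied to the path's distinct vertices, but handling them does no harm.
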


\begin{proof}
If $a=d$, then  by applying \Rip or \Tip to $\tb$, we conclude that $\ta \sim \tc$. By applying \Rip or \Tip to $\tc$, we must have $\ta=\td$.
If $a\sim d$, then by \Rip or \Tip, $abcd$ is a $4$-cycle of $G$. Since this cycle  cannot be induced, we can assume without loss of generality that  $a\sim c$.
By applying \Rip or \Tip to $\tb$, we get $\ta \sim \tc$. Then applying  \Rip or \Tip to $\tc$, we get $\ta \sim \td$.
\end{proof}

\subsection{Property ($S_{i+1}$)}
In this subsection, we establish the property ($S_{i+1}$), whose proof is the most involved.  We call a path $\tpi=\tu\tx\tw\ty\tv$ of $\tG_{i+1}$ a \emph{virtual 5-cycle} if it image $\pi=uxwyv$ in $G$ is a 5-cycle.

\begin{proposition} \label{Si+1} The property ($S_{i+1}$) holds, i.e., if $\tpi=\tu\tx\tw\ty\tv$ of $\tG_{i+1}$ is a virtual 5-cycle, then $\tu\sim\tv$.
\end{proposition}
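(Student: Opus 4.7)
The plan is to induct on $k := \max_{\ta \in \tpi} d(\vz, \ta)$. The base case $k \leq i$ is immediate: the virtual pentagon $\tpi$ then lies entirely in $\tG_i$, its image $\pi = uxwyv$ remains a 5-cycle in $G$, and the induction hypothesis $(S_i)$ yields $\tu \sim \tv$ in $\tG_i$, hence in $\tG_{i+1}$.

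For the inductive step, one may assume $k = i+1$, so that the subset $S := \tpi \cap \tS_{i+1}$ is nonempty. I would perform a case analysis on $S$, noting that consecutive vertices of the path differ in depth by at most one and that the symmetry $\tu\tx\tw\ty\tv \leftrightarrow \tv\ty\tw\tx\tu$ cuts the number of profiles to be treated in half. For each profile, the basic strategy is to apply $\TPC(\vz)$ — which is valid in $\tG_{i+1}$ by $(Q_{i+1})$ — to a horizontal edge at depth $i+1$, so as to produce, via $(R_{i+1})$ and $(T_{i+1})$, an auxiliary vertex $\tz \in \tB_{i+1}$ of depth $i$ or $i-1$ together with a controlled image $z = f_{i+1}(\tz)$ in $G$. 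One then either assembles a new virtual 5-cycle of strictly smaller maximum depth to which the outer induction applies, or directly exhibits a vertex $\ta \in \tB_{i+1}$ adjacent to both $\tu$ and $\tv$; in the latter situation, Lemma~\ref{lem: isoloc} combined with $u \sim v$ in $G$ (which holds because $\pi$ is a 5-cycle) immediately yields $\tu \sim \tv$. Throughout, Lemma~\ref{lemma: virtC4} collapses any virtual 4-cycles produced as byproducts, and Lemma~\ref{lem: injloc} rules out coincidences between the auxiliary vertices and vertices of $\tpi$ that would contradict $\pi$ being an induced 5-cycle in $G$.

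The most delicate subcases, I anticipate, will be those in which at least one of the endpoints $\tu,\tv$ belongs to $\tS_{i+1}$ and the desired edge $\tu\tv$ must arise from condition~(2) in the definition of $E(\tG_{i+1})$ rather than from a shared neighbor in $\tB_i$. There one must check that the pentagon apex produced by $\PC(\vz)$ at depth $k-2$ is \emph{the correct one}, in the sense that it matches the equivalence relation $\equiv$ underlying the definition of $\tS_{i+1}$. Lemma~\ref{lem-ind-PCdist2} — which certifies that edges between vertices of $\tS_{i+1}$ arising only from condition~(2) preserve distances to $\tB_{i-1}$ — and Lemma~\ref{lem: INCcongr} — which identifies equivalence classes $[\tw,z]$ when their representatives share a common neighbor in $\tB_{i-1}$ — will be the pivotal tools for this bookkeeping. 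Given the variety of depth profiles and the intricate interplay between pentagon apexes and the relation $\equiv$, I expect the complete case analysis to be the longest single argument in the construction, consistent with the authors' acknowledgement that this proposition has the most involved proof.
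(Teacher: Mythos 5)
Your plan has a genuine structural gap: the induction parameter. You propose to induct on $k:=\max_{\ta\in\tpi} d(\vz,\ta)$, and your reduction step promises either a common neighbor of $\tu$ and $\tv$ or ``a new virtual 5-cycle of strictly smaller maximum depth.'' But the reductions available here almost never decrease the maximum depth. For instance, when $\mathrm{TC}$ applies to a horizontal edge $\tx\tw$ at height $j$, the natural replacement cycle is $\tu\tx\tz\ty\tv$ with $\tz$ one level below $\tw$; if $\tu$ or $\tv$ lies in $\tS_{i+1}$ (precisely the hard case), the maximum depth of the new cycle is still $i+1$, so your outer induction hypothesis does not apply to it. The paper resolves this by inducting on the \emph{sum} of the heights of the five vertices, which does drop by $1$ under exactly these local ``push one vertex down'' moves; see also Claim~\ref{claim:better5virtcycle}, where a careful height count ($5j-4$) is what makes the auxiliary cycles admissible. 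Without switching to the sum (or adding a secondary parameter such as the number of vertices at maximal depth, and then verifying that every reduction decreases the pair lexicographically --- which your sketch does not do), the recursion does not terminate.

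A second, independent gap is that your toolkit omits the ingredient that actually closes the hardest cases. When only $\PC(\vz)$ applies to the chosen edge, one obtains a second pentagon $\pi_1$ of $G$ sharing an edge with $\pi$, and one must then exploit the hypothesis that balls of radius $2$ and $3$ in $G$ are convex together with Lemma~\ref{lem: deuxpent} (the union of two pentagons sharing an edge has diameter $\le 3$, and in the diameter-$3$ case one of them has a universal vertex) to manufacture the vertex joining $\tu$ to $\tv$. Neither $(R_{i+1})$, $(T_{i+1})$, Lemma~\ref{lemma: virtC4}, Lemma~\ref{lem-ind-PCdist2}, nor Lemma~\ref{lem: INCcongr} suffices here: the pentagon apex $\tz$ produced by $\PC$ sits two levels below the edge and is not adjacent to $\tu$ or $\tv$, so no amount of bookkeeping with $\equiv$ alone yields the edge $\tu\tv$; one genuinely needs the metric input from $G$ (e.g.\ $d(z,v)\le 2$ by convexity of $B_2(z)$, then a lift of a common neighbor via $(R_i)$, then Claim~\ref{claim:virtC4}). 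A smaller slip: a virtual 5-cycle is only required to have a 5-cycle (not an induced pentagon) as its image, so the chorded case must be handled separately (via Lemma~\ref{lemma: virtC4}) before you may assume $\pi$ is a pentagon.
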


\begin{proof} We say that a vertex $\ta$ of $\tB_{i+1}$ has \emph{height}  $j$ (notation $h(\ta)=j$) if $\ta\in \tS_j$. We call an edge $\ta\tb$
\emph{horizontal of height $j$} if $h(\ta)=h(\tb)=j$ and \emph{vertical} if $h(\ta)\ne h(\tb)$.
From the definition of edges of the graph $\tG_{i+1}$,  $|h(\ta)-h(\tb)|\le 1$ for any edge $\ta\tb$.
The \emph{height} $h(\tpi)$ of a virtual 5-cycle $\tpi$ is the sum of heights of its vertices. We proceed by induction
on the height $h(\tpi)$ of $\tpi$. If all vertices of $\tpi$ belong to $\tB_i$, then applying \Si we get $\tu\sim\tv$.
Thus, we will assume that $\tpi$
contains a vertex of $\tS_{i+1}$.

\medskip\noindent
{\bf Case 1.} The cycle $\pi=uxwyv$ has a chord.

\begin{proof} First, let $u \sim w$ or  $v\sim w$, say the first. Then applying Lemma \ref{lemma: virtC4} to the path $\tu \tw \ty \tv$ we get $\tu \sim \tv$. Now, let
$w\nsim u,v$. Since  $G$ does not contain induced $4$-cycles, we conclude that $y\sim x$. By \Rip or \Tip, we get $\tx \sim \ty$. Applying Lemma \ref{lemma: virtC4}
to the path $\tu\tx\ty\tv$, we  get $\tu \sim \tv$.
\end{proof}

In the remaining part of the proof we assume that $\pi=uxwyv$ is a pentagon of $G$. This implies that $\tpi$ does not contain \emph{peaks}, i.e., vertices whose heights are larger than the heights of
their neighbors.   Indeed, if say $h(\tx)<h(\tw)>h(\ty)$, then by $\INC(\tv_0)$  we have $\tx\sim \ty$ and thus $x\sim y$, a contradiction.

\medskip\noindent
{\bf Case 2.} $|h(\tu)-h(\tv)|\ge 2$.

\begin{proof}
Let $h(\tu)\ge h(\tv)+2$. By \Ri there exists $\tu'\sim \tv$. Since $h(\tu')\le h(\tv)+1<h(\tu)$, the height of the virtual 5-cycle $\tx\tw\ty\tv\tu'$ is less than $h(\tpi)$. By induction hypothesis, $\tx\sim \tu'$, contradicting Lemma \ref{lem: injloc}.
\end{proof}

Therefore, further we can suppose that $|h(\tu)-h(\tv)|\le 1$.

\medskip\noindent
{\bf Case 3.} All edges of $\tpi$ are vertical.

\begin{proof}
Since $\tpi$ does not contain peaks and horizontal edges and  $|h(\tu)-h(\tv)|\le 1$, necessarily $h(\tu)>h(\tx)>h(\tw)<h(\ty)<h(\tv)$.
Applying rule (2) of the definition of edges of $\tG_{i+1}$, we get $\tu\sim \tv$.
\end{proof}

Thus, we can assume that at least one edge of $\tpi$ is horizontal. We pick as $\te$  the second
or the third edge of $\tpi$ if such an edge is horizontal and the first or the fourth edge if this edge is horizontal and $\tx\tw$
and $\tw\ty$ are vertical. Then $\TPC(\tv_0)$ applies to $\te$.  Denote by $j$ the height of the edge $\te$. Observe that if $\te \notin \{\tx\tw,\tw\ty\}$, then $\tx\tw, \tw\ty$ are vertical and since $\tpi$ does not contain peaks and $|h(\tu)-h(\tv)|\leq 1$, we necessarily have $h(\tw) = j-1$.

\medskip\noindent
{\bf Case 4.} $\mathrm{TC}$  applies to the edge $\te$.

\begin{proof} First, let $\te=\tx\tw$. By $\mathrm{TC}$  there exists
$\tz \in \tB_{j-1}$ such that $\tz \sim \tx, \tw$. Since $\pi$ is a pentagon, $z$ does not belong to $\pi$.
If $z$ is adjacent to $y$ or to $u$, by applying ($T_{i+1}$) to $\tw$
or to $\tx$ we conclude that $\tz\sim \ty$ or $\tz\sim
\tu$. Consequently, we obtain a virtual 5-cycle $\tu\tx\tz\ty\tv$ or
$\tu\tz\tw\ty\tv$ with height $h(\tpi)-1$, and by induction hypothesis
we conclude that $\tu\sim\tv$. Suppose now that $z \nsim u,y$. Observe
that if $h(\tu) = j-1$ (respectively, $h(\ty) = j-1$), then by $\INC(\tv_0)$ we have $\tz\sim \tu$ (respectively, $\tz \sim \ty$)
and by \Rip we have $z \sim u$ (respectively, $z \sim y$). Consequently, we have $h(\tu) \geq j$ and $h(\ty) \geq j$.
Consider the ball $B_2(z,G)$. Since $u,y\in B_2(z,G)$, $u\nsim y$,
and $v\sim u,y,$ by the convexity of $B_2(z,G)$, we have $d(z,v)\le 2$.
Since  $z\nsim u,y$ and since $G$ does not contain squares, we
deduce that $d(z,v)=2$. Let $t$ be a common neighbor of $z$ and $v$.
By \Ri, there exists a corresponding $\ttt \sim \tz$ in
$\tB_{j}$. Then $\pi'=\ttt\tz\tw\ty\tv$ is a virtual 5-cycle. Since
$h(\tu)\ge h(\tx) = j$, $h(\ttt)\le j$, and $h(\tz)=j-1$, we have
$h(\tpi')\le h(\tpi)-1$. By induction hypothesis, $\ttt\sim
\tv$. Consequently, $\pi''=\tu\tx\tz\ttt\tv$ is a virtual 5-cycle and since
$h(\ty)\ge h(\tw) = j$, $h(\ttt)\le j$, and $h(\tz)=j-1$, we have
$h(\tpi'')\le h(\tpi)-1$. By induction hypothesis, $\tu\sim \tv$.

Now, let $\te=\tu\tx$.
In this case, $h(\tu) = h(\tx) = j$ and $h(\tw) = j-1$. Since
$\mathrm{TC}$ applies to $\tu\tx$, there exists $\tz \in \tB_{j-1}$
such that $\tz \sim \tu, \tx$.  By $\INC(\tv_0)$, $\tz\sim \tw$ and
thus $\tu\tz\tw\ty\tv$ is a virtual 5-cycle with height
$h(\tpi)-1$. By induction hypothesis, $\tu\sim \tv$, concluding the
analysis of Case 4.
\end{proof}

Now suppose that only $\mathrm{PC}$ applies to the edge $\te$. If
$\te=\tx\tw$, then by $\mathrm{PC}$ there exist
$\tw_1, \tw_2 \in \tB_{j-1}$, and $\tz\in \tB_{j-2}$ such that
$\tpi_1=\tx\tw_1\tz\tw_2\tw$ is a pentagon of $\tG_{i+1}$ (and thus
$\pi_1=xw_1zw_2w$ is a pentagon of $G$).  If $\te=\tu\tx$, then
$h(\tw) = j-1$ and by Lemma~\ref{lem-ind-PCdist2}, there exists
$\tw_1 \in \tB_{j-1}$ and $\tz\in \tB_{j-2}$ such that
$\tpi_2=\tx\tw_1\tz\tw\tu$ is a pentagon of $\tG_{i+1}$ (and thus
$\pi_2=xw_1zwy$ is a pentagon of $G$). Note that in both cases, the
image $e$ of $\te$ is a common edge of $\pi$ and of the new pentagon
$\pi_1$ or $\pi_2$. Denote the union $\pi\cup \pi_1$ or
$\pi\cup \pi_2$ by $U$.  Since the balls of radius 3 of $G$ are
convex, $U$ has diameter 2 or 3. Moreover, if $\te = \tu\tx$, then
$\pi$ and $\pi_2$ share three vertices and in this case $U$ has
diameter $2$.

\begin{claim} \label{claim:better5virtcycle} Any virtual  5-cycle of the form $\tpi'=\tp\ta\tz\tb\tq$ satisfies the induction hypothesis, thus $\tp\sim \tq$. \end{claim}

\begin{proof}
  First notice that
  $h(\tpi')=h(\tp)+h(\ta)+h(\tz)+h(\tb)+h(\tq)\le
  j+j-1+j-2+j-1+j=5j-4$ and the equality holds only if
  $h(\tp)=h(\tq)=j, h(\ta)=h(\tb)=j-1$, and $h(\tz)=j-2$.  Therefore
  $h(\tpi')=5j-4$ only if all edges of $\tpi'$ are vertical. In this
  case, the result holds by Case 3.  Now we will show that
  $h(\tpi)\ge 5j-4$.
First suppose that $\te=\tx\tw$.  Then $h(\tu)\ge h(\tx)-1=j-1$,
  $h(\ty)\ge h(\tw)-1=j-1$, and $h(\tv)\ge h(\ty)-1\ge j-2$, yielding
  $h(\tpi)\ge j-1+j+j+j-1+j-2=5j-4$.
Now suppose that $\te=\tu\tx$. Then $h(\tu)=h(\tx)=j$,
  $h(\tv)\ge h(\tu)-1$, $h(\tw)=j-1$, and
  $h(\ty)\ge h(\tw)-1=j-2$, we deduce that
  $h(\tpi)\ge j+j+j-1+j-2+j-1=5j-4$.
\end{proof}

\medskip\noindent
{\bf Case 5.} $\diam(U)=3$.

\begin{proof}
  In this case, necessarily $\te = \tx\tw$ and the vertices of
  $\pi \setminus e$ and $\pi_1\setminus e$ are pairwise distinct. By
  Lemma \ref{lem: deuxpent}, $\pi$ or $\pi_1$ has a universal vertex
  $t$.  First suppose that $t$ is a universal for $\pi_1$. By \Ri
  there exists $\ttt \sim \tz$ in $\tB_{j-1}$.  By ($R_{i+1}$), $\ttt$
  is adjacent to all vertices of $\tpi_1=\tx\tw_1\tz\tw_2\tw$.  This
  contradicts the assumption that only $\mathrm{PC}$ applies to $\te$.

  Now suppose that $t$ is a universal vertex for $\pi=uvywx$. Since
  $B_2(t,G)$ is convex in $G$ and $\diam(U)=3$, we have $d(t,z)=2$. Let
  $s$ be a common neighbor of $t$ and $z$. By \Ri there exists
  $\ts \in\tB_{j-1}$ adjacent to $\tz$.  Applying \Ri the second time
  we conclude that there exists $\ttt\in \tB_j$ such that
  $\ttt\sim \ts$. By Claim~\ref{claim:better5virtcycle} applied to the
  virtual 5-cycle $\tx\tw_1\tz\ts\ttt$, we have $\ttt \sim \tx$.
  Applying several times ($T_{i+1}$), we conclude that
  $\ttt\sim \tu,\tw,\ty,\tv$. Applying ($T_{i+1}$) once again to
  $\ttt\sim \tu,\tv$, we deduce that $\tu\sim \tv$.
\end{proof}

\medskip\noindent
{\bf Case 6.} $\diam(U)=2$.

\begin{proof}
We start with a special subcase, which is used to prove a useful claim.

\medskip\noindent
{\bf Subcase 6.1.} $h(\tw)=i-1$.

\medskip
Let $h(\tv)\ge h(\tu)$. Since $\max\{ h(\tx),h(\ty)\}\le h(\tw)+1=i$ and $\tpi$ contains a vertex of height $i+1$, necessarily $h(\tv)=i+1$.
Since $h(\tu)\ge h(\tv)-1$ and $\tpi$ contains horizontal edges (by Cases 1-2), either $h(\tu)=i=h(\ty), h(\tx)=h(\tw)=i-1, h(\tv)=i+1$ or
$h(\tu)=h(\tx)=i=h(\ty),h(\tw)=i-1,h(\tv)=i+1$. Thus the edge $\te=\tx\tw$ or $\te=\tu\tx$ is the unique horizontal edge of $\tpi$. 

First, let $\te=\tx\tw$. Since $\diam(U)=2$, $d(u,z)\le 2$. If $u=z$ or $u\sim z$, then  Lemma \ref{lemma: virtC4} applied to the path $\tz\tw_1\tx\tu$ implies
$\tz=\tu$ or $\tz\sim \tu$, which is impossible because $h(\tu)=i$ and $h(\tz)=i-3$. Thus $d(u,z)=2$ and let $s$ be a common neighbor of $u$ and $z$.
By \Ri there exists $\ts\sim \tz$ with $h(\ts)\le i-2$. The virtual 5-cycle $\tu\tx\tw_1\tz\ts$ is included $\Bi$, thus $\tu\sim \ts$ by ($S_i$), contrary
to $h(\tu)=i$ and $h(\ts)\le i-2$. Now, let $\te=\tu\tx$. Then $h(\tz)=i-2$. As $h(\tu)=i$, there exists $\tv'\in \Bip$ in the preimage of $v$ such that $\tv'\sim \tu$. If $h(\tv')\leq i$,
then setting $\pi'=\tv'\tu\tx\tw\ty$, we have $h(\tpi')<h(\tpi)$, and by ($S_i$), we get $\tv'\sim \ty$. Since $h(\tv') \leq i < h(\tv)$, $\ty$ has two distinct neighbors $\tv, \tv'$ that are mapped to $v$, contradicting \Rip. 
Assume now that $h(\tv')=i+1$. If $d(v,z)<2$, we immediately get a contradiction with Lemma \ref{lemma: virtC4} applied to $\tz\tw_1\tu\tv'$. Thus $d(v,z)=2$
and there exists $t \sim v,z$. By \Ri there is an associated $\ttt$ in the preimage of $t$ such that $\ttt \sim \tz$. Applying Case 2 to $\tv'\tu\tw_1\tz\ttt$, we get a
contradiction because $h(\tv')-h(\ttt)\geq 2$.

\begin{claim}\label{claim:virtC4} Let $\ta\tb\tc\td$ be a path of $\tG_{i+1}$ with $h(\td)\le i-1$. Then $d_{\tG_{i+1}}(\ta,\td)=d_G(a,d)$.
Moreover, if $d(a,d)=2$, for any $t\sim a,d$ in $G$, then there exists  $\ttt\sim\ta,\td$ in $\tG_{i+1}$ such that $f_{i+1}(\ttt) = t$.
\end{claim}

\begin{proof}
  By Lemma~\ref{lem: l313},
  $d_G(a,d) \leq d_{\tG_{i+1}}(\ta,\td) \leq 3$.  The cases
  $d_G(a,d)\in\{ 0,1\}$ follow from Lemma \ref{lemma: virtC4}.  Now,
  let $d_G(a,d)=2$ and let $t \sim a,d$. Since $\td\in \Biu$, by \Ri
  there exists $\ttt \in \Bi$ in the preimage of $t$ such that
  $\ttt \sim \td$. By \Rip there exists $\ta' \in \Bip$ in the
  preimage of $a$ such that $\ta' \sim \ttt$.  Consider the virtual
  5-cycle $\tpi'= \tb \tc \td \ttt \ta'$. Since $h(\td)\leq i-1$, we
  can apply Subcase 6.1 to $\tpi'$, whence $\ta'\sim \tb$.  By \Rip we
  must have $\ta=\ta'$, hence $\ttt \sim \ta, \td$.
\end{proof}

Since $\diam(U)=2$, in $G$ the vertex $z$ has distance at most 2 from all vertices of $\pi$.
Now we prove that in $\tG_{i+1}$  the vertex $\tz$ also has distance at most 2 from all vertices of $\tpi$.
Suppose first that $\te = \tx\tw$. Clearly,
$d_{\tG_{i+1}}(\tz,\tx)=d_{\tG_{i+1}}(\tz,\tw)=2$. Applying Claim
\ref{claim:virtC4} to the paths $\tu\tx\tw_1\tz$ and $\ty\tw\tw_2\tz$,
we conclude that $d_{\tG_{i+1}}(\tz,\tu)=d_G(z,u)\le 2$ and
$d_{\tG_{i+1}}(\tz,\ty)=d_G(z,y)\le 2$. Finally, we show that
$d_{\tG_{i+1}}(\tz,\tv)\le 2.$ If this is not the case, then
necessarily $d_{\tG_{i+1}}(\tz,\ty)=d_G(z,y)=2$ and by Claim
\ref{claim:virtC4} there exists $\ttt$ such that $\ttt \sim
\tz,\ty$. By Claim \ref{claim:virtC4} applied to the path
$\tv\ty\ttt\tz$ we conclude that
$d_{\tG_{i+1}}(\tz,\tv)=d_G(z,v)\le 2$.
Assume now that $\te = \tu\tx$. Since in this case, $\tz \sim \tw$,
we have $d(\tz,\ty) \leq 2$ and Applying Claim \ref{claim:virtC4} to
the path $\tv\ty\tw \tz$, we have
$d_{\tG_{i+1}}(\tz,\tv)=d_G(z,v)\le 2$.

Consequently,  in both cases $\tz$ is at distance at most $2$ from $\tu$ and $\tv$.
If $d_{\tG_{i+1}}(\tz,\tu) \leq 1$ or $d_{\tG_{i+1}}(\tz, \tv)\leq 1$, then  $\tu$ and $\tv$ can be connected by a path of length at most $3$ passing via $\tz$.
Applying Lemma \ref{lemma: virtC4} to this path, we conclude that $\tu\sim\tv$. Assume now that $d_{\tG_{i+1}}(\tz,\tu)=d_{\tG_{i+1}}(\tz,\tv)=2$.
Then there exist $\ttt \sim \tz,\tu$ and $\ts \sim \tz, \tv$. If $\ttt=\ts$, then we get $\tu \sim \tv$ by \Rip applied to $\ttt=\ts$.
Otherwise, consider the virtual 5-cycle $\tu\ttt\tz\ts\tv$. By Claim \ref{claim:better5virtcycle}, we conclude that $\tu\sim \tv$. This finishes the proof of Case 6.
\end{proof}

In all cases we proved that $\tu\sim \tv$, establishing ($S_{i+1}$) and concluding the proof of the proposition.
\end{proof}

\subsection{Universal Cover}
As for every $i\geq 0$, $\tG_i$ satisfies ($Q_i$), the graph $\tG$ satisfies $\INC(\tv_0)$ and $\TPC(\tv_0)$, so in particular by Lemma \ref{lem: triangle-pentagon-simply-connected}, the complex $\tX= X_{\Triangle, \pentagon}(\tG)$ is simply connected.

We recall that to end the proof of Theorem \ref{thm: triangle-pentagon}, we need to show that $\tX$ is a cover of $X$. Indeed, as there is a unique (up to isomorphism) simply connected cover of $X= X_{\Triangle, \pentagon}(G)$, this will imply that $\tX=X$. As $\tX$ can be constructed from any arbitrary basepoint $v_0\in V$, in particular $\tG$ satisfies $\INC(\tv)$ and $\TPC(\tv)$ for any vertex $\tv$ of $\tG$, so it has convex balls.

To complete the proof that $f := \bigcup_{i\geq 0}f_i : \tX \to X$ is
a covering map, we show that $f$ induces an isomorphism between the
(closed) star of a vertex of $\tX$ and the (closed) star of its image in
$X$.

\begin{lemma}
 \label{lem: etoiles}
 For every $\widetilde w \in \tX$, the map $f$ induces an isomorphism between the closed stars $\cSt(\tw,\widetilde{X})$ and $\cSt(w,X)$, where $w := f(\tw)$. Thus $f$ is a covering map.
\end{lemma}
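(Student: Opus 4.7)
The plan is to leverage the properties \Ri, \Ti, and \Si established earlier, together with Lemma~\ref{lem: morphism} (images of triangles and pentagons are triangles and pentagons of $G$) and Lemma~\ref{lem: l313} (edges of $\tG$ map to edges of $G$). Fix a vertex $\tw \in \tX$ with image $w := f(\tw)$, and pick an index $i$ with $\tw \in \tB_{i-1}$. Property \Ri (or \Rip at any larger index) yields an isomorphism between the induced subgraphs $\tG[B_1(\tw)]$ and $G[B_1(w)]$. This handles all $0$-cells and $1$-cells of $\St(\tw,\tX)$, as well as triangles: a triangle $\tw\tu\tv$ of $\tX$ maps to a triangle $wuv$ of $X$ by Lemma~\ref{lem: morphism}, and conversely each triangle $wuv$ of $X$ lifts via the unique preimages of $u$ and $v$ in $B_1(\tw)$.

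For the pentagonal $2$-cells, one direction is immediate: a pentagon $\tpi = \tw\tu\tp\tq\tv$ of $\tX$ is mapped by $f$ to a pentagon of $X$ thanks to Lemma~\ref{lem: morphism}. For the reverse direction, given a pentagon $\pi = wupqv$ of $X$ based at $w$, the local isomorphism at $\tw$ produces unique $\tu, \tv \sim \tw$ with $f(\tu)=u$ and $f(\tv)=v$; the local isomorphisms at $\tu$ and at $\tv$ (again coming from \Ri or \Ti at the appropriate indices) then produce unique $\tp \sim \tu$ with $f(\tp)=p$ and unique $\tq \sim \tv$ with $f(\tq)=q$. The four edges $\tq\tv, \tv\tw, \tw\tu, \tu\tp$ form a path in $\tG$ of length four, since the potential chords $\tw\sim\tp$, $\tw\sim\tq$, and $\tu\sim\tq$ would, by Lemma~\ref{lem: l313}, force $w\sim p$, $w\sim q$, and $u\sim q$ respectively, contradicting the inducedness of $\pi$ as a $5$-cycle of $G$. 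The image of this path is the $5$-cycle $qvwup$ of $G$, so invoking \Si at a sufficiently large index containing all five vertices yields $\tp \sim \tq$. The resulting $5$-cycle $\tw\tu\tp\tq\tv$ of $\tG$ is induced (the remaining possible chords $\tu\sim\tv$ and $\tp\sim\tv$ are again excluded by Lemma~\ref{lem: l313}), and hence bounds a pentagonal cell of $\tX$ which is mapped by $f$ onto $\pi$.

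Uniqueness of the lift is immediate because $\tu, \tv, \tp, \tq$ are all uniquely determined by the local isomorphisms at $\tw$, $\tu$, and $\tv$, which gives injectivity on the $2$-cells incident to $\tw$. Combined with the bijections on $0$- and $1$-cells already established, this shows that $f$ induces a bijective cellular map from $\cSt(\tw, \tX)$ onto $\cSt(w, X)$; since each closed cell is mapped linearly and bijectively onto its image by construction, we obtain the desired isomorphism of cell complexes, and the covering property of $f$ follows at once. The main obstacle is the pentagon-lifting step: closing the length-four path $\tq\tv\tw\tu\tp$ into a $5$-cycle of $\tG$ requires a nonlocal constraint that is precisely the content of Proposition~\ref{Si+1} and cannot be obtained from the local isomorphisms alone.
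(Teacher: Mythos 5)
Your overall strategy is the same as the paper's: use the local isomorphisms from \Ri/\Ti{} to handle $B_1(\tw)$ and the triangles, and for pentagons of $X$ through $w$, lift the length-four path via the local isomorphisms and close it into a pentagon of $\tG$ by invoking property ($S$) (Proposition \ref{Si+1}). The surjectivity half of your argument is correct and matches the paper, including the (necessary) check that the lifted $5$-cycle is induced so that it actually bounds a cell of $X_{\Triangle,\pentagon}(\tG)$.

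However, the injectivity half has a genuine gap. You establish only that each pentagon of $X$ through $w$ has a \emph{unique} lift through $\tw$, i.e., injectivity on $2$-cells, and then appeal to ``the bijections on $0$- and $1$-cells already established''; but what was established earlier covers only $B_1(\tw)$. The closed star $\cSt(\tw,\tX)$ also contains the vertices opposite to $\tw$ in its pentagons (and the far edges of those pentagons), and you must show $f$ is injective on these as well. This does not follow from uniqueness of lifts: two \emph{distinct} pentagons of $G$ can share the non-adjacent pair $\{w,u\}$ (a $\PPd$-type configuration), so two distinct pentagons of $\tX$ through $\tw$ could a priori contain distinct vertices $\tu\neq\tu'$ both mapping to $u$, without contradicting injectivity on $2$-cells. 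The paper closes this with a separate argument: pick $\tx\sim\tw,\tu$ and $\tz\sim\tw,\tu'$ with images $x,z\sim u,w$; if $x=z$ then two applications of the local isomorphism give $\tx=\tz$ and $\tu=\tu'$, and if $x\neq z$ then the $4$-cycle $uxwz$ cannot be induced (as $u\nsim w$), forcing $x\sim z$, hence $\tx\sim\tz$, then $\tu'\sim\tx$ and finally $\tu=\tu'$. You need to add this step (or an equivalent one) to conclude that $f$ is a bijection on the whole closed star.
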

\begin{proof} That $f$ defines a morphism from $\tX$ to $X$ follows from Lemmas \ref{lem: RipTip} and \ref{lem: morphism}. It remains to show that $f$ is bijective on  stars. To do so, we establish the stronger result that $f$ is bijective on closed stars. Let $i:=d(\vz, \tv)$. Then the restriction of $f$ on the set $\cSt(\tw,\widetilde{X})$ equals to the restriction of $f_{i+2}$ on this set. By \Rip, $f$ induces an isomorphism between $\tG\left[B_1(\tw,\tG)\right]$ and $G\left[B_1(w,G)\right]$.

Let $u \in \cSt(w,X)\setminus B_1(w,G)$. Then there exists a pentagon $uxwyv$ in $G$. By iterated applications of ($R_{i+2}$), we find an associated virtual 5-cycle  $\tu\tx\tw\ty\tv$ in $\tG$. By ($S_{i+2}$), $\tu\tx\tw\ty\tv$ is a pentagon of $\tG$, so  it is a cell of $\cSt(\tw,\widetilde{X})$. Thus we proved that $f|_{\cSt(\tw,\widetilde{X})}$ is onto $\cSt(w,X)$.

It remains to show that $f|_{\cSt(\tw,\widetilde{X})}$ is injective. Assume that there exist $\tu, \tup \in \cSt(\tw,\widetilde{X})$ such that $f(\tu) = f(\tup) = u$. By the previous remark, we cannot have $\tu, \tup \in B_1(\tw,\tG)$, so we can assume that $\tu\notin B_1(\tw,\tG)$. Let $\tx \sim \tu, \tw$. Then by ($R_{i+2}$) we must have $u \neq w$ and $u \nsim w$, hence $\tup \notin B_1(\tw,\tG)$ and there exists $\tz \sim \tup, \tw$. Set $x:= f(\tx)$ and $z:= f(\tz)$. By ($R_{i+2}$) we get $x\sim u, w$ and $z \sim u', w$.
 If $x = z$, then by ($R_{i+2}$) applied two times we must have $\tx = \tz$ and $\tu = \tup$. If $x\neq z$, then the 4-cycle $uxwz$ cannot be induced, so we must have $x\sim z$. Then by iterated applications of ($R_{i+2}$) we get $\tx \sim \tz$ and $\tup \sim \tx$ and eventually $\tu = \tup$. Hence we proved the injectivity of $f|_{\cSt(\tw,\widetilde{X})}$.
\end{proof}

 \section{Contractibility of Rips complexes}
\label{sec: dism}

In this section, we prove that the square $G^2$ of any CB-graph $G$ is
dismantlable and that the dismantling order can be obtained by a BFS
ordering of the vertices of $G$.  This implies that if $G$ is
locally-finite, then the clique complex $X(G^2)$ of $G^2$ is
contractible. Consequently, this shows that all Rips complexes
$X_k(G)$, $k \ge 2$ of a locally-finite CB-graph $G$ are
contractible.  

\subsection{Dismantlability of squares}
For several subclasses of weakly modular graphs, BFS
(Breadth-First-Search) and its refinements turn out to provide
orderings with interesting and strong properties, which can be used, for instance,
to prove contractibility of associated clique complexes. First, it was shown
in~\cite{Ch_bridged} that for locally finite  bridged graphs, any BFS ordering is a
dismantling order, showing in particular that the clique complexes of bridged
graphs are contractible.  Polat \cite{Po_bridged1} proved that arbitrary
connected graphs (even if they are not locally finite) admit a BFS ordering and,
extending the result of~\cite{Ch_bridged}, he showed that  BFS provides a dismantling
order for non-locally-finite bridged
graphs.  For weakly bridged graphs the same kind of results has been obtained
for specific BFS orderings. Namely, any LexBFS ordering of a locally
finite weakly bridged graph provides a dismantling
order~\cite{ChepoiOsajda}. In the case of non-locally-finite graphs, it is
not always possible to define a LexBFS ordering. However, for graphs
without infinite cliques, it was shown in~\cite{Bresaretal2013} that it is always
possible to define an ordering, intermediate between BFS and LexBFS, and called SimpLexBFS,
and it was shown that for weakly bridged graphs, any SimpLexBFS ordering is a
dismantling order. Notice also that the contractibility of Kakimizu complexes was
established by defining a BFS-like orderings of their vertices; for details,
see Section 5 of \cite{PrzSch}.

A vertex $x$ of a graph $G$ is \emph{dominated} by another vertex $y$ if the unit ball $B_1(y)$
includes $B_1(x).$  A graph $G$ is \emph{dismantlable} if
its vertices can be well-ordered $\prec$ so that, for each $v$  there is a neighbor $w$ of $v$ with
$w\prec v$ which dominates $v$ in the subgraph of $G$ induced by the vertices $u\preceq v$. The order $\prec$ is then called a \emph{dismantling order} of $G$.
Following Polat \cite{Po_bridged1}, a well-order $\preceq$ on the vertex set $V(G)$ of a graph $G$ is called a \emph{BFS order} if there exists a family
$\{ A_x: x\in V(G)\}$ of subsets of $V(G)$ such that, for
every vertex $x\in V(G)$,
\begin{enumerate}[{(S}1)]
\item $x\in A_x$;
\item if $x\preceq y$, then $(A_x,\preceq)$ is an initial segment of $(A_y,\preceq)$;
\item $A_x=A_{(x)}\cup N(x)$, where $A_{(x)}:=\{ x\}$ if $x$ is the least element of $(V(G),\preceq)$ and $A_{(x)}:=\bigcup_{y\prec x} A_y$ otherwise.
\end{enumerate}

\begin{lemma}[\cite{Po_bridged1}*{Lemma 3.6}]\label{l:Polat_BFS}  There exists a BFS order on the vertex set of any connected graph.
\end{lemma}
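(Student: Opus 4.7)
The plan is to build $\preceq$ and the family $\sg{A_x : x \in V(G)}$ by transfinite recursion, mimicking the classical queue-based BFS algorithm. First, using the Axiom of Choice, fix for every vertex $x \in V(G)$ an arbitrary well-ordering $\preceq_x$ of its closed neighborhood $\sg{x} \cup N(x)$ having $x$ as minimum element. Choose any root $v_0 \in V(G)$, declare it the least element of $\preceq$, and set $A_{v_0} := \sg{v_0} \cup N(v_0)$, ordered by $\preceq_{v_0}$.

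Proceeding transfinitely, suppose that $v_\beta$ and $A_{v_\beta}$ have been defined for every ordinal $\beta < \alpha$ and that $\preceq$ is a well-order on $B_\alpha := \bigcup_{\beta < \alpha} A_{v_\beta}$ in which each $A_{v_\beta}$ is an initial segment. If the set $B_\alpha \setminus \sg{v_\beta : \beta < \alpha}$ of ``unprocessed'' vertices is empty, halt the recursion. Otherwise, let $v_\alpha$ be its $\preceq$-minimum, set $A_{v_\alpha} := B_\alpha \cup N(v_\alpha)$, and extend $\preceq$ by appending the elements of $N(v_\alpha) \setminus B_\alpha$ after all of $B_\alpha$, ordered according to $\preceq_{v_\alpha}$. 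With this construction, (S1) holds because $v_\alpha \in B_\alpha \subseteq A_{v_\alpha}$; (S2) holds because at every step new elements are only appended at the end of the order; and (S3) reduces to the identity $A_{(v_\alpha)} = \bigcup_{v_\beta \prec v_\alpha} A_{v_\beta} = B_\alpha$, which in turn follows from the choice of $v_\alpha$ as the $\preceq$-least unprocessed vertex of $B_\alpha$.

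The recursion must terminate at some ordinal $\alpha^*$, since each step enumerates a distinct element of the set $V(G)$, and yields $B_{\alpha^*} = \sg{v_\beta : \beta < \alpha^*}$. Each $A_{v_\beta}$ contains $N(v_\beta)$, so $B_{\alpha^*}$ is closed under taking neighbors in $G$; by connectedness it must therefore equal $V(G)$, and every vertex appears in the enumeration. The only step that requires genuine care is the limit stage of the transfinite induction: one must check that the union of the coherent well-orders on the $A_{v_\beta}$ for $\beta < \lambda$ is itself a well-order on $B_\lambda$ in which each $A_{v_\beta}$ remains an initial segment. This is the main technical point, but it follows routinely from the compatibility guaranteed inductively by (S2) at earlier stages.
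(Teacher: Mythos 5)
Your argument is correct. The paper itself gives no proof of this lemma; it is quoted verbatim from Polat (Lemma 3.6 of the cited paper), so there is nothing in the present text to compare against, and your transfinite queue-based construction is exactly the natural self-contained argument one would supply. All three axioms check out: (S1) is immediate, (S2) holds because the order only ever grows by appending at the end so that $B_{\beta+1}=A_{v_\beta}$ is an initial segment of every later $B_\gamma$, and the halting condition together with connectedness forces the final $B_{\alpha^*}$ to be all of $V(G)$. The one step I would ask you to write out in full is the identification $A_{(v_\alpha)}=\bigcup_{y\prec v_\alpha}A_y=B_\alpha$ needed for (S3): the union in the definition ranges over \emph{all} vertices $y\prec v_\alpha$, so you must verify that $\{y: y\prec v_\alpha\}=\{v_\beta:\beta<\alpha\}$. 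This needs both inclusions: every previously processed $v_\beta$ precedes $v_\alpha$ (because $v_\alpha$ was still unprocessed, hence not below $v_\beta$ in $B_\beta$, at stage $\beta$, or was appended later), and conversely every $y\prec v_\alpha$ already lies in $B_\alpha$ and is processed by the minimality in the choice of $v_\alpha$. Your remark about limit stages is also handled correctly: a union of a chain of well-orders, each an initial segment of the next, is again a well-order, since the minimum of a nonempty set can be computed inside the first member of the chain that it meets.
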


The vertex $x$ will be called the \emph{parent} of each vertex of $A_x\setminus A_{(x)}$. We will denote by $f$ the map from $V(G)$ to $V(G)$
such that $f(v)$ is the parent of $v$, for every $v\in V(G)$. The least element of $(V(G),\preceq)$ will be called the \emph{base-point}
and will be denoted by $v_0$ (by convention, we set $f(v_0)=v_0)$). Notice that like in the case of finite graphs, for every vertices $x$ and $y$ of $G$,
$x\preceq y$ implies $d(v_0,x)\le d(v_0,y)$, and $d(v_0,x)<d(x_0,y)$ implies $x\prec y$. In particular, $d(v_0,x)=d(v_0,f(x))+1$.
For two distinct vertices $x$ and $y$ of $G$, we set $\max\{ x,y\}=x$ if $y\prec x$ and $\max\{ x,y\}=y$ if $x\prec y$.

The main goal is to prove the following result:

\begin{theorem} \label{dismantl} Any BFS order of the vertices of a CB-graph $G$ is a dismantling order of its square $G^2$.
\end{theorem}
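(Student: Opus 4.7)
The plan is to show that for each vertex $v \neq v_0$, its BFS-parent $w := f(v)$ dominates $v$ in the subgraph of $G^2$ induced by $\{u \in V : u \preceq v\}$. Since $d_G(v, w) = 1$, the vertex $w$ is $G^2$-adjacent to $v$, and since $d_G(v_0, w) = d_G(v_0, v) - 1$, we have $w \prec v$. Thus the core claim reduces to: for every $u \preceq v$ with $d_G(u, v) \leq 2$, we have $d_G(u, w) \leq 2$.

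I would proceed by case analysis on $d_G(u, v)$ and $\ell := d_G(v_0, u)$, writing $k := d_G(v_0, v)$. The case $d_G(u, v) \leq 1$ follows immediately from the triangle inequality. For $d_G(u, v) = 2$ and $\ell \leq k - 1$, both $u$ and $w$ lie in the convex ball $B_{k-1}(v_0)$; a hypothetical length-$3$ geodesic from $u$ to $w$ would have to pass through $v$ (since $d_G(u,v) + d_G(v,w) = 3$), forcing $v \in I(u, w) \subseteq B_{k-1}(v_0)$ by convexity, which contradicts $d_G(v_0, v) = k$. Moreover, when $\ell \leq k - 2$, any common neighbor $m$ of $u$ and $v$ must satisfy $d_G(v_0, m) = k - 1$, placing $m, w \in S_1(v) \cap I(v, v_0)$; by $\INCz$ we have $m = w$ or $m \sim w$, yielding $d_G(u, w) \leq 2$ via $m$.

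The main obstacle is the remaining subcase $d_G(u, v) = 2$ with $\ell = k$. Here $u, v \in S_k(v_0)$ share a common neighbor $m$, and $\INCz$ rules out $d_G(v_0, m) = k + 1$ (else $u \sim v$). If $d_G(v_0, m) = k - 1$ the previous argument still applies. The hard sub-subcase is $d_G(v_0, m) = k$, where all three of $u, v, m$ lie on the sphere $S_k(v_0)$. I would then apply $\TPC(v_0, vm)$ and $\TPC(v_0, um)$: in the triangle case, common neighbors $z \sim v, m$ and $z' \sim u, m$ at level $k-1$ belong to the cliques $S_1(v) \cap I(v, v_0)$ and $S_1(m) \cap I(m, v_0)$ containing $w$, and a careful merging of these links via $\INCz$, combined with the convexity of the balls $B_2(u)$ and $B_1(z)$, forces a length-$2$ path from $u$ to $w$. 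The pentagon case is handled analogously using the pentagonal witnesses supplied by $\TPCu$ and the stronger form $\INCp$ provided by Theorem~\ref{thm: TPC}.

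The principal technical difficulty is the pentagon subcase, since pentagons not paved by triangles — the distinctive structural feature of CB-graphs beyond bridged and weakly bridged graphs — preclude the direct reduction available in those simpler classes. The interplay of $\TPC$, $\INC$, and their strengthenings, together with the convexity of small balls, is what ultimately collapses the natural length-$3$ paths between $u$ and $w$ into length-$2$ ones and thereby certifies that $f(v)$ dominates $v$ in $G^2$ restricted to the BFS-initial segment up to $v$.
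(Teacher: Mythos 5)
Your overall strategy is the paper's: the dominating vertex is the BFS-parent $f(v)$, and the task reduces to showing $d(u,f(v))\le 2$ for all $u\preceq v$ with $d(u,v)\le 2$. Your treatment of the easy cases is correct: when $d(u,v)\le 1$ the triangle inequality suffices, when the common neighbor $m$ lies one level above $v$ the condition $\INCz$ (equivalently, convexity of $B_k(v_0)$) forces $u\sim v$, and when $u\in B_{k-1}(v_0)$ the convexity of $B_{k-1}(v_0)$ applied to $v\in I(u,f(v))$ rules out $d(u,f(v))=3$.

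The gap is in the only hard case, where $u$, $m$, $v$ all lie on $S_k(v_0)$, and your sketch does not close it. Applying $\TPC(v_0,vm)$ and $\TPC(v_0,um)$ in the triangle subcase produces $z\sim v,m$ and $z'\sim u,m$ at level $k-1$; then $\INCz$ gives $z\sim f(v)$ (both lie in the clique $S_1(v)\cap I(v,v_0)$) and $z\sim z'$ (both lie in $S_1(m)\cap I(m,v_0)$), but note that $f(v)$ need not be adjacent to $m$, so the chain $u\,z'\,z\,f(v)$ only yields $d(u,f(v))\le 3$. Collapsing this to $2$ is precisely the content that is missing, and it is not a routine convexity check: the paper isolates it as a separate lemma (for $u\prec v$ adjacent on the same sphere, either $f(u)=f(v)$, or $f(v)\sim u,f(u)$, or $f(u)\nsim f(v)$ and $f^2(u)=f^2(v)$), proved by its own induction on $k$ and relying essentially on the BFS \emph{order} — not just the level structure — through auxiliary facts such as ``if $x=f^i(v)$, $d(b,x)=d(b,y)$ and $v\in F(y)$ then $x\prec y$''. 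Your argument never uses $u\preceq v$ in this case, and without pinning down coincidences of grandparents along the horizontal path $u\,m\,v$ the distance-$3$ configurations cannot be excluded. The pentagon subcase is moreover not ``analogous'': when only $\PC$ applies to an edge $vm$, the witness below sits at level $k-2$ and at distance $2$ from each endpoint, so naive chaining gives an even weaker bound, and the paper's corresponding cases (Cases 1 and 2 of its fellow-traveller lemma, and Case 2 of the main argument) each require a dedicated sequence of ball-convexity contradictions. As written, the proposal is a correct frame around an unproven core.
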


\begin{proof} We start with several properties of BFS on all graphs.
  Let $b$ be a basepoint of $G$ and let $\preceq$ be the basepoint
  (partial) order of the vertices of $G$: for two vertices $u,v$ of
  $G$ we set $u\preceq v$ if and only if $u \in I(b,v)$. For a vertex
  $v$, let $F(v)=\{u \in V: d(b,u)=d(b,v)+d(v,u)\}$ and call $F(v)$
  the \emph{filter} of $v$ with respect to this basepoint order
  $\preceq$. Let $\prec$ be a BFS (total) order of the vertices of $G$
  with basepoint $b$. Clearly, $\prec$ is a linear extension of
  $\preceq$. For a vertex $v$ of $G$, let $f(v)$ be the \emph{parent}
  of $v$ defined by BFS, $f^2(v)=f(f(v))$ be the \emph{grandparent} of
  $v$, and so on, let $f^i(v)=f(f^{i-1}(v))$; if $u = f^i(v)$ for some
  $i$, we say that $u$ is an \emph{ascendant} of $v$ and that $v$ is a
  \emph{descendant} of $u$.

\begin{lemma}If $x\prec y\prec z,$ $d(b,x)=d(b,z)=:k$, and $f^2(x)=f^2(z)=:s,$
  then $f^2(y)=s$.
\end{lemma}

\begin{proof}
  Suppose $f^2(y)=s'\ne s$. Since $x\prec y\prec z$ and
  $d(b,x)=d(b,z)=k,$ necessarily $d(b,y)=k$ and
  $f(x)\prec f(y)\prec f(z).$ Since $f^2(x)=f^2(z)=s$, all vertices
  $t$ with $f(x)\prec t\prec f(z)$ will necessarily have $s$ as their
  parent. In particular, $f(f(y))=s$.
\end{proof}

\begin{lemma}\label{BFS0}
  Let $x,y,v$ be vertices of $G$ with $d(b,x)=d(b,y)$, $x=f^i(v)$, and
  $v\in F(y)$. Then $x\prec y$.
\end{lemma}

\begin{proof}  Suppose $y\prec x$.  Since $v\in F(y)$ and $d(b,x)=d(b,y)$, we conclude that $d(y,v)=d(x,v)=i$.
Let $P=(y=z_{i},z_{i-1},\ldots,z_1,z_0=v)$ be a shortest path from $y$ to $v$ and $Q=(x=f^{i}(v),f^{i-1}(v),\ldots,f^1(v),v)$ be the shortest path
from $x$ to $v$ consisting of the ascendants of $v$. We can suppose that $P$ and $Q$ intersects only in  $v$, otherwise
we can replace $v$ by a closest to $x$ and $y$ vertex from the intersection. Since $y\prec x$ and $x=f(f^{i-1}(v))$, we deduce that
$z_{i-1}\prec f^{i-1}(v)$. Continuing this way, we get  $z_{i-j}\prec f^{i-j}(v)$ for any $j<i$. In particular,
$z_1\prec f^1(v)=f(v)$, contradicting that $f(v)$ is the parent of $v$.
\end{proof}

\begin{lemma} \label{BFS}  Let $u,v$ be vertices of $G$ with $k:=d(b,v)-d(b,u)\ge 0$, $u\prec f^k(v)$, and $v\in F(f^i(u))$. Then $f^{k+i}(v)=f^i(u)$.
\end{lemma}

\begin{proof} Let $x=f^i(u)$ and $y=f^{i+k}(v)=f^i(f^k(v))$. Suppose $y\ne x$. Since $u\prec f^k(v)$, necessarily $x\prec y$.
Since $v\in F(x)$ and $d(v,y)=d(v,x)$, necessarily $d(b,x)=d(b,y)$. But this contradicts Lemma \ref{BFS0}.
\end{proof}

In the remaining results, we suppose that $G$ is a CB-graph.

\begin{lemma} \label{2-fellow-traveller} Let $u\prec v$.
If $d(b,u)=d(b,v)=k$ and $u\sim v$, then either $f(u)=f(v),$ or $f(v)\sim u,f(u),$ or $f(u)\nsim f(v)$ and $f^2(u)=f^2(v).$
\end{lemma}

\begin{proof}
  Observe that if $f(u) \neq f(v)$, we have $f(u) \sim f(v)$ if and
  only if $u \sim f(v)$. Indeed, if $u \sim f(v)$, then
  $f(v),f(u)\in B_{k-1}(b)$ and $u\notin B_{k-1}(b),$ thus the
  convexity of $B_{k-1}(b)$ implies that $f(v)\sim f(u)$. Conversely,
  if $f(v)\sim f(u),$ since $G$ does not contain induced 4-cycles and
  since $v\nsim f(u)$, as $f(v) \neq f(u)$ and $u \prec v$, we
  conclude that $f(v)\sim u$.

  We prove the lemma by induction on $k$, the cases $k=1,2$ being
  trivial. Let $k\ge 3$.  Suppose by way of contradiction that
  $f(u)\ne f(v)$, that $f(v) \nsim u,f(u)$ and that
  $f^2(u)\ne f^2(v)$.
By Lemma \ref{BFS}, since $f^2(v)\ne f^2(u)$ and $u\prec v$,
  necessarily $d(v,f^2(u))=3$. Since $f(u),f(v)\in B_{k-1}(b),$
  $B_{k-1}(b)$ is convex, and $f(u)\nsim f(v)$, necessarily
  $d(f(u),f(v))=2$, and consequently, $d(f^2(u),f(v)) \leq 3$.  If
  $d(f^2(u),f(v)) \leq 2$, then since $u,f(v)\in B_2(f^2(u))$ and
  $v\notin B_2(f^2(u))$, the convexity of $B_2(f^2(u))$ implies that
  $f(v)\sim u$, which contradicts our assumption. Thus, further we
  will suppose that $d(f^2(u),f(v))=d(f^2(u),v)=3$.  Let $w$ be a
  common neighbor of $f(u),f(v)$ in $B_{k-1}(b)$. If $d(b,w)=k-2,$
  then since $f^2(u),w\in B_{k-2}(b)$ and $f(u)\notin B_{k-2}(b),$ the
  convexity of $B_{k-2}(b)$ implies that $d(w,f^2(u)) \leq 1$,
  contrary to the assumption that $d(f^2(u),f(v))=3$.  Thus further we
  can assume that $d(b,w)=k-1$ and that $f^2(u)\nsim w$.

  Now, we apply the induction assumption to $f(u)\sim w$ and
  $w\sim f(v)$. We distinguish two cases.

  \medskip\noindent
  {\bf Case 1.} $f(w)\sim f(u),f^2(u)$.

  \begin{proof}
    Since $d(f^2(u),f(v))=3$, necessarily $f(w)\nsim f(v).$ Moreover,
    as $f(v)\nsim u$, we can assume that $d(f(w),v)=2$, otherwise we
    get a contradiction with the convexity of $B_2(f(w))$.  First
    suppose that $f^2(v)\sim w,f(w).$ Then $f(w)\prec f^2(v)$, so
    $w\prec f(v)$. Since $d(f(w),v)=2,$ then $v\in F(f(w))$ and, Lemma
    \ref{BFS} yields $f^2(v)=f(w)$, a contradiction.  Thus
    $f^2(v)\nsim w,f(w)$ and by induction hypothesis
    $f^3(v)=f^2(w)$. Since $f^2(v),f(w)\in B_2(v)$ and
    $f^2(w)=f^3(v)\notin B_2(v),$ the convexity of $B_2(v)$ implies
    that $f^2(v)\sim f(w),$ a contradiction.
\end{proof}

\medskip\noindent
{\bf Case 2.} $x:=f^2(w)=f^3(u)$ and $f(w)\nsim f^2(u)$.

\begin{proof}
  Observe that $d(x,u) = 3$ and that $d(x,f(v)) \leq 3$. Since
  $u \nsim f(v)$, the convexity of $B_3(x)$ implies that
  $d(x,v) \leq 3$. Lemma \ref{BFS} implies that $x=f^3(v)$ and
  consequently, $d(x,f(v) = 2$.

  Note also that $d(f(v),f^2(u)) \geq 3$. Indeed, if
  $d(f(v),f^2(u)) \leq 2$, then since $u,f(v) \in B_2(f^2(u))$ and
  $u \nsim f(v)$, the convexity of $B_2(f^2(u))$ implies that
  $v \in B_2(f^2(u))$, and by Lemma \ref{BFS}, we get that
  $f^2(v) = f^2(u)$.

  Therefore, we have $x,f(u) \in B_2(f(v))$ and
  $f^2(u) \notin B_2(f(v))$ and thus $x \sim f(u)$, a
  contradiction.
\end{proof}

This concludes the proof of the lemma.
\end{proof}

Now, we are ready to complete the proof of the theorem. By induction
on the labels of vertices of $G$, we will prove that the BFS ordering
of $G$ is a domination order of $G^2$. Let $u\prec v$ and
$d(u,v)\le 2$. We have to prove that $f(v)$ is adjacent to $u$ in
$G^2,$ i.e., that $d(f(v),u)\le 2$ in $G$.  This is obviously true if
$u$ is adjacent to $v$ in $G$. Thus, further let $d(v,u)=2$ and let
$w$ be a common neighbor of $u$ and $v$.  Since $u\prec v,$
$d(b,u)\le d(b,v)=k$. If $w\in B_{k+1}(b)$, then the convexity of
$B_k(b)$ implies that $u \sim v$, a contradiction.  If
$w\in B_{k-1}(b)$, then since $v\sim f(v),w$ and $v\notin B_{k-1}(b),$
the convexity of $B_{k-1}(b)$ implies that $f(v)=w$ or $f(v)\sim
w$. Then obviously $d(f(v),u)\le 2$ and we are done. So, further let
$d(b,w)=k$. If $u\in B_{k-1}(b),$ again the convexity of $B_{k-1}(b)$
implies that $d(f(v),u)\le 2.$ So, we can suppose that $d(b,u)=k$. If
$f(v)=f(w),$ then again $d(f(v),u)\le 2.$ So, $f(v)\ne f(w)$ and
$f(v)\nsim w$.

We distinguish the following cases:

\medskip\noindent
{\bf Case 1:} $f(v)\sim f(w).$

\begin{proof}
Since $f(v)\nsim w,$ we conclude that $f(w)\sim v$ and thus $v\prec w$, $f(v)\prec f(w)$. Since $u\prec v\prec w,$ we obtain that $f(w)\ne f(u)$ and
$f(u)\prec f(w)$. If $f(u)\sim f(w),$ then $f(w)\sim u$ and thus $d(f(v),u)\le 2.$ So, assume that $f(u)\nsim f(w)$. Them Lemma \ref{2-fellow-traveller}
implies that $f^2(u)=f^2(w)=:x.$ Since $u\prec v\prec w$ and $f^2(u)=f^2(w),$ by BFS $f^2(v)=x$. Now, if $d(f(v),u)=3,$ then $v\sim x=f^2(v)$ by the convexity of the ball $B_2(u)$. Since this is impossible, $d(f(v),u)\le 2$.
\end{proof}

\medskip\noindent
{\bf Case 2:} $f(v)\nsim f(w).$

\begin{proof}
By Lemma \ref{2-fellow-traveller}, $f^2(v)=f^2(w)=:x$. First, let $f(w)\sim u$. If $d(f(v),u)=3,$ since $v,x\in B_2(u)$ and $f(v)\notin B_2(u),$
by convexity of $B_2(u)$ we conclude that $v\sim x=f^2(v),$ which is impossible. So $f(w)\nsim u$. In particular we have $f(w)\ne f(u).$ First suppose that $f(w)\sim f(u).$
Then $f(u)\sim w$ and thus $w\prec u$. Hence $w\prec u\prec v$. Since $f^2(v)=f^2(w),$ BFS implies that $f^2(u)=x$, i.e., $x\sim f(u)$. Again, if $d(f(v),u)=3,$ since
$v,x\in B_2(u)$ and $f(v)\notin B_2(u),$ we conclude that $v\sim x=f^2(v),$ a contradiction.
Finally, let $f(w)\nsim f(u)$. By  Lemma \ref{2-fellow-traveller}, $f^2(u)=f^2(w)=x$. Again, if $d(f(v),u)=3,$ then the fact that $v\nsim x$ and $w\nsim f(u)$
leads to a contradiction with the convexity of the balls $B_2(f(v))$ and $B_2(u)$.
\end{proof}

This concludes the proof of the theorem.
\end{proof}

\begin{remark} Most of known classes of dismantlable graphs (systolic, weakly systolic, Helly) are weakly modular. Thickenings  of median and swm-graphs are Helly graphs. Therefore, one may ask if the 
powers (Rips complexes) of CB-graphs belong to one of these classes of graphs, in particular, whether they are weakly modular. 
The graphs $G_2, G_3, G_4, \ldots$ from Figure~\ref{fig: G2 nonwm} show that this is not the case: for any $k$, the $k$th power $G_k^k$ of $G_k$ does not satisfy the 
quadrangle condition for the vertices $u,v,w,y$ as indicated in the figure. 
\end{remark}
\tikzexternaldisable
  \begin{figure}[h]
    \centering
    \begin{tikzpicture}[scale=0.65]

    \begin{scope}[xshift=-6cm]
     \def\py{0.866}
     \def\k{2}
     \tikzstyle{every node}=[draw, circle,fill=black,minimum size=4pt,
                            inner sep=0pt]

    \foreach \y in {1,..., \k}{
        \foreach \x in {1,...,\y}{
            \pgfmathsetmacro{\cx}{\x-(\y +1)*0.5}
            \pgfmathsetmacro{\cy}{(\y-1)*\py}
            \ifnum \y=1
               \node [fill=red, label=below: $u$] (x-\x-\y) at (\cx,\cy) {};
            \else
               \node (x-\x-\y) at (\cx,\cy) {};
            \fi
            \ifnum \x>1
                \pgfmathtruncatemacro{\px}{\x -1}
                \pgfmathtruncatemacro{\py}{\y -1}
                \draw[thick] (x-\x-\y) -- (x-\px-\y);
                \draw[thick] (x-\x-\y) -- (x-\px-\py);
            \fi
            \ifnum \x<\y
                \pgfmathtruncatemacro{\py}{\y -1}
                \draw[thick] (x-\x-\y) -- (x-\x-\py);
            \fi
        }
    }
    \pgfmathtruncatemacro{\nk}{\k+1}
    \foreach \y in {0,...,\nk}{
        \pgfmathtruncatemacro{\hy}{(\k +1)-\y}
        \foreach \x in {0,...,\hy}{
             \pgfmathsetmacro{\cy}{(\y+\k-1)*\py}
             \pgfmathsetmacro{\cx}{\x-(\hy)*0.5}
             \ifnum \y=0
                \ifnum \x = 0
                   \node [fill=red, label=left: $x$] (xx-\x-\y) at (\cx,\cy)
                   {};
                \else
                   \ifnum \x = \hy
                      \node [fill=red, label=right: $y$] (xx-\x-\y) at (\cx,\cy)
                      {};
                   \else
                      \node (xx-\x-\y) at (\cx,\cy) {};
                      \fi
                \fi
             \else
                \node (xx-\x-\y) at (\cx,\cy) {};
             \fi
\ifnum \x>0
            \pgfmathtruncatemacro{\px}{\x -1}
            \draw[thick] (xx-\x-\y) -- (xx-\px-\y);
        \fi
        \ifnum \y>0
            \pgfmathtruncatemacro{\nx}{\x +1}
            \pgfmathtruncatemacro{\py}{\y -1}
            \draw[thick] (xx-\x-\y) -- (xx-\x-\py);
            \draw[thick] (xx-\x-\y) -- (xx-\nx-\py);
        \fi
        }
    }
    \pgfmathtruncatemacro{\pk}{\k-1}
    \foreach \y in {0,...,\pk}{
      \pgfmathsetmacro{\cy}{(\y+2*\k)*\py}
      \ifnum \y=\pk
      \node [fill=red, label=above: $v$] (v-\y) at (0,\cy) {};
      \else
      \node (v-\y) at (0,\cy) {};
      \fi
\ifnum \y>0
            \pgfmathtruncatemacro{\py}{\y-1}
            \draw[thick] (v-\py) -- (v-\y);
        \fi
    }

    \node (lab) at (0, -1.5)[draw=white, fill=white, opacity =0, text opacity=1] {$G_2$};

  \end{scope}

    \begin{scope}
     \def\py{0.866}
     \def\k{3}
     \tikzstyle{every node}=[draw, circle,fill=black,minimum size=4pt,
                            inner sep=0pt]

    \foreach \y in {1,..., \k}{
        \foreach \x in {1,...,\y}{
            \pgfmathsetmacro{\cx}{\x-(\y +1)*0.5}
            \pgfmathsetmacro{\cy}{(\y-1)*\py}
            \ifnum \y=1
               \node [fill=red, label=below: $u$] (x-\x-\y) at (\cx,\cy) {};
            \else
               \node (x-\x-\y) at (\cx,\cy) {};
            \fi
            \ifnum \x>1
                \pgfmathtruncatemacro{\px}{\x -1}
                \pgfmathtruncatemacro{\py}{\y -1}
                \draw[thick] (x-\x-\y) -- (x-\px-\y);
                \draw[thick] (x-\x-\y) -- (x-\px-\py);
            \fi
            \ifnum \x<\y
                \pgfmathtruncatemacro{\py}{\y -1}
                \draw[thick] (x-\x-\y) -- (x-\x-\py);
            \fi
        }
    }
    \pgfmathtruncatemacro{\nk}{\k+1}
    \foreach \y in {0,...,\nk}{
        \pgfmathtruncatemacro{\hy}{(\k +1)-\y}
        \foreach \x in {0,...,\hy}{
             \pgfmathsetmacro{\cy}{(\y+\k-1)*\py}
             \pgfmathsetmacro{\cx}{\x-(\hy)*0.5}
             \ifnum \y=0
                \ifnum \x = 0
                   \node [fill=red, label=left: $x$] (xx-\x-\y) at (\cx,\cy)
                   {};
                \else
                   \ifnum \x = \hy
                      \node [fill=red, label=right: $y$] (xx-\x-\y) at (\cx,\cy)
                      {};
                   \else
                      \node (xx-\x-\y) at (\cx,\cy) {};
                      \fi
                \fi
             \else
                \node (xx-\x-\y) at (\cx,\cy) {};
             \fi
\ifnum \x>0
            \pgfmathtruncatemacro{\px}{\x -1}
            \draw[thick] (xx-\x-\y) -- (xx-\px-\y);
        \fi
        \ifnum \y>0
            \pgfmathtruncatemacro{\nx}{\x +1}
            \pgfmathtruncatemacro{\py}{\y -1}
            \draw[thick] (xx-\x-\y) -- (xx-\x-\py);
            \draw[thick] (xx-\x-\y) -- (xx-\nx-\py);
        \fi
        }
    }
    \pgfmathtruncatemacro{\pk}{\k-1}
    \foreach \y in {0,...,\pk}{
        \pgfmathsetmacro{\cy}{(\y+2*\k)*\py}
      \ifnum \y=\pk
      \node [fill=red, label=above: $v$] (v-\y) at (0,\cy) {};
      \else
      \node (v-\y) at (0,\cy) {};
      \fi
\ifnum \y>0
            \pgfmathtruncatemacro{\py}{\y-1}
            \draw[thick] (v-\py) -- (v-\y);
        \fi
    }
    \node (lab) at (0, -1.5)[draw=white, fill=white, opacity =0, text opacity=1] {$G_3$};
    \end{scope}

    \begin{scope}[xshift=7cm]
     \def\py{0.866}
     \def\k{4}
     \tikzstyle{every node}=[draw, circle,fill=black,minimum size=4pt,
                            inner sep=0pt]

    \foreach \y in {1,..., \k}{
        \foreach \x in {1,...,\y}{
            \pgfmathsetmacro{\cx}{\x-(\y +1)*0.5}
            \pgfmathsetmacro{\cy}{(\y-1)*\py}
            \ifnum \y=1
               \node [fill=red, label=below: $u$] (x-\x-\y) at (\cx,\cy) {};
            \else
               \node (x-\x-\y) at (\cx,\cy) {};
            \fi
            \ifnum \x>1
                \pgfmathtruncatemacro{\px}{\x -1}
                \pgfmathtruncatemacro{\py}{\y -1}
                \draw[thick] (x-\x-\y) -- (x-\px-\y);
                \draw[thick] (x-\x-\y) -- (x-\px-\py);
            \fi
            \ifnum \x<\y
                \pgfmathtruncatemacro{\py}{\y -1}
                \draw[thick] (x-\x-\y) -- (x-\x-\py);
            \fi
        }
    }
    \pgfmathtruncatemacro{\nk}{\k+1}
    \foreach \y in {0,...,\nk}{
        \pgfmathtruncatemacro{\hy}{(\k +1)-\y}
        \foreach \x in {0,...,\hy}{
             \pgfmathsetmacro{\cy}{(\y+\k-1)*\py}
             \pgfmathsetmacro{\cx}{\x-(\hy)*0.5}
             \ifnum \y=0
                \ifnum \x = 0
                   \node [fill=red, label=left: $x$] (xx-\x-\y) at (\cx,\cy)
                   {};
                \else
                   \ifnum \x = \hy
                      \node [fill=red, label=right: $y$] (xx-\x-\y) at (\cx,\cy)
                      {};
                   \else
                      \node (xx-\x-\y) at (\cx,\cy) {};
                      \fi
                \fi
             \else
                \node (xx-\x-\y) at (\cx,\cy) {};
             \fi
\ifnum \x>0
            \pgfmathtruncatemacro{\px}{\x -1}
            \draw[thick] (xx-\x-\y) -- (xx-\px-\y);
        \fi
        \ifnum \y>0
            \pgfmathtruncatemacro{\nx}{\x +1}
            \pgfmathtruncatemacro{\py}{\y -1}
            \draw[thick] (xx-\x-\y) -- (xx-\x-\py);
            \draw[thick] (xx-\x-\y) -- (xx-\nx-\py);
        \fi
        }
    }
    \pgfmathtruncatemacro{\pk}{\k-1}
    \foreach \y in {0,...,\pk}{
        \pgfmathsetmacro{\cy}{(\y+2*\k)*\py}
      \ifnum \y=\pk
      \node [fill=red, label=above: $v$] (v-\y) at (0,\cy) {};
      \else
      \node (v-\y) at (0,\cy) {};
      \fi
\ifnum \y>0
            \pgfmathtruncatemacro{\py}{\y-1}
            \draw[thick] (v-\py) -- (v-\y);
        \fi
    }

    \node (lab) at (0, -1.5)[draw=white, fill=white, opacity =0, text opacity=1] {$G_4$};
  \end{scope}

    \end{tikzpicture}
    \caption{The graphs $G_2, G_3, G_4$ are CB-graphs whose powers are
      not weakly modular. For any $k \geq 2$, in $G_k^k$, $u$ is
      adjacent to $x, y$ and at distance $3$ from $v$, and the
      vertices $v, x$, and $y$ are pairwise at distance $2$ but they
      do not have a common neighbor.}\label{fig: G2 nonwm}
\end{figure}

\begin{remark}
General CB-graphs are not
dismantlable (even if the main result of this section shows that their squares are dismantlable). A finite graph $G$ is \emph{dismantlable to a subgraph} $H$, if $G$ and $H$ satisfy the following conditions:
(1) there exists a total  ordering $v_k,\ldots,v_1$ of the vertices of $V(G)\setminus V(H)$ such that each vertex $v_i$ is dominated in the subgraph $G_i$ induced by $\{ v_i,v_{i-1},\ldots,v_1\}\cup V(H)$
either by a vertex $v_j$ with $j<i$ or by a vertex of $H$ and (2) $H$ does not contain any dominated vertex. It is well-known that all subgraphs $H$ of a graph $G$ to which $G$ is dismantlable are isomorphic \cite{Hell_Nesetril}
and any such subgraph is called the \emph{core} of $G$.  One can easily show that if $G$ is a CB-graph, then all intermediate subgraphs $G_i$ are also CB-graphs, thus the core $H$ of $G$ is a CB-graph.
Since weakly systolic graphs are dismantlable \cite{ChepoiOsajda}, their cores are trivial (a single vertex).
On the other hand, the core of any triangle-free CB-graph $G$ in which each 2-connected component contains a cycle is the graph $G$ itself. By Proposition \ref{thm: Moore}, each 2-connected component of such a graph
is a Moore graph and thus has diameter 2. One can ask if this property extends to the cores of all finite CB-graphs:
\end{remark}

\begin{question} Is it true that the core of any finite CB-graph is a CB-graph in which all 2-connected components have diameter 1 or 2?
\end{question}

\subsection{Rips complexes and stabilized sets}
We continue with several consequences of Theorem \ref{dismantl}.

\begin{corollary} \label{Rips}  For any $k\ge 2$, the Rips complex $X_{k}(G)$ of a locally-finite CB-graph is contractible. The maximal
simplices of $X_{k}(G)$ define convex sets of diameter $k$ of $G$.
\end{corollary}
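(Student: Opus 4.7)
The plan is to reduce the statement to an extension of Theorem \ref{dismantl}. Specifically, I would show that any BFS order $\preceq$ of $G$ is a dismantling order not just of $G^2$, but of $G^k$ for every $k \geq 2$. Concretely, I would establish the following key inequality: for any $u \preceq v$ with $d_G(u,v) \leq k$, the BFS parent $f(v)$ satisfies $d_G(f(v), u) \leq k$. This immediately implies that $f(v)$ dominates $v$ in $G^k$ restricted to the remaining vertices $\{x : x \preceq v\}$, and hence the reverse BFS order is a dismantling order of $G^k$.

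For the proof of the key inequality, I would distinguish three cases. When $d_G(u,v) \leq 2$, Theorem \ref{dismantl} already gives $d_G(f(v), u) \leq 2 \leq k$, so we are done. When $d_G(u,v) \geq 3$ and $d_G(b,u) \leq k$, the argument is short: both $b$ and $v$ lie in $B_k(u)$, hence by convexity of $B_k(u)$ the interval $I(v,b)$ is contained in $B_k(u)$; since $f(v) \in I(v,b)$ (as $d(b,f(v)) = d(b,v)-1$ and $f(v) \sim v$), we obtain $d_G(f(v), u) \leq k$. The remaining sub-case, where $d_G(u,v) \geq 3$ and $d_G(b,u) > k$, is the main obstacle. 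I would handle it by induction on $d_G(b,v)$, paralleling the Case 1--Case 2 analysis of the proof of Theorem \ref{dismantl}: picking a neighbor $w$ of $v$ on a shortest $(u,v)$-path, one rules out $d(b,w) = d(b,v)+1$ by convexity of $B_{d(b,v)}(b)$; deals with $d(b,w) = d(b,v)-1$ using $\INCz(v)$ to get $w = f(v)$ or $w \sim f(v)$; and treats $d(b,w) = d(b,v)$ by applying Lemma \ref{2-fellow-traveller} to the adjacent pair $w,v$ at the same BFS level, propagating the bound via convexity of appropriate balls of smaller radius.

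Once $G^k$ is known to be dismantlable by this BFS order, the first assertion follows from the standard fact that the clique complex of a locally finite dismantlable graph is contractible. Since $G$ is locally finite, the balls $B_k(v)$ are finite, hence every clique of $G^k$ is finite and $G^k$ is locally finite, so $X(G^k) = X_k(G)$ is contractible.

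For the second assertion, a maximal simplex $C$ of $X_k(G)$ is by definition a maximal clique of $G^k$, i.e., a maximal subset of $V(G)$ of diameter at most $k$ in $G$. Since $G$ is a CB-graph, the convexity of $G$ preserves the diameters of sets (Preliminaries), so $\mathrm{diam}(\mathrm{conv}(C)) = \mathrm{diam}(C) \leq k$; thus $\mathrm{conv}(C)$ is itself a clique in $G^k$ containing $C$, and the maximality of $C$ forces $C = \mathrm{conv}(C)$, i.e., $C$ is convex in $G$. The main technical obstacle of the whole proof is really the extension of the BFS dismantling argument to arbitrary $k \geq 2$ in the sub-case $d_G(b,u) > k$, where the 2-fellow-traveller property must be combined carefully with the convexity of balls at every relevant scale.
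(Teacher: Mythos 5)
Your treatment of the second assertion matches the paper's exactly (diameter preservation under convex hulls plus maximality of the simplex), and your overall strategy for the first assertion — dismantle $G^k$ and invoke contractibility of clique complexes of dismantlable graphs without infinite cliques — is the right one. However, there is a genuine gap in how you propose to dismantle $G^k$. Your whole argument hinges on the sub-case $d_G(u,v)\ge 3$, $d_G(b,u)>k$, which you explicitly identify as ``the main obstacle'' and then only sketch as ``paralleling the Case 1--Case 2 analysis'' of Theorem \ref{dismantl}. That analysis is tailored to pairs at distance exactly $2$: it relies on Lemma \ref{2-fellow-traveller}, which controls the parents of \emph{adjacent} vertices on the same BFS level, and on convexity of balls $B_{k-1}(b)$, $B_2(f^2(u))$, etc., centered at vertices one or two steps away. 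There is no indication of how these local tools would propagate a bound across a geodesic of length up to $k$, and the induction on $d_G(b,v)$ you invoke is not set up (what is the induction hypothesis, and why does reducing $d_G(b,v)$ by one preserve the configuration?). As written, the key inequality $d_G(f(v),u)\le k$ is not established in the case that matters.

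The gap is avoidable, and the repair is exactly what the paper does: domination in $G^2$ upgrades to domination in $G^k$ by pure triangle inequality, with no further convexity or BFS analysis. Concretely, suppose $f(v)$ dominates $v$ in $G^2$ among the remaining vertices, and let $u\preceq v$ with $3\le d_G(u,v)\le k$. Pick $y$ at distance $2$ from $v$ on a shortest $(v,u)$-path, so $d_G(y,u)=d_G(v,u)-2\le k-2$; domination at scale $2$ gives $d_G(f(v),y)\le 2$, hence $d_G(f(v),u)\le 2+(k-2)=k$. This three-line observation replaces your entire third sub-case (and also your second sub-case, which is correct but unnecessary). You should substitute it for the sketched induction; attempting to generalize the proof of Theorem \ref{dismantl} to distance $k$ directly is both much harder and not needed.
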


\begin{proof} Note that $X_{k}(G)=X(G^k)$. Since $G^2$ is dismantlable, $G^k$ is also dismantlable for every $k \geq 2$.  Indeed, pick any two vertices $u,v\in V$ such that $v$ dominates $u$ in $G^2$. Pick any vertex $x$ such that $d(u,x)\le k$ and let $y$ be a vertex at distance 2 from $u$ on a shortest $(u,x)$-path of $G$. Since $v$ dominates $u$ in $G^2$, $d(v,y)\le 2$, thus by triangle inequality we obtain that $d(v,x)\le k$. Consequently, $v$ dominates $u$ also in $G^k$. Since $G$ is locally-finite, all cliques of $G^k$ are finite. Thus $X_{k}(G)$ is contractible, as the flag complex of a dismantlable graph without infinite cliques is contractible.

Now, let $A$ be a maximal by inclusion simplex of $X_{k}(G)$. Then $\diam(A)\le k$. Since in CB-graphs the equality  $\diam(\conv(A))=\diam(A)$ holds \cite{SoCh}, the maximality of $A$ implies that $\conv(A)=A$.
\end{proof}

\begin{remark} The second assertion of Corollary \ref{Rips} implies that the Rips complex $X_k(G)$ of a CB-graph can be viewed as the thickening of
$G$ with respect to all convex sets of diameter at most $k$. Other similar thickening operations have been used in the context of Helly graphs in \cite{ChChGeHiOs}.
\end{remark}

It is known (see for example \cite{Hell_Nesetril}) that for every dismantlable finite graph $G$ and every graph homomorphism
$f:G\to G$, there exists a (nonempty) clique $C$ in $G$ stabilized by $f$, i.e. such that $f(C)=C$. Hence we immediately get the following corollary:

\begin{corollary}
 \label{cor: fixpoint}
 Let $G$ be a finite graph with convex balls and $f: G \to G$ be a graph homomorphism. Then there exists a convex set of diameter at most $2$ in $G$ which is stabilized by $f$.
\end{corollary}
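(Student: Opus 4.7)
The plan is to combine Theorem~\ref{dismantl}, which asserts that $G^2$ is dismantlable, with the classical fixed-clique property of dismantlable graphs recalled just above the statement.

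First I would observe that a graph homomorphism $f\colon G\to G$ is $1$-Lipschitz with respect to $d_G$, so it induces a graph homomorphism $f\colon G^2\to G^2$: if $d_G(u,v)\le 2$ then $d_G(f(u),f(v))\le 2$, so either $f(u)=f(v)$ or $f(u)\sim f(v)$ in $G^2$. By Theorem~\ref{dismantl} together with the fact that all cliques of $G^2$ are finite under the working finiteness assumption of this section, the cited fixed-clique property supplies a non-empty clique $C$ of $G^2$ with $f(C)=C$. Equivalently, $C\subseteq V(G)$ is a finite subset of $G$-diameter at most $2$, setwise preserved by $f$.

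Next I would set $K:=\conv_G(C)$. By Theorem~\ref{thm: well-bridged}, in a CB-graph convexity preserves the diameters of sets, so $\diam(K)=\diam(C)\le 2$ and $K$ is a convex subset of $G$ of diameter at most $2$.

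The remaining step, and the main obstacle, is to check that $K$ is stabilized by $f$. Because $C$ is finite and $f(C)=C$, the restriction $f|_C$ is a permutation of $C$ of some finite order $m$; combined with $1$-Lipschitzness this forces $f|_C$ to be a $d_G$-isometry, via the chain
\[
d_G(u,v)=d_G(f^m(u),f^m(v))\le \cdots \le d_G(f(u),f(v))\le d_G(u,v).
\]
Isometries transport intervals to intervals, so an induction along the natural iterative construction of $\conv_G(C)$ as a nested union of interval-closures shows that $f(K)\subseteq K$. Finiteness of $K$ (since $K$ is a clique of $G^2$) then ensures that the descending chain $K\supseteq f(K)\supseteq f^2(K)\supseteq\cdots$ stabilizes on a subset on which $f$ acts bijectively, and a final convex-hull step, still of diameter at most $2$ by the CB-property, yields the desired convex set of diameter at most $2$ in $G$ stabilized by $f$. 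I expect this last step, namely passing from the inclusion $f(K)\subseteq K$ through convex-hull operations to a genuinely $f$-stable convex set, to be the main subtlety, since graph homomorphisms do not in general preserve convex hulls, and it is precisely the CB-property together with the finiteness of cliques in $G^2$ that makes the argument close.
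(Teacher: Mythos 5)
Your skeleton is exactly the paper's: Theorem~\ref{dismantl} gives dismantlability of $G^2$, the fixed-clique property of dismantlable graphs gives a nonempty clique $C$ of $G^2$ with $f(C)=C$, and diameter-preservation of convex hulls in CB-graphs gives $\diam(\conv(C))=\diam(C)\le 2$. The paper presents the corollary as an immediate consequence of these facts and offers no further argument, so you are right that the only point actually requiring work is the one you flag: passing from a stabilized clique of $G^2$ to a stabilized \emph{convex} set.

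The argument you sketch for that point, however, does not close. Your observation that $f|_C$ is a $d_G$-isometry is correct, and it does yield $f(I(u,v))\subseteq I(f(u),f(v))$ for $u,v\in C$ (a shortest $(u,v)$-path maps to a walk of length at most $d(u,v)=d(f(u),f(v))$, hence to a shortest path), so the \emph{first} interval closure $C_1$ of $C$ satisfies $f(C_1)\subseteq C_1$. But the induction stalls at the next level: for $x,y\in C_1$ you only know $d(f(x),f(y))\le d(x,y)$, and when this inequality is strict (say $d(x,y)=2$ while $f(x)\sim f(y)$ or $f(x)=f(y)$), a vertex $z\in I(x,y)$ maps to a mere common neighbour of $f(x)$ and $f(y)$, which need not lie in $\conv(C)$; the isometry property does not propagate from $C$ to $C_1$. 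Moreover, even granting $f(K)\subseteq K$ for $K=\conv(C)$, your concluding step is circular: the set $K_\infty=\bigcap_{n}f^n(K)$ on which $f$ acts bijectively need not be convex, and since $C\subseteq K_\infty\subseteq K$ one has $\conv(K_\infty)=\conv(C)=K$, so the ``final convex-hull step'' returns you to the set $K$ for which only forward invariance, not $f(K)=K$, has been established. (A minor additional point: the fixed-clique property needs $G^2$ to have no infinite cliques, i.e. sets of diameter $2$ in $G$ to be finite, which is why the neighbouring results in this section assume local finiteness rather than only finiteness of cliques of $G$.) In short, your route is the paper's route, but the step you yourself identify as the main subtlety is still open in your write-up, and the paper gives no help in closing it.
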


Corollary \ref{cor: fixpoint} is tight in the sense that one cannot
hope to find a stabilized clique for any homomorphism.  Indeed a
simple counter-example is the $5$-cycle, which is not dismantlable and
which have the cyclic permutation of order $5$ in its automorphism
group stabilizing no clique. The following example shows that the
5-cycle is not the only obstruction to get such a property.

\begin{example}\label{ex-circulant}
  We describe now a graph $H$ with convex balls such that there exists
  an automorphism $f:H\to H$ that does not stabilize a clique or a
  5-cycle. See Figure \ref{fig: fixpointctre-ex} for an
  illustration. Let
  $V(H):=\sg{a_1, a_2, a_3, b_1, b_2, b_3, c_1, c_2, c_3}$ and define
  the edge set of $H$ as follows\footnote{One may observe that $H$ is
    isomorphic to the circulant graph
    $\mathrm{Cay}(\mathbb{Z}_9, \sg{1,2})$.}:

\begin{gather*}
E(H):=\sg{a_ib_i, i \in \sg{1,2,3}}\cup \sg{a_{i+1}b_i, i \in \sg{1,2,3}}\cup \sg{a_ic_i, i \in \sg{1,2,3}}\\
\cup \sg{a_{i+1}c_i, i \in \sg{1,2,3}}\cup\sg{b_ic_i, i \in \sg{1,2,3}}\cup\sg{b_{i+1}c_i, i \in \sg{1,2,3}},
\end{gather*}
where the operations on the indices are done modulo $3$.  Observe that
$H$ has diameter $2$ and that $H$ has no induced 4-cycle. This implies
that $H$ has convex balls.  Now consider the automorphism $f$ defined
by $f(a_i):=a_{i+1}$, $f(b_i):= b_{i+1}$ and $f(c_i):= c_{i+1}$ for
every $i\in \sg{1,2,3}$ (again, every addition is done modulo $3$).
Note that $f$ cannot stabilize any clique as every vertex is nonadjacent to
its image. Moreover every 5-cycle of $H$ must have at least one of the
$a_i$'s, one of the $b_i$'s and one of the $c_i$'s in its vertex
set. By this observation, it follows that $f$ cannot stabilize a 5-cycle of
$H$.
\end{example}

  \tikzexternaldisable
  \begin{figure}[h]
    \centering
    \begin{tikzpicture}[scale=2]
    \tikzstyle{every node}=[draw,circle,fill=green,minimum size=4pt,
                            inner sep=0pt]

    \node (1) at (0,1) [label=above: $a_1$] {};
    \node (2) at (0.87,-0.5) [label=right: $a_2$] {};
    \node (3) at (-0.87,-0.5) [label=left: $a_3$] {};
    \tikzstyle{every node}=[draw,circle,fill=blue,minimum size=4pt,
                            inner sep=0pt]
    \node (4) at (0.87,0.5) [label=right: $b_1$] {};
    \node (5) at (0,-1) [label=below: $b_2$] {};
    \node (6) at (-0.87,0.5) [label=left: $b_3$] {};
    \tikzstyle{every node}=[draw,circle,fill=red,minimum size=4pt,
                            inner sep=0pt]
    \node (7) [label=left: $c_1$] at (0.66,0) {};
    \node (8) at (-0.33,-0.58) [label=above: $c_2$] {};
    \node (9) at (-0.33,0.58) [label=below: $c_3$] {};

    \draw (1) -- (4) -- (2) -- (5) -- (3) -- (6) -- (1) -- (7) -- (5) -- (8) -- (6) -- (9) -- (4) -- (7) -- (2) -- (8) -- (3) -- (9) -- (1);

    \end{tikzpicture}
    \caption{The graph $H$ of Example~\ref{ex-circulant}
}  \label{fig:
      fixpointctre-ex}
\end{figure}

 \section{Biautomaticity}\label{sec: NCP}
In this section, we prove that CB-groups (i.e., groups acting geometrically on CB-graphs) are biautomatic.
For this,  we  construct in a canonical way clique-paths between all pairs of  vertices of a graph with $2$-convex balls. In CB-graphs, we characterize those normal clique-paths locally. Then we prove that in CB-graphs, the normal clique-paths satisfy the 2-sided fellow traveller property. Using this property and the local definition of normal clique-paths, in the last subsection we prove biautomaticity of CB-groups. 

\subsection{Clique-paths: definition and existence}

Following \cite{ChChGeHiOs}, for a set $S$ of vertices of a graph
$G = (V,E)$ and an integer $k \ge 0$, let
$B^*_k(S)=\bigcap_{s\in S} B_k(s)$.  If $S$ is a clique, then
$B^*_1(S)$ is the union of $S$ and the set of vertices adjacent to all
vertices in $S$. Notice also that if $S \subseteq S'$, then
$B^*_k(S')\subseteq B^*_k(S)$.

For two vertices $u,v$ of a graph $G$, a \emph{clique-path} is a
sequence of cliques
$(\sg{u} = C^{(0)}, C^{(1)}, \ldots, C^{(k)}=\sg{v})$ such that any
two consecutive cliques $C^{(i)}$ and $C^{(i+1)}$ are disjoint and
their union induces a clique of $G$; $u$ and $v$ are respectively
called the \emph{source} and the \emph{sink} of this clique-path.

Now we define special clique-paths between all pairs of vertices of $G$.
Let $u,v$ be two arbitrary vertices of a graph $G$ and let $d(u,v)=k$.
For each $i$ running from $k$ to 0 we inductively define the sets
$C_{(u,v)}^{(i)}\subseteq S_i(u)\cap I(u,v)$ by setting $C_{(u,v)}^{(k)}:=\sg{v}$ and
\[C_{(u,v)}^{(i-1)}:=B^*_1\left(C_{(u,v)}^{(i)}\right)\cap
  B_{i-1}(u)\] for any $i<k$. We also set $C_{(u,v)}^{(i)}:= \sg{v}$
for any $i \geq k+1$. In the next lemma, we establish that each set
$C_{(u,v)}^{(i)}$ is a clique of $G$ and that  $\gamma_{(u,v)}:= (\sg{u} = C_{(u,v)}^{(0)}, C_{(u,v)}^{(1)}, \ldots,
C_{(u,v)}^{(k)}=\sg{v})$ is a clique-path. Note that these clique-paths are directed, since the cliques of
$\gamma_{(u,v)}$ and $\gamma_{(v,u)}$ are not the same in general.

\begin{lemma}\label{lem: INCclique}
Let $G$ be a graph with $2$-convex balls and let $u,v\in V$ with $d(u,v)=k$. For any $i\in \sg{0,\ldots,k}$, the set $C_{(u,v)}^{(i)}$ is nonempty and defines a clique of $G$. Furthermore,
$C_{(u,v)}^{(i)}\cap C_{(u,v)}^{(i+1)}=\varnothing$ and $C_{(u,v)}^{(i)}\cup C_{(u,v)}^{(i+1)}$ is a clique for any $i=0,\ldots,k-1$.
\end{lemma}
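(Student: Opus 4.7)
The proof will proceed by downward induction on $i$, establishing simultaneously that $C_{(u,v)}^{(i)}$ is a nonempty clique contained in $S_i(u) \cap I(u,v)$. The base case $i = k$ is immediate since $C_{(u,v)}^{(k)} = \sg{v}$. The ``Furthermore'' assertion will then fall out for free: each vertex of $C_{(u,v)}^{(i)}$ is by definition adjacent (not just equal) to every vertex of $C_{(u,v)}^{(i+1)}$, the inequality of distances from $u$ ruling out equality, and both sets are cliques.

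For the inductive step from $i$ to $i-1$ (with $i \geq 1$), the containment $C_{(u,v)}^{(i-1)} \subseteq S_{i-1}(u) \cap I(u,v)$ follows by a short triangle-inequality check: any $z \in C_{(u,v)}^{(i-1)}$ is adjacent to some $a \in C_{(u,v)}^{(i)}$, which forces $d(u,z) = i-1$, after which $z \in I(u,a) \subseteq I(u,v)$. For nonemptiness when $i = k$, I simply take any neighbor of $v$ on a shortest $(u,v)$-path. For $1 \leq i < k$, I pick any $x \in C_{(u,v)}^{(i+1)}$ (nonempty by induction) and observe that $C_{(u,v)}^{(i)} \subseteq S_1(x) \cap I(x,u)$, since every vertex of $C_{(u,v)}^{(i)}$ is adjacent to $x$ by definition and lies at distance $i = d(u,x) - 1$ from $u$. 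Since $G$ has $2$-convex balls, Theorem~\ref{thm: INC} grants $\INCp$, and applying $\INCp(x,u)$ (valid because $d(x,u) = i+1 \geq 2$) produces a vertex $z$ at distance $i - 1$ from $u$ adjacent to every vertex of the clique $S_1(x) \cap I(x,u)$, hence in particular to all of $C_{(u,v)}^{(i)}$; thus $z \in C_{(u,v)}^{(i-1)}$.

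For cliqueness of $C_{(u,v)}^{(i-1)}$, given $a, b \in C_{(u,v)}^{(i-1)}$ I pick any $x \in C_{(u,v)}^{(i)}$ (nonempty by the current step) and note that both $a, b$ lie in $S_1(x) \cap I(x,u)$ by the same computation as above; $\INCz(x,u)$ then gives $a \sim b$. The main technical point, and the reason I invoke $\INCp$ rather than the weaker $\INC$, is that I need a \emph{single} common neighbor of the whole clique $C_{(u,v)}^{(i)}$ at distance $i-1$ from $u$; iterating the pairwise version $\INC$ would not obviously achieve this when $|C_{(u,v)}^{(i)}| \geq 3$, which is exactly what $\INCp$ is designed to bypass.
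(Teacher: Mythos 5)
Your proposal is correct and follows essentially the same route as the paper: downward induction on $i$, with nonemptiness obtained by applying $\INCp(x,u)$ to a vertex $x$ two levels above the target clique (your indexing $x\in C_{(u,v)}^{(i+1)}$ producing a point of $C_{(u,v)}^{(i-1)}$ is the paper's argument with $i$ shifted by one), and cliqueness from $\INCz$ applied one level above. The only difference is cosmetic — you explicitly verify the containment $C_{(u,v)}^{(i)}\subseteq S_i(u)\cap I(u,v)$, which the paper builds into the definition without comment.
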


\begin{proof}
We proceed by induction on decreasing values of $i\in \sg{0, \ldots, k}$. The cases $i=k,k-1$ are immediate since
$G$ is a graph with 2-convex balls, so by Theorem \ref{thm: INC} $\INC$ holds. Now, let $i < k-1$. By induction hypothesis,
$C_{(u,v)}^{(i+2)}$ and $C_{(u,v)}^{(i+1)}$ are disjoint nonempty cliques and their union is a clique.
Let $x\in C_{(u,v)}^{(i+2)}$.
Then $x$ is adjacent to all vertices of $C_{(u,v)}^{(i+1)}$ and $C_{(u,v)}^{(i+1)}\subset I(x,u)$.  Applying
$\INCp(x,u)$, there exists a vertex  $z \in B_{i}(u)$ adjacent to every vertex of $S_1(x) \cap B_{i+1}(u)$.
In particular, $z$ is adjacent to all vertices of $C_{(u,v)}^{(i+1)}$, therefore $z$ is a vertex of
$C_{(u,v)}^{(i)}$. This shows that $C_{(u,v)}^{(i)}$ is nonempty. Since $C_{(u,v)}^{(i)}$ belongs to $S_1(y)\cap I(y,u)$
for any $y\in C_{(u,v)}^{(i+1)}$, by $\INCz$ the set $C_{(u,v)}^{(i)}$ defines a clique.
From the definition of $C_{(u,v)}^{(i)}$, it immediately follows that $C_{(u,v)}^{(i)}\cap C_{(u,v)}^{(i+1)}=\varnothing$ and
$C_{(u,v)}^{(i)}\cup C_{(u,v)}^{(i+1)}$  is a clique as well.
\end{proof}

For any graph $G$, any vertex $u$ of $G$, and  any subset $K\subseteq S_{k}(u)$, one can define the sets $C_{(u,K)}^{(i)}$ in a similar way, by setting
$C_{(u,K)}^{(k)}:=K$ and
$C_{(u,K)}^{(i)}:=B_1^*\left(C_{(u,K)}^{(i)}\right)\cap B_{i}(u),$
for any $i\in \sg{0,\ldots, k-1}$. We also let $C_{(u,K)}^{(i)}:=K$ for any $i \geq k+1$. Again, we let $\gamma_{(u,C)}:=(\sg{u} = C_{(u,C)}^{(0)}, C_{(u,C)}^{(1)},\ldots, C_{(u,C)}^{(k-1)},C_{(u,C)}^{(k)}=C)$ denote such a clique-path.
However, Lemma \ref{lem: INCclique} does not hold anymore and  the resulting sets  may be empty. Nevertheless, the following simple observation
holds:

\begin{lemma}
\label{lem: inclalt}
 Let $G$ be any graph, $u\in V$ and $k\geq 1$ be an integer. Then for any $L \subseteq K\subseteq S_k(u)$, the inclusion $C_{(u,K)}^{(k-1)} \subseteq C_{(u,L)}^{(k-1)}$ holds.
\end{lemma}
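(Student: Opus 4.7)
The statement reduces directly to unfolding the definition. By construction,
$$C_{(u,K)}^{(k-1)} = B_1^*(K) \cap B_{k-1}(u) \quad \text{and} \quad C_{(u,L)}^{(k-1)} = B_1^*(L) \cap B_{k-1}(u).$$
The plan is just to invoke the elementary antimonotonicity of $B_1^*$ with respect to set inclusion, which was explicitly recorded in the paragraph introducing the operator $B_k^*$ (``if $S \subseteq S'$, then $B^*_k(S') \subseteq B^*_k(S)$''). Since $L \subseteq K$, this gives $B_1^*(K) \subseteq B_1^*(L)$, and intersecting both sides with $B_{k-1}(u)$ preserves the inclusion.

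There is no real obstacle here: no appeal to $2$-convexity, $\INC$, or any structural property of $G$ is needed (which is why the lemma is stated ``for any graph $G$''). The only subtlety worth flagging is to keep in mind the trivial edge cases — if $L = \emptyset$ then $B_1^*(L)$ is the whole vertex set $V$ by the usual convention for an empty intersection, so the right-hand side is $B_{k-1}(u)$ and the inclusion still holds; and if $L = K$ the statement is an equality. Thus the entire proof will be one or two lines, essentially a direct application of the monotonicity remark made at the very start of the subsection.
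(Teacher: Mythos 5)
Your proof is correct and is essentially the paper's own argument: the paper verifies the inclusion element-wise (a vertex of $C_{(u,K)}^{(k-1)}$ lies in $B_{k-1}(u)$ and is adjacent to every vertex of $K$, hence to every vertex of $L$), which is exactly the content of the antimonotonicity remark you invoke. Nothing is missing.
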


\begin{proof}
Pick  $x\in C_{(u,K)}^{(k-1)}$. Then  $x\in S_{k-1}(u)$ and clearly $x$ is adjacent to every vertex of $K$, so to every vertex of $L$ as well, which implies that $x \in C_{(u,L)}^{(k-1)}$.
\end{proof}

  \tikzexternaldisable
  \begin{figure}[h!]
    \centering
    \begin{tikzpicture}[scale=1]
    \tikzstyle{every node}=[draw,circle,fill=red,minimum size=4pt,
                            inner sep=0pt]

    \node (u) at (0,0) [label=left: $u$] {};
    \node (11) at (-1,1) [] {};
    \node (12) at (0,1) [] {};
    \node (13) at (1,1) [] {};
    \node (21) at (-1,2) [color = black] {};
    \node (22) at (0,2) [] {};
    \node (23) at (1,2) [color= black] {};
    \node (31) at (-0.5,3) [] {};
    \node (32) at (0.5,3) [] {};
    \node (v) at (0,4) [label=left: $v$] [] {};
    \draw[thick] (u) -- (11) -- (12) -- (13) -- (u) -- (12) -- (22) -- (21) -- (11);
    \draw[thick] (11) to[bend right] (13);
    \draw[thick] (11) -- (22) -- (13) -- (23) -- (22) -- (31) -- (v) -- (32) -- (23);
    \draw[thick] (21) -- (31) -- (32) -- (22);
    \end{tikzpicture}
    \caption{In red, the vertices of the cliques $C_{(u,v)}^{(i)}$ of the clique-path $\gamma_{(u,v)}$ in a CB-graph.}
\end{figure}

\subsection{Local characterization}
In this subsection,  we provide a local characterization of the clique-paths  $\gamma_{(u,v)}, (u,v)\in V\times V$ defined  above. This will show that they
are canonically defined and are unique.

Recall that two cliques $\sigma, \tau$ of a graph $G$ are at
{uniform distance} $k$ (notation \du{\sigma}{\tau}{k}) if
$d(s,t)=k$ for any $s\in \sigma$ and any $t\in \tau$.
Let $u$ be a vertex and $C$ be a nonempty clique of $G$. A sequence of nonempty cliques $\eta_{(u,C)}=(\sg{u} = C_0, C_1, \ldots, C_k = C)$ is
called a \emph{normal clique-path} from $u$ to $C$ if the following local conditions hold:
\begin{enumerate}[($i$)]
 \item for any $i=0,\ldots,k-1$, $C_i \cap C_{i+1} = \varnothing$ and $C_i \cup C_{i+1}$ form a clique;
 \item for any $i=1,\ldots, k-1$, $C_{i-1}\cap C_{i+1}= \varnothing$ and there is no edge between $C_{i+1}$ and $C_{i-1}$;
 \item for any $i=3, \ldots, k$, $C_i$ is at uniform distance $3$ from $C_{i-3}$.
 \item for any $i=1,\ldots,k-1$,  $C_i=B_1^*(C_{i+1})\cap B_1(C_{i-1}).$
 \end{enumerate}
 Note that ($i$) is the definition of a clique-path.  Note also that when
 $k \geq 3$, condition ($iii$) implies condition ($ii$).
Our next goal is to prove that the clique-paths $\gamma_{(u,v)}$ are
the unique normal clique-paths. We prove this result by induction and
to do so, we need to consider a stronger induction hypothesis dealing
with the clique-paths $\gamma_{(u,C)}$ between vertices $u$ and
cliques $C$ at uniform distance from $u$.

\begin{lemma}
\label{lem: cliquepathsarenormal}
Let $G$ be a graph with $2$-convex balls, $u$ be a vertex, and $C$ be a nonempty  clique of $G$ at uniform distance $k$ from $u$.
Additionally assume that  all sets $C_{(u,C)}^{(i)}$ are nonempty. Then the sequence $\gamma_{(u,C)}=(\sg{u} = C_{(u,C)}^{(0)}, C_{(u,C)}^{(1)},\ldots, C_{(u,C)}^{(k-1)},C_{(u,C)}^{(k)}=C)$ is a normal clique-path. Consequently, for any pair of vertices $u,v$  of $G$, the clique-path
 $\gamma_{(u,v)}$ is a normal clique-path.
\end{lemma}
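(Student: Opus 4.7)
The plan is to verify the four defining conditions (i)--(iv) directly from the backward recursion $C_{(u,C)}^{(i-1)}=B_1^*(C_{(u,C)}^{(i)})\cap B_{i-1}(u)$, together with $\INCz$, which is available because $2$-convexity of balls gives $\INC$ by Theorem~\ref{thm: INC}. For readability I would abbreviate $C^{(i)}:=C_{(u,C)}^{(i)}$ throughout the argument.

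First, I would argue, exactly as in the proof of Lemma~\ref{lem: INCclique}, that each $C^{(i)}$ is itself a clique: for any $y,y'\in C^{(i)}$ with $i<k$, pick any $z\in C^{(i+1)}$ (nonempty by hypothesis); then $y,y'$ both belong to $S_1(z)\cap I(u,z)$, and $\INCz(z,u)$ forces $y\sim y'$. Condition~(i) follows at once: $C^{(i)}$ and $C^{(i+1)}$ are disjoint because they lie in distinct spheres around $u$, and every vertex of $C^{(i)}$ is adjacent to every vertex of $C^{(i+1)}$ by the recursive definition, so the union is a clique. Condition~(ii) is a triangle-inequality check: an edge between $C^{(i-1)}$ and $C^{(i+1)}$ would place a vertex at distance $i+1$ from $u$ within distance $i$ of $u$, which is impossible.

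For condition~(iii), the lower bound $d(w,w')\ge 3$ for $w\in C^{(i)}$, $w'\in C^{(i-3)}$ is again the triangle inequality applied to the spheres around $u$. The upper bound is where nonemptiness of every intermediate level enters: picking any $x\in C^{(i-1)}$ and $y\in C^{(i-2)}$ and chaining the adjacencies supplied by condition~(i) yields the walk $w\sim x\sim y\sim w'$, hence $d(w,w')\le 3$. For condition~(iv), the inclusion $C^{(i)}\subseteq B_1^*(C^{(i+1)})\cap B_1(C^{(i-1)})$ is immediate from the definition together with the adjacency $C^{(i)}$--$C^{(i-1)}$ supplied by condition~(i). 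Conversely, if $w\in B_1^*(C^{(i+1)})\cap B_1(C^{(i-1)})$, then $w\in B_1(C^{(i-1)})$ gives $d(u,w)\le (i-1)+1=i$, so $w\in B_i(u)$, and combined with $w\in B_1^*(C^{(i+1)})$ this yields $w\in C^{(i)}$ by the defining formula.

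The consequent assertion---that $\gamma_{(u,v)}$ is a normal clique-path for every pair of vertices $u,v$---then follows by simply invoking Lemma~\ref{lem: INCclique}, which supplies the missing nonemptiness hypothesis in the vertex-sink case. I do not anticipate any genuinely difficult step in this argument: it amounts to level-by-level bookkeeping of distances and adjacencies, and the only place where the full strength of the nonemptiness hypothesis is actually used is in the chaining argument for the upper bound in condition~(iii).
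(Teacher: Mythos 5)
Your proof is correct and follows essentially the same route as the paper: conditions (i)--(iii) are read off from the recursion and the distance spheres (the paper dismisses them as immediate), and condition (iv) reduces to the two inclusions $C^{(i)}\subseteq B_1(C^{(i-1)})$ and $B_1(C^{(i-1)})\subseteq B_i(u)$. The only cosmetic difference is that you verify (iv) directly for every level, whereas the paper wraps the same two inclusions in an (unnecessary) induction on $k$ that reduces to the top level.
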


\begin{proof}
To simplify the notation, we set $C_i := C_{(u,C)}^{(i)}$ for any $i \in \sg{0, \ldots, k}$. Since $G$ is a graph with 2-convex balls, each $C_i$ is a clique, thus
$\gamma_{(u,C)}=(\sg{u} = C_0,C_1,\ldots, C_{k-1},C_{k}=C)$ is a path of cliques. From the definition of $\gamma_{(u,C)}$ it immediately follows that $\gamma_{(u,C)}$
satisfies ($i$), ($ii$) and ($iii$). It remains to establish ($iv$), i.e. that for any $i \in \sg{1, \ldots, k-1}$ we have
$C_i :=B_1^*(C_{i+1}) \cap B_i(u) =B_1^*(C_{i+1}) \cap B_1(C_{i-1})$. We proceed by induction on $k=d(u,C)$. This is direct if $k=2$. Hence assume that $k \geq 3$ and
that the required equality holds for every $i \in \sg{1, \ldots, k-2}$.  Since $C_i = C_{(u,C_{k-1})}^{(i)}$ for every $i \in \sg{0, \ldots, k-1}$, it is enough
to prove  the equality for $i = k-1$, i.e., that $B_1^*(C_{k}) \cap B_{k-1}(u) =B_1^*(C_{k}) \cap B_1(C_{k-2})$.

By ($i$), $C_{k-1} \cup C_{k-2}$ is a clique and thus
$C_{k-1} \subseteq B_1(C_{k-2})$, implying that
$C_{k-1} = B_1^*(C_{k}) \cap B_{k-1}(u) \subseteq B_1^*(C_{k}) \cap
B_1(C_{k-2})$. Conversely, $B_1(C_{k-2}) \subseteq B_{k-1}(u)$ since
all vertices of $C_{k-2}$ are at distance $k-2$ from $u$. This implies
that
$B_1^*(C_{k}) \cap B_1(C_{k-2}) \subseteq B_1^*(C_{k}) \cap
B_{k-1}(u)$ and thus ($iv$) holds.
\end{proof}

Now we prove the converse direction for CB-graphs.

\begin{lemma}
 \label{lem: normalcliquepathunicity} Let $G$ be a CB-graph, $u$ be a vertex, and $C$ be a nonempty  clique of $G$ at uniform distance $k$ from $u$.
 Let $\eta_{(u,C)}=(\sg{u} = C_0, C_1, \ldots, C_{\ell}= C)$ be a normal  clique-path from $u$ to $C$. Then  $\eta_{(u,C)}$ coincides with the clique-path $\gamma_{(u,C)}$ defined in Lemma  \ref{lem: cliquepathsarenormal}. Consequently, for any pair of vertices $u,v$  of $G$ there exists a unique normal clique-path from $u$ to $v$ and this path is
 $\gamma_{(u,v)}$.
\end{lemma}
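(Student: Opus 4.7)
The plan is to proceed in two stages: first establish that the length $\ell$ of any normal clique-path equals $k$ by proving each clique $C_i$ lies at uniform distance $i$ from $u$, then do a backward induction from $C_k = C$ to recover each $C_i$ as $C_{(u,C)}^{(i)}$.

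For the distance stage I would induct on $i$ to show $d(u,C_i) = i$ uniformly. The cases $i \le 3$ are immediate from conditions (i)--(iii) in the definition of a normal clique-path: (i) forces adjacency and disjointness of consecutive cliques (hence distance exactly $1$); (ii) forces distance exactly $2$ across two cliques (since non-adjacency plus the walk gives upper and lower bounds); and (iii) gives distance exactly $3$ across three. For $i \ge 4$ the walk through the clique sequence yields $d(u,v) \le i$ for any $v \in C_i$. For the lower bound, supposing $d(u,v) = r < i$, I would combine condition (iii)---which places every vertex of $C_{i-3}$ at distance exactly $3$ from $v$ and, by induction, at distance $i-3$ from $u$---with the $2$-convexity of balls given by $\INC$ (Theorem~\ref{thm: INC}) to derive a contradiction, for instance by exhibiting a forbidden induced $4$-cycle or a violation of the convexity of a ball $B_{i-1}(u)$ (using that $v$ is adjacent to all of $C_{i-1}$, which by induction sits in $S_{i-1}(u)$). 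Once this is settled, $\ell = d(u,C_\ell) = d(u,C) = k$.

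For the clique-matching stage, I would induct backward on $i$ from $k$ down to $0$. The base $C_k = C = C_{(u,C)}^{(k)}$ is given. In the step, assume $C_j = C_{(u,C)}^{(j)}$ for all $j > i$; the recursive formula gives $C_{(u,C)}^{(i)} = B_1^*(C_{i+1}) \cap B_i(u)$. The inclusion $C_i \subseteq C_{(u,C)}^{(i)}$ follows from (i) (so $C_i \subseteq B_1^*(C_{i+1})$) and the distance lemma ($C_i \subseteq S_i(u)$). For the reverse inclusion, let $w \in B_1^*(C_{i+1}) \cap B_i(u)$, so $d(u,w) = i$; by condition (iv) it suffices to prove $w \in B_1(C_{i-1})$. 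Applying $\INCz$ at any $v \in C_{i+1}$, the clique $S_1(v) \cap S_i(u)$ contains both $w$ and all of $C_i$, so $w$ is adjacent to every vertex of $C_i$. Then $\INCp$ applied to the pair $(v,u)$ produces $z \in I(u,v)$ at distance $i-1$ from $u$ with $z \sim w$ and $z \sim$ all of $C_i$; a further application of $\INCz$ at any $x \in C_i$ forces $z$ to be adjacent to every vertex of $C_{i-1}$. A descending use of $\INCp$ together with condition (iv) identifies $z$ as a member of $C_{i-1}$, and then $w \sim z$ yields $w \in B_1(C_{i-1})$, hence $w \in C_i$ as desired.

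The main technical obstacle is the lower-bound half of the distance lemma: the pentagons permitted by $\TPC$ make it genuinely delicate to rule out shortcuts, and conditions (iii) and $\INC$ must be combined with the convexity of intermediate balls at the right places. Once both stages are in place, the ``consequently'' statement follows immediately: the singleton $\{v\}$ is a clique at uniform distance $d(u,v)$ from $u$, Lemma~\ref{lem: cliquepathsarenormal} produces $\gamma_{(u,v)}$ as a normal clique-path, and the main statement of the lemma provides uniqueness.
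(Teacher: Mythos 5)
Your overall architecture (first pin down the distances, then recover the recursion) matches the paper's, and your clique-matching stage is essentially a backward-unwound version of the paper's Claim~\ref{clm: lemk-1}: the descending chain of $\INCp$-vertices $z, z_2, \ldots$ does bottom out at $C_0=\{u\}$ and can then be pulled back up through condition ($iv$), so that part is workable, if under-detailed. The genuine gap is in the distance stage, and it is exactly the step you yourself flag as "the main technical obstacle": you never supply the argument that rules out a vertex $x\in C_i$ with $d(u,x)<i$, and the generic devices you invoke (a forbidden induced $4$-cycle, or convexity of $B_{i-1}(u)$) are not what makes this work. The paper's contradiction is not with convexity but with condition ($iii$) of the normal clique-path: starting from $x\in C_i$ with $d(u,x)=i-1$ and $y\in C_{i-1}$, an application of $\TPCp(u,xy)$ together with $\INCp(u)$ produces a vertex $z'$ at distance $i-3$ from $u$ adjacent to all of $C_{i-2}$, and the punchline is that $z'$ must \emph{belong to} $C_{i-3}$, putting $x$ within distance $2$ of $C_{i-3}$ and contradicting ($iii$). (A separate short argument via $\INCp(u)$ and condition ($ii$) handles the subcase where $C_i$ sits entirely at distance $i-2$, which your sketch does not distinguish.)

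The reason you cannot close this with your rigid two-stage ordering is that identifying $z'$ as a member of $C_{i-3}$ requires already knowing $C_{i-3}=B_1^*(C_{i-2})\cap B_{i-3}(u)$, i.e.\ the conclusion of your second stage applied to the initial segment of the path --- but in your plan stage~2 only starts after stage~1 is complete for all indices. The paper resolves this by inducting on the \emph{length} of the clique-path: when the first offending index $i$ is reached, all $C_j$ with $j<i$ are at uniform distance $j$, so the truncated path $\eta_{(u,C_{i-1})}$ is a shorter normal clique-path, the induction hypothesis identifies it with $\gamma_{(u,C_{i-1})}$, and the formula for $C_{i-3}$ becomes available exactly when needed. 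You should either restructure your induction this way, or interleave the two stages level by level; as written, the lower-bound half of your distance lemma is missing, not merely delicate.
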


\begin{proof}
We proceed by induction on the length $\ell$ of the normal clique-path
$\eta_{(u,C)}$.  If $\ell\le 2$ there is nothing to prove. Thus, we assume that $\ell\ge 3$ and that the assertion of the lemma holds for any normal clique-path of length smaller than $\ell$. To prove the assertion for $\eta_{(u,C)}$, we will show that for any $i \in \sg{0,\ldots,\ell-1}$, (a) $C_i$ is at uniform distance $i$ from $u$ and (b) the equality $C_i = B^*_1(C_{i+1})\cap B_{i}(u)$ holds. If for some $j<\ell$, the clique $C_j$ is at uniform distance $j$ from $u$, then the subpath $\eta_{(u,C_j)}:=(\sg{u} = C_0, C_1, \ldots, C_{j})$ of $\eta_{(u,C)}$ is a normal clique-path from $u$ to $C_j$. Therefore, by induction assumption $\eta_{(u,C_j)}$ coincides with the clique-path $\gamma_{(u,C_j)}$, consequently, $\eta_{(u,C_j)}$ satisfies the properties (a) and (b).

By condition ($i$) of normal clique-paths, the union of two consecutive cliques of $\eta_{(u,C)}$ is a clique, therefore for any $i\le \ell-1$,  all vertices of $C_{i}$ have distance at most $i$ from $u$. Now we prove that each  $C_i$ is at uniform distance $i$ from the origin $u$, thus establishing (a). Suppose this is not true
and let $i$ be the smallest index  such that $C_i$ is not at uniform distance $i$ from $u$.
By condition ($iii$) of normal clique-paths, we must have $i > 3$.
The minimality choice of $i$
implies that all cliques $C_j$ with $j<i\le \ell$ are at uniform distance $j$ from  $u$. Consequently, for  all such $C_j$, $\eta_{(u,C_j)}$ coincides with $\gamma_{(u,C_j)}$.  From the choice of $i$ it also follows that there exists a vertex $x\in C_i$ with $d(u,x)<i$.

If $d(u,x)=i-1$,  then we can apply  $\TPCp(u,xy)$, where $y\in C_{i-1}$. If $\TC$ applies, then there exists a vertex $z\sim x,y$ at distance $i-2$ from $u$. By $\INCp(u)$, $z$ must be adjacent to every vertex of $C_{i-2}$ and there exists a vertex $z'$ at distance $i-3$ from $u$ which is adjacent to all vertices of the clique $C_{i-2}\cup\sg{z}$.
Since $\eta_{(u,C_{i-1})}=\gamma_{(u,C_{i-1})}$, we have $C_{i-3}=B^*_1(C_{i-2})\cap B_{i-3}(u)$, yielding $z' \in C_{i-3}$. This implies that $x$ is at distance at most $2$ from $z' \in C_{i-3}$, contradicting condition ($iii$). Otherwise, if $\PCp$ applies, then
there exists a vertex $z$ at distance $i-3$ from $u$ and at distance 2 from $x,y$ and which is adjacent to every $w \in C_{i-2}$.
Again, since $C_{i-3}=B^*_1(C_{i-2})\cap B_{i-3}(u)$, we conclude that $z \in C_{i-3}$. This implies that $x$ is at distance $2$ from a vertex of $C_{i-3}$,
contradicting again ($iii$).

Now suppose that $d(u,x)\le i-2$ and that $C_i$ does not contain a vertex at distance $i-1$ from $u$. Since all vertices of $C_{i-1}$ are at distance $i-1$ from $u$ and $C_i\cup C_{i-1}$ is a clique, this implies that $C_i$ has uniform distance $i-2$ from $u$. Since $C_{i-2}$ is at uniform distance $i-2$ from $u$, all vertices of $C_i\cup C_{i-2}$ are adjacent to any vertex $w\in C_{i-1}$ and $C_i\cup C_{i-2}\subset I(w,u)$.  Then $\INCp(u)$ immediately implies that every vertex of $C_i$ is adjacent to every vertex of $C_{i-2}$, which contradicts condition ($ii$). This establishes that each clique $C_i$ of $\eta_{(u,C)}$ is at uniform distance $i$ from $u$, establishing (a).  

By induction hypothesis,
$\eta_{(u,C_{\ell-1})}=\gamma_{(u,C_{\ell-1})}$, and thus to establish
(b), it remains to prove that
$C_{\ell-1}=B^*_1(C_{\ell}) \cap B_{\ell-1}(u)$. We use the following
claim.
\begin{claim}
 \label{clm: lemk-1}
 Let $k\geq 2$, and $C,C',C''$ be three cliques at uniform distance respectively $k, k-1,$ and $k-2$ from $u$. Assume that $C'' = B^*_1(C')\cap B_{k-2}(u)$ and $C' = B^*_1(C)\cap B_1(C'')$. Then $C'=B^*_1(C)\cap B_{k-1}(u).$
\end{claim}
\begin{proof}
As $C'$ has uniform distance $k-1$ to $u$, the direct inclusion is trivial.
Since $C\neq \varnothing$,  $B^*_1(C)\cap B_{k-1}(u)$ is a clique by $\INCz(u)$. Hence by $\INCp(u)$, there exists
$z \in B_{k-2}(u)$ such that $z$ is adjacent to any vertex of $B^*_1(C) \cap B_{k-1}(u)$. Necessarily, $z \in C''$ and we get the reverse inclusion $B^*_1(C) \cap B_{k-1}(u) \subseteq C'$.
\end{proof}

The desired equality $C_{\ell-1}=B^*_1(C_{\ell}) \cap B_{\ell-1}(u)$ follows by applying Claim \ref{clm: lemk-1} with $C:=C_{\ell}$, $C':= C_{\ell-1}$, and $C'':=C_{\ell-2}$.
\end{proof}

\subsection{Fellow traveler property}
Let $G=(V,E)$ be a CB-graph. Denote by $\Upsilon$ the set of normal clique-paths between all pairs of vertices of $G$: $\Upsilon=\{ \gamma_{(u,v)}: (u,v)\in V\times V\}$. A path $(u=u_0,u_1, \ldots, u_k = v)$ between two vertices $u,v$ of a graph $G$ is called a \emph{normal path from $u$ to $v$} if for every $i \in \sg{0, \ldots, k}$ we have $u_i \in C_{i},$ where $\gamma_{(u,v)}=(\sg{u} = C_0, C_1, \ldots, C_k = \sg{v})$ is the normal clique-path from $u$ to $v$. From the definition it follows that every normal path from $u$ to $v$ is a shortest $(u,v)$-path. Denote by $\Upsilon^+$ the set of all normal paths  between all pairs of vertices of $G$.

Let ${\mathcal P}(G)$ be the set of all paths of $G$ and ${\mathcal CP}(G)$ be the set of all clique-paths of $G$.  A \emph{(clique-)path system} $\mathcal P$ \cite{Swiat} is any subset of ${\mathcal P}(G)$ (respectively, of ${\mathcal CP}(G)$). A (clique-)path system $\mathcal P$ is \emph{complete} if any two vertices are endpoints of some (clique-)path in $\mathcal P$. Let $[0,k]^*$ denote the set of integer points
from the segment $[0,k].$ Given a (clique-)path $\gamma$ of length $k=|\gamma|$ in $G$, we can parametrize it and denote by $\gamma:[0,k]^*\rightarrow V(G)$. It will be convenient to extend $\gamma$ over $[0,\infty]$ by setting $\gamma(i)=\gamma(k)$ for any $i>k$. A (clique-)path system $\mathcal P$ of a graph $G$ satisfies the
 \emph{2-sided fellow traveler property} if there are constants $C>0$ and $D\ge 0$ such that for any two paths $\gamma_1,\gamma_2\in \mathcal P$, the following inequality holds  for all $i$:
\[d_G(\gamma_1(i),\gamma_2(i))\le C\cdot \max\{ d_G(\gamma_1(0),\gamma_2(0)),d_G(\gamma_1(\infty),\gamma_2(\infty))\}+D.\]
A \emph{bicombing} of a graph $G$ is a complete (clique-)path system $\mathcal P$ satisfying the $2$--sided fellow traveler property.   If all (clique-)paths in
${\mathcal P}$ are shortest paths of $G$, then
${\mathcal P}$ is called a \emph{geodesic bicombing}.

Our goal in this subsection is to show that in CB-graphs, the systems of normal clique-paths and of normal paths enjoy the $2$-sided fellow traveler property and thus define a geodesic bicombing of $G$:

\begin{theorem}\label{thm: fellowtravdist1}
Let $G$ be a graph with convex balls. For arbitrary vertices $u,u',v,v'$ of $G$,  let $\gamma_{(u,v)}=(\sg{u}=C_0,C_1,\ldots,C_k=\sg{v})$ and $\gamma_{(u',v')}=(\sg{u'}=C'_0,C'_1,\ldots,C'_{k}=\sg{v'})$ be the normal clique-paths from $u$ to $v$ and from $u'$ to $v'$, respectively.
Then for any $i \geq 0$ and for any pair $(x,y) \in C_{i}\times C'_{i}$, we have $d(x,y) \leq 7 \max(d(u,u'), d(v,v')).$
Consequently, the sets of normal clique-paths $\Upsilon$ and of normal  paths $\Upsilon^+$ of $G$ enjoy the $2$-sided fellow
traveler property and thus define a geodesic bicombing of $G$.
\end{theorem}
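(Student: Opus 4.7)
The proof is a fellow-traveler argument established by interpolation: we reduce to comparing pairs of normal clique-paths that differ by a single elementary move at the source or at the sink, and then iterate along shortest paths between $u$ and $u'$ and between $v$ and $v'$.

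The plan is first to establish two one-step lemmas. The \emph{Sink Move Lemma} says: if $u = u'$ and either $v = v'$ or $v \sim v'$, then for every $i$ and every $(x,y) \in C_i \times C'_i$ we have $d(x,y) \leq c$ for some absolute constant $c$ (of order $3$); the \emph{Source Move Lemma} is the symmetric statement with the roles of sources and sinks exchanged. Each is proved by a downward induction on the level $i$, starting at the top where $C_k$ and $C'_k$ are $\sg{v}$ and $\sg{v'}$ (at distance at most $1$) and descending using the local characterization $C_{i-1} = B^*_1(C_i) \cap B_1(C_{i-2})$ from Lemma~\ref{lem: normalcliquepathunicity}. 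The inductive step exploits the $\INCp$ and $\TPCp$ conditions proved in Theorem~\ref{thm: TPC}: given close $x \in C_i$ and $y \in C'_i$, one finds nearby vertices in $C_{i-1}$ and $C'_{i-1}$ by taking a common parent at distance $1$ (when $\TC$ applies) or a pentagonal common grandparent at distance $2$ (when only $\PCp$ applies); Lemma~\ref{lem: deuxpent} is used to control the two-pentagon configurations that arise and to prevent unbounded accumulation of the bound across levels.

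Given these one-step lemmas, the general bound follows by interpolation. Choose shortest paths $u = u_0, \ldots, u_p = u'$ and $v = v_0, \ldots, v_q = v'$ of lengths $p = d(u,u')$ and $q = d(v,v')$. Applying the Sink Move Lemma $q$ times yields a bound on $d(x, y)$ for $(x,y) \in C_{(u,v)}^{(i)} \times C_{(u,v')}^{(i)}$, and applying the Source Move Lemma $p$ times yields a similar bound for $(x,y) \in C_{(u,v')}^{(i)} \times C_{(u',v')}^{(i)}$; summing via the triangle inequality and using the fact that cliques have diameter at most $1$ yields an overall bound of roughly $c(p+q) + O(1)$, which is at most $7\max(p,q)$ after checking the constants. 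The geodesic bicombing conclusion for $\Upsilon$ (and then for $\Upsilon^+$ by lifting vertex-by-vertex, choosing one representative per clique) follows because normal clique-paths are geodesic by construction: the $i$-th clique lies at uniform distance $i$ from the source (Lemma~\ref{lem: normalcliquepathunicity}).

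The main obstacle will be the inductive step of the Sink Move Lemma in the pentagonal case. When $\TC$ does not apply at the current level, one must invoke $\PCp$ to find a common grandparent, creating a level-$2$ gap between the two descending cliques. If such pentagonal obstructions appear repeatedly at successive levels, the bound could in principle drift upward, and controlling this drift requires invoking the diameter control from Lemma~\ref{lem: deuxpent} on the two pentagons sharing an edge that are formed in the process. Tuning the precise per-step constant so as to close the induction while obtaining the final bound $7$ (rather than a larger value) is the delicate quantitative part of the argument.
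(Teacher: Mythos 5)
Your proposal follows essentially the same route as the paper: the paper likewise reduces to two one-step lemmas (Lemmas \ref{lem: cliquepath} and \ref{lem: invcliquepath}, giving per-step costs $3$ for a sink move and $4$ for a source move, each proved level-by-level with the $\TPC$/$\INC$ machinery) and then interpolates along shortest $(u,u')$- and $(v,v')$-paths, so that $3m+4\ell\le 7\max(\ell,m)$. The one caveat is that the source-move lemma is not the ``symmetric statement'' of the sink-move one — normal clique-paths are directed, with levels measured from the source, so the paper proves it by a separate inclusion-alternation argument — and the pentagonal drift you worry about in the sink step is controlled by $\PCd$ rather than by Lemma \ref{lem: deuxpent}.
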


\begin{proof}
We let $\ell:=d(u,u')$ and $m:=d(v,v')$. First we show by induction over $m\geq 0$ that if $\ell=0$, i.e $u=u'$, then for any $i \geq 0$ and $(x,y) \in C_{i}\times C_{i'}$ we have $d(x,y)\leq 3m$. Indeed, if $m=1$, this is a direct consequence of Lemma \ref{lem: cliquepath}. Assume now that $m\geq 2$ and let $(v=v_0, v_1, \ldots, v_m=v')$ be a shortest $(v,v')$-path, and set $v'':=v_{m-1}$.
By applying the induction hypothesis to $\gamma_{(u,v)}$ and $\gamma_{(u,v'')}$ and Lemma \ref{lem: cliquepath} to $\gamma_{(u,v'')}$ and $\gamma_{(u,v')}$, by triangle inequality we deduce that  $d(x,y)\leq 3m$ for every $(x,y)\in C_i \times C_{i'}$ and $i \geq 0$. Similarly we can prove by Lemma \ref{lem: invcliquepath} and induction on $\ell \geq 0$ that if $m=0$, then for any $i \geq 0$ and $(x,y) \in C_{i}\times C_{i'}$ we have $d(x,y)\leq 4\ell$.

Now we prove Theorem \ref{thm: fellowtravdist1} by induction on $(\ell,m)$ together with the product order. The cases $\ell=0$ or $ m=0$ are covered by what we did. If
$\ell=1$ and $m=1$, we apply these two basis cases to the two pairs $\{ \gamma_{(u,v)},\gamma_{(u,v')}\}$,$\{ \gamma_{(u,v')},\gamma_{(u',v')}\}$ and, by
the triangle inequality we deduce that $d(x,y)\le 7$. Hence we can assume that $\ell\geq 1$, $m\geq 1$, and $\ell+m \geq 3$. Let $i \geq 0$ and $(x,y) \in C_{i}\times C'_{i}$.  Let $u'':=u_{\ell-1}$ and $v'':= v_{m-1}$ and let $z \in C''_i$, where $C''_i$ is the $i$-th clique of the normal clique-path $\gamma_{(u'',v'')}$.  By induction hypothesis,  $d(x, z) \leq 7\max\{ d(u,u''), d(v,v'')\}=7\max\{ \ell-1, m-1\}.$ On the other hand, since $u''\sim u'$ and $v''\sim v$, by the case $(\ell, m)=(1,1)$ we get $d(z,y)\le 7$. Consequently, $d(x,y)\le d(x,z)+d(z,y)\le  7\max\{ \ell-1, m-1\}+7=7\max \{ \ell, m\}$, concluding the proof of the theorem.
\end{proof}

In our next results, $G$ is a CB-graph and we follow the notations of Theorem \ref{thm: fellowtravdist1}.
We start with a structural property of normal clique-paths $\gamma_{(u,v)}=(\sg{u}=C_0,C_1,\ldots,C_{k-1},C_k=\sg{v})$ and $\gamma_{(u,v')}=(\sg{u}=C'_0,C'_1,\ldots,C'_{k-1},C'_{k}=\sg{v'})$
in the particular case when $v\sim v'$ and $d(u,v)=d(u,v')=k$.

\begin{lemma}\label{lem: clique2nvx} Let $u,v,v'$ be three vertices of $G$ such that $v\sim v'$ and $d(u,v)=d(u,v')=k$.
Then for any $1\le i\le k$ there exists $j\in \{ i-1,i\}$ for which either $C_j\cap C'_j \neq \varnothing$ or $C_{j}\cup C'_{j}$ is a clique. \end{lemma}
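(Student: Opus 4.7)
The plan is to prove the claim by downward induction on $i$, starting at $i = k$ and working down to $i = 1$. For the base case $i = k$ we take $j = k$: since $v \sim v'$, the set $C_k \cup C'_k = \sg{v, v'}$ is a clique.

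For the inductive step, suppose the statement holds at level $i + 1$ with witness $j' \in \sg{i, i+1}$. If $j' = i$ we take $j = i$ and are done, so assume $j' = i + 1$, i.e., at level $i+1$ either $C_{i+1} \cap C'_{i+1} \neq \varnothing$ or $C_{i+1} \cup C'_{i+1}$ is a clique. In the first case, picking any vertex $x \in C_{i+1} \cap C'_{i+1}$, both $C_i$ and $C'_i$ lie in $S_1(x) \cap I(x, u)$, which is a clique by $\INCz(x, u)$ (Theorem~\ref{thm: well-bridged}(iii)), so $C_i \cup C'_i$ is a clique and $j = i$ works.

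Otherwise $C_{i+1} \cap C'_{i+1} = \varnothing$ and $K := C_{i+1} \cup C'_{i+1}$ is a clique at uniform distance $i+1$ from $u$. Fix $x \in C_{i+1}$, $x' \in C'_{i+1}$ so that $x \sim x'$, and apply $\TPC(u, xx')$. In the triangle case we obtain a vertex $z$ at distance $i$ from $u$ adjacent to both $x$ and $x'$; combining $\INCz(x, u)$ and $\INCz(x', u)$ with the convexity of the balls $B_1(y)$ for $y \in K$ (to propagate the adjacencies of $z$ along the edges of the clique $K$), I will show that $z$ is adjacent to every vertex of $C_{i+1} \cup C'_{i+1}$, so $z \in C_i \cap C'_i$ and $j = i$ works. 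In the pentagon case, $\PCp(u, xx')$ yields a vertex $z$ at distance $i - 1$ from $u$ adjacent to every $w \in S_1(x) \cap I(x, u) \supseteq C_i$ (so $z \in C_{i-1}$), together with a $w' \in S_1(x') \cap I(x', u)$ forming a pentagon $xwzw'x'$ for each such $w$; the symmetric application of $\PCp(u, x'x)$ yields a mirror vertex $z^* \in C'_{i-1}$ and some $w^* \in S_1(x) \cap I(x, u)$. Since $z$ is adjacent to every $w$, in particular $z \sim w^*$, while $z^* \sim w^*$ by construction, so by $\INCz(w^*, u)$ the vertices $z$ and $z^*$ must coincide or be adjacent. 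A further analysis, using $\PCd(u, xx')$ and the fact that both $C_{i-1}$ and $C'_{i-1}$ are cliques, will then conclude that either $C_{i-1} \cap C'_{i-1} \neq \varnothing$ or $C_{i-1} \cup C'_{i-1}$ is a clique, so $j = i - 1$ works.

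The main obstacle is the pentagon subcase. One must combine both applications of $\PCp$ (with the roles of $x$ and $x'$ exchanged) and the stronger constraint $\PCd(u, xx')$, together with the convexity of balls of small radius, to promote the single adjacency $z \sim z^*$ (or $z = z^*$) into the structural statement that the two cliques $C_{i-1}$ and $C'_{i-1}$ either intersect or glue into a single clique, ruling out the configuration in which they are two disjoint cliques connected by only one edge.
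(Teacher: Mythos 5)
Your overall skeleton (downward induction on $i$, disposing of the case $C_{i+1}\cap C'_{i+1}\neq\varnothing$ via $\INCz(x,u)$, then splitting the remaining case according to whether the triangle or the pentagon condition applies to an edge between $C_{i+1}$ and $C'_{i+1}$) is exactly the paper's, but both of your critical subcases have genuine gaps. In the triangle subcase you claim that the vertex $z$ produced by $\TC(u,xx')$ can be shown adjacent to \emph{every} vertex of the clique $K=C_{i+1}\cup C'_{i+1}$, hence lies in $C_i\cap C'_i$. This is stronger than what is needed and does not follow from the tools you invoke: for $y\in C_{i+1}\setminus\{x\}$ and $w\in C_i$, the configuration $z\sim x,w$, $w\sim x,y$, $x\sim y$, $z\nsim y$ is just a $K_4$ minus an edge, which is not forbidden in a CB-graph, and $\INCz$ only controls neighbours of $x$ \emph{inside} $I(x,u)$, which excludes $y$. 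The paper deliberately proves only the weaker (and sufficient) statement that $C_i\cup C'_i$ is a clique, by a different mechanism: if the common neighbour $s$ of $x,x'$ misses some $b\in C'_{i+1}$, then for any $y'\in C'_i$ the $4$-cycle $sy'bx$ forces $y'\sim x$, and $\INCz(x,u)$ then glues $y'$ to all of $C_i$. You would need to replace your propagation argument by something of this kind.

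In the pentagon subcase the proposal stops precisely where the work begins. You correctly extract $z\in C_{i-1}$ and $z^*\in C'_{i-1}$ with $z=z^*$ or $z\sim z^*$, but a single adjacency between two cliques does not make their union a clique, and the ``further analysis'' you defer to is the entire content of the step. The paper's argument is quite different: assuming $\TC(u,xx')$ fails for \emph{all} pairs $(x,x')\in C_{i+1}\times C'_{i+1}$ (not just one fixed pair, which is all your setup gives you), $\PCd(u,xx')$ puts every $z\in C_{i-1}$ at uniform distance $2$ from every vertex of $C'_{i+1}$; then $\INCp(t',z)$ for some $t'\in C'_{i+2}$ produces a vertex $s$ adjacent to $z$ and to all of $C'_{i+1}$, so $s\in C'_i$, and $\INCz(s,u)$ makes $z$ adjacent to all of $C'_{i-1}$. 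Note that this uses the existence of $C'_{i+2}$, i.e.\ $i\le k-2$; the level $i=k-1$ has to be treated separately (the paper does this with a dedicated claim using $\PCd(u,vv')$ directly), and your induction as written does not account for that boundary case either.
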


\begin{proof} We start with the particular case $i=k-1$.

\begin{claim} \label{claim:clique2nvx}
$C_{k-1}\cap C'_{k-1}\neq \varnothing$ or $C_{k-2}\cup C'_{k-2}$ is a clique.
\end{claim}

\begin{proof}
Notice that $C_{k-1}$ and $C'_{k-1}$ are cliques of $G$ coinciding with the intersections $B_{k-1}(u)\cap B_1(v)$ and $B_{k-1}(v)\cap B_1(v')$. Hence, $C_{k-1}\cap C'_{k-1}\neq \varnothing$  exactly when $\TC(u,vv')$ applies. Now assume that $C_{k-1}\cap C'_{k-1}=\varnothing$. This means that $\PCd(u,vv')$ applies so $B_2(v)\cap B_{k-2}(u) = B_2(v')\cap B_{k-2}(u)$ and in particular $C_{k-2}\subseteq B_2(v')$ and $C'_{k-2}\subseteq B_2(v)$. Let $z\in C_{k-2}$ and $z'\in C'_{k-2}$. We assert that $z\sim z'$. Indeed there exists $x\sim v,z'$, and in particular we must have $x\in C_{k-1}$. Thus by $\INCz(x,u)$ we get $z\sim z'$.
\end{proof}

Now, we prove the lemma  by decreasing induction on $i$. The case  $i=k$ is immediate since $C_k=\sg{v}, C'_k=\sg{v'}$, and $v\sim v'$. The case $i=k-1$ is covered by Claim \ref{claim:clique2nvx}. Hence, let $0<i\le k-2$. First suppose that there exists $x\in C_{i+1}\cap C'_{i+1}$. Then $C_i\cup C'_i\subset I(x,u)$. By $\INCz(x,u)$, $C_i \cup C'_i$ forms  a clique, thus we are in the second situation with $j=i$. Therefore, let $C_{i+1}\cap C'_{i+1} = \varnothing$. We can assume that $C_{i}\cap C'_{i}=\varnothing$ and
$C_{i-1}\cap C'_{i-1}=\varnothing$, otherwise we are immediately done.

By induction hypothesis, $C_{i+1}\cup C'_{i+1}$ forms a clique. We distinguish two cases.

\begin{claim}
 \label{claim: cliqueTC}
 If there exists $(x,x')\in C_{i+1}\times C'_{i+1}$ such that $\TC(u,xx')$ applies, then  $C_{i}\cup C'_{i}$ is a clique.
\end{claim}

\begin{proof}
  Pick $(x,x')\in C_{i+1}\times C'_{i+1}$ such that $\TC(u,xx')$
  applies.  Then there exists $s\in B_{i}(u) \cap B_1(x)\cap
  B_1(x')$. Since we assumed that $C_i \cap C'_i = \varnothing$, we can
  assume that $s\notin C'_i$.  So there exists $b\in C'_{i+1}$ such
  that $s\nsim b$. Let $y'\in C'_i$. By $\INCz(x',u)$, we have
  $s\sim y'$.  The $4$-cycle $sy'bx$ cannot be induced so we must have
  $y'\sim x$. Thus $y'$ is adjacent to any other vertex of $C_{i}$ by
  $\INCz(u)$ and consequently $C_{i}\cup C'_{i}$ is a clique.
\end{proof}

\begin{claim}
 \label{claim: cliquePC2}
 If for any $(x,x')\in C_{i+1}\times C'_{i+1}$, $\TC(u,xx')$ does not apply, then  $C_{i-1}\cup C'_{i-1}$ is a clique.
\end{claim}
\begin{proof}
 Let $z\in C_{i-1}$ and $x\in C_{i+1}$. We will show that $z$ is adjacent to any vertex of $C'_{i-1}$. Note that $\TPCd(u,xx')$ applies for any $x'\in C'_{i+1}$. In particular, we get that $z\in B_2(x)\cap B_{i-1}(u) = B_2(x')\cap B_{i-1}(u)$, so $z$ is at uniform distance $2$ from any vertex $x'$ of $C'_{i+1}$. As $i\leq k-2$, we can consider a vertex $t'\in C'_{i+2}$. By $\INCp(t', z)$, there exists a vertex $s$ adjacent to $z$ and to every vertex of $C'_{i+1}$. In particular, we must have $d(u,s)=i$ so $s \in C'_i$. Thus for every $z'\in C_{i-1}$, we get $z\sim z'$ by $\INCz(s,u)$ and we are done.
\end{proof}

We can now conclude applying Claim \ref{claim: cliqueTC} or Claim \ref{claim: cliquePC2}.
\end{proof}

\begin{lemma}\label{lem: cliquepath}
Let $u,v,v'$ be three vertices of $G$ such that $v\sim v'$ and $k:=d(u,v)\ge k':= d(u,v')$. Let $\gamma_{(u,v)}=(\sg{u}=C_0,C_1,\ldots,C_{k-1},C_k=\sg{v})$ and $\gamma_{(u,v')}=(\sg{u}=C'_0,C'_1,\ldots,C'_{k'-1},C'_{k'}=\sg{v'})$. Then for any $i \geq 0$ and any $(x,y) \in C_{i}\times C'_{i},$  we have $d(x,y) \leq 3.$
\end{lemma}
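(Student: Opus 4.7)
The plan is to split according to whether $k=k'$ or $k=k'+1$ (the only possibilities, since $v\sim v'$ forces $|k-k'|\le 1$).

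In the case $k=k'$, I would apply Lemma \ref{lem: clique2nvx} directly. For $i=0$ the bound is trivial, so let $i\ge 1$. The lemma produces an index $j\in\{i-1,i\}$ at which either $C_j\cap C'_j\neq\varnothing$ or $C_j\cup C'_j$ is a clique. If $j=i$, then either $x,y$ share a common vertex in $C_i\cap C'_i$ (giving $d(x,y)\le 2$ via the clique structure of each $C_i,C'_i$) or $x\sim y$ directly, so $d(x,y)\le 2$. If $j=i-1$, I would pick $z\in C_{i-1}$ and $z'\in C'_{i-1}$ with either $z=z'$ or $z\sim z'$ (from the merging at level $i-1$); condition (i) of normal clique-paths then forces $z\sim x$ and $z'\sim y$, yielding a chain $x-z-z'-y$ of length at most $3$.

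For the case $k=k'+1$, the key observation is that $v'\in C_{k-1}$, since $v'\sim v$ and $d(u,v')=k-1$. For $i=k$ and $i=k-1$ this immediately gives $d(x,y)\le 1$: at level $k$ we have $d(v,v')=1$, while at level $k-1$ the singleton $C'_{k-1}=\{v'\}$ sits inside the clique $C_{k-1}$, so $x$ and $v'$ are both in $C_{k-1}$. For $i\le k-2$, I would prove by decreasing induction on $i$ that either $C_i\subseteq C'_i$ or $C'_i\subseteq C_i$, with the direction alternating. The base case $C_{k-2}\subseteq C'_{k-2}$ follows because every vertex of $C_{k-2}$ is adjacent to $v'\in C_{k-1}$ and lies in $B_{k-2}(u)$, while $C'_{k-2}=B_1(v')\cap B_{k-2}(u)$. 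The inductive step uses the recursive formulas $C_{i-1}=B_1^*(C_i)\cap B_{i-1}(u)$ and $C'_{i-1}=B_1^*(C'_i)\cap B_{i-1}(u)$ together with the fact that $B_1^*(\cdot)$ is inclusion-reversing: an inclusion $C_i\subseteq C'_i$ yields $C'_{i-1}\subseteq C_{i-1}$, and vice versa. In either direction of inclusion, $x$ and $y$ both lie in a common clique at level $i$, so $d(x,y)\le 1$.

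The main obstacle is the sub-case analysis for $k=k'$ with $j=i-1$, where one has to carefully chain three edges together while respecting condition (i); by contrast, the case $k=k'+1$ is conceptually cleaner thanks to the alternating inclusions. Combining both cases yields the bound $d(x,y)\le 3$, as required.
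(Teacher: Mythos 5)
Your proposal is correct and follows essentially the same route as the paper: the case $k=k'$ is handled by Lemma \ref{lem: clique2nvx} (your chaining $x\text{--}z\text{--}z'\text{--}y$ just makes explicit how the merging at level $i$ or $i-1$ yields the bound $3$), and the case $k=k'+1$ uses $v'\in C_{k-1}$ together with the alternating inclusions, which is exactly the content of Lemma \ref{lem: inclalt}. No gaps.
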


\begin{proof}
If $d(u,v')=k$, then the desired result is a consequence of Lemma \ref{lem: clique2nvx}. So, suppose that $d(u,v')=d(u,v)-1=k-1$. Then  $v' \in C_{k-1}$, so by Lemma \ref{lem: inclalt} for every $i \in \sg{0, \ldots, k-1}$, either $C_{i}\subseteq C'_{i}$ or $C'_{i} \subseteq C_{i}$ according to the parity of $i$. Hence we are also done in this case.
\end{proof}

The next result has to be viewed as a symmetric version of the previous lemma, when we exchange the sources and sinks (i.e., $u\sim u'$ and $v=v'$).

\begin{lemma}
\label{lem: invcliquepath}
Let $u,u',v$ be three vertices of $G$ such that $u\sim u'$ and $k:=d(u,v)\ge k':=d(u',v)$. Let $\gamma_{(u,v)}=(\sg{u}=C_0,C_1,\ldots,C_{k-1},C_k=\sg{v})$ and $\gamma_{(u',v)}=(\sg{u'}=C'_0,C'_1,\ldots,C'_{k-1},C'_{k}=\sg{v})$ with the convention that $C'_i=\{ v\}$ if $i>k'$. Then for any
$0\le i\le k$ and any $(x,y) \in C_i \times C'_i$ we have:
\begin{itemize}
 \item $d(x,y)=1$ if $d(u,v)>d(u',v)$;
 \item $d(x,y) \leq 4$ if $d(u,v)=d(u',v)$.
\end{itemize}
\end{lemma}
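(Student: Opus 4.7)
The plan is to treat the two cases of the lemma separately.

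\emph{Case 1} ($k > k'$, so $k' = k-1$): I will prove the stronger inclusion $C'_i \subseteq C_{i+1}$ for every $0 \le i \le k-1$. This suffices: since $C_i \cap C_{i+1} = \emptyset$ and $C_i \cup C_{i+1}$ is a clique (axiom~$(i)$ of normal clique-paths), any $x \in C_i$ and $y \in C'_i \subseteq C_{i+1}$ are distinct and adjacent, so $d(x,y) = 1$. The crucial observation is a \emph{geodesic extraction}: since $u \sim u'$ and $d(u',v) = k-1 = d(u,v) - 1$, the vertex $u'$ lies in $I(u,v) \cap S_1(u)$, and for any $z \in C_j$ with $j \ge 2$, successively picking $z_{j-1} \in C_{j-1}, z_{j-2} \in C_{j-2}, \ldots, z_2 \in C_2$ using axiom~$(i)$ produces a $(u,z)$-walk $u, u', z_2, \ldots, z_{j-1}, z$ of length $j = d(u,z)$, which must therefore be geodesic. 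In particular $u' \in I(u,z)$, whence $d(u',z) = j-1$; the identity $d(u',z) + d(z,v) = k-1 = d(u',v)$ then forces $z \in I(u',v)$.

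Using this, I will establish by downward induction on $j$ that $C_j = C'_{j-1}$ for all $2 \le j \le k$ (base case $j = k$: both cliques equal $\{v\}$). Assuming $C_{j+1} = C'_j$, every $z \in C_j$ is adjacent to all of $C_{j+1} = C'_j$ (clique) and lies at distance $j-1$ from $u'$, so $z \in B_1^*(C'_j) \cap B_{j-1}(u') = C'_{j-1}$. Conversely any $y \in C'_{j-1}$ satisfies $d(u,y) = 1 + d(u',y) = j$ (using $u' \in I(u,v)$ and consequently $I(u',v) \subseteq I(u,v)$) and is adjacent to every vertex of $C'_j = C_{j+1}$, hence lies in $C_j$. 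The residual case $i = 0$ of the desired inclusion is treated by applying the geodesic extraction with $j = 2$: every $z \in C_2$ satisfies $d(u',z) = 1$, so $u' \in B_1^*(C_2) \cap B_1(u) = C_1$, giving $C'_0 = \{u'\} \subseteq C_1$.

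\emph{Case 2} ($k = k'$): I plan to reduce to Case~1 by applying $\TPC(v, uu')$ to the edge $uu'$, which is at uniform distance $k$ from $v$. If the triangle condition applies, there exists $z \sim u, u'$ with $d(z,v) = k-1$; two invocations of Case~1 to the triples $(u,z,v)$ and $(u',z,v)$ show that the cliques $D_i$ of $\gamma_{(z,v)}$ satisfy $D_i \subseteq C_{i+1} \cap C'_{i+1}$, so any $w \in D_i$ (nonempty by Lemma~\ref{lem: INCclique}) lies in both $C_{i+1}$ and $C'_{i+1}$, witnessing $d(x,y) \le d(x,w) + d(w,y) \le 2$. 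If instead the pentagon condition applies, there is an induced pentagon $uwzw'u'$ with $d(z,v) = k-2$ and $d(w,v) = d(w',v) = k-1$, and chaining Case~1 twice (through $w$ then $z$ on the $u$-side, and symmetrically via $w'$ on the $u'$-side) gives $D_i \subseteq C_{i+2} \cap C'_{i+2}$, yielding $d(x,y) \le 2$ for all indices $j \ge 2$.

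The main obstacle is the boundary index $j = 1$ in the pentagon sub-case. Here Case~1 applied to $(u,w,v)$ only yields $w \in C_1$, and symmetrically $w' \in C'_1$. The inducedness of the pentagon forces $w \nsim w'$, while both vertices are adjacent to $z$, so $d(w,w') = 2$. The triangle inequality then gives $d(x,y) \le d(x,w) + d(w,w') + d(w',y) \le 1 + 2 + 1 = 4$, which exactly accounts for the constant $4$ in the statement. The extremal indices $j = 0$ (with $d(u,u') = 1$) and $j = k$ (with $x = y = v$) are trivial, completing the proof.
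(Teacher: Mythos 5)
Your Case 1 contains a genuine gap, and it is precisely the point around which the paper's own proof is organized. The ``geodesic extraction'' producing the walk $u,u',z_2,\ldots,z_{j-1},z$ requires the edge $u'\sim z_2$ for $z_2\in C_2$, i.e.\ it requires $u'\in B_1^*(C_2)\cap B_1(u)=C_1$ --- but $\{u'\}=C'_0\subseteq C_1$ is one of the inclusions you are trying to prove, and your later treatment of the ``residual case $i=0$'' invokes the extraction with $j=2$, which is circular. Worse, the intermediate claims $C_j=C'_{j-1}$ and $C'_i\subseteq C_{i+1}$ are simply false. Take the bridged (hence CB) graph on $\{u,u',a,z,z',v\}$ obtained by stacking the four triangles $uu'a$, $u'az'$, $azz'$, $zz'v$ (edges $uu',ua,u'a,u'z',az,az',zz',zv,z'v$). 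Then $d(u,v)=3$, $u\sim u'$, $d(u',v)=2$, and one computes $C_1=\{a\}$, $C_2=\{z,z'\}$, $C'_0=\{u'\}$, $C'_1=\{z'\}$. Here $u'\nsim z$, so $u'\notin C_1$ and $C'_0\not\subseteq C_1$; moreover $z\in C_2$ has $d(u',z)=2\neq j-1$, and $C_2\neq C'_1$. The lemma's conclusion still holds ($u'\sim a$), but not for your reason. The paper explicitly warns that only the one-directional implication of Claim~\ref{clm: altinclusions} is available ($C_{i-1}\subseteq C'_{i-2}$ implies $C'_{i-3}\subseteq C_{i-2}$, not conversely), introduces the breakdown index $i_0$ with a witness $p\in C_{i_0-1}\setminus C'_{i_0-2}$ satisfying $d(u',p)\geq i_0-1$, and below $i_0$ abandons inclusions altogether, proving directly via Claims~\ref{clm: disjointcliques} and~\ref{clm: zigzag} that $C_i\cup C'_i$ is a clique. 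Any repair of your argument would have to confront this asymmetry.

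Your Case 2 is essentially the paper's: apply $\TPCz(v,uu')$ and chain Case 1 along the resulting triangle or pentagon, using nonemptiness of the intermediate cliques (Lemma~\ref{lem: INCclique}) and the triangle inequality. Two small inaccuracies there: from $D_i\subseteq C_{i+2}\cap C'_{i+2}$ one gets $d(x,y)\leq 4$, not $\leq 2$; and the index-shifting you introduce creates the $j=1$ boundary patch, which the paper avoids entirely by comparing cliques at the \emph{same} index $i$ in each of the four applications of Case 1 (pairs $uw_1$, $w_1z$, $zw_2$, $w_2u'$), each contributing $1$ to the bound $4$. But these are cosmetic; the substantive defect is Case 1.
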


\begin{proof}
Assume first that $k=d(u,v)>d(u',v)$. Then $d(u',v)=k-1$. Note that in this case for any $1\le i\le k$ the inclusion $C'_{i-1}\subseteq B_i(u)$ holds.

\begin{claim}
 \label{clm: altinclusions}
 For any $2\le i\le k$, the inclusion $C_{i}\subseteq C'_{i-1}$
 implies the inclusion $C'_{i-2}\subseteq C_{i-1}.$
\end{claim}

\begin{proof}
  Note that
  $C'_{i-2}\subseteq B_{i-2}(u')\cap I(u',v) \subseteq B_{i-1}(u)\cap
  I(u,v)$. If $C_{i}\subseteq C'_{i-1}$ holds, then any vertex of
  $C'_{i-2}$ is adjacent to any vertex of $C'_{i-1}$, and therefore to
  any vertex of $C_{i-1}$, so we have $C'_{i-2}\subseteq C_{i-1}$.
\end{proof}

A particularly nice case occurs when the inclusions between the
$C_i$'s and the $C'_{i-1}$'s alternate at each level, since in this
case we are immediately done. However such an inclusion alternation
does only hold in the sense of Claim \ref{clm: altinclusions},
i.e. the fact that $C'_{i-1}\subseteq C_i$ does not imply that
$C_{i-1}\subseteq C'_{i-2}$. Hence we consider the maximal index
$2\le i_0\le k-1$ (if it exists) such that
$C'_{i_0-1}\subseteq C_{i_0}$ and $C_{i_0-1}\not\subseteq
C'_{i_0-2}$. If $i_0$ does not exist, then we are done because by
Claim \ref{clm: altinclusions} the inclusion alternation
$C'_{k-2} \subseteq C_{k-1}$, $C_{k-2}\subseteq C'_{k-3}$,
$C'_{k-4}\subseteq C_{k-3}$ hold and so on, until we reach the vertex
$u'$. Otherwise, there is such an inclusion alternation until we reach
$C_{i_0}$. This means that for any $i \geq i_0-1$ and any
$(x,y) \in C_{i}\times C'_i$ we have $d(x,y) \leq 1$. Now we show that
this inequality still holds for all $(x,y) \in C_{i}\times C'_i$ with
$i<i_0$. 

\begin{claim}
\label{clm: triv}
$C'_{i_0 - 2}\cup C_{i_0 - 1}$ is a clique.
\end{claim}
\begin{proof}
Pick any  $a\in  C'_{i_0 -1}\subseteq C_{i_0}$. By $\INCz(a,u)$, any two vertices of $C'_{i_0 - 2}\cup C_{i_0 - 1}$ are adjacent. \end{proof}

Since $C_{i_0}\neq \varnothing$ and $C_{i_0-1}\not\subseteq C'_{i_0-2}$, there exists $p \in  C_{i_0-1}\setminus C'_{i_0-2}$. Then $C'_{i_0-1}\subseteq C_{i_0}$ implies that
$p$ is adjacent to all vertices of $C'_{i_0-1}$, hence the only reason why $p$ does not belong to $C'_{i_0 - 2}$ is that $d(u',p)=i_0-1$.

\begin{claim}\label{clm: disjointcliques}
For any $1 \leq i < i_0-2$, there is no edge between $C'_{i-1}$ and $C_{i+1}$. In particular, $C'_{i-1} \cap C_{i} = \varnothing$.
\end{claim}

\begin{proof}
  Suppose not, and let $ab$ be an edge with $a \in C'_{i-1}$ and
  $b \in C_{i+1}$. Since $i+1 < i_0-1$, we have $b \in I(u,p)$.  Then
  we can find a path of length $i_0-2$ from $u'$ to $p$ going via
  $ab$, with exactly one vertex in each clique $C'_j$ for
  $0\leq j\leq i-1$ and exactly one vertex in each clique $C_j$ for
  $ i+1 \leq j \leq i_0-1$, leading to a contradiction.
\end{proof}

\begin{claim}
\label{clm: zigzag}
For any $0\leq i\leq i_0-2$, the unions $C'_i\cup C_{i+1}$ and
$C'_i\cup C_i$ are cliques.
\end{claim}

\begin{proof}
  For every $0 \leq i \leq i_0-2$, pick $u_i \in C_i$ and
  $u'_i \in C_i'$ and consider the paths
  $u = u_0, u_1, \ldots, u_{i_0-1} = p$ and
  $u' = u'_0, u'_1, \ldots, u'_{i_0-2}$. By the definition of $p$,
  $u'_{i_0-3} \neq u_{i_0-2}$, and by Claim~\ref{clm:
    disjointcliques}, $u'_{i-1} \neq u_i$ for $1 \leq i \leq
  i_0-3$. Consequently, these two paths are vertex-disjoint. By
  Claim~\ref{clm: triv}, $p \sim u'_{i_0-2}$.  By $\INCz(v)$,
  $u' \sim u_1$. Since $d(u', p)=i_0-1$, we have
  $d(u_i,u') = d(u'_i,u') = i$ for every $1 \leq i \leq
  i_0-2$. Observe that if $u_{i+1} \sim u_i'$ for
  $1 \leq i \leq i_0-2$, then by $\INCz(u')$, we have that
  $u_i \sim u_i'$ and by $\INCz(u)$, we then have that
  $u_i \sim u'_{i-1}$.  Since $p = u_{i_0-1}$ is adjacent to
  $u'_{i_0-2}$, we deduce that $C'_{i} \cup C_i $ is a clique for
  $1 \leq i \leq i_0-2$ and that $C'_{i-1} \cup C_i $ is a clique for
  $1 \leq i \leq i_0-2$. The claim then follows since
  $C'_{i_0-1} \cup C_{i_0-2} $ is a clique by Claim~\ref{clm: triv}
  and since $C'_0 \cup C_0 =\{u',u\}$ is a clique.
\end{proof}

By Claim \ref{clm: zigzag}, for any $i < i_0$ and any $(x,y) \in C'_{i}\times C_i$ we have $d(x,y) \leq 1$. This concludes the proof in the case $d(u,v) > d(u',v)$. Now, assume that $k = d(u,v) = d(u',v)$ and apply $\TPCz(v,uu')$. If the triangle condition holds, then there exists $z \in B_{k-1}(v)$ with $z \sim u,u'$. Applying the
previous case to $v$ and the pairs $uz$ and $u'z$, by the triangle inequality we conclude that for any $(x,y)\in C_i\times C'_i$ we have $d(x,y) \leq 2.$
If the pentagon condition holds, then there exists $w_1, w_2, z$ with $d(w_1, v)=d(w_2, v) = k-1$, $d(z,v) = k-2$, $w_1 \sim u$, $w_2\sim u'$ and $z \sim w_1, w_2$. By the previous case applied to $v$ and the pairs $u w_1$, $w_1z$, $zw_2$, and $w_2u'$, the triangle inequality implies
that for any $(x,y)\in C_i\times C'_i$ we have $d(x,y) \leq 4$.
\end{proof}

\subsection{Biautomaticity}
\label{sec: biauto}
In this subsection, we apply the previous results to show that  CB-groups are biautomatic.
We continue by recalling the definition of biautomatic group \cite{ECHLPT,BrHa}.  Let $G=(V,E)$ be a graph and
suppose that $\Gamma$ is a group acting geometrically by automorphisms
on $G$.  These assumptions imply that the graph $G$ is locally finite
and that the degrees of the vertices of $G$ are uniformly bounded.
 The action of $\Gamma$ on $G$ induces the action
of $\Gamma$ on the set ${\mathcal P}(G)$ of all paths of $G$.  A path
system ${\mathcal P}\subseteq {\mathcal P}(G)$ is called
$\Gamma$--\emph{invariant} if $g\cdot \gamma \in \mathcal P$, for all
$g\in \Gamma$ and $\gamma \in \mathcal P$.

Let $\Gamma$ be a group generated by a finite set $S$. A \emph{language} over $S$ is some set of words in $S\cup S^{-1}$ (in the
free monoid $(S\cup S^{-1})^{\ast}$).
A language over $S$ defines a $\Gamma$--invariant path system in the Cayley graph Cay$(\Gamma,S)$.
A language is \emph{regular} if it is accepted by some finite state automaton.
A \emph{biautomatic structure} is a pair $(S,\mathcal L)$, where $S$ is as above, $\mathcal L$ is a regular language over $S$, and
the associated path system in Cay$(\Gamma,S)$ is a bicombing. A group is
\emph{biautomatic} if it admits a biautomatic structure.
We use specific conditions implying biautomaticity for groups acting geometrically on graphs. The method, relying on the notion of locally recognized path system, was developed by {\'S}wi{\c{a}}tkowski \cite{Swiat}.

Let $G$ be a graph and let $\Gamma$ be a group acting geometrically on $G$.
Two paths $\gamma_1$ and $\gamma_2$ of $G$ are $\Gamma$-\emph{congruent} if there is $g\in \Gamma$ such that $g\cdot\gamma_1=\gamma_2$. Denote by ${\mathcal S}_k$ the set of $\Gamma$-congruence classes
of paths of length $k$ of $G$.  Since $\Gamma$ acts cocompactly on $G$, the sets ${\mathcal S}_k$ are finite for any natural $k$. For any path $\gamma$ of $G$, denote by $[\gamma]$ its $\Gamma$-congruent class. For a subset $R\subset {\mathcal S}_k$, let ${\mathcal P}_R$ be the path system in $G$ consisting of all paths $\gamma$ satisfying the following two conditions:
\begin{itemize}
\item[(1)] if $|\gamma|\ge k$, then $[\eta]\in R$ for any subpath $\eta$ of length $k$ of $\gamma$;
\item[(2)] if $|\gamma|<k$, then $\gamma$ is a prefix of some path $\eta$ such that $[\eta]\in R$.
\end{itemize}

A path system $\mathcal P$ in $G$ is $k$--\emph{locally recognized} if for some $R\subset {\mathcal S}_k$, we have ${\mathcal P}={\mathcal P}_R$, and $\mathcal P$ is \emph{locally recognized}
if it is $k$--locally recognized for some $k$. The following result of  {\'S}wi{\c{a}}tkowski  \cite{Swiat} provides sufficient conditions of biautomaticity in terms of local recognition and bicombing.

\begin{theorem}[\cite{Swiat}*{Corollary 7.2}]\label{swiat}\label{th:swiat} Let $\Gamma$ be a group acting geometrically on a graph $G$ and let $\mathcal P$ be a path system in
$G$ satisfying the following conditions:
\begin{itemize}
\item[(1)] $\mathcal P$ is locally recognized;
\item[(2)] there exists $v_0\in V(G)$ such that any two vertices from the orbit $\Gamma \cdot v_0$ are connected by a path from $\mathcal P$;
\item[(3)] $\mathcal P$ satisfies the $2$--sided fellow traveler property.
\end{itemize}
Then $\Gamma$ is biautomatic.
\end{theorem}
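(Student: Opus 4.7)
The plan is to construct an explicit biautomatic structure $(S, \mathcal{L})$ for $\Gamma$ from the data of the geometric action on $G$ and the path system $\mathcal P$. First, I would fix the base vertex $v_0$ from condition (2) and invoke the Milnor--Schwarz lemma: since $\Gamma$ acts properly and cocompactly on the locally finite graph $G$, the finite set $S := \{g \in \Gamma \setminus \{1\} : d_G(v_0, g\cdot v_0) \leq D\}$ generates $\Gamma$ for $D$ large enough, and the orbit map $\varphi : \Gamma \to V(G)$, $g \mapsto g\cdot v_0$, is a $\Gamma$--equivariant quasi-isometry from $\mathrm{Cay}(\Gamma, S)$ to $G$. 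Any path in $G$ between orbit vertices $\varphi(g_1), \varphi(g_2)$ then translates, via tracking nearby orbit points, into a word in $(S \cup S^{-1})^\ast$ representing $g_1^{-1}g_2$.

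Next, I would define $\mathcal{L}$ to be the set of such words associated to paths of $\mathcal{P}$ that start at $v_0$ and end in $\Gamma \cdot v_0$. Condition (2) immediately gives completeness of $\mathcal{L}$: every element of $\Gamma$ is represented by at least one word. The main work is showing that $\mathcal{L}$ is a regular language, and this is where $k$--local recognizability enters. Because $\Gamma$ acts cocompactly, the set $\mathcal{S}_k$ of $\Gamma$--congruence classes of length--$k$ paths in $G$ is finite; so is the set of augmentations by the bounded bookkeeping data needed to read off the $S$--letter corresponding to the next step. I would take these augmented congruence classes as states of a finite automaton, with transitions determined by how a length--$k$ window shifts after one additional edge, and accepting states those whose concatenation with the bookkeeping data produces a path in $\mathcal{P}_R$.

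Finally, I would transfer the $2$--sided fellow traveler property from $\mathcal{P}$ in $G$ to $\mathcal{L}$ in $\mathrm{Cay}(\Gamma, S)$ using the quasi-isometry $\varphi$. If $w_1, w_2 \in \mathcal{L}$ correspond to paths $\gamma_1, \gamma_2 \in \mathcal{P}$ with starting vertex $v_0$ and endpoints $g_1\cdot v_0, g_2 \cdot v_0$, then $d_{\mathrm{Cay}(\Gamma,S)}(w_1(i), w_2(i))$ is, up to a multiplicative and additive constant depending only on $\varphi$, bounded by $d_G(\gamma_1(i), \gamma_2(i))$, which by hypothesis (3) is bounded linearly in $\max\{d_G(g_1\cdot v_0, g_2\cdot v_0), 1\}$, hence linearly in $\max\{d_{\mathrm{Cay}(\Gamma,S)}(1, g_1^{-1}g_2), 1\}$. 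The same reasoning applies to words starting at an arbitrary group element by equivariance.

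The main obstacle, in my view, will be the bookkeeping required to make the automaton genuinely finite while simultaneously tracking enough information to output $S$--letters along the way. This is not conceptually hard but technically delicate, and handling the edge case where a path of $\mathcal P$ has length less than $k$ (so that condition (2) in the definition of $\mathcal{P}_R$ applies) requires a separate treatment at the automaton's initial states. Once these technicalities are resolved, the three hypotheses of the Epstein--Cannon definition of a biautomatic structure are satisfied by $(S, \mathcal{L})$, and the theorem follows.
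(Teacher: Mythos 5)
This theorem is quoted from \'{S}wi\c{a}tkowski \cite{Swiat} (Corollary 7.2); the paper states it without proof, so there is no internal argument to compare yours against. Your outline follows the standard route, which is essentially the strategy of \cite{Swiat} itself: pass to the Cayley graph via the orbit quasi-isometry from the Milnor--\v{S}varc lemma, read words off paths of $\mathcal P$, obtain regularity from a finite-window automaton whose states are built on the finitely many $\Gamma$--congruence classes, and transfer the $2$--sided fellow traveler bound through the quasi-isometry. All three hypotheses are used where they should be.

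Two points where your sketch is thinnest, and where the real work lies. First, the passage from a path $\gamma \in \mathcal P$ to a word in $(S\cup S^{-1})^{\ast}$ requires choosing, for each vertex $\gamma(i)$, a group element $g_i$ with $g_i\cdot v_0$ uniformly close to $\gamma(i)$ and emitting the letter $g_i^{-1}g_{i+1}$; for the resulting language to be well defined, $\Gamma$--equivariant and regular, this choice must be governed by a rule depending only on the $\Gamma$--congruence class of a bounded window of $\gamma$ together with the previously chosen coset. This is exactly the ``augmented state'' you allude to, but it is the crux rather than a side issue, and it is also where the reparametrization (a path of length $n$ in $G$ does not yield a word of length exactly $n$) must be controlled so that the fellow-traveler constants survive the change of parameter. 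Second, if the stabilizer of $v_0$ is nontrivial, a word read off a path from $v_0$ to $g\cdot v_0$ determines $g$ only up to the finite stabilizer, so completeness of the induced path system in $\mathrm{Cay}(\Gamma,S)$ requires the usual correction by stabilizer elements. Neither issue is a conceptual obstruction, and with them resolved your argument does establish the theorem; they are precisely the technicalities that the locally-recognized-path-system formalism of \cite{Swiat} is designed to package.
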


Based on Theorems \ref{thm: fellowtravdist1} and \ref{th:swiat}, we obtain the following result:

\begin{theorem}\label{thm: biauto}
Let  $G$ be a CB-graph and $\Gamma$ be a group acting geometrically on $G$. Then $\Gamma$ is biautomatic.
\end{theorem}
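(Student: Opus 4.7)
The plan is to apply {\'S}wi{\c{a}}tkowski's Theorem \ref{swiat} to the $\Gamma$-invariant path system $\Upsilon^+$ of normal paths in $G$, so I would need to verify conditions (1)--(3) of that theorem. Conditions (2) and (3) come essentially for free: for (2), any two vertices of $G$ (in particular any two vertices of $\Gamma \cdot v_0$) are the endpoints of a normal path, since the cliques $C_{(u,v)}^{(i)}$ are all nonempty in a CB-graph by Lemma \ref{lem: INCclique}; and for (3), the 2-sided fellow traveller property for $\Upsilon^+$ is exactly Theorem \ref{thm: fellowtravdist1}. Note that $\Upsilon^+$ is $\Gamma$-invariant because the normal clique-paths $\gamma_{(u,v)}$ are defined canonically from the metric, which is preserved by $\Gamma$.

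The bulk of the work is therefore condition (1), namely that $\Upsilon^+$ is locally recognized. I would choose a window size $N$ (some fixed small integer, roughly $N=6$ or $7$, chosen so that all the local conditions in the definition of a normal clique-path fit inside an $N$-window) and set
\[
R \;:=\; \bigl\{\,[\eta] \in \mathcal{S}_N \,:\, \eta \text{ is a subpath of length } N \text{ of some normal path}\,\bigr\}.
\]
Since $\Gamma$ acts cocompactly on $G$, the set $\mathcal{S}_N$ is finite, so $R$ is a finite set of congruence classes. The inclusion $\Upsilon^+ \subseteq \mathcal{P}_R$ is immediate from the definition of $R$, and (for prefixes shorter than $N$) from the fact that every finite normal path extends to arbitrarily long ones in $G$.

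The delicate direction is the converse inclusion $\mathcal{P}_R \subseteq \Upsilon^+$: I must show that a path $\gamma = (u_0, u_1, \ldots, u_n)$ all of whose length-$N$ subpaths are $\Gamma$-congruent to subpaths of normal paths is itself normal. This is where Lemma \ref{lem: normalcliquepathunicity} is the key tool. The idea is that along any candidate normal path, the clique $C_i$ containing the $i$-th vertex is determined by the local structure around $u_i$ (via condition (iv) and the characterization in terms of $B_1^*$ and $B_1$), and the local conditions (i)--(iii) each involve at most four consecutive cliques, so they are verifiable within a window of bounded size. Using the ``no isometric cycle of length $\neq 3, 5$'' characterization of CB-graphs (Theorem \ref{thm: well-bridged}), local geodecity on a window of size $N$ propagates to global geodecity, so the reconstructed clique sequence along $\gamma$ is a normal clique-path from $u_0$ to $u_n$; by the uniqueness half of Lemma \ref{lem: normalcliquepathunicity}, it coincides with $\gamma_{(u_0, u_n)}$, which forces $\gamma \in \Upsilon^+$.

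The main obstacle I expect is selecting the window size $N$ correctly and carrying out this reconstruction: the cliques $C_i$ are not part of the data of a path in $G$, so one must show that the information needed to certify membership in $\Upsilon^+$ can really be encoded purely by congruence classes of length-$N$ vertex-subpaths. Once this local-to-global reconstruction is established, the three hypotheses of Theorem \ref{swiat} are all met, and so $\Gamma$ is biautomatic.
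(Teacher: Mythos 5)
Your overall strategy (Świątkowski's criterion, with conditions (2) and (3) supplied by Lemma \ref{lem: INCclique} and Theorem \ref{thm: fellowtravdist1}) is the right one, but there is a genuine gap in condition (1), and it is exactly at the point you flag as ``the main obstacle'': the cliques $C_i$ are not part of the data of a vertex path in $G$, and your argument never actually recovers them. Condition (iv) of normal clique-paths determines $C_i$ from the \emph{cliques} $C_{i-1}$ and $C_{i+1}$ (as $B_1^*(C_{i+1})\cap B_1(C_{i-1})$), not from the vertices $u_{i-1},u_{i+1}$, and the recursion is anchored at the singleton $\{v\}$ at the sink --- global information. A $\Gamma$-congruence class of a length-$N$ vertex window of a normal path does not record which cliques its vertices sit in, and overlapping windows of an arbitrary path in $\mathcal P_R$ come with certificates (clique assignments) that need not glue consistently; so the inclusion $\mathcal P_R\subseteq \Upsilon^+$ is not established by what you wrote, and it is not clear that $\Upsilon^+$ is locally recognized at all. (Your appeal to ``local geodesicity propagates to global geodesicity'' via Theorem \ref{thm: well-bridged} also does not substitute for this: being a geodesic is far from being normal.)

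The paper resolves precisely this difficulty by changing the underlying graph: it replaces $G$ by the $1$-skeleton $\beta(G)$ of the barycentric subdivision of the clique complex, on which $\Gamma$ still acts geometrically, and replaces $\Upsilon^+$ by the system $\Upsilon^*$ of paths $\gamma^*_{(u,v)}=(C_0, C_0\cup C_1, C_1,\dots)$ whose \emph{vertices are the cliques themselves}. There the local conditions (i)--(iv), each involving a bounded number of consecutive cliques, become conditions on short vertex windows, and Lemma \ref{lem: normalcliquepathunicity} (the uniqueness half) gives $3$-local recognizability; the fellow traveller constants only get multiplied by a bounded factor under the passage to $\beta(G)$. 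If you want to salvage your version, you must either carry out this passage (or an equivalent encoding of the cliques into the path data) explicitly; as it stands, condition (1) of Theorem \ref{swiat} is asserted rather than proved.
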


\begin{proof}
As mentioned above, since $\Gamma$ acts geometrically on $G$, the graph $G$ has bounded degrees, so all cliques of $G$ are finite and we can use all previous results.
Recall that every group action of $\Gamma$ on $G$ induces a group action of $\Gamma$ on the barycentric subdivision $\beta(X(G))$
of the clique complex $X(G)$, and hence on the graph $\beta(G):= \beta(X(G))^{(1)}$. Moreover, if $\Gamma$ acts geometrically
on $G$, then $\Gamma$ also acts geometrically on $\beta(G)$. Now observe that every normal clique-path $\gamma_{(u,v)}(\sg{u}=C_0,C_1,\ldots,C_k=\sg{v})$
of length $k$ between two vertices $u,v$ of $G$ correspond to the path of $\gamma^*_{(u,v)}=(C_0, C_0\cup C_1, C_1, C_1\cup C_2, C_2, \ldots, C_{k-1}, C_{k-1}\cup C_k, C_k)$
of length $2k+1$ of the graph $\beta(G)$. Denote by $\Upsilon^*$ the set of all paths $\gamma^*_{(u,v)}$ of $\beta(G)$ derived from the set
$\Upsilon$ of normal clique-paths $\gamma^*_{(u,v)}$ of $G$. We assert that if $\Gamma$ acts geometrically on $G$, then $\Upsilon^*$
satisfies the conditions of Theorem \ref{th:swiat}.

First observe that if $u$ and $v$ are at distance $k>0$ in $G$, then for any two cliques $C,C'$ with $u\in C$ and $v\in C'$, the vertices of $\beta(G)$ corresponding to  $C$ and $C'$ are at distance at most $4k$ in $\beta(G)$. Thus by Theorem \ref{thm: fellowtravdist1} the system of paths $\Upsilon^*$ of $\beta(G)$ also satisfies the $2$-sided fellow traveler property, with associated constants $C$ and $D$ four times higher that those for $\Upsilon$. A simple argument similar to the one used in \cite{JS,Swiat} implies that the system of paths $\Upsilon^*$ is $3$-locally recognizable, thanks to the local characterization of normal clique-paths provided by Lemma \ref{lem: normalcliquepathunicity}. Finally, observe that the action of $\Gamma$ on $\beta(G)$ preserve the size of the cliques, so vertices are sent to vertices, thus condition $(2)$ of Theorem  \ref{th:swiat} also holds, by Lemma \ref{lem: normalcliquepathunicity}.
\end{proof}

\subsection{Falsification by fellow traveler property}
The \emph{falsification by fellow traveler property (FFTP)} was initially introduced for groups by Neumann and Shapiro \cite{NeSh}  and further investigated by Elder \cite{Elder_thesis,Elder02}. It was shown in \cite{NeSh} that in the groups satisfying FFTP the language of geodesics is regular. It was shown in \cite{Elder02} that
the groups satisfying  FFTP are almost convex (in the sense of Cannon \cite{Cannon}) and satisfy a quadratic isoperimetric inequality.
The definition of FFTP can be formulated in the language of graphs as follows.

We say that two paths $\gamma, \gamma'$ in $G$ \emph{asynchronously $K$-fellow travel} for some constant $K>0$ if there exists a proper nondecreasing function $f:\mathbb [0, \infty) \to \mathbb [0, \infty)$ such that for every $i\geq 0$,
\[d(\gamma(i), \gamma'(f(i)))\leq Kd(\gamma(0), \gamma'(0)).\]

A graph $G$ is said to have the \emph{falsification by fellow traveler property} (or FFTP for short) if there exists a constant $K>0$ such that for every path $\gamma$ of $G$ which is not a geodesic, there exists a path $\eta$ of $G$ such that $\eta(0)=\gamma(0)$, $\eta(\infty)=\gamma(\infty)$, $|\eta|<|\gamma|$ and $\gamma$ and $\eta$ asynchronously $K$-fellow travel.

\begin{proposition}
 \label{lem: FFTP}
 Let $G$ be a CB-graph or a weakly modular graph. Then $G$ enjoys FFTP. Consequently, any group $\Gamma$ whose Cayley graph is weakly modular or a CB-graph enjoys FFTP.
\end{proposition}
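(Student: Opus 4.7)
The plan is to establish FFTP by exhibiting, for any non-geodesic path $\gamma=(u_0,u_1,\ldots,u_n)$ in $G$, a strictly shorter path $\eta$ with the same endpoints such that $\eta$ differs from $\gamma$ only within a window of bounded size. Since outside of the window the two paths coincide and within it each vertex of $\eta$ lies at bounded graph-distance from a vertex of $\gamma$, this will automatically yield the asynchronous $K$-fellow traveler property with $K$ equal to (roughly) the window size.

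First I would localize the failure of geodesicity: let $j$ be the smallest index with $d(u_0,u_j)<j$, so that the prefix $(u_0,\ldots,u_{j-1})$ is a geodesic with $d(u_0,u_{j-1})=j-1$, and either $d(u_0,u_j)=j-2$ (a ``descent'') or $d(u_0,u_j)=j-1$ (a ``horizontal step''). In the descent case for a CB-graph, both $u_{j-2}$ and $u_j$ lie in $I(u_0,u_{j-1})\cap S_1(u_{j-1})$; by $\INCz$ (available from Theorem~\ref{thm: TPC}) they either coincide or are adjacent, so replacing the subpath $(u_{j-2},u_{j-1},u_j)$ by $(u_{j-2},u_j)$ or $(u_{j-2})$ gives a strictly shorter $\eta$ differing from $\gamma$ at only one vertex. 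In the horizontal case I would apply $\TPC(u_0,u_{j-1}u_j)$: when $\TC$ yields a common neighbor $z\in B_{j-2}(u_0)$ of $u_{j-1}$ and $u_j$, $\INCz$ at $u_{j-1}$ forces $z=u_{j-2}$ or $z\sim u_{j-2}$, giving either a direct shortening or a same-length lateral move that pushes the obstruction one step closer to $u_0$; when $\PC$ applies instead, the inscribed pentagon supplies analogous data. Iterating the lateral moves must terminate in a strict shortening inside a window of bounded size, for otherwise $\gamma$ would trace out an arbitrarily long isometric cycle, contradicting the bound of $5$ on isometric cycles in a CB-graph (Theorem~\ref{thm: well-bridged}).

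For a weakly modular graph the same outline applies, with $\TC$ and the quadrangle condition replacing $\TPC$, and with the analogous bound of $4$ on the length of an isometric cycle (derivable from $\TC$ applied along an isometric cycle of length $\geq 5$). The main obstacle will be the horizontal subcase: a single local modification does not immediately shorten the path, and I must track the iteration carefully to prove that it terminates within a uniformly bounded number of steps, yielding a local shortcut whose support on $\gamma$ has size bounded by a constant depending only on the class of $G$. Once FFTP is established at the level of the graph $G$, the group-theoretic consequence is immediate, since if the Cayley graph of $\Gamma$ with respect to some finite generating set is itself such a $G$, then the $K$-fellow traveling shorter paths constructed in $G$ are exactly the FFTP witnesses required for $\Gamma$.
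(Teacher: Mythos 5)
There is a genuine gap, and it sits exactly where you flag "the main obstacle": the horizontal subcase. Your plan rests on the assertion that the iterated lateral moves terminate \emph{inside a window of bounded size}, justified by the bound on isometric cycles. Both the assertion and the justification are false. When $\TC$ applies to the horizontal edge $u_{j-1}u_j$ and the resulting $z\in B_{j-2}(u_0)$ is adjacent to (but distinct from) $u_{j-2}$, the replacement $(\ldots,u_{j-2},z,u_j,\ldots)$ creates a new horizontal edge $u_{j-2}z$ one level lower, and nothing prevents this cascade from descending all the way to $u_0$: in a bridged CB-graph such as a large piece of the triangular grid, the descent takes $\Theta(j)$ steps and no isometric cycle of length $>3$ ever appears, so the modified window has size comparable to $|\gamma|$. (The parenthetical claim that weakly modular graphs have isometric cycles of length at most $4$ is also wrong; $W_5$ is weakly modular and its outer $5$-cycle is isometric.) Consequently your derivation of the fellow-traveling constant $K$ as "roughly the window size" collapses. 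What actually saves the argument is a \emph{pointwise} bound: in the $\TC$/$\INC$ cascade each replaced vertex $u_{j-i}$ is adjacent to its replacement $z_i$, so the new path stays within Hausdorff distance $1$ (resp.\ $2$) of $\gamma$ even though the window is unbounded. This is precisely the content of the paper's Claim~\ref{clm: path-homotopy} (geodesics from $u$ to two neighbors $w,w'$ lying below a common vertex $v$ can be chosen to $1$-fellow travel in CB-graphs, $2$-fellow travel in weakly modular graphs, by induction on $d(u,v)$ using $\INC$ resp.\ the quadrangle condition), and you need to state and prove some version of it; your proposal never does.

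The second gap is the $\PC$ case, which "the inscribed pentagon supplies analogous data" does not cover. The pentagon $u_{j-1}u_{j-2}zw'u_j$ produced by $\PCu(u_0,u_{j-1}u_j)$ drops \emph{two} levels: the shortcut goes through $w'$ with $d(u_0,w')=j-2$ and $z$ with $d(u_0,z)=j-3$, so to realize the shorter path $(u_0\to z\to w'\to u_j\to\cdots)$ you must reroute the entire prefix $(u_0,\ldots,u_{j-3})$ onto a geodesic ending at $z$, and to keep the fellow-traveling constant bounded that rerouted geodesic must track the old prefix pointwise — again exactly Claim~\ref{clm: path-homotopy}. This is why the paper's proof is organized around that rerouting claim rather than around local surgery: once it is available, a \emph{single} application of $\TPC$ (or $\TC$/QC in the weakly modular case) at the first failure index, followed by one global reroute of the prefix, yields the shorter $2$- (resp.\ $3$-) fellow traveling path. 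Your descent case (using $\INCz$ to contract $(u_{j-2},u_{j-1},u_j)$) is fine and even slightly more economical than the paper's treatment, and the reduction from the group statement to the graph statement is correct; but without the rerouting lemma the horizontal case, and hence the proposition, is not proved.
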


\begin{proof} We first prove the result for CB-graphs and start with a general claim about them.
 \begin{claim}
 \label{clm: path-homotopy}
  Let $G$ be a CB-graph (respectively a weakly modular graph) and $u,v,w,w'$ be four vertices such that $k:=d(u,v)=d(u,w)+1=d(u,w')+1$ and $v\sim w,w'$. Then for every geodesic $\gamma$ from $u$ to $w$ there exists a geodesic $\gamma'$ from $u$ to $w'$ which $1$-fellow travels (respectively $2$-fellow travels)
  with $\gamma$.
 \end{claim}
 \begin{proof}
 Assume that $G$ is a CB-graph.
 By $\INCz(v,u)$ we have $w\sim w'$. Now we proceed by induction on $k\geq 2$. If $k =1$, then we are immediately done. If $k\geq 2$, then
 by $\INC(v,u)$ there exists $z\sim w, w'$ such that $d(u,z)=k-2$. Now we can apply the induction hypothesis to the vertices $u,w,\gamma(k-2),z$ and find a geodesic from $u$ to $z$ which $1$-fellow travels with $(\gamma(0), \ldots, \gamma(k-2))$. Completing this path with $w'$ we are done. If $G$ is weakly modular, then we are  done  by using $\mathrm{QC}$ instead of $\INC$.
 \end{proof}

 We now show that FFTP holds with constant $K=2$ for CB-graphs and $K=3$ for weakly modular graphs. Let $G$ be a CB-graph or a weakly modular graph.
 Let $\gamma$ be a non-geodesic path in $G$ and let $u:=\gamma(0)$ and $k:=|\gamma|$. Then $k\geq 2$.
 As $\gamma$ is not a geodesic, we can define $i_0:=\min\sg{i\geq 1, d(u,\gamma(i))\geq d(u,\gamma(i+1))}= \min\sg{i\geq 1, d(u, \gamma(i+1))\leq i}$.
 If $i_0=1$, then either $\gamma(2)=u$ or $\gamma(2)\sim u$, so we are done by removing $\gamma(0)$ and $\gamma(1)$ or only $\gamma(1)$ from $\gamma$.

 Now, let $i_0\geq 2$.  Let $w:=\gamma(i_0-1)$, $v:=\gamma(i_0)$, $v':=\gamma(i_0+1)$. If $d(u,v')=i_0-1$, then by Claim \ref{clm: path-homotopy} there exists a $(u,v')$-geodesic $\gamma'$ which either $1$-fellow or $2$-fellow travels with $(\gamma(0), \ldots, \gamma(i_0-1)=w)$. So we can choose $\eta := (\gamma'(0), \ldots, \gamma'(i_0-2), v',\gamma(i_0+2), \ldots, \gamma(k))$ and we are done.

 Otherwise, $d(u,v')=i_0$ and we apply $\TPCu(u, vv')$ or $\TC(u, vv')$ according to whether $G$ is a CB-graph or weakly modular. If $\TC(u,vv')$ applies, then there exists a vertex $w'\sim v,v'$ at distance $i_0-1$ from $u$. By Claim \ref{clm: path-homotopy}, we can find a geodesic $\gamma'$ from $u$ to $w'$ which $1$-fellow or $2$-fellow travels with $(\gamma(0), \ldots, \gamma(i_0-1)=w)$. We thus let $\eta:=(u=\gamma'(0), \ldots, \gamma'(i_0-2),w', v', \gamma(i_0+2), \ldots, \gamma(k))$.
 If $\PCu(u,vv')$ holds with respect to the neighbor $w$ of $v$, then there exists two vertices $w',z$ with $d(u,z)=i_0-2$, $d(u,w')=i_0-1$, $w'\sim v'$ and $z\sim w', w$. By Claim \ref{clm: path-homotopy} there exists a geodesic $\gamma'$ from $u$ to $z$ which $1$-fellow travels with $(u=\gamma(0), \ldots, \gamma(i_0-2))$. Thus we let
 $\eta:=(u=\gamma'(0), \ldots, \gamma'(i_0-3), z, w', v', \gamma(i_0+2), \ldots, \gamma(k))$.
\end{proof}

In \cite{Cannon}, Cannon introduced the notion of \emph{almost convexity}. As for FFTP, the initial definition was for groups, and we extend it there in the language of graphs.
A graph $G$ is said to be \emph{$k$-almost convex} ($\mathrm{AC}(k)$ for short) for some $k\geq 2$ if there exists a constant $K_k>0$ such that for every $v\in V, n \geq 0$ and every pair of vertices $x, y\in S_n(v)$ with $d(x,y)\leq k$ there exists a path from $x$ to $y$ of length at most $K_k$ which is entirely included in $B_n(v)$. A graph $G$ is said to be \emph{almost convex} if there exists some $k\geq 3$ such that $G$ is $k$-almost convex.

Observe that the class almost convex graphs trivially generalizes the
class of CB-graphs. It is easy to show (see for example \cite{Cannon})
that for every $k\geq 3$, the property $\mathrm{AC}(k)$ is equivalent
to $\mathrm{AC}(3)$. The fact that CB-graphs are graphs with
$3$-convex balls (item (ii) of Theorem \ref{thm: well-bridged}) holds
for exactly the same kind of reason. It was shown in
\cite{Elder02}*{Proposition 1} that graphs satisfying FFTP are almost
convex.

\begin{remark} In general, the Cayley graph of a weakly modular or a CB-group is not weakly modular or a CB-graph. Recently, the paper \cite{Soe} provides
sufficient conditions on a presentation of a group, which imply that its Cayley graph is systolic.
\end{remark}

In view of this remark, one can ask  the following questions:

\begin{question} Is it true that CB-groups and weakly modular groups satisfy FFTP? Are CB-groups and weakly modular groups almost convex?
\end{question}

\begin{remark} Example \ref{CB-pentagons-infinite} shows that there exist CB-graphs which are not weakly systolic and which have  an arbitrary diameter and  contain an arbitrary number of pentagons not included in 5-wheels.
Nevertheless, such graphs are quasi-isometric to weakly systolic graphs. In view of this, one can ask the following question: \end{remark}

\begin{question} Is it true that CB-groups are weakly systolic?
\end{question}  \section{Metric triangles}
\label{sec: metriangle}

Recall that three vertices $u,v,w$ of a graph $G$ form a \emph{metric
  triangle} $uvw$ if $I(u,v)\cap I(u,w) = \sg{u}$,
$I(u,v)\cap I(v,w) = \sg{v}$, and $I(u,w)\cap I(u,w) = \sg{w}$
\cite{Ch_metric}. The pairs $uv$, $vw$, and $wu$ are called the
\emph{sides of $uvw$}. The metric triangle $uvw$ has \emph{type}
$(k_1,k_2,k_3)$ if its sides have lengths $k_1, k_2,k_3$ and
$k_1\ge k_2\ge k_3$. If $k_1=k_2=k_3=k$, then $uvw$ is called
\emph{equilateral of size $k$}. If $k_3\ge k_1-1$, then we call $uvw$
\emph{almost equilateral}. A metric triangle $uvw$ is called
\emph{strongly equilateral} if all vertices of $I(v,w)$ (respectively,
$I(u,w)$, $I(u,v)$) have the same distance to the opposite vertex $u$
(respectively, $v$, $w$). Clearly, strongly equilateral metric
triangles are equilateral. A metric triangle $uvw$ is a
\emph{quasi-median} of vertices $x,y,z$ if the following equalities
hold:
\begin{align*}
  d(x,y)&=d(x,u)+d(u,v)+d(v,y),\\
  d(y,z)&=d(y,v)+d(v,w)+d(w,z),\\
  d(z,x)&=d(z,w)+d(w,u)+d(u,x).
\end{align*}
If $uvw$ has size 0, then it is called the \emph{median} of $x,y,z$. Any triplet of vertices $x,y,z$ admits at least one quasi-median: it suffices to chose $u$ in $I(x,y)\cap I(x,z)$ as far as possible from $x$, $v$ in $I(y,u)\cap I(y,z)$ as far as possible from $y$, and $w$ in $I(z,u)\cap I(z,v)$ as far as possible from $z$; if $x,y,z$ form a metric triangle, then $xyz$ is the unique quasi-median of $x,y,z$.

Metric triangles and quasi-medians an important role and have interesting properties in classes of graphs defined by metric conditions \cite{BaCh_survey}. For example, it was shown in \cite{Ch_metric} that weakly modular graphs are exactly the graphs in which all metric triangles are strongly equilateral. Modular graphs are exactly the graphs in which all metric triangles have size $0$, i.e., all triplets of vertices have medians.  Median graphs are exactly the graphs in which all triplets of vertices have unique medians.   

In this section, we show that in CB-graphs metric triangles behave quite similarly to metric triangles in weakly modular graphs.  Namely, our main result there is the following classification theorem:
\begin{theorem}
 \label{thm: metrictriangle}
 Let $uvw$ be a metric triangle of a CB-graph $G$ with
 $d(u,v) \geq d(u,w) \geq d(v,w)$.  Then $uvw$ is either strongly
 equilateral or has type $(2,2,1)$ and is included in a pentagon of
 the form
 $uvxwy$. \end{theorem}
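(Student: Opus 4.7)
Setting $p := d(u,v) \le q := d(u,w) \le r := d(v,w)$ (so the ``type'' in the paper's convention is $(r,q,p)$), the plan is to proceed by induction on $p + q + r$, ruling out all types other than $p = q = r$ and $(p,q,r) = (1,2,2)$, and in the equilateral case establishing strong equilaterality.

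Base cases of the induction cover the valid metric triangles with $p + q + r$ small: types $(1,1,1)$ and $(2,2,2)$ are strongly equilateral (the first trivially; the second after a short $\INCz$-based check), and $(1,2,2)$ is shown to lie on the pentagon $uvxwy$ by picking $x \in I(v,w) \cap S_1(v)$ and $y \in I(u,w) \cap S_1(u)$, then using the metric triangle condition together with $\INCz$ (no induced $C_4$) to rule out every potential chord of the resulting 5-cycle. For the inductive step the key geometric input is $I(v,w) \subseteq B_q(u)$, which follows from $v,w \in B_q(u)$ and convexity of $B_q(u)$. Any $a \in I(v,w)$ with $a \sim w$ must satisfy $d(u,a) = q$, since $d(u,a) = q-1$ would force $a \in I(u,w) \cap I(v,w) = \{w\}$, contradicting $a \neq w$. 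Symmetric boundary constraints hold along the other two sides.

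Applying $\TPCp(u, aw)$ from Theorem~\ref{thm: TPC}(v) yields one of two alternatives: either a common neighbor $z$ of $a,w$ with $d(u,z) = q-1$ (triangle case), or a pentagon $a w_1 z w_2 w$ with $d(u,z) \le q-2$ (pentagon case). In the triangle case, one first checks that $d(v,z) = r$ (else $z$ lies in a forbidden interval intersection of the original metric triangle), and then shows that either $\{u, v, z\}$ itself, or its quasi-median, is a strictly smaller metric triangle to which the induction hypothesis applies. In the pentagon case the analogous reduction goes through whenever $q \ge 3$; when $q = 2$, the produced pentagon is essentially the one demanded by the theorem and the whole configuration collapses to type $(1,2,2)$. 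The strongly equilateral conclusion in the case $p = q = r = k$ is then obtained by proving $d(u,x) = k$ for every $x \in I(v,w)$: boundary vertices are handled as above, and interior vertices are treated by propagating along a shortest $(v,w)$-path in $I(v,w)$ using $\INCz(u)$ together with iterated $\TPC(u, \cdot)$; symmetric arguments along the other two sides complete the proof.

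The main obstacle will be controlling the pentagon branch of $\TPC$: the interior vertex $z$ produced by $\PCp$ sits two steps closer to $u$ and can create new pentagons sharing an edge or a length-$2$ path with the one under consideration, and these must be prevented from violating the metric triangle condition at deeper recursion levels. Lemma~\ref{lem: deuxpent}, which classifies pairs of pentagons of a CB-graph sharing at least two vertices (either their union has diameter $2$ or one of them admits a universal vertex), is expected to be the decisive tool for closing these branches. A minor but separate subtlety is ruling out the case $p < q$: a parallel argument applied to $c \in I(v,w)$ with $c \sim v$ and $\TPC(u, vc)$ is required, and the boundary constraints at the $v$-end are more delicate because $v$ itself is at distance $p < q$ from $u$, so the analogue of ``$d(u,a) = q$'' must be replaced by $d(u, c) \in \{p, p+1\}$ and dispatched by a secondary application of $\TPC$.
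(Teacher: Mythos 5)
Your outline shares the paper's general toolkit (applying $\TPC$ at boundary vertices of $I(v,w)$, quasi-medians, Lemma~\ref{lem: deuxpent} for interacting pentagons), but the central step of your induction is not actually an argument. In the triangle branch you produce $z\sim a,w$ with $d(u,z)=q-1$ and $d(v,z)=r$, and then assert that ``either $\{u,v,z\}$ itself, or its quasi-median, is a strictly smaller metric triangle to which the induction hypothesis applies.'' But $uvz$ need not be a metric triangle, and applying the induction hypothesis to its quasi-median $u'v'z'$ only tells you that $u'v'z'$ is strongly equilateral or a $(2,2,1)$-pentagon --- a conclusion that is perfectly compatible with $uvw$ having any bad type, since nothing in your argument ties the shape of $u'v'z'$ back to the side lengths of $uvw$. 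You never say what contradiction is extracted, so the induction does not close. The same problem recurs in the pentagon branch, where you explicitly defer to Lemma~\ref{lem: deuxpent} as a tool that ``is expected to be decisive'': that is an announcement of where the difficulty lies, not a resolution of it. The paper avoids this impasse with a quantitative local lemma (Lemma~\ref{closestk-1}): if $I(v,w)$ meets $B_{k-1}(u)$, then it contains a vertex $x$ at distance exactly $2$ from $v$ and from every neighbor $v'$ of $v$ in $I(v,u)$, together with a companion $v''\in I(v,u)$ at distance $2$ from $v$ and adjacent to $x$; the contradiction then comes from $v''$ (or a vertex derived from it) landing in $I(v,u)\cap I(v,w)=\{v\}$. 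That lemma, proved by exactly the kind of iterated $\TPCu/\INC$ propagation you only gesture at for the strong-equilaterality step, is the engine of the whole theorem, and your proposal has no equivalent of it.

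A secondary structural point: a single induction on $p+q+r$ forces you to treat all degenerate types at once, whereas the paper first establishes almost-equilaterality together with the interval monotonicity $d(u,w)\le d(u,x)\le d(u,v)$ for all $x\in I(v,w)$ (Lemma~\ref{lem: almosteq}) --- which already gives strong equilaterality in the equilateral case for free --- then excludes type $(k,k-1,k-1)$ outright, and only then runs an induction, on $k$ alone, to exclude type $(k,k,k-1)$ for $k\ge3$ (Lemma~\ref{lem: no2k}). There the quasi-median reduction genuinely works because Lemma~\ref{closestk-1} pins the reduced triple's quasi-median down to type $(k-1,k-1,k-2)$, which the induction hypothesis forbids unless $k=3$, and the residual $k=3$ case is where Lemma~\ref{lem: deuxpent} actually enters. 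To salvage your plan you would first need to prove something playing the role of Lemma~\ref{closestk-1} and the interval monotonicity; at that point the perimeter induction becomes unnecessary.
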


\begin{proof}
  The proof is based on the following lemmas, that we will
  prove later. 

  \begin{lemma}\label{lem: almosteq}
    Let $uvw$ be a metric triangle of a CB-graph $G$. If
    $d(u,v)\geq d(u,w)$, then for any $x\in I(v,w)$ we have
    $d(u,v)-1 \leq d(u,w) \leq d(u,x) \leq d(u,v)$.  Consequently,
    $uvw$ is almost equilateral and if $uvw$ is equilateral, then it
    is strongly equilateral.
\end{lemma}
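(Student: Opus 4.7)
Set $k := d(u,v)$, $k' := d(u,w)$, and $k_1 := d(v,w)$, so $k \geq k'$ by hypothesis. The plan is to prove three statements separately: the upper bound $d(u,x) \leq k$ for every $x \in I(v,w)$, the almost-equilateral inequality $k \leq k'+1$, and the lower bound $d(u,x) \geq k'$ for every $x \in I(v,w)$. The final sentences of the lemma then follow formally, since ``almost equilateral'' is a reformulation of $k \leq k' + 1$, and when $k = k'$ the two sandwich bounds force $d(u,x) = k$ on all of $I(v,w)$, which is precisely strong equilaterality.

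The upper bound is immediate from ball convexity: both $v$ and $w$ lie in $B_k(u)$, so $I(v,w) \subseteq B_k(u)$. For the almost-equilateral inequality I will argue by contradiction, assuming $k \geq k'+2$. Then $w \in I(v,w)$ satisfies $d(u,w) = k' \leq k-1$, so Lemma~\ref{closestk-1} applies with $t = w$ and yields $x^* \in I(v,w)$ with $d(u,x^*) = k-1$ and $d(v,x^*) = 2$, together with $v'' \in I(v,u)$ satisfying $d(v,v'') = 2$, $v'' \sim x^*$, and $d(u,v'') = k-2$. The triangle inequality pins $d(v'',w)$ down to $\{k_1-2, k_1-1\}$. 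If $d(v'',w) = k_1 - 2$, then $v'' \in I(v,u) \cap I(v,w)$ with $v'' \neq v$, contradicting the metric triangle hypothesis. If $d(v'',w) = k_1 - 1$, then $x^* \in I(v'',w)$; because $k' \leq k-2$ forces both $v''$ and $w$ into $B_{k-2}(u)$, convexity gives $d(u,x^*) \leq k-2$, contradicting $d(u,x^*) = k-1$. Hence $k \leq k' + 1$.

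The lower bound is the delicate part. I again argue by contradiction: if some $x \in I(v,w)$ has $d(u,x) \leq k'-1$, I choose it with $j := d(u,x)$ minimal over $I(v,w)$, and then with $d(v,x)$ minimal among those. The metric triangle property excludes $j = 0$ and also $x = v$. For any neighbor $y \in I(v,x) \subseteq I(v,w)$ of $x$, minimality of $j$ forces $d(u,y) \geq j$ while minimality of $d(v,x)$ rules out equality, so $d(u,y) = j+1$. Consequently $x \in S_1(y) \cap I(y,u)$, and $\INCp(y,u)$ (Theorem~\ref{thm: INC}) supplies a vertex $z \in S_{j-1}(u)$ adjacent to $x$. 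Minimality of $j$ forces $z \notin I(v,w)$, while $z \sim x$ and the position of $y$ together restrict $d(v,z) \in \{d(v,x), d(v,x)+1\}$ and $d(z,w) \in \{d(x,w), d(x,w)+1\}$ with $d(v,z)+d(z,w) > d(v,w)$; this leaves three admissible configurations for $z$. Ruling these out is the main obstacle. When $k' \leq k-2$ the ball-convexity trick from the almost-equilateral step transplants directly, but when $k' \in \{k-1, k\}$ the ball $B_{k-2}(u)$ no longer contains $w$, and I expect to need a second invocation of Lemma~\ref{closestk-1} together with the pentagonal structure supplied by $\TPCp$ (Theorem~\ref{thm: TPC}) to produce an auxiliary pentagon apex that either relocates $z$ into $I(v,w)$ (contradicting minimality of $j$) or manufactures a forbidden element of $I(v,u) \cap I(v,w)$. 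Organizing the case split cleanly, and recognizing in each position of $z$ which of these two contradictions to aim for, is the genuinely nontrivial step of the proof.
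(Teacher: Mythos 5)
Your first two steps are sound. The upper bound $d(u,x)\le d(u,v)$ by convexity of $B_k(u)$ is exactly the paper's observation, and your contradiction argument for $k\le k'+1$ (apply Lemma~\ref{closestk-1} with $t=w$, then split on $d(v'',w)\in\{k_1-2,k_1-1\}$, getting either a forbidden point of $I(v,u)\cap I(v,w)$ or a violation of the convexity of $B_{k-2}(u)$) is correct and close in spirit to the paper's Case~1, which instead takes $x$ closest to $v$ in $I(v,w)$ with $d(u,x)\le k-2$ and concludes $v''\in I(v,x)\cap I(v,u)$.

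The lower bound $d(u,x)\ge d(u,w)$, however, is not proved, and it is the heart of the lemma (it is what yields strong equilaterality). Note first that once $k\le k'+1$ is established, you always have $k'\in\{k-1,k\}$, so your remark that ``when $k'\le k-2$ the ball-convexity trick transplants directly'' is vacuous: the \emph{entire} lower bound lives in the regime you explicitly defer (``I expect to need\dots'', ``ruling these out is the main obstacle''). For $k'=k-1$ a failure gives $x\in I(v,w)$ with $d(u,x)\le k-2$, and a workable argument exists (the paper's Case~1: apply Lemma~\ref{closestk-1} with $t=x$ and use convexity of $B_{k-2}(u)$, which contains $v''$ and $x$ but not the vertex $z$ at distance $k-1$, to place $v''$ in $I(v,x)\cap I(v,u)$); but your proposed route via $\INCp$ at a $j$-minimal vertex leaves ``three admissible configurations for $z$'' unresolved, and your transplant of the almost-equilateral trick needs $w\in B_{k-2}(u)$, which now fails. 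The case $k'=k$ is genuinely harder and entirely missing: the paper handles it by applying Lemma~\ref{closestk-1} from \emph{both} endpoints $v$ and $w$, obtaining vertices $x',y'$ at distance $k-1$ from $u$ and vertices $v'',w''$ at distance $k-2$, showing that $v,v',v'',z,x'$ induce a pentagon (so $z\nsim v'$), and then pinning $d(v'',w'')=m-3$ (where $m=d(v,w)$) so that convexity of $B_{m-2}(w'')$ forces $d(w'',v)\le m-2$ and hence $w''\in I(w,v)\cap I(w,u)$, a contradiction. Nothing in your sketch produces this double construction or an equivalent of it, so the claim ``$d(u,w)\le d(u,x)$'' --- and with it the strong-equilaterality conclusion --- remains unestablished.
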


  \begin{lemma}\label{lem: no2k-1}
    A CB-graph $G$ does not contain metric triangles of the type
    $(k,k-1,k-1)$ for any $k\ge 2$.
  \end{lemma}

  \begin{lemma}\label{lem: no2k}
    A CB-graph $G$ does not contain metric triangles of type
    $(k,k,k-1)$ for any $k\ge 3$.
\end{lemma}
  
  By Lemma \ref{lem: almosteq}, $uvw$
   is almost equilateral. If $uvw$ is not equilateral, then Lemma
   \ref{lem: no2k-1} implies that $uvw$ has type $(k,k,k-1)$ for some
   $k \geq 2$. By Lemma \ref{lem: no2k}, we must have $k=2$, so $uvw$
   lies on a pentagon as asserted. If $uvw$ is equilateral, then $uvw$
   is strongly equilateral as a direct consequence of Lemma \ref{lem:
   almosteq}. This concludes the proof of  Theorem \ref{thm: metrictriangle}.  
\end{proof} 

  We now prove Lemmas \ref{lem: almosteq}, \ref{lem: no2k-1}, and
  \ref{lem: no2k}.  We start with an auxiliary lemma:

  \begin{lemma}\label{closestk-1}
    Let $uvw$ be a metric triangle in a CB-graph $G$ such that
    $k=d(u,v)\ge d(u,w)$ and $v'$ be a neighbor of $v$ in
    $I(v,u)$. Suppose that $I(v,w)$ contains a vertex $t$ such that
    $d(u,t)\le k-1$. Then $I(v,t)\subset I(v,w)$ contains a vertex $x$
    such that $d(u,x)=k-1$ and $d(v,x)=d(v',x)=2$. Moreover, there
    exists a vertex $v''\in I(v,u)$ at distance 2 from $v$ and
    adjacent to $x$.
  \end{lemma}

  \begin{proof}
    Let $x$ be a closest to $v$ vertex of $I(v,t)$ such that
    $d(u,x)\le k-1$. First we prove that $d(v',x)=d(v,x)$. Indeed,
    since $v'\in I(v,u)$ and $uvw$ is a metric triangle, necessarily
    $v'\notin I(v,w)$, yielding $v'\notin I(v,x)$ since
    $I(v,x) \subset I(v,w)$. Thus $d(v',x)\ge d(v,x)$. If
    $d(v',x)>d(v,x)$, then $v\in I(v',x)$ and since
    $v',x\in B_{k-1}(u)$ and $v\notin B_{k-1}(u)$, we obtain a
    contradiction with the convexity of $B_{k-1}(u)$. This shows that
    $d(v',x)=d(v,x)$.

    Now we show that $d(v,x)=2$.  Since $d(v',x)=d(v,x)$, we can apply
    $\TPC^1(x,vv')$. If $\TC(x,vv')$ applies, then there exists some
    $s \sim v,v'$ one step closer to $x$. In particular,
    $s \in I(v',x)$, so by convexity of $B_{k-1}(u)$ we have
    $d(u,s)\le k-1$. This implies that $s \in I(v,u)\cap I(v,w)$,
    which contradicts that $uvw$ is a metric triangle. Hence, only
    $\PCu(x,vv')$ can be applied. By this property there exists a
    vertex $s\in I(v,x)\cap I(v',x)$ at distance 2 from $v$ and
    $v'$. Since $v',x\in B_{k-1}(u)$, the convexity of $B_{k-1}(u)$
    implies that $d(u,s)\le k-1$. The minimality choice of $x$ implies
    that $x=s$, i.e., $d(v,x)=2$.

    Let $z$ be a common neighbor of $v$ and $x$. Then $d(u,z)=k$ and
    we can apply $\TPC^1(u,vz)$. If $\TC(u,vz)$ applies, then there
    exists a vertex $t\sim v,z$ at distance $k-1$ from $u$. Since
    $x,t\in I(z,u)$ are neighbors of $z$, by $\INCz(u)$ we conclude
    that $x\sim t$. This implies that $t \in I(v,w) \cap I(v,u)$ which
    is impossible since $uvw$ is a metric triangle.  Hence
    $\PCu(u,vz)$ holds for the neighbor $x \in I(z,u)$ of $z$.  By
    this condition, there exists a vertex $v''\in I(v,u)\cap I(z,u)$
    having distance 2 to $v$ and $z$ and adjacent to $x$.
 \end{proof}

 We now prove Lemma~\ref{lem: almosteq}

 \begin{proof}[Proof of Lemma~\ref{lem: almosteq}]
   The first assertion immediately implies the remaining
   assertions. Indeed, the fact that any two sides of $uvw$ have
   length which differ by at most 1 implies that all metric triangles
   of $G$ are almost equilateral. If $uvw$ is equilateral of size $k$,
   then the fact that $d(u,w) \leq d(u,x) \leq d(u,v)$ implies that
   $d(u,x)=k$ for any $x\in I(v,w)$.

   So, it remains to show that for any $x \in I(v,w)$,
   $d(u,v)-1 \leq d(u,w) \leq d(u,x) \leq d(u,v) $. Let
   $k:= d(u,v)\geq d(u,w)$.  By convexity of the ball $B_k(u)$,
   $d(u,x)\le k$ for any $x\in I(v,w)$.  If the first assertion does
   not hold, then there exists $x\in I(v,w)$ such that $d(u,x)\le k-2$
   if $d(u,v)>d(u,w)$ and such that $d(u,x)\le k-1$ if
   $d(u,v)=d(u,w)$.
We distinguish two cases.

 \medskip\noindent
 {\bf Case 1:} $k=d(u,v)>d(u,w)$.

 \begin{proof}
   Let $x$ be a closest to $v$ vertex of $I(v,w)$ such that
   $d(u,x)\le k-2$. From the choice of $x$ we conclude that
   $d(u,x)=k-2$ and $k-1\le d(u,x')\le k$ for any vertex
   $x'\in I(v,x)\setminus \{ x\}$. Let $z$ be a closest to $v$ vertex
   of $I(v,x)$ such that $d(u,z)\le k-1$. From Lemma \ref{closestk-1}
   we conclude that $d(v,z)=2$ and that there exists $v''\in I(v,u)$
   having distance 2 to $v$ and adjacent to $z$. Since
   $d(u,v'')=k-2=d(u,x)$ and $d(u,z)=k-1$, from the convexity of the
   ball $B_{k-2}(u)$ we infer that $z\notin I(v'',x)$. This implies
   that $d(v'',x)\le d(z,x)$. Since $d(v,v'')=d(v,z)=2$, we conclude
   that $v''\in I(v,x)\cap I(v,u)\subset I(v,w)\cap I(v,u)$, contrary
   to the assumption that $I(v,w)\cap I(v,u)=\{ v\}$. This
   contradiction finishes the proof of Case 1.
\end{proof}

\medskip\noindent
{\bf Case 2:} $k=d(u,v)=d(u,w)$.

\begin{proof}
  Let $x$ be a vertex of $I(v,w)$ with $d(u,x)\le k-1$.  Let $x'$ be a
  closest to $v$ vertex of $I(v,x)$ at distance $\le k-1$ from $u$ and
  let $y'$ be a closest to $w$ vertex of $I(x,w)$ at distance
  $\le k-1$ from $u$. By Lemma \ref{closestk-1}, we get that
  $d(u,x')=d(u,y')=k-1$, that $d(v,x')=d(w,y')=2$ and that for any
  neighbor $v'$ of $v$ in $I(v,u)$ and any neighbor $w'$ of $w$ in
  $I(w,u)$, we have $d(v',x')=d(w',y')=2$. Moreover, there exists a
  vertex $v''\in I(v,u)$ at distance 2 from $v$ and adjacent to $x'$
  and a vertex $w''\in I(w,u)$ at distance 2 from $w$ and adjacent to
  $y'$.  Let also $z$ be a common neighbor of $v$ and $x'$ and $v'$ be
  a common neighbor of $v$ and $v''$. If $z\sim v'$, to avoid an
  induced 4-cycle defined by the vertices $z,v',v'',$ and $x'$, then
  $z\sim v''$. This implies that $z\in I(v,u)\cap I(v,w)$, contrary to
  the assumption that $I(v,u)\cap I(v,w)=\{ v\}$. Consequently,
  $v,v',v'',z,x$ induce a pentagon.

Since $v'',w''\in B_{k-2}(u)$ and $x',y'\notin B_{k-2}(u)$, the path consisting of the edges $v''x', y'w''$ and a shortest $(x',y')$-path cannot be a shortest $(v'',w'')$-path. If we suppose that $d(v,w)=m$, since $d(v,x')=d(y',w)=2$, this implies that $d(v'',w'')<m-4+2=m-2$. Hence $d(v'',w'')\le m-3$. On the other hand, if $d(v'',w'')\le m-4$, then $v''$ and $w''$ belong to a shortest $(v,w)$-path, contradicting the assumption that $I(v,u)\cap I(v,w)=\{ v\}$. This shows that $d(v'',w'')=m-3$. This implies that $d(w'',v')\le m-2$. Since $w''\sim y'$ and $d(y',z)=d(y',x')+1=m-3$, we also conclude that $d(w'',z)\le m-2$. Since $z\nsim v'$, we obtain that $v\in I(v',z)$. Since $v',z\in B_{m-2}(w'')$ and $B_{m-2}(w'')$ is convex, we conclude that $d(w'',v)\le m-2$. This implies that $w''\in I(w,v)$, contrary to the assumption that $I(w,v)\cap I(w,u)=\{ w\},$ concluding the analysis of Case 2.
\end{proof}
This finishes the proof of the lemma. \end{proof}

Using Lemma \ref{lem: almosteq} we can now easily prove
Lemma~\ref{lem: no2k-1}.

\begin{proof}[Proof of Lemma~\ref{lem: no2k-1}]
 Assume by way of contradiction that $G$ contains a metric triangle $uvw$ with $d(u,w)=d(v,w)=k-1$ and $d(u,v)=k$. Then obviously $k\geq 3$. By Lemma \ref{lem: almosteq}, for any vertex $y\in I(u,v)$ we have $k-1=d(w,v)\le d(w,y)\le d(w,u)=k-1$, i.e., $d(w,y)=k-1$. Let $x$ be a closest to $v$ vertex of $I(v,w)$ such that $d(u,x)\le k-1$ (such a vertex exists because $d(u,w)=k-1$). By Lemma \ref{closestk-1}, $d(v,x)=2$ and there exists a vertex $v''\in I(v,u)$ at distance 2 from $v$ and adjacent to $x$. Since $d(x,w)=k-3$, we conclude that $d(w,v'')\le k-2$, contrary to the fact that $d(w,y)=k-1$ for any $y\in I(u,v)$.
\end{proof}

It remains to establish Lemma~\ref{lem: no2k}.

\begin{proof}[Proof of Lemma~\ref{lem: no2k}] We proceed by induction
  on $k$.  Suppose by way of contradiction that $G$ contains a metric
  triangle $uvw$ with $d(u,v)=d(u,w)=k \geq 3$ and $d(v,w)=k-1$. Let
  $x$ be a closest to $u$ vertex of $I(v,u)$ at distance $k-1$ from
  $w$ (such a vertex exists because $d(w,v)=k-1$).  By Lemma
  \ref{closestk-1}, $d(x,u)=2$ and there exists a vertex $y\in I(u,w)$
  at distance 2 from $u$ and adjacent to $x$. Consider a quasi-median
  $x'v'w'$ of the triplet $x,v,w$. Since
  $I(v,x)\cap I(v,w)\subset I(v,u)\cap I(v,w)=\{ v\}$, we obtain that
  $v'=v$.  If $x'\ne x$, then $x$ has a neighbor
  $s \in I(x,v) \subseteq I(u,v)$ such that $d(w,s) = k-2$,
  contradicting Lemma~\ref{lem: almosteq}.
Hence $x'=x$. Thus any quasi-median of $x,v,$ and $w$ has the form
  $xvw'$.

  First, let $w'\ne w$. Since $d(x,w)=d(v,w)=k-1$ and $d(x,v)=k-2$,
  Lemma \ref{lem: no2k-1} implies that the metric triangle $xvw'$ has
  type $(k-2,k-2,k-2)$ and thus $w'$ is adjacent to $w$.  Observe that
  if $y\notin I(x,w')$, then $d(y,w')=k-3$, yielding $d(u,w')\le
  k-1$. This implies $w'\in I(w,u)\cap I(w,v)$, contrary to the
  assumption that $I(w,u)\cap I(w,v)=\{w\}$. Thus $y\notin
  I(x,w')$. Pick any neighbor $y'$ of $x$ in $I(x,w')$.  Since
  $I(x,w')\subset I(x,w)$, we conclude that $y',y\in I(x,w)$. Hence by
  $\INCz(w)$, $y'\sim y$.  Let $s$ be a common neighbor of $x$ and $u$
  and $t$ be a common neighbor of $y$ and $u$. If $s\sim t$, then
  $x,s,t,y$ define a 4-cycle which cannot be induced, thus $x\sim t$
  or $y\sim s$. This implies that $t\in I(u,w)\cap I(u,v)$ or
  $s\in I(u,v)\cap I(u,w)$, which is impossible. Hence we have
  $s\nsim t$ and by convexity of $B_2(y')$ we get $d(y',u)\leq 2$. In
  particular this implies that $d(w', u)\leq k-1$, hence
  $w'\in I(u,w)\cap I(v,w)$, which is impossible.

Now, let $w'=w$. Then $xvw$ is a metric triangle of type $(k-1,k-1,k-2)$.  By induction hypothesis, this is possible only if $k-1=2$, i.e., $k=3$. This implies that $x\sim v$, $y\sim w$, and $d(v,w)=2$. Since $uvw$ is a metric triangle, $x\nsim w$ and $y\nsim v$. Let $s$ be any common neighbor of $x$ and $u$ and $t$ be any common neighbor of $y$ and $u$. Then $xsuty$ is an induced 5-cycle, otherwise one of the vertices $s$ or $t$ belongs to $I(u,v)\cap I(u,w)$, which is impossible.
Let $z$ be a common neighbor of $v$ and $w$. By Lemma \ref{lem: almosteq}, $d(u,z)=d(u,v)=d(u,w)=3$. If $z$ is adjacent to one of the vertices $x$ and $y$, then $z$ is adjacent to both $x$ and $y$. In this case, $d(z,s)=d(z,t)=2$ and $u\in I(s,t)\setminus B_2(z)$, contrary to the convexity of $B_2(z)$. Hence $z\nsim x,y$. In particular $xywzv$ is an induced 5-cycle, and we can apply Lemma \ref{lem: deuxpent}. Thus there exists a vertex $a$ universal for one of the two pentagons $xsuty$ and $xywzv$. If $a$ is adjacent to all the vertices of $xsuty$, then $a\in I(u,v)\cap I(u,w)$, which is a contradiction as $uvw$ is a metric triangle. Thus $a$ is adjacent to all the vertices of $xywzv$. In particular, by convexity of $B_2(a)$ we get $d(a,u)=2$. Hence $a \in I(u,v)\cap I(u,w)\cap I(v,w)$, contradicting again the fact that $uvw$ is a metric triangle.
\end{proof}

We continue with a useful property of (strongly) equilateral metric triangles.

\begin{lemma}
\label{lem: equilateral1}
Let $uvw$ be an equilateral metric triangle of size $k\ge 3$ of a
CB-graph $G$. If $u'\in I(v,u)$ and $w'\in I(v,w)$ are such that
$d(v,u')=d(v,w')=2$ and $d(u',w')\le 2$, then $d(u',w') = 2$. If
$a,b,c$ are three vertices such that $a\sim u',v$; $b\sim w',v$; and
$c\sim u',w'$, then $a,b,c$ are pairwise adjacent.
\end{lemma}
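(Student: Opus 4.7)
The plan starts by extracting basic distance information. Strong equilaterality of $uvw$ (Lemma \ref{lem: almosteq}) gives $d(u',w) = d(w',u) = k$. Combined with $c \sim u',w'$ and $d(u',u) = d(w',w) = k-2$, this pins down $d(c,u) = d(c,w) = k-1$. Moreover, $u' \sim w'$ would force $d(u',w) \le 1 + (k-2) = k-1$, contradicting $d(u',w) = k$; hence $d(u',w') = 2$. Convexity of $B_2(v)$ applied to $c \in I(u',w') \subseteq B_2(v)$ then yields $d(v,c) \le 2$, so I would split on the two cases $d(v,c) = 1$ and $d(v,c) = 2$.

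In the main case $d(v,c) = 1$, the equality $d(v,c) + d(c,u) = k = d(v,u)$ and its symmetric version place $c \in I(v,u) \cap I(v,w)$. Since $a$ and $c$ are both neighbors of $v$ in $I(v,u)$, $\INCz(v,u)$ gives $a \sim c$, and symmetrically $b \sim c$. For the remaining adjacency $a \sim b$, I would assume for contradiction that $a \not\sim b$; then $d(a,b) = 2$ with $v$ and $c$ both common neighbors. Applying $\PCu(u, vb)$ with respect to the neighbor $a$ of $v$ yields an induced pentagon $\pi_1 = v a z b' b$ with $z \in B_{k-2}(u)$, while the symmetric application $\PCu(w, va)$ with respect to $b$ yields an induced pentagon $\pi_2 = v b z' q a$ with $z' \in B_{k-2}(w)$. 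These two pentagons intersect in $\{v,a,b\}$, so Lemma \ref{lem: deuxpent} gives either $\diam(\pi_1 \cup \pi_2) \le 2$ or one of them admits a universal vertex. A putative universal vertex $y$ satisfies $y \sim v,a,b$, so $\INCz(a,b)$ places $y$ in the clique of common neighbors of $a,b$ (which already contains $v$ and $c$); the distance constraints to $u,w$ imposed by $z \in B_{k-2}(u)$ and $z' \in B_{k-2}(w)$ then force $y \in I(v,u) \cap I(v,w) = \{v\}$, a contradiction. In the diameter-$2$ subcase, the bound $d(z,q) \le 2$ allows an $\INCz$-based propagation across $\pi_1 \cup \pi_2$ that produces the adjacency $a \sim b$ directly.

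In the remaining case $d(v,c) = 2$, the triple $u',v,w'$ forms a metric triangle of type $(2,2,2)$ (since any common neighbor of $u',w'$ adjacent to $v$ would force $d(v,c)=1$), and hence is strongly equilateral by Theorem \ref{thm: metrictriangle}. I would then apply $\TPCu(v, u'c)$ and $\TPCu(v, w'c)$: when $\TC$ holds, the resulting common neighbor lies in the clique of common neighbors of $v,u'$ (resp.\ $v,w'$) and so, via $\INCz$, is adjacent to $a$ (resp.\ $b$); combined with the adjacency to $c$, this propagates to $a \sim c$ (resp.\ $b \sim c$). When only $\PCu$ applies, the resulting pentagons share enough vertices to invoke Lemma \ref{lem: deuxpent}, and the analysis reduces to that of Case 1. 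The main obstacle is the proof of $a \sim b$ in Case 1: no interval $I(v,q)$ contains both $a$ and $b$ as distance-$1$ neighbors of $v$ (they head into disjoint sides of $uvw$), so $\INCz$ at $v$ alone is insufficient, and the pentagon-pair analysis via Lemma \ref{lem: deuxpent}, rigidified by the strong equilaterality of $uvw$, is the essential combinatorial tool.
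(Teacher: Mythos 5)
Your preliminary reductions are correct ($d(u',w')=2$, $d(c,u)=d(c,w)=k-1$, $d(v,c)\le 2$), but the case split that follows is inverted. The case $d(v,c)=1$, which you designate as the main case and where you develop all of the detailed machinery, is vacuous: since $d(c,u)=d(c,w)=k-1$, the assumption $d(v,c)=1$ places $c$ in $I(v,u)\cap I(v,w)=\sg{v}$, which is impossible because $c\ne v$. You in fact write ``$c\in I(v,u)\cap I(v,w)$'' in that case without observing that this intersection is $\sg{v}$ by the definition of a metric triangle. Consequently the entire pentagon-pair analysis built there (the two applications of $\PCu$, the universal-vertex discussion, Lemma~\ref{lem: deuxpent}) addresses a configuration that never occurs; the paper records $d(v,c)=2$ as a fact, not as one branch of a dichotomy.

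The whole content of the lemma therefore sits in your Case~2, and there the argument does not close. When $\TC(v,u'c)$ supplies $z\sim v,u',c$, you do get $z\sim a$ from $\INCz(v,u')$, but the $4$-cycle $azcu'$ already carries the diagonal $zu'$, so its non-inducedness yields no new edge and $a\sim c$ does not ``propagate''. When only the pentagon condition applies, you defer to Case~1, which is empty. And $a\sim b$ is never treated in Case~2 (it would in fact follow from $a\sim c$ and $b\sim c$ via the $4$-cycle $avbc$, since $v\nsim c$, but you do not say this). The missing work is precisely what the paper's proof supplies: it takes a neighbor $w''$ of $w'$ in $I(w',w)$, applies $\TPC^1(u,w'w'')$, excludes the pentagon alternative by convexity of $B_2(u')$ (using $d(u',w'')=3$), and constructs vertices $z,z'$ at distances $k-1,k-2$ from $u$ and at distance $3$ from $v$; convexity of $B_2(z)$ and $B_2(z')$ then forces a common neighbor $s$ of $v$ and $c$ to be adjacent to both $a$ and $b$, after which the non-induced $4$-cycles $u'csa$ and $w'csb$ plus one further convexity argument produce all three adjacencies. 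Nothing in your proposal substitutes for this construction, so the proof has a genuine gap exactly where the lemma is hard.
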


\begin{proof}
  By Theorem \ref{thm: metrictriangle}, $uvw$ is strongly equilateral,
  thus $d(u',w')=2$. The convexity of $B_2(v)$ (respectively,
  $B_2(u')$, $B_2(w')$) implies that $d(c,v)\leq 2$ (respectively,
  $d(u',b) \leq 2$, $d(w',a) \leq 2$).  As $uvw$ is a metric triangle,
  since $d(u,c) \leq k-1$ and $d(w,c) \leq k-1$, necessarily
  $d(c,v)=2$ and $d(u,c) = d(w,c) = k-1$. Since $uvw$ is a strongly
  equilateral triangle, $d(w,a) = d(u,b) = k$ and
  $d(w,w') = d(u,u') =k-2$, necessarily $d(w',a) = d(u',b) = 2$.
Let $s$ be a common neighbor of $c$ and $v$. If $s$ coincides with
  one of the vertices $a,b$ and is adjacent to the second, say $s=a$
  and $s\sim b$, then $a$ and $b$ are adjacent and $a$ is adjacent to
  $c$.  Then we get the 4-cycle $acw'b$, which cannot be
  induced. Since $a \nsim w'$, we have $b\sim c$ and then $a,b,c$ are
  pairwise adjacent and we are done.  Thus, further we can suppose
  that either $s$ is different from $a$ and $b$ or $s$ coincides with
  $a$ and $b$ but is not adjacent to the second vertex.

  Let $w''$ be a neighbor of $w'$ in $I(w',w)$. Since $d(w,w'')=k-3$
  and $d(w,u')=k$ because $uvw$ is strongly equilateral, we deduce
  that $d(u',w'')=3$. This implies that $w''\ne c$ and $w''\nsim c$.
  Since $uvw$ is strongly equilateral, we have
  $d(u,w')=d(u,w'')=k$. Hence we can apply $\TPC^2(u, w'w'')$ to the
  edge $w'w''$.  First suppose that $\PCd(u,w'w'')$ applies with
  $u' \in B_2(w') \cap B_{k-2}(u)$. Then $d(u',w'') \leq 2$, a
  contradiction.
This contradiction
  shows that only $\TC(u,w'w'')$ applies to edge
  $w'w''$. Consequently, we can find $z\ne c$ adjacent to $w',w''$ and
  having distance $k-1$ to $u$. By $\INC(u)$, we get $z\sim c$ and
  there exists $z'\sim c,z$ at distance $k-2$ from $u$.  Again, since
  $d(u',w'')=3$, $z'\neq u'$. Since $z',u'\in I(u,c)$, by $\INC(u)$ we
  get $z'\sim u'$ and there exists $u''\sim z',u'$ at distance $k-3$
  from $u$.  Since $d(z,u)=k-1$, we must have $d(v,z)=3$, otherwise
  $z\in I(v,w'')\subset I(v,w)$ and we obtain a contradiction with the
  fact that $uvw$ is strongly equilateral. Analogously, one can show
  that $d(v,z')=3$. Summarizing, we obtain that
  $d(z',a)=d(z',s)=d(z,b)=d(z,s)=2$ and $d(v,z)=d(v,z')=3$.  The
  convexity of the balls $B_2(z)$ and $B_2(z')$ implies that $s$
  either is different and adjacent to $a$ and $b$ or $s$ coincides
  with one of the vertices $a,b$ and is adjacent to the second. Since
  the second case is impossible, we deduce that $s\ne a,b$ and
  $s\sim a,b$.

  Consequently, we obtain two $4$-cycles $u'csa$ and $w'csb$, which
  cannot be induced. If $s \sim u', w'$ we get
  $s\in I(u,v)\cap I(v,w)$, contradicting that $uvw$ is a metric
  triangle. Thus we can assume without loss of generality that
  $c \sim b$. Therefore the vertices $a$ and $b$ belong to the ball
  $B_2(z')$. Since $d(z',v)=3$, the convexity of $B_2(z')$ implies
  that $a\sim b$. Then we also have $a\sim c$ as the unique diagonal
  of the cycle $abcu'$, finishing the
  proof.
\end{proof}

The following lemma will be useful in the next section. It shows that the conclusions of Lemma \ref{lem: equilateral1} hold under weaker
hypotheses. 

\begin{lemma}
\label{lem: equilateral2}
 Let $uvw$ be an equilateral  metric triangle of size $k\ge 3$ of a CB-graph $G$.  Pick three vertices $a,b,w'$ such that $a\in I(v,u)$, $w'\in I(v,w)$ and $a\sim v$,
$d(w',v)=2$, and $b\sim v,w'$. If $d(a,w')\le 2$, then $d(a,w') = 2$ and for any common neighbor $c$ of $a$ and $w'$, either $c=b$ or the vertices $a,b,c$ are pairwise adjacent.
\end{lemma}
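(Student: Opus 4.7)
The plan is to reduce Lemma~\ref{lem: equilateral2} to Lemma~\ref{lem: equilateral1} by producing a vertex $u'\in I(v,u)$ with $d(v,u')=2$, $u'\sim a$, and $u'\sim c$: once such a $u'$ exists, $c$ becomes a common neighbor of $u'$ and $w'$, and Lemma~\ref{lem: equilateral1} applied to the parameters $(u,v,w,u',w',a,b,c)$ yields the desired triangle $a,b,c$ directly.

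First, I would dispose of the trivial case $c=b$ and establish that $a\not\sim w'$ (otherwise $a$ would lie in $I(v,u)\cap I(v,w)\setminus\{v\}$, contradicting the metric-triangle property), so $d(a,w')=2$. Combined with the strong equilaterality of $uvw$ given by Theorem~\ref{thm: metrictriangle}, routine distance computations yield $d(u,a)=k-1$, $d(w,a)=k$, $d(u,w')=k$ and $d(u,b)=k$ (the last because $b\in I(v,w)\setminus\{v\}$ cannot lie in $I(u,v)$). Consequently $d(u,c)\in\{k-1,k\}$, and the five vertices $v,a,c,w',b$ are pairwise distinct and form a $5$-cycle whose chords $a\text{-}b$ and $b\text{-}c$ we must produce.

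The principal case is $d(u,c)=k-1$: the edge $ac$ is horizontal at distance $k-1$ from $u$, so $\TPC(u,ac)$ applies. If $\TC(u,ac)$ holds, it directly yields the sought $u'\sim a,c$ at distance $k-2$ from $u$; one checks $u'\in I(v,u)$ with $d(v,u')=2$ and $d(u',w')\le 2$ via $c$, and Lemma~\ref{lem: equilateral1} finishes the argument. If only $\PCu(u,ac)$ applies, no common neighbor of $a$ and $c$ at distance $k-2$ from $u$ exists; instead a pentagon $au_1zu_2c$ with $d(u,z)=k-3$ appears, and the two adjacencies $a\sim b$ and $b\sim c$ must be derived directly by chaining $\PCp$, $\INCp$, and convexity of the balls $B_2(\cdot)$ around the new vertices --- essentially replaying the case analysis of the proof of Lemma~\ref{lem: equilateral1} in the current degenerate setup where $a$ is at distance $1$ (rather than $2$) from $v$.

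The remaining case $d(u,c)=k$ reduces to the principal one: the edge $cw'$ is then horizontal at distance $k$ from $u$, so $\TPC(u,cw')$ applies. In the $\TC$ subcase one obtains $z\sim c,w'$ with $d(u,z)=k-1$, and $\INCz(c,u)$ applied to the clique $S_1(c)\cap I(c,u)$ (which contains $a$) forces $z\sim a$, since $z=a$ is impossible (it would give $a\sim w'$). Thus $z$ is a new common neighbor of $a$ and $w'$ at distance $k-1$ from $u$, and the principal case applied with $c$ replaced by $z$ delivers $a\sim b$; the adjacency $b\sim c$ is then recovered from $b\sim z$ (coming from the principal case) and $z\sim c$ through $\INCz(w',v)$ when $v\sim c$, and through a finer argument involving $\TPC$ on suitable horizontal edges when $d(v,c)=2$. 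The $\PCu(u,cw')$ subcase is handled analogously through pentagon analysis. The main obstacle throughout is the Pentagon Condition subcases, in which no common neighbor is available to invoke Lemma~\ref{lem: equilateral1} and one must chain $\PCp$, $\INCp$, and convexity of small balls to extract the required adjacencies.
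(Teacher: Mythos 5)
Your overall strategy --- reduce to Lemma~\ref{lem: equilateral1} by manufacturing a suitable $u'$, split on $d(u,c)\in\{k-1,k\}$, and apply $\TPC$ to the edge $ac$ or $cw'$ --- is exactly the paper's, and your treatment of the $\TC$ subcases is sound. But the Pentagon subcases, which you yourself flag as ``the main obstacle,'' are not actually proved: ``chaining $\PCp$, $\INCp$, and convexity'' and ``replaying the case analysis of Lemma~\ref{lem: equilateral1}'' is a promissory note, not an argument, and the actual resolutions are different from what you sketch. In the case $d(u,c)=k$, when only the pentagon condition applies you do not need a separate ``pentagon analysis'': invoking $\PCu(u,cw')$ with respect to the specific neighbor $a\in I(c,u)$ of $c$ produces a pentagon $caz'zw'$ with $d(u,z')=k-2$, and $z'\in I(a,u)\subseteq I(v,u)$ already puts you in position to apply Lemma~\ref{lem: equilateral1} directly (with $z'$ in the role of $u'$ and $z$ in the role of $c$); no recursion into the other case is needed. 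More seriously, in the case $d(u,c)=k-1$ the $\PCz(u,ac)$ subcase is not resolved by deriving the adjacencies at all --- it is shown to be \emph{impossible}. The key idea you are missing is to apply Lemma~\ref{lem: deuxpent} to the two pentagons $zz'u'ac$ and $acw'bv$ sharing the edge $ac$: since $d(v,z')=3$, one of them must have a universal vertex; the first cannot (that would make $\TC(u,ac)$ apply after all); and a universal vertex $p$ of $acw'bv$ satisfies $z'\in B_2(p)$ by convexity of $B_2(p)$ applied to $u',z\in B_2(p)$, forcing $p\in I(v,z')\cap I(v,w')\subseteq I(v,u)\cap I(v,w)$ and contradicting that $uvw$ is a metric triangle. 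Without this step your proof does not close.

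A smaller point: you treat $a\sim b$ and $b\sim c$ as two separate chords to be produced, which leads you into a convoluted recovery of $b\sim c$ (``through $\INCz(w',v)$ when $v\sim c$,'' etc.). Since $a\nsim w'$, once $a\sim b$ is established the $4$-cycle $cabw'$ cannot be induced, so $b\sim c$ is automatic; the whole lemma reduces to proving $a\sim b$, which simplifies your case $d(u,c)=k$ considerably.
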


\begin{proof}
 Since $uwv$ is strongly equilateral, we must have $d(w,a)=k$, whence $d(a,w')=2$. In particular, if $a \sim b$, then $b\sim c$ as the unique diagonal
 of the 4-cycle $cabw'$.  Hence, it suffices to show that $a\sim b$. Since  $d(w,c)\le k-1$ and $uvw$ is strongly equilateral, necessarily $c\notin I(u,v)$, hence $d(u, c) \in \sg{k-1, k}$.

\medskip\noindent
{\bf Case 1:} $d(u,c)=k$.

\begin{proof}
 Since $d(u,w')=k$, we can apply $\TPC^1(u,cw')$. If $\TC(u,cw')$ applies, then there exists a vertex  $z\sim c,w'$ at distance $k-1$ from $u$.
 Since $d(a,w')=2$, necessarily $z\ne a$. By $\INC(u,c)$ we have $a\sim z$ and there exists some $z'\sim a,z$ at distance $k-2$ from $u$. Hence $z'\in I(v,u)$ has distance 2 to
 $v$ and $w'$. Analogously, if $\PCu(u,cw')$  applies with respect to the neighbor $a\in I(c,u)$ of $c$, then we can find a pentagon
 $caz'zw'$ with $z'$ at distance $k-2$ from $u$. In this case also $z'\in I(a,u)\subset I(v,u)$. In both cases we have $a\sim z',v$ and $b\sim w',v$ and $z\sim z',w'$, thus
 we can apply Lemma \ref{lem: equilateral1} with $z'$ playing the role of $u'$ and $z$ playing the role of $c$ to deduce $a\sim b$.
\end{proof}

\medskip\noindent
{\bf Case 2:} $d(u,c)=k-1$.

\begin{proof}
 Since $d(u,a)=k-1$, we can apply $\TPCz(u,ac)$. If $\TC(u,ac)$ applies, then there exists $z\sim a,c$ at distance $k-2$ from $u$. Since $z\in I(v,u)$ and $d(v,z)=d(z,w')=2$, we can apply Lemma \ref{lem: equilateral1} with $z$ playing the role of $u'$ to conclude that $a,b,c$ are pairwise adjacent. So suppose that $\PCz(u,ac)$ applies and there exist a pentagon $au'z'zc$ with $d(u,u')=d(u,z)=k-2$ and $d(u,z')=k-3$. Then we are in position to apply Lemma \ref{lem: deuxpent} to the two pentagons $zz'u'ac$ and $acw'bv$. Since  $d(v,z')=3$,  one of these two pentagons has a universal vertex (i.e., a vertex adjacent to all vertices of the respective pentagon). Since we can suppose that $\TC(u,ac)$ does not apply, the pentagon $zz'u'ac$ cannot contain a universal vertex. Thus there exists a vertex $p$ adjacent to all vertices of the pentagon $acw'bv$. Since $u',z\in B_2(p)$ and $z'\in I(u',z)$, the convexity of $B_2(p)$ implies that $z'\in B_2(p)$. Consequently, $p\in I(v,z')\cap I(v,w')\subset I(v,u)\cap I(v,w)$, contrary to the assumption that $uvw$ is a metric triangle.
\end{proof}

In both cases we either obtained a contradiction or the desired conclusion that $a\sim b$.
\end{proof}

 \section{Helly theorem}
\label{sec: helly}

In this section, we prove a Helly theorem for convex sets in
CB-graphs. We follow the method from \cite{BaCh95} for weakly modular
graphs and we show how to adapt it to graphs with convex balls, thanks
to the results of previous section.

A finite subset of vertices $A\subseteq V$ is \emph{Helly independent} (\emph{h-independent} for short)  if
$$\bigcap_{a\in A}\mathrm{conv}\left(A\setminus \sg{a}\right) = \varnothing.$$
The \emph{Helly number} $h(G)$ of $G$ is the supremum  of the size of an $h$-independent set of vertices of $G$ \cite{vdV}. Equivalently, $h(G)$ is the smallest integer $k$ such that any finite family $\mathcal F$ of convex sets of $G$ has a nonempty intersection if and only if any subfamily of $\mathcal F$ with $k+1$ members has a nonempty intersection. For a positive integer $k$, let $h_k(G)$ be the supremum of the size of an $h$-independent set of $G$ of diameter at most $k$. Clearly $h_1(G)$ is just the clique number $\omega(G)$ of $G$.

A subset of vertices $S\subseteq V$ of $G$ is called a \emph{simplex} \cite{BaCh95} if any three vertices form an equilateral metric triangle and
for any four vertices $u,v,w,x\in V$,  we have $I(u, w)\cap I(v,x)=\varnothing$. Let $\sigma(G)$ be the supremum of the  size of a simplex of $G$ and $\sigma_k(G)$ be the supremum of the  size of a simplex of diameter at most  $k$.

It is easy to see that $h(G)$ and $\sigma(G)$ are always lower bounded by the clique number $\omega(G)=h_1(G)$ of $G$ and are upper
bounded by the  Hadwiger number $\eta(G)$ \cite{DuMe}  (the size of the largest complete graph that can be obtained by contracting the edges of $G$)
and that $h(G)$ and $\sigma(G)$ are not comparable in general. However, it is shown
in \cite{BaCh95} that $h(G)\leq \sigma(G)$ holds for graphs with equilateral metric triangles and that
$h(G)=\sigma(G)=\omega(G)=h_1(G)$ holds for weakly modular graphs. We adapt their proof scheme to establish  that $h(G)=h_2(G)$
for graphs with convex balls. Notice that in graphs with convex balls, $h(G)$  may be much larger than the clique number. The simplest such example is the 5-cycle which has clique number 2 and Helly number 3. 
Another example is  the Petersen graph, which has Helly number 4 (see Figure \ref{fig: Petersen} for an $h$-independent set of size 4) and clique number 2.

  \tikzexternaldisable
  \begin{figure}[h]
    \centering
    \begin{tikzpicture}[scale=1]
    \tikzstyle{every node}=[draw,circle,fill=black,minimum size=4pt,
                            inner sep=0pt]

    \node (1) at (90:2) [color=red] {};
    \node (2) at (18:2) [color=red] {};
    \node (3) at (306:2) {};
    \node (4) at (234:2) {};
    \node (5) at (162:2) {};
    \node (6) at (90:1) {};
    \node (7) at (18:1) {};
    \node (8) at (306:1) [color=red] {};
    \node (9) at (234:1) {};
    \node (10) at (162:1) [color=red] {};

    \draw[thick] (1) -- (2) -- (3) -- (4) -- (5) -- (1) -- (6) -- (8) -- (3);
    \draw[thick] (2) -- (7) -- (9) -- (4);
    \draw[thick] (8) -- (10) -- (5);
    \draw[thick] (6) -- (9);
    \draw[thick] (10) -- (7);

    \end{tikzpicture}
    \caption{The red vertices form an $h$-independent set of size $4$ of the Petersen graph.}  \label{fig: Petersen}
\end{figure}

The main result of this section shows that the Helly number of a CB-graph can be defined locally:

\begin{theorem}
 \label{thm: hellydiam2}
 Let $G$ be a CB-graph. Then $h(G)=h_2(G)$.\end{theorem}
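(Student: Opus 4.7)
The inequality $h_2(G) \le h(G)$ follows directly from the definitions, so the plan is to establish the reverse inequality $h(G) \le h_2(G)$. I will adapt the scheme of \cite{BaCh95}, where the equality $h(G) = \omega(G)$ is proved for weakly modular graphs by reducing an $h$-independent set to a set of small diameter, using that all metric triangles in such graphs are strongly equilateral. By Theorem \ref{thm: metrictriangle}, metric triangles in a CB-graph are either strongly equilateral or of type $(2,2,1)$ (spanning a pentagon); the latter possibility is precisely what forces us to replace the bound $\omega(G) = h_1(G)$ by $h_2(G)$.

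The strategy is as follows. Fix an $h$-independent set $A$ with $|A| = h(G)$, and proceed by induction on $k := \diam(A)$. If $k \le 2$ there is nothing to prove. Otherwise, pick a diametral pair $u, v \in A$, and using $h$-independence choose a witness $w \in \bigcap_{a \ne u}\conv(A \setminus \{a\})$; since $w \ne u$, we have $1 \le d(u,w) \le k$. Next, I replace $u$ by a neighbor $u'$ of $u$ lying in $I(u,w)$, forming $A' := (A \setminus \{u\}) \cup \{u'\}$; convexity of balls in $G$ guarantees $d(a, u') \le d(a, u)$ for all $a \in A \setminus \{u\}$, with strict inequality whenever $d(a,u) = k$. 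A careful choice of $u'$ will ensure that $A'$ is still $h$-independent of size $|A|$ with strictly smaller (lexicographic) distance multiset than $A$. After finitely many iterations one arrives at an $h$-independent set of diameter $\le 2$ and size $h(G)$, yielding $h(G) \le h_2(G)$.

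The main difficulty lies in preserving $h$-independence under the replacement step: one must verify that $u' \notin \conv(A \setminus \{u\})$ and that each $a \in A \setminus \{u\}$ still lies outside $\conv(A')$. The delicate case is when $u$, $v$ (or another pair in $A$) lie on a pentagon giving rise to a type-$(2,2,1)$ metric triangle, for then a naive choice of $u'$ could accidentally fall into $\conv(A \setminus \{u\})$ through a side of such a pentagon. Lemmas \ref{lem: equilateral1} and \ref{lem: equilateral2}, which describe how common neighbors of vertices adjacent to a strongly equilateral metric triangle must themselves form cliques, provide enough local control to select $u'$ avoiding these traps. In analogy with the weakly modular case treated in \cite{BaCh95}, this local metric information together with Theorem \ref{thm: metrictriangle} should suffice to close the argument and conclude $h(G) = h_2(G)$.
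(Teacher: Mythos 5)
Your proposal gets the easy inequality and correctly identifies Theorem \ref{thm: metrictriangle} and Lemmas \ref{lem: equilateral1}--\ref{lem: equilateral2} as the relevant tools, but as written it is a plan with two genuine gaps rather than a proof. First, the replacement step is unsound: if $u'$ is a neighbor of $u$ in $I(u,w)$ for one particular $w$, convexity of balls does \emph{not} give $d(a,u')\le d(a,u)$ for the other vertices $a\in A\setminus\{u\}$; indeed, when $uwa$ is a metric triangle we have $I(u,w)\cap I(u,a)=\{u\}$, so $u'\notin I(u,a)$ and $d(a,u')$ may well equal $d(a,u)+1$. Your lexicographic potential therefore need not decrease. (Also, the ``witness'' $w\in\bigcap_{a\ne u}\conv(A\setminus\{a\})$ is problematic, since $u$ itself always lies in that intersection and nothing forces the existence of a $w\ne u$ there.) The paper avoids this by fixing a \emph{minimal} vertex $z$ (minimizing $\Delta(z,A)=\sum_{v}d(z,v)$) and pulling the other vertices toward $z$ via Lemma \ref{lem: LBaCh}, so that the single quantity $\Delta(A)$ strictly decreases.

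Second, and more importantly, an induction on diameter cannot close by itself: the reduction process can stall at a \emph{distance-minimal equilateral simplex} $S$ of diameter $k\ge 3$, i.e., a set in which $I(u,v)\cap\conv(S\setminus\{u\})=\{v\}$ for all pairs and every triple is an equilateral metric triangle. No replacement of the kind you describe is available there, yet $\diam(S)$ is still large. The paper's proof handles exactly this configuration with a separate, substantive argument (Lemma \ref{lem: h-indepdiam2}): fixing an edge $vx$ with $x\in I(u,v)$, applying $\TPCz(w,vx)$ to every $w\in S\setminus\{u,v\}$, using Lemma \ref{lem: equilateral2} to rule out the pentagon alternative, and showing that the resulting ``imprints'' of the $w$'s on $vx$ form a clique of size $|S|$ inside $\conv(S)$ (distance-minimality is what forces distinct imprints to be adjacent). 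This clique is an $h$-independent set of diameter $1$, which is what finally yields $h(G)\le h_2(G)$. Your proposal contains no mechanism for this terminal case, and it is the crux of the theorem.
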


The proof of Theorem \ref{thm: hellydiam2} is based on  three lemmas. First we recall the following general result: 

\begin{lemma}[\cite{BaCh95}*{Lemma 5}]
\label{lem: LBaCh}
 Let $A\subseteq V$ be an $h$-independent set of a graph $G$. If $x \in I(u,v)\cap \mathrm{conv}\left(A\setminus \sg{u}\right)$ for some vertices $u,v\in A$,
 then the set $B:=\left(A\setminus\sg{v} \right)\cup \sg{x}$ is $h$-independent with $|B|=|A|$. In particular, $B$ is $h$-independent when
 $x$ is chosen from $I(u,v)\cap I(v,w)$ for distinct $u,v,w\in A$.
\end{lemma}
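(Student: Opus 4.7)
The plan is to verify both statements by direct convex-hull bookkeeping, without needing any of the CB-graph structure theory at all: the lemma is essentially a pure ``exchange lemma'' for Helly independent sets in arbitrary convex geometries, which is why it can be quoted verbatim from \cite{BaCh95}.

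First, the ``in particular'' clause is immediate: if $x \in I(u,v) \cap I(v,w)$ with $u,v,w \in A$ distinct, then $x \in I(v,w) \subseteq \conv(\{v,w\}) \subseteq \conv(A \setminus \{u\})$, so the first clause applies.

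Next I would check $|B| = |A|$, i.e.\ that $x \notin A \setminus \{v\}$. Suppose on the contrary that $x = a$ for some $a \in A \setminus \{v\}$. Then on the one hand $x \in I(u,v) \subseteq \conv(A \setminus \{a'\})$ for every $a' \in A$ with $a' \ne u, a' \ne v$; on the other hand $x \in \conv(A \setminus \{u\})$ by hypothesis; and trivially $x \in A \subseteq \conv(A \setminus \{a'\})$ for every $a' \ne x$, while $x = a \in I(u,v) \subseteq \conv(A \setminus \{x\})$ as well. Assembling these observations, $x$ lies in $\bigcap_{a' \in A} \conv(A \setminus \{a'\})$, contradicting the $h$-independence of $A$. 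Thus $x \notin A \setminus \{v\}$ and $|B| = |A|$.

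The main step is to show that $B$ is itself $h$-independent. Assume for contradiction that there exists $y \in \bigcap_{b \in B} \conv(B \setminus \{b\})$; I claim that then $y \in \bigcap_{a \in A} \conv(A \setminus \{a\})$, contradicting the $h$-independence of $A$. I split the verification into three cases. For $a = v$: since $B \setminus \{x\} = A \setminus \{v\}$, we have $y \in \conv(B \setminus \{x\}) = \conv(A \setminus \{v\})$. For $a \in A \setminus \{u,v\}$: since $x \in I(u,v) \subseteq \conv(\{u,v\}) \subseteq \conv(A \setminus \{a\})$, we get $(A \setminus \{a,v\}) \cup \{x\} \subseteq \conv(A \setminus \{a\})$, hence $\conv(B \setminus \{a\}) = \conv((A \setminus \{a,v\}) \cup \{x\}) \subseteq \conv(A \setminus \{a\})$, and $y$ lies in the latter. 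For $a = u$: using the hypothesis $x \in \conv(A \setminus \{u\})$ in place of $x \in I(u,v)$, the same argument gives $\conv(B \setminus \{u\}) \subseteq \conv(A \setminus \{u\})$, so again $y \in \conv(A \setminus \{u\})$. This contradicts the $h$-independence of $A$.

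There is no real obstacle here; the only point to get right is the case split on which element of $A$ is being excluded, and to use the hypothesis $x \in I(u,v)$ precisely where it is needed (the ``generic'' case $a \ne u, v$), reserving the stronger hypothesis $x \in \conv(A \setminus \{u\})$ for the single case $a = u$.
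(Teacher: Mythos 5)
Your proof is correct. Note that the paper itself gives no proof of this lemma (it is quoted verbatim from \cite{BaCh95}), so there is nothing to compare against; your argument is the standard exchange argument one would expect, and the case analysis is complete: the only facts used are $I(u,v)\subseteq\conv(\{u,v\})\subseteq\conv(A\setminus\{a\})$ for $a\neq u,v$, the hypothesis $x\in\conv(A\setminus\{u\})$ for the case $a=u$, and the identity $B\setminus\{x\}=A\setminus\{v\}$ for the case $a=v$. The verification that $x\notin A\setminus\{v\}$ (hence $|B|=|A|$) and the reduction of the ``in particular'' clause to the main clause are both handled correctly.
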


For the sake of simplicity, by a \emph{distance-minimal set} we mean any set $A\subseteq V$ such that for any two distinct $u, v \in A$ we have
$I(u,v)\cap \mathrm{conv}\left(A\setminus \sg{u}\right) = \sg{v}.$ We will use the operation of Lemma \ref{lem: LBaCh} to
transform  $h$-independent sets into  $h$-independent sets with ``nicer properties''.

\begin{lemma}\label{lem: hleqsigma}
  If a CB-graph $G$ contains an $h$-independent set $A$, then $G$
  contains a distance-minimal $h$-independent simplex of $G$ or an
  $h$-independent set of diameter at most $2$. In particular,
  $h(G)\leq \max\{ h_2(G), \sigma(G)\}$.
\end{lemma}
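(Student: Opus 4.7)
The plan is to transform $A$ via iterated application of Lemma \ref{lem: LBaCh} into an $h$-independent set $A'$ of the same cardinality which is either distance-minimal (and, as I shall show, automatically a simplex) or already has diameter at most $2$. The desired inequality $h(G)\leq \max\{h_2(G),\sigma(G)\}$ follows immediately from this dichotomy, since $|A|$ can be taken arbitrary below $h(G)$.

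To produce $A'$, I would select among all $h$-independent subsets of $V(G)$ of size $|A|$ one that minimizes a suitably chosen well-founded potential on the pairwise distances (for instance, the sorted distance tuple in lex order). If such an $A'$ were not distance-minimal, then some $u,v\in A'$ would admit $x\in I(u,v)\cap \conv(A'\setminus\{u\})$ with $x\neq v$, and by iterated interval-expansion of the convex hull one may take $x$ to lie in $I(u,v)\cap I(v,w)$ for some $w\in A'$. The triple-based case of Lemma \ref{lem: LBaCh} then produces another $h$-independent set of the same cardinality in which both $d(u,v)$ and $d(v,w)$ strictly decrease. The \emph{main technical obstacle} is to control the simultaneous changes $d(a,v)\mapsto d(a,x)$ for the remaining $a\in A'$ in such a way that the chosen potential strictly decreases (contradicting minimality of $A'$); here convexity of the balls $B_r(a)$ forces $d(a,x)\leq \max(d(a,u),d(a,v))$, which is the analogue of the corresponding step for weakly modular graphs in \cite{BaCh95}.

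Once $A'$ is distance-minimal, both defining conditions of a simplex follow by short arguments from the definition. If $u,v,w\in A'$ and $x\in I(u,v)\cap I(v,w)$, then $x\in I(v,w)\subseteq \conv(A'\setminus\{u\})$ forces $x\in I(u,v)\cap\conv(A'\setminus\{u\})=\{v\}$, so every triple of $A'$ is a metric triangle. If $u,v,w,x\in A'$ are pairwise distinct and $y\in I(u,w)\cap I(v,x)$, then $y\in I(v,x)\subseteq \conv(A'\setminus\{u\})$ gives $y=w$, whence $w\in I(v,x)\cap \conv(A'\setminus\{v\})=\{x\}$, a contradiction. So the four-vertex crossing condition $I(u,w)\cap I(v,x)=\varnothing$ holds throughout $A'$.

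To finish, I would apply Theorem \ref{thm: metrictriangle} to each metric triangle in $A'$. If all are strongly equilateral, $A'$ satisfies both simplex conditions and $|A|\leq \sigma(G)$. Otherwise some triple $u,v,w$ in $A'$ is of type $(2,2,1)$; after labelling so that $d(u,v)=1$ and $d(u,w)=d(v,w)=2$, for every further $z\in A'$ the triple $u,v,z$ is a metric triangle containing a side of length $1$, so by Theorem \ref{thm: metrictriangle} it is either equilateral of size $1$ or of type $(2,2,1)$, giving $d(u,z),d(v,z)\leq 2$. Then for any two $y,z\in A'\setminus\{u,v\}$, the metric triangle $u,y,z$ has two sides of length at most $2$, and Theorem \ref{thm: metrictriangle} (either strongly equilateral with common side $\leq 2$, or of type $(2,2,1)$ with sides $\{2,2,1\}$) yields $d(y,z)\leq 2$. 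Hence $\diam(A')\leq 2$ and $|A|\leq h_2(G)$, completing the proof.
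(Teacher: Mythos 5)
Your endgame (using Theorem \ref{thm: metrictriangle} to show that a set all of whose triples are metric triangles is either a simplex or has diameter at most $2$) is essentially sound, but the heart of the lemma is the termination argument for the iterated application of Lemma \ref{lem: LBaCh}, and that is exactly where your proof has a genuine gap — one you flag as ``the main technical obstacle'' but do not actually overcome. When you replace $v$ by $x\in I(u,v)\cap I(v,w)$, the distances $d(u,v)$ and $d(w,v)$ drop, but for every other $a\in A'$ the distance $d(a,v)$ becomes $d(a,x)$, and the convexity bound $d(a,x)\le\max(d(a,u),d(a,v))$ you invoke permits $d(a,x)$ to be strictly larger than $d(a,v)$. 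Neither the sum of pairwise distances nor the sorted distance tuple (in either increasing or decreasing lexicographic order) is then forced to decrease: a small entry $d(a,v)$ can be replaced by a large one, and the multiplicity of the maximal entry can grow. So the minimality of $A'$ yields no contradiction, and full distance-minimality is never established. A second, smaller gap: the reduction from ``$x\in I(u,v)\cap\conv(A'\setminus\{u\})$, $x\ne v$'' to ``$x$ may be taken in $I(u,v)\cap I(v,w)$ for some $w\in A'$'' is not justified — the convex hull is generated by intervals between points of the hull, not between points of $A'$.

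The paper sidesteps both problems by not aiming for full distance-minimality in one sweep. It fixes a vertex $z$ minimizing $\Delta(z,A)=\sum_{v\in A\setminus\{z\}}d(z,v)$ and only contracts \emph{other} vertices toward $z$: it replaces $u$ by some $x\ne u$ in $I(u,z)\cap\conv(A\setminus\{z\})$ (the general case of Lemma \ref{lem: LBaCh}, so no triple reduction is needed). This leaves every vertex other than $u$ untouched, so the only change to $\Delta(z,\cdot)$ is $d(z,u)\mapsto d(z,x)<d(z,u)$, and $\Delta$ strictly decreases — a clean, well-founded measure. The terminal set $S$ then satisfies the metric-triangle property only for triples containing a minimal vertex, and the paper completes the argument by a case analysis on whether a minimal vertex has a neighbour in $S$ (giving diameter $\le 2$ via Theorem \ref{thm: metrictriangle}) or not (where a further comparison of $\Delta(v,S)$ with $\Delta(z,S)$ shows all vertices are minimal and all triples are equilateral, yielding a distance-minimal simplex). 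To repair your proof you would need either this one-vertex-at-a-time contraction or some other potential whose decrease you can actually certify; as written, the iteration may not terminate at a distance-minimal set.
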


\begin{proof}
  For any vertex $u$ of $A$, let
  $\Delta(u,A):= \sum_{v\in A\setminus \sg{u}} d(u,v).$ Let also
  $\Delta(A):=\min_{z\in A}\Delta(z,A).$ Since $A$ is finite, these
  values are well defined.  We call any vertex $z\in A$ such that
  $\Delta(z,A)=\Delta(A)$ \emph{minimal}.  Pick any minimal vertex
  $z\in A$ and we make any other vertex of $A$ as close as possible to
  $z$.  Namely, if there exists some $u\in A\setminus \sg{z}$ such
  that $I(u,z)\cap \mathrm{conv}(A\setminus \sg{z})\neq \sg{u}$, then
  we pick any vertex $x\ne u$ from this intersection and set
  $B:= \left(A\setminus \sg{u}\right)\cup \sg{x}$. By Lemma \ref{lem:
    LBaCh}, the set $B$ is $h$-independent. Notice also that
  $\Delta(B)\leq \Delta(z, B)<\Delta(z,A)=\Delta(A)$. Hence, if we set
  $A:=B$ and apply this transformation to minimal vertices of $A$ as
  long as we decrease $\Delta(A)$, after a finite number of steps we
  will end up with an $h$-independent set $S$ of the same cardinality
  as the initial set $A$ and such that
  $I(u,z)\cap \mathrm{conv}(S\setminus\sg{z})=\sg{u}$ holds for any
  minimal vertex $z$ of $S$ and any other vertex $u \in S$. This
  implies that $I(u,v)\cap I(u,z)=\{ u\}$ and
  $I(v,u)\cap I(v,z)=\{ v\}$ for any three distinct vertices $u,v,z$
  of $S$ such that $z$ is minimal. We distinguish two cases.

 \medskip\noindent
 {\bf Case 1:} $z\sim u$ for a minimal vertex $z$ of $S$ and a vertex $u\in S\setminus \sg{z}$.

 \begin{proof}
By construction, for any $v\in S\setminus\sg{z,u}$, we have
   $I(v,z)\cap I(v,u)=\sg{v}$ and $I(u,z)\cap I(u,v)=\sg{u}$. In
   particular, we have $u\notin I(v,z)$. Consequently, $zuv$ is a metric triangle
   of $G$. By Theorem \ref{thm: metrictriangle}, $zuv$ is either an
   equilateral triangle or it lies on a pentagon, i.e.,
   $d(z,uv)=d(u,v)=2$. This implies that $S\subseteq B_2(z)$ and that
   $B_1(z)\cap S$ form a clique. We assert that
   $\mathrm{diam}(S)\leq 2$. Pick any $v, w \in S\setminus \sg{z}$.
   If $d(z,v)=2$ and $w\sim z$, then we saw that $d(v,w)=2$. Now, let
   $d(z,v)=d(z,w)=2$. If we assume that $d(v,w)\geq 3$, then by
   Theorem \ref{thm: metrictriangle} $zvw$ is not a metric triangle.
   By the property of the set $S$ there exists a vertex
   $y \in I(z, v)\cap I(z,w)$ different from $z$. Since
   $d(z,v)=d(z,w)=2$, $y$ must be adjacent to $v$ and $w$, contrary to
   the assumption $d(v,w)\ge 3$. Hence $d(v,w)\le 2$, establishing
   that $S$ has diameter at most 2. Thus in this case we have
   $|A|=|S|\leq h_2(G)$.
 \end{proof}

\medskip\noindent
 {\bf Case 2:} $z\nsim u$ for any minimal vertex $z$ of $S$ and any vertex $u\in S\setminus \sg{z}$.

\begin{proof}
  We assert that in this case $S$ is a distance-minimal simplex. The
  second condition in the definition of a simplex is always true in
  $h$-independent sets, hence we only need to show that any three
  vertices of $S$ form an equilateral metric triangle. Pick any
  triplet $z,u,v$ of vertices of $S$ such that $z$ is minimal.  We
  already know that $I(u,v)\cap I(u,z)=\{ u\}$ and
  $I(v,u)\cap I(v,z)=\{ v\}$. Let $y \in I(z,u)\cap I(z,v)$ such that
  $I(y,u)\cap I(y,v)=\sg{y}$. Then clearly $yuv$ is a metric
  triangle. We assert that $d(u,v)\le \min \{d(z,u), d(z,v)\}$ and
  that equality holds only if $z=y$ and $zuv$ is an equilateral
  triangle.
By Theorem \ref{thm: metrictriangle}, the metric triangle $yuv$ is
  equilateral or has type $(2,2,1)$. If $yuv$ is equilateral, then
  $d(z,u)=d(z,v)=d(u,v)$ if $z=y$ and $d(z,u)=d(z,v)>d(u,v)$ if
  $z\ne y$. Now assume that $yuv$ has type $(2,2,1)$. If $d(u,v)=1$,
  then $d(z,u)=d(z,v)\ge d(y,u)=d(y,v)=2$ and we are done. Now suppose
  that $d(u,v)=2$. Then $y$ is adjacent to one of the vertices $u,v$,
  say $y\sim v$. Since $z$ is not adjacent to any vertex of
  $S\setminus \{ z\}$, we must have $z\ne y$.  We assert that
  $\Delta(v,S)<\Delta(z,S)$. First notice that
  $d(u,v)=2 < d(u,y)+d(y,z)=d(u,z)$.  For any
  $w \in S\setminus \sg{z,u,v}$, let $y'\in I(z,v)\cap I(z,w)$ be such
  that $y'vw$ is a metric triangle. By Theorem \ref{thm:
    metrictriangle}, either $y'vw$ is equilateral, in which case
  $d(v,w)=d(y', w)\leq d(z,w)$, or $y'vw$ has type $(2,2,1)$ and,
  since $z\notin B_1(S\setminus\sg{z})$, we must have
  $d(v,w)\leq 2 \leq d(z,w)$.  Consequently,
  $\Delta(v,S)< \Delta(z,S)$, contradicting the minimality choice of
  $z$. This shows that $d(u,v)\le \min\{ d(z,u), d(z,v)\}$ and that
  equality holds only if $z =y$ and $zuv$ is equilateral. Since
  $\Delta(z)\le \Delta(u)$, all inequalities $d(u,v)\le d(z,v)$ are
  equalities. This proves that $z=y$ and that each metric triangle
  $zuv$ is equilateral. Consequently, all vertices of $S$ are minimal,
  yielding that all triplets $u,v,w$ of $S$ define equilateral metric
  triangles of size $k\ge 2$. In particular, $S$ is a simplex, whence
  $|S|\leq \sigma(G)$.
\end{proof}

Consequently, in Case 1 $S$ is an $h$-independent set of diameter at most 2 and in Case 2 $S$ is an $h$-independent distance-minimal simplex.
\end{proof}

Now we show that the simplex from the second case of the previous proof can be reduced to a clique of $G$ of the same size.

\begin{lemma}
 \label{lem: h-indepdiam2}
 Let $S\subseteq V$ be a distance-minimal $h$-independent simplex of a CB-graph $G$ and let $\diam(S)\ge 3$. Then $\mathrm{conv}(S)$
 contains a clique $C$ of size $|S|$.
\end{lemma}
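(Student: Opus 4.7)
The plan is to induct on the common pairwise distance $k := \diam(S) \ge 3$. From the proof of Lemma~\ref{lem: hleqsigma}, all pairs of vertices of $S$ lie at distance exactly $k$ and every triple of vertices of $S$ forms an equilateral metric triangle of size $k$, which by Theorem~\ref{thm: metrictriangle} is strongly equilateral.

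For the induction step $k \ge 4$, I fix an arbitrary $v \in S$ and simultaneously shrink each $u \in S \setminus \{v\}$ to a vertex $u^\star \in I(v, u)$ with $d(u, u^\star) = 1$, so that the resulting set $S^\star := \{v\} \cup \{u^\star : u \in S \setminus \{v\}\}$ is again a distance-minimal $h$-independent simplex of size $|S|$ but of pairwise distance $k-1$; the induction hypothesis applied to $S^\star \subseteq \mathrm{conv}(S)$ then provides the required clique. Distinctness of the $u^\star$'s and the distance-minimality of $S^\star$ follow from the distance-minimality of $S$, since any violation would place some $u^\star$ in $I(v, u) \cap \mathrm{conv}(S \setminus \{u\}) = \{v\}$, which is forbidden; the simplex property for $S^\star$ is inherited from that of $S$ through the strongly equilateral structure and the convexity of balls of small radius. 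The delicate point is ensuring $d(u^\star, w^\star) \le k-1$ in spite of strong equilaterality forcing every vertex of $I(v, u)$ to have distance exactly $k$ from $w$: this is arranged by invoking Lemmas~\ref{lem: equilateral1} and~\ref{lem: equilateral2}, which inside each equilateral metric triangle $u v w$ of size $\ge 3$ produce common neighbors allowing the simultaneous choice of $u^\star$ to have a neighbor on $I(v, w)$ close to $w^\star$.

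The base case $k = 3$ is the main obstacle, since naive shrinking to neighbors of $v$ does not directly produce a clique: the cliques $K_u := S_1(v) \cap I(v, u)$ for distinct $u \in S \setminus \{v\}$ are, by the distance-minimality computation, pairwise disjoint with vertices in $K_u$ and $K_w$ at distance at least $k - 1 = 2$ from each other. Instead, I pass to the second sphere $S_2(v)$: for each $u \in S \setminus \{v\}$, pick $u^{(2)} \in I(v, u) \cap S_2(v)$, and observe that by strong equilaterality one has $d(u^{(2)}, w) = 3$, so by convexity of $B_3(u^{(2)})$ the distance $d(u^{(2)}, w^{(2)}) \in \{2, 3\}$. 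When this distance is $2$, Lemma~\ref{lem: equilateral1} supplies a triangle of common neighbors; the remaining configurations are absorbed by Lemma~\ref{lem: equilateral2}. A coherent choice of these common neighbors, indexed over all pairs in $S \setminus \{v\}$, assembles into a clique of size $|S|$ lying inside $\mathrm{conv}(S)$, completing the proof.
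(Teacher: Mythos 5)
Your proposal takes a genuinely different route from the paper (induction on the common distance $k=\diam(S)$, with a separate base case at $k=3$), but both halves of the induction contain genuine gaps, and I do not see how to close them along the lines you sketch.

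\textbf{The induction step does not go through.} You need the shrunk set $S^\star$ to be again a distance-minimal $h$-independent simplex of pairwise distance $k-1$, and none of these properties is actually established. The most basic one already resists: since every metric triangle $uvw$ in $S$ is \emph{strongly} equilateral, every vertex of $I(v,u)$ --- in particular $u^\star$ --- is at distance exactly $k$ from $w$, so $d(u^\star,w^\star)\ge k-1$ with equality only if $w^\star$ happens to lie on $I(u^\star,w)$. Arranging this \emph{simultaneously} for all pairs of $S\setminus\{v\}$ is precisely the difficulty, and Lemmas~\ref{lem: equilateral1} and~\ref{lem: equilateral2} cannot do it: they produce pairwise-adjacent common neighbors near the apex $v$ (at distance $1$ or $2$ from $v$), and say nothing about the far endpoints $u^\star,w^\star$ at distance $k-1$ from $v$. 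Worse, $h$-independence of $S^\star$ cannot be derived from Lemma~\ref{lem: LBaCh}: distance-minimality of $S$ gives $I(v,u)\cap\conv(S\setminus\{v\})=\{u\}$, so $u^\star\notin\conv(S\setminus\{v\})$ and the replacement lemma is not applicable in the direction you need. Distance-minimality actively obstructs the shrinking rather than enabling it.

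\textbf{The base case is the whole lemma, and it is not carried out.} The ``coherent choice of these common neighbors \ldots\ assembles into a clique of size $|S|$'' is exactly the assertion to be proved: you must show that the chosen common neighbors for different pairs are mutually adjacent, pairwise distinct, and that the total count reaches $|S|$. Your sketch never uses distance-minimality, which is the essential hypothesis. The paper's argument (which works uniformly for all $k\ge 3$, with no induction) fixes a single edge $vx$ with $x\sim v$, $x\in I(u,v)$; since $d(w,v)=d(w,x)=k$ for every $w\in S\setminus\{u,v\}$ by strong equilaterality, $\TPCz(w,vx)$ applies, and Lemma~\ref{lem: equilateral2} rules out the pentagon alternative, so each such $w$ leaves an ``imprint'' $z\sim v,x$ with $z\in I(v,w)\cap I(x,w)$. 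Two imprints of distinct vertices must be adjacent, for otherwise $x\in I(z_1,z_2)\subseteq\conv(S\setminus\{u\})$ would contradict $I(u,v)\cap\conv(S\setminus\{u\})=\{v\}$; they are distinct because triples of $S$ are metric triangles. Adding $v$ and $x$ yields the clique of size $\ge|S|$ inside $\conv(S)$. If you redo your base case along these lines you will find it already covers every $k\ge3$, making the induction unnecessary.
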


\begin{proof} The case $|S|=2$ is trivial. Thus,  let $|S|\geq 3$.  Let  $k:=\mathrm{diam}(S) \geq 3$, which corresponds to the distance between any two distinct vertices of $S$. Pick  $u,v\in S$ and let $x$ be a neighbor of $v$ in the interval $I(u,v)$. For every $w\in S\setminus \sg{u, v}$, we can apply $\TPCz(w,vx)$. We partition $S\setminus \sg{u,v}$ into the sets  $S_1$ and $S_2$: $S_1$ is the set of all vertices $w$ of $S\setminus \sg{u,v}$ such that $\TC(w,vx)$ applies and $S_2:= S\setminus S_1$ (notice that for all  $w\in S_2$ the condition $\PCz(w,vx)$ applies).

\begin{claim} \label{cl:empty}
The set $S_2$ is empty.
\end{claim}
\begin{proof}
Suppose by way of contradiction that $w\in S_2$. Let $z$ be a vertex at distance 2 from $v$ and $x$ such that $z\in I(v,w)\cap I(x,w)$.
Let $t$ be a common neighbor of $z$ and $v$. Then we can apply Lemma \ref{lem: equilateral2} with the vertices $u,v,w,t,z,x$ playing the role of $u,v,w,b,w',a$, respectively.
Thus we get that $t\sim x$. This  implies that  $\TC(w,vx)$ applies, contrary to the assumption that $w\in S_2$. Hence $S_2= \varnothing$.
\end{proof}

Let $A_1$ be the set of all common neighbors $z$ of $v$ and $x$ such that there exists a vertex $w\in S_1$ such that $z\in I(v,w)\cap I(x,w)$. We call $z$ an \emph{imprint} of $w$ on the edge $vx$.

\begin{claim}
\label{clm: S1'}
The set $C:=A_1\cup \{ v,x\}$ induces a clique of size at least $|S|$.
\end{claim}

\begin{proof}
Pick any two vertices $z_1, z_2$ in $A_1$ and suppose that  $z_1,z_2$ are  imprints of the vertices $w_1,w_2\in S_1$, respectively. If $w_1=w_2$, then $z_1\sim z_2$
by $\INCz(w_1)$. Thus  assume that $w_1\neq w_2$. Assume by way of contradiction that $z_1\nsim z_2$. This implies that $x \in I(z_1, z_2)$. Since $z_1, z_2\in I(w_1, v)\cup I(w_2, v)\subseteq \mathrm{conv}(S\setminus \sg{u})$, we obtain that $ x\in \mathrm{conv}(S\setminus \sg{u})\cap I(u, v).$
This contradicts the fact that the set $S$ is distance-minimal. This establishes that $A_1$ induces a clique.  Since all vertices of $A_1$ are different from $v,x$ and are adjacent to $v$ and $x$, $C=A_1\cup \{ v,x\}$ also induces a clique.

From the definition of $S_1$ and since  any three vertices of $S$ form a metric triangle it follows that any two vertices of $S_1$ have different imprints, thus $|A_1|\geq |S_1|$.
By Claim \ref{cl:empty} the set $S_2$ is empty, yielding $|A_1|\geq |S_1|=|S\setminus\sg{u,v}|=|S|-2.$ Consequently, $|C|\geq |S|$.
\end{proof}

The desired result is now a direct consequence of Claim \ref{clm: S1'}.
\end{proof}

We now prove the equality $h(G)=h_2(G)$. Obviously, $h(G)\ge
h_2(G)$. We now establish the converse inequality.  Given a finite
$h$-independent set $A$ of size $k$, by Lemma \ref{lem: hleqsigma},
we can assume that $A$ is an $h$-independent set $S$ of the same size
and which either has diameter at most $2$ or is a distance-minimal
simplex.  In the second case, by Lemma \ref{lem: h-indepdiam2}, we
find a clique $C\subseteq\mathrm{conv}(S)$ of the same size as $S$ and
$A$. Since $C$ is an $h$-independent of diameter $1$, in both cases,
we have found an $h$-independent of diameter at most $2$ and of size
$k$. This establishes that $h(G) \leq h_2(G)$ when $h(G)$ is finite.

If $h(G)$ is infinite, then for each integer $k$ there exists an
$h$-independent set $A_k$ of size $k$. By the previous argument, we
can assume that each $A_k$ is an $h$-independent $S_k$ of size $k$ and
of diameter at most $2$. This implies that $h_2(G)$ is also infinite
and thus $h(G) = h_2(G)$. This finishes the proof of Theorem \ref{thm:
  hellydiam2}.

\medskip
We were not able to prove that any $h$-independent set of diameter 2 can be transformed into an $h$-independent simplex of diameter 2 and of the same size.We formulate this as an open question:

\begin{question} Is it true that for any CB-graph $G$, the equality $h_2(G)=\sigma_2(G)$ holds?
\end{question}

\section*{Acknowledgements}
We are grateful to the anonymous referee for a careful reading of all parts of the paper and numerous useful comments.
J.C.\ and V.C.\ were  supported  by ANR project DISTANCIA (ANR-17-CE40-0015). U.G. was supported by an internship grant of ``Pole Calcul" of LIS, Aix-Marseille Université and
by ANR project DISTANCIA (ANR-17-CE40-0015).

\bibliographystyle{plain}
\bibliography{biblio}

\end{document}